\PassOptionsToPackage{pdftex,hyperfootnotes=false,pdfpagelabels}{hyperref}

\documentclass[%
aps,pra,
reprint,
superscriptaddress,
twocolumn, 
nofootinbib,
floats,floatfix
%showpacs,
%groupedaddress,
%unsortedaddress,
%runinaddress,
%frontmatterverbose, 
%preprint,
%preprintnumbers,
%nofootinbib,
%nobibnotes,
%bibnotes,
 % amsmath,amssymb,
 % aps,pra,
%prb,
%rmp,
%prstab,
%prstper,
]{revtex4-2}

\usepackage{amsmath}
\usepackage{pifont}
\usepackage{amssymb}
\usepackage{amsthm}
\usepackage{graphicx}
\usepackage{xcolor}
\usepackage[english]{babel}
\usepackage{csquotes}
\usepackage{braket}
\usepackage{mathtools}
\usepackage{enumitem}
\usepackage{IEEEtrantools}
\usepackage{relsize}
\usepackage{placeins}
\usepackage{comment}
\usepackage[normalem]{ulem}
\usepackage{systeme}
\usepackage{physics}
\usepackage{algpseudocode,algorithm}

\usepackage{titlesec}

%\def\vect#1{\vec{#1}}
  % Bold vectors instead of arrow vectors

% \newcommand{\ad}{^\dagger}

 %Frobenius inner product

% \newcommand{\rank}{\text{\rank}}

%Greek Letters

 %Latex \th = thor n

% \newcommand{\kp}{\omega }

\makeatletter
\newcommand\footnoteref[1]{\protected@xdef\@thefnmark{\ref{#1}}\@footnotemark}
\makeatother

\usepackage{caption}
\captionsetup[figure]{justification=raggedright, singlelinecheck=false}
\captionsetup[table]{justification=raggedright, singlelinecheck=false}

% For algorithm formatting
\usepackage{algorithm}% http://ctan.org/pkg/algorithms
\usepackage{algpseudocode}% http://ctan.org/pkg/algorithmicx

\usepackage{tikz-cd} 
\usepackage{adjustbox}

\usepackage{tikz}						%geometric/algebraic description.
\usetikzlibrary{math, arrows,shapes.misc,
		       automata,backgrounds,
		       petri,topaths, decorations.pathmorphing, tikzmark}	

\usepackage{pgffor}

\usepackage{pgfplots}
\usetikzlibrary{pgfplots.groupplots}
\pgfplotsset{compat=1.11}
\usepgfplotslibrary{fillbetween}

\usepackage{tikz}
\usetikzlibrary{
  shapes,
  shapes.geometric,
	trees,
	matrix,
  positioning,
    pgfplots.groupplots,
  }

\usepackage{hyperref}
\hypersetup{
	colorlinks = true,
	linkcolor = [rgb]{0.70,0.13,0.13},
	citecolor = [rgb]{0.13,0.55,0.13},
	urlcolor = [rgb]{0.25, 0.41, 0.88}}
\usepackage[capitalize]{cleveref}

% MISC

%\newcommand*{\Tr}{\operatorname{Tr}}

% VECTORS

% OPERATORS

% MATHCAL

\providecommand{\calC}{\ensuremath{\mathcal{C}}}
\providecommand{\calD}{\ensuremath{\mathcal{D}}}

\providecommand{\calL}{\ensuremath{\mathcal{L}}}

\providecommand{\calO}{\ensuremath{\mathcal{O}}}
\providecommand{\calP}{\ensuremath{\mathcal{P}}}

\providecommand{\calS}{\ensuremath{\mathcal{S}}}
\providecommand{\calT}{\ensuremath{\mathcal{T}}}
\providecommand{\calU}{\ensuremath{\mathcal{U}}}

% MATHBB

\providecommand{\bbC}{\ensuremath{\mathbb{C}}}

\providecommand{\bbE}{\ensuremath{\mathbb{E}}}

\providecommand{\bbU}{\ensuremath{\mathbb{U}}}

% MATHFRAK

% OPERATORS

% MATHCAL

\providecommand{\calC}{\ensuremath{\mathcal{C}}}
\providecommand{\calD}{\ensuremath{\mathcal{D}}}

\providecommand{\calL}{\ensuremath{\mathcal{L}}}

\providecommand{\calO}{\ensuremath{\mathcal{O}}}
\providecommand{\calP}{\ensuremath{\mathcal{P}}}

\providecommand{\calS}{\ensuremath{\mathcal{S}}}
\providecommand{\calT}{\ensuremath{\mathcal{T}}}
\providecommand{\calU}{\ensuremath{\mathcal{U}}}

% MATHBB

\providecommand{\bbC}{\ensuremath{\mathbb{C}}}

\providecommand{\bbE}{\ensuremath{\mathbb{E}}}

\providecommand{\bbU}{\ensuremath{\mathbb{U}}}

% MATHFRAK

%%%% Marker:defs from Yihui's cache %%%%
\def\01{\{0,1\}}

%\newcommand{\veps}{\varepsilon}
%\newcommand{\al}{\ensuremath{\alpha}}

%\newcommand{\id}{\ensuremath{\mathbb{I}}}

%\newcommand{\rank}{\mbox{\rm rank}}
%\newcommand{\tr}[1]{\Tr\left(#1\right)}

%%%%% Algorithms %%%%%%%%%

%%%%% Bras and kets %%%%%%%%
%\newcommand{\bra}[1]{\langle{#1}|}
%\newcommand{\ket}[1]{|{#1}\rangle}

%\newcommand{\ketbra}[2]{|{#1}\rangle\langle{#2}|}

%%%% Typing absolute values %%%%%%%
%\DeclarePairedDelimiter\abs{\lvert}{\rvert}%
%\DeclarePairedDelimiter\norm{\lVert}{\rVert}%

% Swap the definition of \abs* and \norm*, so that \abs
% and \norm resizes the size of the brackets, and the 
% starred version does not.
\makeatletter
\let\oldabs\abs
\def\abs{\@ifstar{\oldabs}{\oldabs*}}
\let\oldnorm\norm
\def\norm{\@ifstar{\oldnorm}{\oldnorm*}}
\makeatother

\newtheorem{claim}{Claim}
\newtheorem{fact}{Fact}
\newtheorem{theorem}{Theorem}
\newtheorem{lemma}[theorem]{Lemma}

\newtheorem{proposition}[theorem]{Proposition}
\newtheorem{definition}[theorem]{Definition}

\newtheorem{corollary}[theorem]{Corollary}
\newtheorem{conjecture}[theorem]{Conjecture}

\definecolor{armando}{rgb}{.5,.1,.7}

\allowdisplaybreaks

\definecolor{manuel}{rgb}{.6,.8,.1}

%\newcommand{\MatC}[1]{M_{#1}\left(\mathbb{C}\right)}
%\newcommand{\MatC}[1]{\mathrm{Mat}\left(#1,\mathbb{C}\right)}

%{\mathcal{M}^{(k)}_{\mu_H}^{(#1)}\left(#2\right)}

%\newcommand{\id}{\mathrm{id}}

%\newcommand{\Ug}{\mathbb{U}}

%$\renewcommand{\gg}{\boldsymbol{g}}

\begin{document}

\title{Classically estimating observables of noiseless quantum circuits}
\date{\today}

\author{Armando Angrisani}
\email{armando.angrisani@epfl.ch}
\affiliation{Institute of Physics, Ecole Polytechnique Fédérale de Lausanne (EPFL),  Lausanne CH-1015, Switzerland}
\affiliation{Centre for Quantum Science and Engineering, Ecole Polytechnique F\'{e}d\'{e}rale de Lausanne (EPFL), CH-1015 Lausanne, Switzerland}

\author{Alexander Schmidhuber}
\email{alexsc@mit.edu}
\affiliation{Center for Theoretical Physics, Massachusetts Institute of Technology, Cambridge, MA, USA}

\author{Manuel S. Rudolph}
\email{manuel.rudolph@epfl.ch}
\affiliation{Institute of Physics, Ecole Polytechnique Fédérale de Lausanne (EPFL),  Lausanne CH-1015, Switzerland}
\affiliation{Centre for Quantum Science and Engineering, Ecole Polytechnique F\'{e}d\'{e}rale de Lausanne (EPFL), CH-1015 Lausanne, Switzerland}

\author{\\M. Cerezo}
\affiliation{Information Sciences, Los Alamos National Laboratory, Los Alamos, NM 87545, USA}

\author{Zo\"e Holmes}
\affiliation{Institute of Physics, Ecole Polytechnique Fédérale de Lausanne (EPFL),  Lausanne CH-1015, Switzerland}
\affiliation{Centre for Quantum Science and Engineering, Ecole Polytechnique F\'{e}d\'{e}rale de Lausanne (EPFL), CH-1015 Lausanne, Switzerland}

\author{Hsin-Yuan Huang}
\affiliation{Google Quantum AI, Venice, CA, USA}
\affiliation{Center for Theoretical Physics, Massachusetts Institute of Technology, Cambridge, MA, USA}
\affiliation{Institute for Quantum Information and Matter, Caltech, Pasadena, CA, USA}

\begin{abstract}
\normalsize
%We present a classical algorithm for estimating expectation values of arbitrary observables on most quantum circuits across all circuit architectures and depths, including those with all-to-all connectivity. We prove that for any architecture where each circuit layer is equipped with a measure invariant under single-qubit rotations, our algorithm achieves a small error $\varepsilon$ on all circuits except for a small fraction $\delta$. The computational time is polynomial in qubit count and circuit depth for any small constant $\varepsilon, \delta$, and quasi-polynomial for inverse-polynomially small $\varepsilon, \delta$. 
%Given that most quantum circuits in an architecture exhibit chaotic and locally scrambling behavior, our work demonstrates that estimating observables of such quantum dynamics is classically tractable across all geometries.
{
We present a classical algorithm {based on Pauli propagation} for estimating expectation values of arbitrary observables on {random unstructured} quantum circuits across all circuit architectures and depths, including those with all-to-all connectivity. We prove that for any architecture where each circuit layer is {randomly sampled from a distribution} invariant under single-qubit rotations, our algorithm achieves a small error $\varepsilon$ on all circuits except for a small fraction $\delta$. The computational time is polynomial in qubit count and circuit depth for any small constant $\varepsilon, \delta$, and quasi-polynomial for inverse-polynomially small $\varepsilon, \delta$. 
Our results show that estimating observables of quantum circuits {exhibiting chaotic and locally scrambling behavior} is classically tractable across all geometries.
We further conduct numerical experiments beyond our average-case assumptions, demonstrating the potential utility of Pauli propagation methods for simulating real-time dynamics and finding low-energy states of physical Hamiltonians.
}

\end{abstract}

\maketitle

\makeatletter

\paragraph*{Introduction.} Simulating all quantum circuits with classical algorithms is believed to be computationally hard. Yet, specialized classical simulation methods exist that can exploit the properties of certain restricted kinds of quantum systems. These include tensor networks which are tailored towards simulating low-entangled systems~\cite{shi2006classical, pang2020efficient, biamonte2017tensor, hauschild2018efficient, causer2023optimal, markov2008simulating, singh2010tensor, patra2023efficient}, Clifford perturbation methods for low-magic systems~\cite{gottesman1998heisenberg,aaronson2004improved,nest2008classical,bravyi2016improved,beguvsic2023simulating}, more general group-theoretic approaches for symmetrised systems~\cite{somma2005quantum, somma2006efficient, Galitski2011Quantum, goh2023lie,anschuetz2022efficient}, specialized approaches for constant-depth circuits\ \cite{napp2022efficient, bravyi2021classical, bravyi2024classical}, data-driven machine-learning-based methods \cite{huang2021power, huang2021provably, huang2022learning, du2024efficient}, and heuristic approaches such as neural network states~\cite{carleo2017solving, yang2024can}. Understanding the regimes in which such classical algorithms are effective is crucial in the hunt for applications where we can achieve a quantum computational advantage. This is particularly important in the near term where it is not yet possible to run complex quantum algorithms that achieve a provable quantum speed-up.

In this work, we prove that for a large class of quantum circuits, it is possible to classically compute the associated expectation values to within a small additive error. In particular, we show that it is possible to estimate observables on %\textit{most} 
{ randomly sampled}
quantum circuits in any circuit architecture, where each circuit layer is {independently drawn from a distribution} %equipped by a probability measure that is
invariant under single-qubit rotations. Such ensembles of circuit layers are known as \textit{locally scrambling}. This property is satisfied by a wide range of deep and shallow unstructured parameterised quantum circuits of different topologies currently used by variational quantum algorithms (VQAs)~\cite{cerezo2020variationalreview, bharti2022noisy}. The property may also be approximately satisfied by the dynamics of certain chaotic systems~\cite{roberts2017chaos, belyansky2020minimal, geller2022quantum, zhang2024thermalization}.

\begin{figure}
    \centering
    \includegraphics[width=0.88\linewidth]{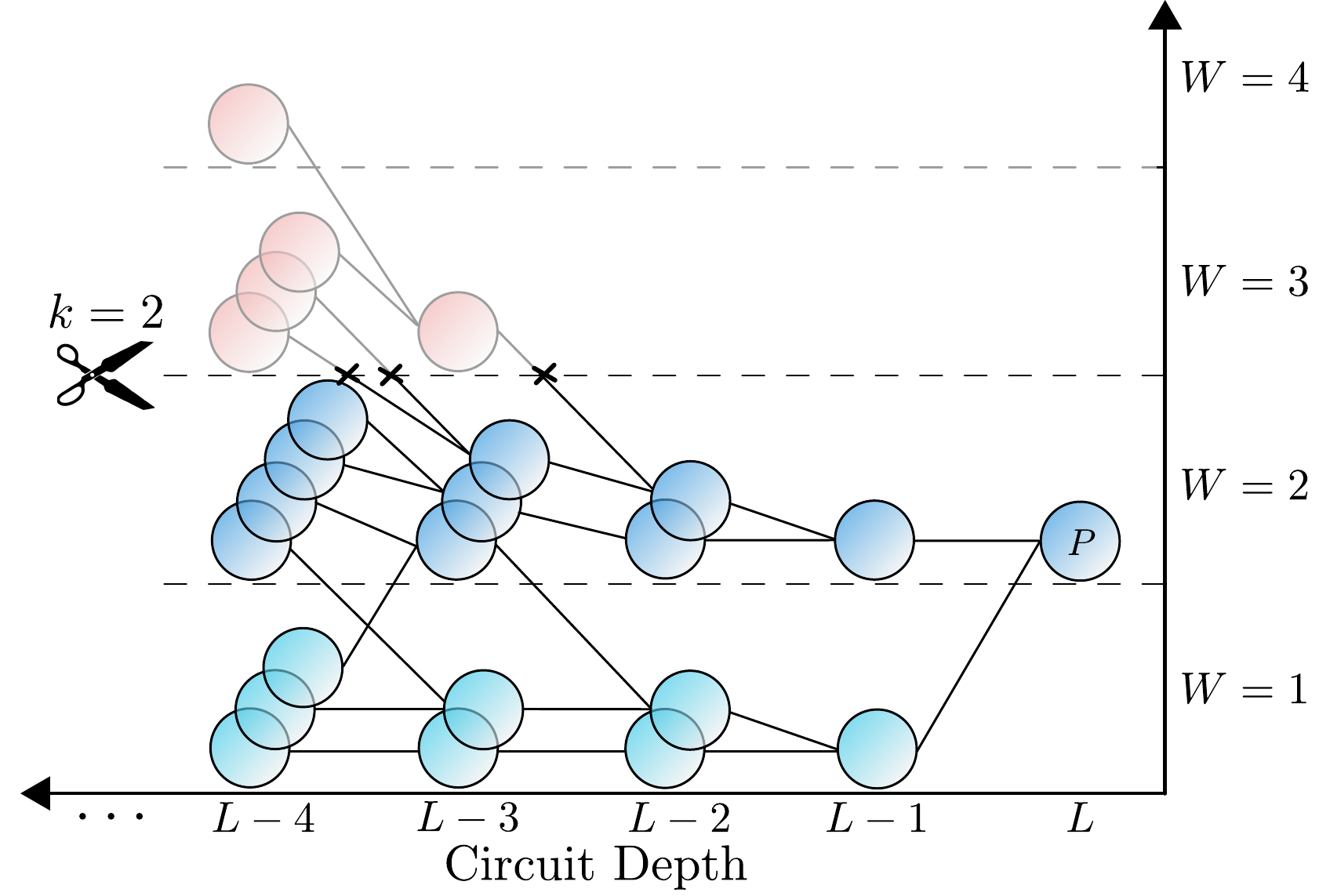}
    \caption{\small \emph{Schematic depiction of Pauli propagation equipped with weight truncation}. Pauli operators generally split into a weighted sum of several Pauli operators when acted upon by non-Clifford operations and to higher weight. We sketch the truncation of Pauli operators above a threshold of $k=2$.}
    \label{fig:schematic}
    \vspace{-20pt}
\end{figure}

To demonstrate the simulability of such circuits, {we employ a low-weight Pauli propagation algorithm~\cite{aharonov2022polynomial, rudolph2023classical, schuster2024polynomial, gonzalez2024pauli}}, which approximates the unitary evolution of the observable in the Heisenberg picture via a truncated Pauli path integral as sketched in Fig.~\ref{fig:schematic}. Although Pauli-path-based classical simulation techniques have gained attention in recent years~\cite{rakovszky2022dissipation, von2022operator}, their performance was only rigorously proven in the presence of noise~\cite{aharonov2023polynomial, fontana2023classical, rudolph2023classical,shao2023simulating,schuster2024polynomialtimeclassicalalgorithmnoisy,gonzalez2024pauli} and on near-Clifford circuits\ \cite{beguvsic2023simulating}.  Intuitively, high-weight Pauli operators are damped exponentially more by noise in the circuits, and therefore the paths containing high-weight Pauli operators can be truncated. A formal proof of their effectiveness on generic noiseless circuits stood as an open problem. 

Here, we demonstrate that even in the absence of noise, the same approach remains valid due to the scrambling action of unitaries, which causes the high-weight Pauli terms to be lost in the exponentially large operator space and exponentially suppresses their effect on the expectation values. Our results indicate that low-weight Pauli propagation approaches typically outperform alternative simulation techniques on unstructured circuits, particularly when compared to brute-force light-cone methods, which are generally only efficient for very shallow and geometrically local circuits. 

{
It is important to stress that our provable guarantees do not hold for structured circuits  violating our average-case assumptions. Nevertheless, our numerical results in Supplemental Material~\ref{app:numerics-IBM} indicate that Pauli propagation may present favorable performances also on circuits with highly correlated layers, such as those arising in quantum simulation and during the \emph{training} of variational quantum algorithms.
Exploring this potential and understanding its limits remain important future directions.
}

%It is important to stress that our provable guarantees hold only for \emph{most} instances, but not for all instances. As such, there could be randomly sampled circuits for which our classical algorithm yields a large estimation error. In our accompanying paper Ref.~\cite{bermejo2024quantumconvolutionalneuralnetworks}, we provide numerical evidence that these pathological cases do not seem to play a significant role in the optimization of some parameterised quantum circuits. In addition, many circuits of interest, such as the structured circuits with correlated parameters used for dynamical simulation, do not resemble typical locally scrambling quantum circuits. We end by providing numerical evidence that even for such circuits, Pauli propagation with weight truncation can be highly effective in estimating expectation values.

\paragraph*{Framework.}\label{sec:framework}

We consider the task of classically simulating expectation values of the form 
\begin{equation}\label{eq:loss}
   f_U(O) := \Tr[U \rho U^\dagger O]\,,
\end{equation}
for a state $\rho$, an observable $O$ and a $L$-layered quantum circuit given by
\begin{equation}\label{eq:circuits}
    U = U_L U_{L-1} \dots U_1.
\end{equation}
For our error bounds, we consider each layer $U_j$ to be %equipped with 
sampled independently from a \textit{locally scrambling} distribution~\cite{kuo2020markovian, hu2021classical, caro2022outofdistribution, gibbs2024dynamical, huang2022learning}, that is, a distribution that is invariant under rotation by random single-qubit Clifford gates.
In loose terms, locally scrambling ensembles of circuits can be thought of as ensembles of quantum circuits that look random locally on each qubit.

We further assume for our computational complexity bounds that for any Pauli operator $P$ of weight\footnote{For a Pauli operator $P \in \{I,X,Y,Z\}^{\otimes n}$, we denote by $\abs{P}$ the %$\mathrm{supp}(P)$ the set of qubits on which $P$ is non-identity, and we define the 
\textit{weight} of $P$ 
%as $\abs{P} := \abs{\mathrm{supp}(P)}$
, i.e. the number of non-identity Pauli terms in $P$.} $k$, its Heisenberg evolution $U_j^\dag P U_j$ contains at most $n^{\calO(k)}$ many distinct Pauli terms and it is classically computable in time $n^{\calO(k)}$.
This condition corresponds to the intuition that each $U_j$ layer is reasonably shallow.
We will call any distribution of circuits that satisfies these two sets of assumptions (i.e., the  locally scrambling assumption and shallow layer assumption) an \textit{$L$-layered locally scrambling circuit} distribution (cf. Definition\ \ref{def:ls-circuit}).

This family of circuits captures a wide range of deep and shallow unstructured parameterised quantum circuits of different topologies.
For example, the possible  ans\"{a}tze include circuits of SU(4) gates with arbitrary connectivity considered in Refs.~\cite{brown2012scrambling, brown2015decoupling, dalzell2021random, harrow2023approximate, zhang2023absence, napp2022quantifying,braccia2024computing}. They also cover the case where each layer $U_j$ is composed of universal single qubit rotations followed by a potentially highly entangling Clifford gates~\cite{letcher2023tight}. Finally they include a large class of Quantum Convolutional Neural Networks (QCNNs) without feed-forward~\cite{pesah2020absence}. However, it does not cover parameterised circuits that have only rotations in a single direction, e.g., a circuit with only parameterised rotations around the $Z$ axis.

Concerning the initial state $\rho$ and the measured observable $O$, we assume that $\Tr[P \rho]$ and $\Tr[P O] / 2^n$ are known (or efficiently computable) for any Pauli observable $P$.
%we distinguish between two different scenarios. We say that an operator $H$ -- which could be a quantum state or an observable -- is classically simulable if, for all Pauli operator $P$, we can exactly compute %$\Tr[PH]$ efficiently on a classical computer.  
%For cases where $\rho$ and $O$ are classically simulable, we provide a classical algorithm for approximating $f_U(O)$ for most circuits $U$. When $\rho$ (or $O$) is not classically simulable, 
When this condition is not met, we show in Supplemental Material~\ref{app:alg-sample-complexity} that $f_U(O)$ can be estimated by augmenting the classical algorithm with classical shadows of the input state $\rho$ and the observable $O$.

\paragraph*{Algorithm.}
\label{sec:alg}
Prior work has proposed
several Pauli propagation methods~\cite{aharonov2023polynomial, shao2023simulating, nemkov2023fourier, beguvsic2023fast, beguvsic2023simulating, fontana2023classical, rudolph2023classical} for classically simulating the family of expectation values described above.

At a high level, Pauli propagation methods work in the Heisenberg picture by back-propagating each Pauli in the observable of interest through the circuit. Clifford gates transform each Pauli to another Pauli, whereas non-Clifford gates generally transform a Pauli to a weighted sum of multiple Pauli operators (i.e., induce branching). In general, the number of branches, and hence the complexity, grows exponentially with the number of non-Clifford gates. However, if at each circuit layer we truncate the set of Pauli terms, %to reduce the number of these terms
then the simulation time and memory can be kept tractable.

More concretely, we consider a truncation scheme where we keep only low-weight Pauli terms. Given a positive integer $k\leq n$, the resulting \emph{low-weight Pauli propagation} algorithm, which we sketch in Fig.~\ref{fig:schematic}, is composed of the following steps.
\begin{enumerate}
    \item Given the observable $O = \sum_{P \in \{I,X,Y,Z\}^{\otimes n}} a_P P$, we compute its low-weight approximation
     $O_{L} := \sum_{ P: \abs{P}\leq k} a_P P$.
    \item For $j=L, L-1,\dots , 2$, we first compute the Heisenberg evolved observable $U^\dag_j O_{j} U_j $ in the Pauli basis,
    which we then use to compute the corresponding $k$-weight approximation:
    \begin{align}
        O_{j-1} := \, &\frac{1}{2^n} \sum_{\substack{P\in \{I,X,Y,Z\}^{\otimes n}:\\  \abs{P}\leq k}} \Tr[U^\dag_j O_{j} U_j P] P. \label{eq:O1-def}
    \end{align}
    In practice, many of the transition amplitudes $ \mathrm{Tr}[U^\dag_j O_{j} U_j P]$'s can be zero if $ O_j$ contains substantially less than $\calO(n^k)$ Pauli operators.
    \item At the end, we compute the final ``truncated'' observable
        ${O}^{(k)}_U := U_1^\dag O_1 U_1$ for the simulation,
    and we compute the inner product $\mathrm{Tr}[{O}^{(k)}_U \rho]$.
\end{enumerate}
This simple truncation strategy has been previously explored numerically in Ref.\ \cite{rudolph2023classical} and studied analytically for noisy circuits in Ref.\ \cite{schuster2024polynomialtimeclassicalalgorithmnoisy}.
Here,  we derive guarantees for low-weight Pauli propagation for the noiseless case. 
Let us denote the approximation of $f_U(O)$ obtained by truncating all Pauli operators with weight greater than $k$ by
\begin{align}
    \tilde{f}_U^{(k)}(O):= \Tr[O^{(k)}_U \rho] \, .
\end{align}
We measure the performance of the estimator $\tilde{f}^{(k)}_U$ in terms of the \emph{mean squared error} (MSE), i.e.
\begin{align}
    \bbE_U \left( \Delta f_U^{(k)}(O) \right):=\bbE_U \left[\left(f_U(O) - \tilde{f}_U^{(k)}(O)\right)^2 \right].
\end{align}
\paragraph*{Main result.}

In Supplemental Material~\ref{app:WeightTrunc}, we show that the average simulation error can be bounded as follows.
\begin{theorem}[Mean squared error]
\label{thm:errorbound}
For $k \geq 0$, we have
    \begin{align}
       & \bbE_U \left( \Delta f_U^{(k)}(O) \right)
        \leq \left(\frac{2}{3}\right)^{k+1} {\norm{O}^2}, \label{eq:msq-error}
    \end{align}
%where $\norm{O}_{\mathrm{Pauli},2} := (2^{-n}\Tr[O^\dag O])^{1/2}$ is the normalized Hilbert-Schmidt norm.
\end{theorem}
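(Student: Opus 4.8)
The plan is to work in the Heisenberg picture and write the total truncation error as a sum over the weight-$>k$ components discarded at each layer, then bound its second moment using only the defining invariance of locally scrambling ensembles. Writing $\rho_{j} := (U_{j}\cdots U_{1})\rho(U_{j}\cdots U_{1})^\dagger$ for the state evolved through the first $j$ layers, $O_j$ for the truncated observable produced by the algorithm, and $E_j := \mathcal{P}_{>k}(U_j^\dagger O_j U_j)$ for the weight-$>k$ part discarded at layer $j$ (with $\mathcal{P}_{>k}$ the projector onto Paulis of weight exceeding $k$), a direct telescoping of the recursion $O_{j-1} = \mathcal{P}_{\le k}(U_j^\dagger O_j U_j)$ gives $f_U(O) - \tilde f_U^{(k)}(O) = \sum_j \Tr[\rho_{j-1} E_j]$. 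The structural observation I would rely on is that $E_j$ depends only on the layers $U_j,\dots,U_L$, while $\rho_{j-1}$ depends only on $U_1,\dots,U_{j-1}$.

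First I would record the two consequences of local scrambling that drive everything. Since each layer is invariant under single-qubit Clifford rotations, averaging a layer twirls any operator by independent single-qubit Cliffords; because such a twirl maps every non-identity Pauli to a zero-mean random Pauli while preserving Pauli weight, we get (i) $\bbE_{U_j}[U_j^\dagger O_j U_j] \propto I$, so each discarded piece is conditionally mean-zero, $\bbE[E_j\mid U_{>j}]=0$; and (ii) inserting the single-qubit twirl at the interface adjacent to $E_j$ diagonalizes the relevant second moment over Pauli supports. Concretely, twirling the state against a weight-$w$ Pauli contributes a factor $1/3$ per qubit of its support, while the overlaps with the state collect into a reduced-state purity, $\sum_{\supp Q \subseteq R}\Tr[\rho' Q]^2 = 2^{\abs{R}}\Tr[(\rho'_R)^2] \le 2^{\abs{R}}$. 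The product of these per-qubit contributions is the advertised constant: $(1/3)^{w}\cdot 2^{w} = (2/3)^{w} \le (2/3)^{k+1}$ for any leaked support of size $w \ge k+1$. This is exactly where the factor $2/3$ and the exponent $k+1$ originate.

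Next I would expand the MSE as the double sum $\sum_{j,j'}\bbE[\Tr[\rho_{j-1}E_j]\Tr[\rho_{j'-1}E_{j'}]]$ and evaluate it layer by layer. For the diagonal terms, consequence (ii) gives $\bbE[(\Tr[\rho_{j-1}E_j])^2] \le (2/3)^{k+1}\,\bbE[\norm{E_j}^2]$, and since truncation only contracts the normalized Hilbert--Schmidt norm, the discarded masses telescope, $\sum_j \bbE[\norm{E_j}^2] \le \norm{O}^2$. Thus the diagonal terms alone already sum to the target $(2/3)^{k+1}\norm{O}^2$, with no dependence on the number of layers $L$; this strongly suggests the bound, and the remaining content is to show that the off-diagonal terms ($j\neq j'$) do not spoil it.

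The hard part will be precisely these cross terms. They do not vanish naively, because the two discarded operators share circuit layers, so no single interface twirl is independent of both factors, and the evolved state couples them as well. The route I would take is to pass to the Pauli-path representation and use that, layer by layer, independence of the layers factorizes the expectation into single-layer second moments, each of which — by local scrambling — is nonzero only when the two paths carry the same Pauli support at that layer. This support-matching constraint reorganizes the full double sum into a support-diagonal sum, in which the remaining within-support freedom is exactly what the reduced-state purity $\sum_{\supp Q\subseteq R}\Tr[\rho' Q]^2\le 2^{\abs R}$ accounts for, so the same $(2/3)^{w}$ weighting applies, and conservation of total Pauli mass across the (measure-preserving) layer maps keeps the estimate $L$-independent. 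The two crux points are therefore (a) establishing the single-layer second-moment identity from single-qubit (rather than global Haar) invariance, and (b) showing that the support-matched multi-layer product sums to at most $(2/3)^{k+1}\norm{O}^2$ — that is, that routing any Pauli mass above weight $k$ through the measured interface incurs the geometric factor once, without it accumulating over depth.
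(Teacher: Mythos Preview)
Your ingredients and diagonal computation are exactly the paper's: the telescoping $f_U(O)-\tilde f_U^{(k)}(O)=\sum_j\Tr[\rho_{j-1}E_j]$, the damping $\bbE_V\Tr[PV\rho V^\dagger]^2\le(2/3)^{|P|}$, and the norm telescoping $\sum_j\bbE\norm{E_j}_{\mathrm{Pauli},2}^2\le\norm{O}_{\mathrm{Pauli},2}^2$. The one misstep is in what local scrambling buys you at the second moment. You write that the single-layer twirl forces the two Pauli paths to agree only in \emph{support}, leaving ``within-support freedom'' to be summed via a purity bound. In fact the single-qubit Clifford twirl annihilates $\bbE_V[V^{\dagger\otimes2}(P\otimes Q)V^{\otimes2}]$ unless $P=Q$ \emph{exactly}, not merely $\supp P=\supp Q$ (Lemma~\ref{lemma:ls}); iterating over independent layers gives full Pauli-path orthogonality, $\bbE_U[\Phi_\gamma(U)\Phi_{\gamma'}(U)]=0$ whenever $\gamma\ne\gamma'$ (Lemma~\ref{lem:ortho2}). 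With this in hand your cross terms vanish identically: the paths contributing to $\Tr[\rho_{j-1}E_j]$ are precisely those whose largest index $\ell$ with $|s_\ell|>k$ equals $j-1$, so the summands for distinct $j$ live on disjoint path-sets and pair to zero. Your ``crux (b)'' is therefore vacuous, and your diagonal bound already \emph{is} the theorem.

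The paper reaches the same place more directly by applying path orthogonality first---collapsing the MSE to $\bbE_U[f_U(O)^2]-\bbE_U[\tilde f_U^{(k)}(O)^2]$ (Lemma~\ref{lemma:mse})---and then telescoping on the squares, $\sum_j\bigl(\bbE_U\Tr[O_{j+1}\rho_{j+1}]^2-\bbE_U\Tr[O_j\rho_j]^2\bigr)$, with each summand bounded by Corollary~\ref{cor:iter}. That is literally your diagonal term, reached without ever writing down a cross term.
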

\begin{table*}%[h]
    \vspace{-3pt}
    \centering
    \begin{tabular}{|c|c|c|c|}
        \hline
        \textbf{Architecture} & \textbf{Brute-force simulation} & \textbf{Our algorithm}  & \textbf{Our algorithm} \\
        & & ($\epsilon,\delta = \Theta(1)$)& ($\epsilon,\delta = 1/\mathrm{poly}(n)$)   \\
        \hline
        Constant geometric locality & $\exp\left(L^{\mathcal{O}(1)}\right)$  & $L^{\mathcal{O}(1)}$& $L^{\mathcal{O}(\log(n))}$\\
        \hline
        All-to-all connectivity & $\min\{\exp\left(\exp(\mathcal{O}(L))\right) , \exp(\calO(n))\}$ & $L\, n^{\mathcal{O}(1)}$  & $L\, n^{\mathcal{O}(\log(n))}$\\
        \hline
    \end{tabular}
    \caption{\small Comparison of runtimes for brute-force simulation and our algorithm for $L$-layered circuits on different architectures, for additive error $\epsilon$ and success probability $1-\delta$.
    Our algorithm requires polynomial time for arbitrarily small constant precision and failure probability. For inversely polynomial precision and failure probability, the runtime of our algorithm is quasi-polynomial, yet it is substantially more efficient than brute-force simulation. }
    \vspace{-10pt}
    \label{tab:comparison}
\end{table*}

We further bound the time complexity of our algorithm by upper-bounding the number of transition amplitudes $\mathrm{Tr}[U^\dag_{j} P_{j} U_{j} P_{j-1}]$ computed by the algorithm.
In particular, we use the fact that the total number of Pauli operators with weight at most $k$ is $\calO(n^k)$. We then further tighten this bound using a light-cone argument for shallow circuits with bounded geometric dimension. Finally, we combine Theorem~\ref{thm:errorbound} with Markov's inequality to transform the average error statement into a probabilistic statement. Thus, as detailed in Supplemental Material~\ref{app:alg-time-complexity}, we obtain the follow Theorem.

\begin{theorem}[Time complexity]
\label{thm:resources}
    Let $U$ be a randomly sampled circuit from an $L$-layered locally scrambling circuit ensemble on $n$ qubits, and let $O$ be an observable satisfying $\norm{O} \leq 1$.
    There exists a classical algorithm $\mathcal{A}$ that runs in time
    $L \cdot n^{\mathcal{O}\left(\log(\epsilon^{-1} \delta^{-1})\right)},$
    and outputs a value $\mathcal{A}(U)$, such that
    \begin{align}
        \abs{\mathcal{A}(U) - \Tr[U \rho U^\dagger O] } \leq \epsilon,
    \end{align}
    for at least $1 - \delta$ fraction of the circuits.
    If $O$ is a weighted sum of polynomially many Pauli observables and $U$ is a geometrically-local circuits over a $D$-dimensional geometry, the runtime improves to $n^{\calO(1)}L \cdot \min(n, L^D)^{\mathcal{O}\left(\log(\epsilon^{-1} \delta^{-1})\right)}.$
\end{theorem}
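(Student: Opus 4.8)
The plan is to instantiate the low-weight Pauli propagation algorithm described above with a truncation weight $k$ chosen as a function of $\epsilon$ and $\delta$, and then (i) convert the mean-squared-error bound of Theorem~\ref{thm:errorbound} into the stated per-circuit guarantee, and (ii) bound the running time of the algorithm at that value of $k$ by counting the Pauli terms it manipulates. The estimator itself is simply $\mathcal{A}(U) := \tilde{f}_U^{(k)}(O)$.

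For the accuracy guarantee, I would first note that $\norm{O} \leq 1$ together with Theorem~\ref{thm:errorbound} gives $\bbE_U(\Delta f_U^{(k)}(O)) \leq (2/3)^{k+1}$. Since $\Delta f_U^{(k)}(O) = (f_U(O) - \tilde{f}_U^{(k)}(O))^2 \geq 0$, Markov's inequality yields $\Pr_U[\,\abs{f_U(O) - \tilde{f}_U^{(k)}(O)} > \epsilon\,] = \Pr_U[\Delta f_U^{(k)}(O) > \epsilon^2] \leq (2/3)^{k+1}/\epsilon^2$. Requiring the right-hand side to be at most $\delta$ and solving for $k$ shows it suffices to take $k = \mathcal{O}(\log(\epsilon^{-1}\delta^{-1}))$, since $(2/3)^{k+1} \leq \epsilon^2\delta$ rearranges to $k+1 \geq \log(\epsilon^{-2}\delta^{-1})/\log(3/2)$. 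On the resulting $1-\delta$ fraction of circuits the output obeys $\abs{\mathcal{A}(U) - \Tr[U\rho U^\dagger O]} \leq \epsilon$, as claimed.

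For the running time in the general case, I would bound the number of transition amplitudes $\Tr[U_j^\dagger P_j U_j P_{j-1}]$ touched across all $L$ layers. At every layer the truncated observable is supported on Pauli operators of weight at most $k$, of which there are $\calO(n^k)$. By the shallow-layer assumption each such Pauli is evolved through $U_j$ into at most $n^{\mathcal{O}(k)}$ distinct Paulis in time $n^{\mathcal{O}(k)}$; accumulating these contributions into a dictionary keyed by the Pauli label and re-truncating to weight $\leq k$ therefore costs $n^{\mathcal{O}(k)}$ per layer, hence $L \cdot n^{\mathcal{O}(k)}$ in total, while the final inner product $\Tr[O^{(k)}_U\rho]$ is read off from the assumed-known values $\Tr[P\rho]$ at no extra asymptotic cost. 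Substituting $k = \mathcal{O}(\log(\epsilon^{-1}\delta^{-1}))$ gives the advertised $L\cdot n^{\mathcal{O}(\log(\epsilon^{-1}\delta^{-1}))}$.

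The refinement for geometrically-local circuits comes from a backward light-cone argument, which I expect to be the most delicate step. When $O$ is a sum of $n^{\mathcal{O}(1)}$ Paulis and $U$ is geometrically local in $D$ dimensions, back-propagating any Pauli can only enlarge its support within the backward light cone of the initial support; since each layer is shallow, the cone expands by a constant amount per layer, so after $L$ layers it covers $\mathcal{O}(L^D)$ qubits (and never more than $n$). One then replaces the global count $\calO(n^k)$ by $\min(n,L^D)^{\mathcal{O}(k)}$ within each light cone and sums over the $n^{\mathcal{O}(1)}$ initial Paulis, giving $n^{\mathcal{O}(1)} L\cdot \min(n,L^D)^{\mathcal{O}(\log(\epsilon^{-1}\delta^{-1}))}$. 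The care needed here is to verify that repeated weight-truncation does not let supports escape the cone and that the dictionary-merge step keeps the per-layer cost at the counting bound rather than its square; these are the points I would check most carefully.
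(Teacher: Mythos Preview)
Your proposal is correct and follows essentially the same route as the paper: the accuracy guarantee is obtained exactly as you describe, by combining the MSE bound of Theorem~\ref{thm:errorbound} with Markov's inequality (this is the content of Corollary~\ref{cor:markov}), and the runtime bounds come from counting Paulis of weight $\leq k$ on the relevant support together with the shallow-layer assumption, with the geometric refinement handled by the backward light-cone argument you sketch. The two caveats you flag at the end are precisely the points the paper addresses: truncation never enlarges supports, so the light-cone bound $M=\calO(kL^D)$ persists for all intermediate $O_j$; and the per-layer cost is kept at $M^{\calO(k)}$ rather than $M^{2k}$ by invoking the assumption that each $U_j^\dagger P U_j$ has at most $n^{\calO(k)}$ terms computable in time $n^{\calO(k)}$.
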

 %For the complete proof see Supplemental Material~\ref{app:alg-time-complexity}. 

%Theorem~\ref{thm:resources} establishes that only polynomial resources are required for any small constant error and failure probability. In general, if we demand polynomially small error and failure probability then the complexity scales quasi-polynomially. However, for circuits with constant geometric locality and depth at most poly-logarithmic, the computational time is much more favourable. Namely, the algorithm runs in almost-polynomial time $n^{\calO(\log \log(n))}$ with an exponent that is exponentially smaller than $\mathcal{O}(\log n)$.  For a number of qubits $n \leq  10^{23}$, we have $\log\log(n)< 4$, hence the algorithm is efficient for many practical purposes.

% \paragraph*{Comparison to other methods.} 

Theorem~\ref{thm:resources} establishes that only polynomial resources are required for any small constant error. Although the time complexity is quasi-polynomial for inversely polynomial error, this scaling is much more favourable when compared to brute-force light-cone simulation, as we summarize in Table~\ref{tab:comparison} and discuss in more details in Supplemental Material~\ref{app:lightcone}.
%Moreover, our algorithm runs in almost-polynomial time $n^{\calO(\log \log(n))}$ for circuits with constant geometric locality and depth at most poly-logarithmic. Since for a number of qubits $n \leq  10^{23}$, we have $\log\log(n)< 4$, this result in a manageable computational cost for many practical purposes.
For circuits with highly concentrated expectation values, an extremely easy yet effective classical simulation method is to always guess zero, or more generally, guess $\Tr[O]/ 2^n$. 
%However, Theorem~\ref{thm:main} offers substantial improvements in two key scenarios. (1) When the circuit depth is insufficient for concentration to occur, the \emph{guessing zero} strategy can yield highly inaccurate results.
%(2) The circuit layers $U_j$ may contain adversarially chosen two-qubit gates. In such cases, the circuit ensemble considered in Theorem~\ref{thm:main} does not necessarily produce concentrated expectation values, even at high circuit depths.
However, when the circuit depth is insufficient for concentration to occur, the \emph{guessing zero} strategy can yield highly inaccurate results.
Previous research has shown that certain logarithmic-depth circuits with non-geometrically-local interactions do not exhibit exponentially suppressed variances, yet they are still not expected to be efficiently simulable through brute-force methods\ \cite{napp2022quantifying, zhang2023absence}. Building on these insights, in Supplemental Material\ \ref{app:XQUATH}, we argue that, {on randomly sampled circuits within these families}, low-weight Pauli propagation is significantly more accurate than \emph{guessing zero} and offers super-polynomial speed ups over brute-force light-cone simulation.

\begin{figure*}
    \centering
    %\vspace{-10pt}
    \includegraphics[width=0.98\linewidth]{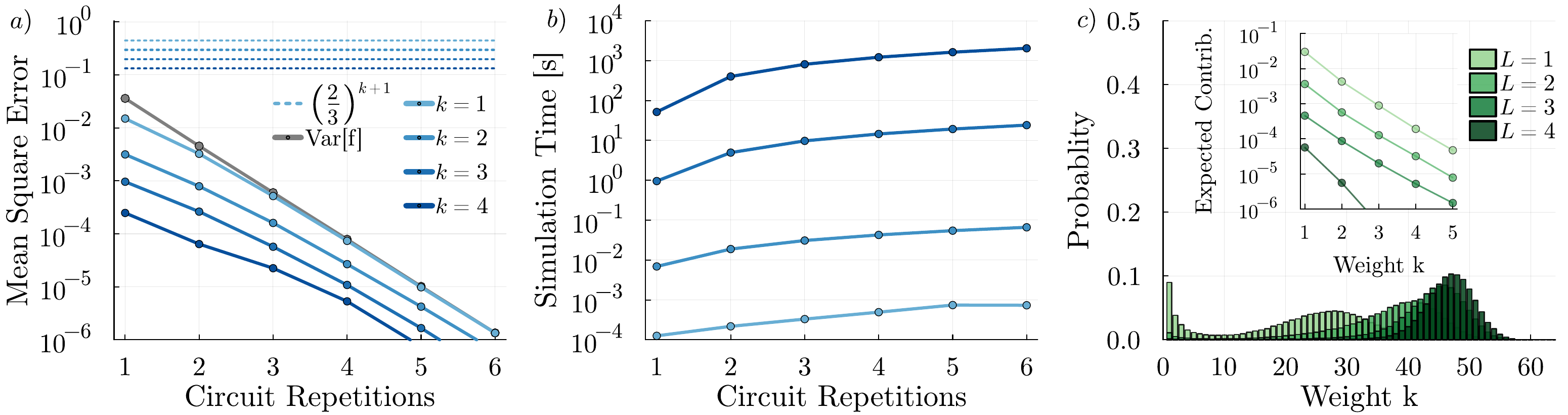}
    \caption{\small \emph{Classical simulation of a local Pauli expectation value with 64 qubits on a $8 \times 8$ grid}. The quantum circuit ansatz consists of {randomly sampled} SU(4) gates in a 2D staircase topology~\cite{zhang2023absence}. After one circuit repetition, the back-propagated observable contains fully global Pauli operators, which is pathological for approaches relying solely on small entanglement light cones. a) Average simulation error as a function of quantum circuit depth for different operator weight truncations. This error is numerically estimated using the Monte Carlo sampling approach. We compare against the general bound in Theorem~\ref{thm:errorbound}. %, which is always satisfied. 
As the parameterised expectation value exponentially concentrates with more circuit repetitions, the variance of the expectation value $\mathrm{Var}[f]$ decays exponentially. {This variance (gray line) is exactly the MSE achieved by the trivial estimator (Supplemental Section~\ref{app:XQUATH}), and therefore constitutes a baseline for quantifying the performance of our algorithm.} The average simulation errors (blue lines) also drop exponentially and becomes more accurate as $k$ increases. {These errors are always better than the trivial estimator but the relative improvement reduces as circuit depth increases.}
% {We remark that the variance (gray line) is exactly the MSE achieved by the trivial estimator (Supplemental Section~\ref{app:XQUATH}), and therefore constitutes a baseline for quantifying the performance of our algorithm.}
b) Simulation time of one expectation value using low-weight Pauli propagation. For example, three circuit repetitions can be simulated on a single CPU thread of an i7-12850HX processor on a laptop to below $10^{-4}$ MSE in approximately $10$ seconds. c) Weight distribution of Pauli operators for up to four circuit repetitions. The inset shows the expected contribution of all operators per weight over the landscape. %This is calculated by multiplying probability values with $(\frac{1}{3})^k$. 
We observe an exponentially decaying contribution of high-weight Pauli operators. 
    }   
    \label{fig:numerics_2d_staircase}
    \vspace{-10pt}
\end{figure*}

\paragraph*{Numerical error analysis.}
While our main result upper bounds the mean squared error for locally scrambling circuits, prior numerical experiments have demonstrated the effectiveness of low-weight truncation strategies for variational ans\"{a}tze that fall beyond our assumptions~\cite{rudolph2023classical, von2022operator, rakovszky2022dissipation}. This hints at the fact that weight truncation is generally even more powerful than stated by our theorems. 
%Indeed, the fact that our bounds are loose is also apparent from the fact that since the error of low-weight Pauli propagation can be trivially upper bounded by the variance of $f_U(O)$, it is suppressed exponentially in the depth for typical random circuits, whereas our upper bound only predicts an exponentially suppression in $k$. 

Fortunately, in a wide range of cases it is efficient to numerically estimate the average error of any Pauli propagation simulation. 
Specifically, in the \emph{End Matter} section, we describe a Monte Carlo method for approximating the mean squared error of any estimator expressed in the Pauli path framework, for a wide family of circuits extending far beyond locally scrambling ensemble. 

We can use such numerical error estimates %guaranteed by Theorem~\ref{thm:certacc} 
to substantiate our theoretical results {for average-case circuits} with an implementation of low-weight Pauli propagation. To do so, we pick an example that was recently reported to be out of reach for tensor network simulation~\cite{zhang2023absence}. The quantum circuit ansatz consists of {randomly sampled} SU(4) unitaries on a 2D grid topology in a so-called \textit{staircase} ordering. This circuit structure implies that the Heisenberg-evolved 1-local Pauli operator on the first (top left) qubit acts on all qubits after one circuit repetition, e.g., sequence of SU(4) gates.

Fig.~\ref{fig:numerics_2d_staircase} shows our simulation results for estimating the expectation value of $\sigma^z_1$ on a square grid of 64 qubits. Even for shallow circuits, exact simulation is not feasible as claimed in Ref.~\cite{zhang2023absence}. However, with a weight truncation of $k=3$, for example, low-weight Pauli propagation achieves MSEs of less than $10^{-3}$ at all depths. These errors are tightly numerically estimated using the Monte Carlo sampling approach in  Theorem~\ref{thm:certacc}. It is clear from Fig.~\ref{fig:numerics_2d_staircase}a) that low-weight Pauli propagation satisfies, and indeed substantially outperforms, the error guarantees provided by Theorem~\ref{thm:errorbound}. 
{We emphasize that the reported MSEs characterize the average-case accuracy of our algorithm by definition, and do not reflect the error behavior on specific, problem-dependent circuits, such as those encountered during the training of variational quantum algorithms. 
For a numerical study of Pauli propagation in such contexts, we refer the reader to Supplemental Section~\ref{app:numerics-IBM}.}
We further see in Fig.~\ref{fig:numerics_2d_staircase}b) that the simulation time can be remarkably quick, increasing exponentially in the weight truncation $k$, but only polynomially with circuit depth. 

The simulation times at low depth are unusually high because the circuit structure from Ref.~\cite{zhang2023absence} is designed to be pathologically hard. More typical circuits with, for example, fewer Pauli gates per 2-qubit block, observables on any qubit other than the first, or with smaller entanglement light cones (e.g., with commuting entangling gates) will be orders of magnitude faster to simulate. The pathological nature of the setup is underlined by Fig.~\ref{fig:numerics_2d_staircase}c), where we show the distribution of Pauli operator weights. Almost all operators become global at a few layers, but their expected contribution to the expectation landscape is suppressed exponentially. This is shown by the inset, which additionally highlights all expected contributions decaying exponentially towards the onset of barren plateaus beyond log-depth. Similar behaviour is observed in the context of %quantum convolutional neural networks 
QCNNs in our accompanying paper Ref.~\cite{bermejo2024quantumconvolutionalneuralnetworks}.

%We remind the reader that Theorem~\ref{thm:errorbound} and Theorem~\ref{thm:resources} do not hold for quantum circuits consisting of circuit layers that are not locally scrambling. 
Additionally, we numerically demonstrate that low-weight Pauli propagation works also for practical examples of quantum circuits that do not meet the ``locally scrambling'' assumptions. 
Specifically, we can often numerically show that our general bound in Theorem~\ref{thm:errorbound} is satisfied and the average error strongly (potentially exponentially) decreases with $k$. Supplemental Fig.~\ref{fig:numerics-correlated} provides evidence that low-weight Pauli propagation may indeed be a fruitful method for simulating quantum circuits with correlated angles. Here we recorded the MSE on a 16-qubit example over RX, RZ, and RZZ Pauli rotations parametrized by a single random angle. %Note that the numerical monte carlo estimation of the error at large scales is not valid for correlated angles. 
Furthermore, in Supplemental Fig.~\ref{fig:numerics-IBM} of Supplemental Material~\ref{app:numerics-IBM} we provide an example of a quantum circuit with non-locally scrambling layers on 127 qubits. The gates are generated by a transverse field Ising model, i.e. RX and RZZ gates, on a heavy-hex topology. This indicates that low-weight Pauli propagation could be a powerful classical simulation approach beyond our theoretical results.

\paragraph*{Discussion.}
Our main result, Theorem~\ref{thm:resources}, establishes that it is possible to classically estimate the expectation values of a large class of quantum circuits. Many parameterised quantum circuits used in variational quantum algorithms satisfy our assumptions
%(either exactly or approximately)
{when randomly initialized}, including some that claim to avoid barren plateaus while escaping classical simulability. This hints at the fact that the current approach to VQAs is too generic  and needs to be revised, 
and complements the conjecture connecting the provable absence of barren plateaus and classical simulability~\cite{cerezo2023does}. In particular, our benchmark in Fig.~\ref{fig:numerics_2d_staircase} shows that a {randomly initialized} circuit that has been proven barren plateau free, and that is out of reach of tensor network simulations, can be simulated via low-weight Pauli propagation. 

We reiterate that our analysis, similarly to current trainability analyses, is an average case analysis. Thus even within a family of parameterised quantum circuits for which our assumptions hold, there will be some circuits, i.e., certain rotation angles, for which our algorithm fails. Our accompanying work in Ref.~\cite{bermejo2024quantumconvolutionalneuralnetworks} numerically showcases that, despite this caveat, low-weight Pauli propagation simulations of %Quantum Convolutional Neural Networks (
QCNNs %) 
can outcompete the classification performance of faithfully simulated QCNNs on all published benchmarks so far up to 1024 qubits. 

{It remains an open question whether, for certain computational tasks, variational quantum circuits when initialized in a random, classically tractable region can evolve toward a classically intractable region that also corresponds to optimal values of the target cost function.
In Supplemental Section~\ref{app:numerics-IBM}, we deploy Pauli propagation to train a variational quantum eigensolver, witnessing a reduced accuracy near low-energy values. Nonetheless, our numerical results provide evidence of the potential utility of Pauli propagation, such as using it as a quantum-inspired optimization algorithm or as a classical method to learn quantum circuits to prepare approximate ground states on quantum hardware.
}

As expected given the generic hardness of simulating quantum systems, low-weight Pauli propagation methods have inherent limitations. For example, circuits with random rotations in a single direction, e.g., purposefully chosen RX rotations, do not generally fall under the current analysis. This is a fundamental limitation because then it is possible to encode problem instances into the circuit that are hard on average~\cite{huang2023learning, cerezo2023does, gil-fuster2024understanding}. Nonetheless, in Supplemental Material~\ref{app:numerics-IBM} we numerically show that low-weight Pauli propagation does seemingly work well on more practical (non-pathological) examples of such circuits. A similar phenomenon can be observed for quantum simulation. While Trotterization-based time evolution introduces correlated angles that are not covered by our analysis, our numerical evidence (Supplemental Fig.~\ref{fig:numerics-correlated}) suggests that low-weight Pauli propagation may be valuable for trading off simulation runtime and accuracy.

Crucially, in this work we have considered classical algorithms for estimating expectation values, rather than sampling. 
In Supplemental Material~\ref{app:XQUATH}, we also prove that, for circuits of linear depth, low-weight Pauli propagation cannot be used for refuting the XQUATH %(Linear Cross-Entropy Quantum Threshold) 
conjecture~\cite{aaronson2019classical, gao2024limitations, aharonov2023polynomial, tanggara2024classically}, which is closely related to sampling-based quantum supremacy experiments. 
In addition, a recent work showed a no-go result for approximating the Heisenberg evolution of an observable under a circuit generating high levels of magic~\cite{dowling2024magic}. %Our finding here complements previous results in Refs.~\cite{gao2024limitations, aharonov2023polynomial, tanggara2024classically}, where it was demonstrated that Pauli-path approaches can refute this conjecture for circuits of sub-linear depth. 
This highlights that approximating the Heisenberg evolution and predicting expectation values are inherently distinct tasks, as our results show that the latter may be classically easy even with high levels of magic (or entanglement).

Going forward, and despite some of the limitations highlighted above, we expect low-weight Pauli propagation to be an effective strategy in regimes well beyond those that we have so far been able to analytically guarantee. It could, for example, be combined with more problem-specific truncations that allow simulating edge cases escaping our guarantees~\cite{nemkov2023fourier,du2024efficient,beguvsic2023fast,shao2023simulating,rudolph2023classical,goh2023lie}. Symmetries and \textit{a priori} knowledge about the initial state, the circuit, and the observable will allow classical simulations to push further and challenge upcoming quantum devices.

%\end{document}

\section{End matter}
We briefly discuss the main technical tools employed in our derivations. %For a complete derivation we refer the reader to the Appendices. 
Given a Hermitian operator 
\begin{align}
    H = \sum_{P\in\{I,X,Y,Z\}^{\otimes n}} a_P P,
\end{align}
we define its associated Pauli 2-norm $\norm{H}_{\mathrm{Pauli},2}$ (also known as normalized Hilbert-Schmidt norm) as
\begin{align}
    \norm{H}_{\mathrm{Pauli},2}^2 = \sum_{P\in\{I,X,Y,Z\}^{\otimes n}} a_P^2.
\end{align}
We remark that $\norm{H}_{\mathrm{Pauli},2} \leq \norm{H}$, and in particular $\norm{P}_{\mathrm{Pauli},2} =1$ for all Pauli operators $P$. Moreover, $\norm{\ketbra{\psi}}_{\mathrm{Pauli},2} = 2^{-n/2}$ for all projectors $\ketbra{\psi}$.
\medskip
\paragraph*{Locally scrambling unitaries.}
There are a number of important properties of locally scrambling circuits that aid our analysis. First, for any locally scrambling unitary $V$ and all Pauli operators $P\neq Q$ we have
\begin{align}
    &\bbE_{V} \left[V^{\otimes 2\dag}(P \otimes Q)V^{\otimes 2}\right] = 0 &\text{(orthogonality)} \label{eq:end-ortho}.
\end{align}
This identity is extremely useful for evaluating second moments of observables expressed in the Pauli basis, e.g. $O =\sum_P a_P P$. In particular, we have
\begin{align}
    &\bbE_{V} \Tr[O V \rho V^\dag]^2 \\= &\sum_{P\in\{I,X,Y,Z\}^{\otimes n}} a_P^2 \bbE_{V} \Tr[P V \rho V^\dag]^2.
\end{align}
Moreover, we also have the following property
\begin{align}
    &\bbE_{V} (V^{\dag}PV)^{\otimes 2} =&\\ &\frac{1}{3^{\abs{P}}} \sum_{\substack{Q:\\\mathrm{supp}(Q)=\mathrm{supp}(P) }} \bbE_{V} (V^{\dag}QV)^{\otimes 2}&\text{(Pauli-mixing)}.\nonumber
\end{align}
Leveraging the Pauli-mixing property, \citet{huang2022learning} showed that
\begin{align}
    \bbE_{V} \left[\Tr[P V \rho V^\dag]^2 \right] \leq \left(\frac{2}{3}\right)^{\abs{P}} \label{eq:damping}.
\end{align}
Hence, the contribution of high-weight Pauli operators to expectation values is exponentially suppressed on average.
Therefore, we can approximate the expectation value of the observable $O =\sum_P a_P P$ with that of the truncated observable
$O^{(k)} =\sum_{P:\abs{P}\leq k} a_P P$, to give:
\begin{align}
    &\bbE_{V} \left[\Tr[(O-O^{(k)} ) V \rho V^\dag]^2 \right] \nonumber
    \\&\bbE_{V} \left[\Tr[(O ) V \rho V^\dag]^2 - \Tr[(O^{(k)} ) V \rho V^\dag]^2 \right] \label{eq:methods-wt}
    \\ \leq &\left(\frac{2}{3}\right)^{k+1}\left(\norm{O}_{\mathrm{Pauli},2}^2-\norm{O^{(k)}}_{\mathrm{Pauli},2}^2\right) .  \nonumber 
\end{align}
In the following, we discuss how this low-weight approximation for expectation values over a single locally scrambling layer can be extended to a circuit composed of a product of locally scrambling circuit layers using Pauli path integrals.  

\medskip

\paragraph*{Pauli path integral.}
Recall that any Hermitian operator $H$ can be expanded in the (normalized) Pauli basis $\calP_n = \left\{\frac{I}{\sqrt{2}},\frac{X}{\sqrt{2}},\frac{Y}{\sqrt{2}},\frac{Z}{\sqrt{2}}\right\}^{\otimes n}$ as
\begin{align}\label{eq:innerprod}
    H = \sum_{s\in\calP_n} \Tr[Hs]s \, .
\end{align}
In order to compute the expectation value $f_U(O) := \Tr[O U\rho U^\dag]$, we consider the Heisenberg evolution of the observable $O$, i.e. the evolution under the adjoint unitary channel $U^\dag(\cdot) U$. Applying iteratively Eq.~\eqref{eq:innerprod}, we obtain
\begin{align}
    \nonumber U^\dag O U & =  \sum_{s_0,\dots,s_L\in \calP_n}\left( \Tr[Os_L]\prod_{j=1}^L \Tr[U_{j}^\dag s_j U_j s_{j-1}]s_0 \right)
    \nonumber \\ & = \sum_{\gamma \in \mathcal{P}_n^{L+1}} \Phi_\gamma(U) s_\gamma. 
\end{align}
Here we labeled each Pauli path by a string $\gamma = (s_0,\dots,s_L)$; we denoted the associated Fourier coefficient by $\Phi_\gamma(U)  := \Tr[Os_L]\prod_{j=1}^L \Tr[U_{j}^\dag s_j U_j s_{j-1}]$ and we defined $s_\gamma:=s_0$.
The Pauli path integral is the summation obtained by projecting the Heisenberg evolved observable onto the initial state:
\begin{align}\label{eq:PathIntExp}
    &f_U(O)=\Tr[U^\dag O U\rho] = \sum_{\gamma \in \mathcal{P}_n^{L+1}} \Phi_\gamma(U) d_\gamma \, ,
\end{align}
where we defined $d_\gamma:=\Tr[s_\gamma \rho].$

\medskip

\paragraph*{Low-weight Pauli Propagation.}
While the Pauli path integral contains exponentially many terms, for most circuits $U$ we can approximate its value by restricting the computation to a small, carefully chosen subset of Pauli paths.
In particular, given an integer $k\geq0$, we consider the following subset $\calS_k \subseteq \calP_{n}^{L+1}$,
    \begin{align}
        \calS_k := \{\gamma =(s_0,s_1,\dots,s_L) \;|\; \forall i\neq0 : \;\abs{s_i}\leq k  \}.
    \end{align}
In addition, we  also define the following ``truncated'' observable and the associated expectation value:
\begin{align}
    &O^{(k)}_U := \sum_{\gamma \in \calS_k} \Phi_\gamma(U) s_\gamma,
    \\&\tilde{f}_U^{(k)}(O) = \Tr[O^{(k)}_U \rho].
\end{align}

The proof of Theorem~\ref{thm:errorbound} leverages two key properties of locally scrambling unitaries. 
As shown in Eq.\ \ref{eq:damping}, high-weight Pauli operators yield negligible expectation values when measured on states post-processed by locally scrambling unitaries. This property enables us to upper bound the error incurred at each iteration of the low-weight Pauli propagation algorithm. Additionally, the orthogonality property (Eq.\ \ref{eq:end-ortho}) implies that Fourier coefficients associated with different paths are uncorrelated,
\begin{equation}
    \mathbb{E}_U [f(U, O, \gamma) f(U, O, \gamma')] = 0 \label{eq:ortho-end},
\end{equation} 
whenever $\gamma \neq \gamma'$. 
As previously observed in\ \cite{aharonov2023polynomial, shao2023simulating, fontana2023classical}, this property 
-- commonly referred as ``orthogonality of Pauli paths'' --
drastically simplifies the expression of the mean-squared-error
\begin{align}
    \bbE_U \Delta f_U^{(k)} =  &\sum_{\gamma \in \calP_n^{L+1}\setminus\calS_k} \Phi_\gamma(U)^2 d_\gamma^2.
\end{align}
%from which Eq.~\eqref{eq:mseexpansion} straightforwardly follows. 
By making use of a telescoping sum, and applying iteratively Eq.\ \eqref{eq:methods-wt}, we find that total error satisfies 
\begin{align}
    \bbE_U \Delta f_U^{(k)} \leq \left(\frac{2}{3}\right)^{k+1}\norm{O}^2_{\mathrm{Pauli},2}
\leq \left(\frac{2}{3}\right)^{k+1}\norm{O}^2,
\end{align}
as stated in Theorem\ \ref{thm:errorbound}.

It remains to study for which values of $k$ and $L$ the truncated path integral can be computed efficiently.
To this end, we will need to upper bound the number of Pauli operators supported on a subset of qubits of size $M$ and weight at most $k$. We will use the following upper bound (derived in more detail in the appendices):
\begin{align}
    \sum_{\ell=0}^k 3^\ell \binom{n}{\ell} 
     \leq \left(\frac{3en}{k}\right)^k,
    \label{eq:countingbound}
\end{align}
For each consecutive layer $j+1$ and $j$, and for all $s_j, s_{j+1}$ such that $\abs{s_j},\abs{s_{j+1}}\leq k$ we need to compute the associated transition amplitude, i.e. 
\begin{align}
    \Tr[U^\dag_{j+1} s_{j+1} U_{j+1} s_j].
\end{align}
From Eq.~\eqref{eq:countingbound}, there are $\calO(n^{2k})$ pairs of Pauli operators $(s_{j+1},s_j)$ that satisfy the weight constraint.
Hence for a circuit of depth $L$, 
we need to compute at most
\begin{align}
      \mathcal{O}(n^{2k} L) = n^{\calO\left(\log(\epsilon^{-1} \delta^{-1})\right)} L
\end{align}
transition amplitudes to obtain a small error $\epsilon \norm{O}_{\mathrm{Pauli},2}$ with probability at least $1-\delta$. Therefore, the truncated Pauli path integral can be computed efficiently (i.e., in $\text{poly}(n)$ time) provided that $L =\mathrm{poly}(n)$ and $\epsilon,\delta = \Theta(1)$. %This proves the first part of Theorem~\ref{thm:resources}.

\medskip

\paragraph*{Certified error estimate.}
While our upper bounds apply to locally scrambling circuits, we also develop a Monte Carlo method to numerically estimate the mean squared error of low-weight Pauli propagation. This method can be applied to any ensemble of circuits with orthogonal Pauli paths, as described in Eq.\ \eqref{eq:ortho-end}.
This family of circuits extends well beyond locally scrambling circuits: examples of circuit ensembles with orthogonal Pauli paths include parameterised quantum circuits with random single qubit rotations only in a single direction, e.g. only parameterised  $R_Z(\theta)$ rotations.
To this end, we define the following distribution over the Pauli paths:
\begin{align}
    p(\gamma) :=  \bbE_U\Phi_\gamma(U) ^2/\norm{O}_2^2,
\end{align}
where the orthogonality property ensures that $ \sum_{\gamma \in \calP_n^{L+1}} p(\gamma) = 1$.

We observe that the mean squared error can be rearranged as follows
\begin{align}
    \bbE_U\Delta f^{(k)}_U = %\bbE_U \sum_{\gamma \in \overline\calS_k} \Phi_\gamma(U) ^2 d_\gamma^2 := 
    \sum_{\gamma \in \calP_n^{L+1}} p(\gamma) X_\gamma^{(k)},
\end{align}
where we introduced the following variable:
\begin{align}
    X_\gamma^{(k)} = \begin{cases}
    0 &\text{if $\gamma \in \calS_k$,}
       \\ \norm{O}_2^2 \cdot d_\gamma^2 &\text{if $\gamma \not\in \calS_k$,}
    \end{cases}
\end{align}
Moreover, we observe that $0 \leq X_\gamma^{(k)} \leq \norm{O}^2_{\mathrm{Pauli},2}$.
Therefore, invoking standard concentration of measure inequalities, we can prove that the mean squared error $\bbE_U\Delta f^{(k)}_U $ can be  estimated with additive error $\epsilon  \norm{O}^2_{\mathrm{Pauli},2}$ sampling $\calO(\epsilon^{-2})$ Pauli paths $\gamma$ and computing the mean of the associated function $X_\gamma^{(k)}$. This result is formalized in the following theorem, which we rigorously prove in Section\ \ref{app:certacc}.

\begin{theorem}[Certified error estimate]\label{thm:certacc}
Let $U$ be a circuit sampled from an $L$-layered ensemble with orthogonal Pauli paths. %\footnoteref{fn-ortho}
Assume that we can sample $s\in\calP_n$ with probability ${\Tr[Os]^2}/{\norm{O}^2_{2}}$ in time $\mathrm{poly}(n)$.
Moreover,  assume that for $j = L, L-1, \dots, 1$, and for all $s_j\in \calP_n$, we can sample $s_{j-1}$ with probability
$ \bbE_{U_j}\mathrm{Tr}[U^\dag_j s_jU_j s_{j-1}]^2$ in time $\mathrm{poly}(n)$.
Then, for any $\epsilon,\delta\in(0,1]$, there exists a classical randomized algorithm that runs in time $\mathrm{poly}(n) L\epsilon^{-2}\log(\delta^{-1})$ and outputs a value $\alpha$ such that
\begin{align}
    \left|{\alpha - \bbE_U\Delta f^{(k)}_U}\right| \ \leq \epsilon \norm{O}.
\end{align}
with probability at least $1-\delta$.
\end{theorem}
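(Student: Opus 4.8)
\emph{Proof strategy.} The plan is to realize $\alpha$ as the empirical mean of the bounded random variable $X_\gamma^{(k)}$ under the path distribution $p$, and to control the deviation with a Hoeffding bound. Two structural facts are already in hand. First, orthogonality of Pauli paths (Eq.~\eqref{eq:ortho-end}) collapses the mean-squared error to a single diagonal sum over paths,
\begin{align}
    \bbE_U \Delta f_U^{(k)} = \sum_{\gamma \in \calP_n^{L+1}} p(\gamma)\, X_\gamma^{(k)} = \bbE_{\gamma\sim p}\!\left[X_\gamma^{(k)}\right],
\end{align}
so the quantity to be estimated is literally an expectation over $p$. Second, $X_\gamma^{(k)}$ is deterministically bounded, $0\le X_\gamma^{(k)}\le \norm{O}_{\mathrm{Pauli},2}^2$, because $d_\gamma^2=\Tr[s_\gamma\rho]^2\le 2^{-n}$ and $\norm{O}_2^2 = 2^n\norm{O}_{\mathrm{Pauli},2}^2$. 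Thus it suffices to draw i.i.d.\ samples $\gamma\sim p$, evaluate $X_\gamma^{(k)}$, and average.

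\emph{Sampling $p$.} The crux is to show that $p$ is \emph{exactly} samplable by the two assumed oracles. Using independence of the layers $U_1,\dots,U_L$ and the definition of $\Phi_\gamma$, I would factorize
\begin{align}
    \bbE_U \Phi_\gamma(U)^2 = \Tr[O s_L]^2 \prod_{j=1}^L \bbE_{U_j}\!\Tr[U_j^\dag s_j U_j s_{j-1}]^2 .
\end{align}
Dividing by $\norm{O}_2^2$ exhibits $p(\gamma)$ as a product of a root factor $\Tr[Os_L]^2/\norm{O}_2^2$ and backward transition factors $\bbE_{U_j}\Tr[U_j^\dag s_j U_j s_{j-1}]^2$. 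The essential check is that each factor is a genuine conditional distribution: for every fixed unitary, Parseval in the normalized Pauli basis gives $\sum_{s_{j-1}}\Tr[U_j^\dag s_j U_j s_{j-1}]^2 = \norm{U_j^\dag s_j U_j}_2^2 = 1$, and taking $\bbE_{U_j}$ preserves this normalization; similarly $\sum_{s_L}\Tr[Os_L]^2=\norm{O}_2^2$ normalizes the root. Hence a path is drawn as a Markov chain -- first sample $s_L$ from the observable, then sequentially sample $s_{j-1}$ given $s_j$ for $j=L,\dots,1$ -- which is precisely the pair of sampling hypotheses, each executable in $\mathrm{poly}(n)$ time, for a per-path cost of $\mathrm{poly}(n)\,L$.

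\emph{Estimation and concentration.} On each sampled path I would evaluate $X_\gamma^{(k)}$: test $\gamma\in\calS_k$ by checking $\abs{s_i}\le k$ for all $i\ge 1$ (outputting $0$ if so), else return $\norm{O}_2^2\, d_\gamma^2$ using the assumed access to $d_\gamma=\Tr[s_\gamma\rho]$. Averaging over $T$ independent paths yields $\alpha$ with $\bbE[\alpha]=\bbE_U\Delta f_U^{(k)}$. Since each summand lies in $[0,\norm{O}_{\mathrm{Pauli},2}^2]$, Hoeffding's inequality gives $\Pr\big[\abs{\alpha-\bbE_U\Delta f_U^{(k)}}>\epsilon\,\norm{O}_{\mathrm{Pauli},2}^2\big]\le 2\exp(-2T\epsilon^2)$, so $T=\calO(\epsilon^{-2}\log(\delta^{-1}))$ drives the failure probability below $\delta$; invoking $\norm{O}_{\mathrm{Pauli},2}\le\norm{O}$ then recovers the stated accuracy, for a total runtime $\mathrm{poly}(n)\,L\,\epsilon^{-2}\log(\delta^{-1})$. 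The main obstacle is the middle step: rigorously establishing the factorized Markov-chain form of $p$ and verifying through Parseval that every conditional is normalized, so that the abstract distribution $p$ coincides with what the two oracles actually sample. Once that identification is secure, the concentration argument is routine.
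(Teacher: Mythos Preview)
Your proposal is correct and follows essentially the same approach as the paper's proof: express the MSE as $\bbE_{\gamma\sim p}[X_\gamma^{(k)}]$, sample $p$ via the backward Markov chain determined by the two oracles, average, and apply Hoeffding for bounded variables. Your explicit verification via Parseval that each transition factor $\sum_{s_{j-1}}\bbE_{U_j}\Tr[U_j^\dag s_j U_j s_{j-1}]^2=1$ is normalized (and hence that the chain really samples $p$) is in fact more carefully spelled out than in the paper, which simply states the sampling procedure.
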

\section{Acknowledgements}
The authors thank Sergio Boixo, Su Yeon Chang, Soonwon Choi, Soumik Ghosh, Sacha Lerch, Jarrod R McClean, Antonio Anna Mele, Thomas Schuster and Yanting Teng for valuable discussions and feedbacks.
AA and ZH acknowledge support from the Sandoz Family Foundation-Monique de Meuron program for Academic Promotion. AS acknowledges support by the Simons Foundation (MP-SIP-00001553, AWH) and NSF grant PHY-2325080. MC acknowledges supportw by the Laboratory Directed Research and Development (LDRD) program of Los Alamos National Laboratory (LANL) under project numbers 20230527ECR and 20230049DR.  This work was also supported by LANL's ASC Beyond Moore’s Law project.

\bibliography{quantum, ref}

%apsrev4-2.bst 2019-01-14 (MD) hand-edited version of apsrev4-1.bst
%Control: key (0)
%Control: author (8) initials jnrlst
%Control: editor formatted (1) identically to author
%Control: production of article title (0) allowed
%Control: page (0) single
%Control: year (1) truncated
%Control: production of eprint (0) enabled
\begin{thebibliography}{80}%
\makeatletter
\providecommand \@ifxundefined [1]{%
 \@ifx{#1\undefined}
}%
\providecommand \@ifnum [1]{%
 \ifnum #1\expandafter \@firstoftwo
 \else \expandafter \@secondoftwo
 \fi
}%
\providecommand \@ifx [1]{%
 \ifx #1\expandafter \@firstoftwo
 \else \expandafter \@secondoftwo
 \fi
}%
\providecommand \natexlab [1]{#1}%
\providecommand \enquote  [1]{``#1''}%
\providecommand \bibnamefont  [1]{#1}%
\providecommand \bibfnamefont [1]{#1}%
\providecommand \citenamefont [1]{#1}%
\providecommand \href@noop [0]{\@secondoftwo}%
\providecommand \href [0]{\begingroup \@sanitize@url \@href}%
\providecommand \@href[1]{\@@startlink{#1}\@@href}%
\providecommand \@@href[1]{\endgroup#1\@@endlink}%
\providecommand \@sanitize@url [0]{\catcode `\\12\catcode `\$12\catcode `\&12\catcode `\#12\catcode `\^12\catcode `\_12\catcode `\%12\relax}%
\providecommand \@@startlink[1]{}%
\providecommand \@@endlink[0]{}%
\providecommand \url  [0]{\begingroup\@sanitize@url \@url }%
\providecommand \@url [1]{\endgroup\@href {#1}{\urlprefix }}%
\providecommand \urlprefix  [0]{URL }%
\providecommand \Eprint [0]{\href }%
\providecommand \doibase [0]{https://doi.org/}%
\providecommand \selectlanguage [0]{\@gobble}%
\providecommand \bibinfo  [0]{\@secondoftwo}%
\providecommand \bibfield  [0]{\@secondoftwo}%
\providecommand \translation [1]{[#1]}%
\providecommand \BibitemOpen [0]{}%
\providecommand \bibitemStop [0]{}%
\providecommand \bibitemNoStop [0]{.\EOS\space}%
\providecommand \EOS [0]{\spacefactor3000\relax}%
\providecommand \BibitemShut  [1]{\csname bibitem#1\endcsname}%
\let\auto@bib@innerbib\@empty
%</preamble>
\bibitem [{\citenamefont {Shi}\ \emph {et~al.}(2006)\citenamefont {Shi}, \citenamefont {Duan},\ and\ \citenamefont {Vidal}}]{shi2006classical}%
  \BibitemOpen
  \bibfield  {author} {\bibinfo {author} {\bibfnamefont {Y.-Y.}\ \bibnamefont {Shi}}, \bibinfo {author} {\bibfnamefont {L.-M.}\ \bibnamefont {Duan}},\ and\ \bibinfo {author} {\bibfnamefont {G.}~\bibnamefont {Vidal}},\ }\bibfield  {title} {\bibinfo {title} {Classical simulation of quantum many-body systems with a tree tensor network},\ }\href {https://doi.org/10.1103/PhysRevA.74.022320} {\bibfield  {journal} {\bibinfo  {journal} {Phys. Rev. A}\ }\textbf {\bibinfo {volume} {74}},\ \bibinfo {pages} {022320} (\bibinfo {year} {2006})}\BibitemShut {NoStop}%
\bibitem [{\citenamefont {Pang}\ \emph {et~al.}(2020)\citenamefont {Pang}, \citenamefont {Hao}, \citenamefont {Dugad}, \citenamefont {Zhou},\ and\ \citenamefont {Solomonik}}]{pang2020efficient}%
  \BibitemOpen
  \bibfield  {author} {\bibinfo {author} {\bibfnamefont {Y.}~\bibnamefont {Pang}}, \bibinfo {author} {\bibfnamefont {T.}~\bibnamefont {Hao}}, \bibinfo {author} {\bibfnamefont {A.}~\bibnamefont {Dugad}}, \bibinfo {author} {\bibfnamefont {Y.}~\bibnamefont {Zhou}},\ and\ \bibinfo {author} {\bibfnamefont {E.}~\bibnamefont {Solomonik}},\ }\bibfield  {title} {\bibinfo {title} {Efficient 2d tensor network simulation of quantum systems},\ }in\ \href {https://arxiv.org/abs/2006.15234} {\emph {\bibinfo {booktitle} {SC20: International Conference for High Performance Computing, Networking, Storage and Analysis}}}\ (\bibinfo {organization} {IEEE},\ \bibinfo {year} {2020})\ pp.\ \bibinfo {pages} {1--14}\BibitemShut {NoStop}%
\bibitem [{\citenamefont {Biamonte}\ and\ \citenamefont {Bergholm}(2017)}]{biamonte2017tensor}%
  \BibitemOpen
  \bibfield  {author} {\bibinfo {author} {\bibfnamefont {J.}~\bibnamefont {Biamonte}}\ and\ \bibinfo {author} {\bibfnamefont {V.}~\bibnamefont {Bergholm}},\ }\bibfield  {title} {\bibinfo {title} {Tensor networks in a nutshell},\ }\href {https://arxiv.org/abs/1708.00006} {\bibfield  {journal} {\bibinfo  {journal} {arXiv preprint arXiv:1708.00006}\ } (\bibinfo {year} {2017})}\BibitemShut {NoStop}%
\bibitem [{\citenamefont {Hauschild}\ and\ \citenamefont {Pollmann}(2018)}]{hauschild2018efficient}%
  \BibitemOpen
  \bibfield  {author} {\bibinfo {author} {\bibfnamefont {J.}~\bibnamefont {Hauschild}}\ and\ \bibinfo {author} {\bibfnamefont {F.}~\bibnamefont {Pollmann}},\ }\bibfield  {title} {\bibinfo {title} {Efficient numerical simulations with tensor networks: Tensor network python (tenpy)},\ }\href {https://doi.org/10.21468/SciPostPhysLectNotes.5} {\bibfield  {journal} {\bibinfo  {journal} {SciPost Physics Lecture Notes}\ ,\ \bibinfo {pages} {005}} (\bibinfo {year} {2018})}\BibitemShut {NoStop}%
\bibitem [{\citenamefont {Causer}\ \emph {et~al.}(2023)\citenamefont {Causer}, \citenamefont {Ba{\~n}uls},\ and\ \citenamefont {Garrahan}}]{causer2023optimal}%
  \BibitemOpen
  \bibfield  {author} {\bibinfo {author} {\bibfnamefont {L.}~\bibnamefont {Causer}}, \bibinfo {author} {\bibfnamefont {M.~C.}\ \bibnamefont {Ba{\~n}uls}},\ and\ \bibinfo {author} {\bibfnamefont {J.~P.}\ \bibnamefont {Garrahan}},\ }\bibfield  {title} {\bibinfo {title} {Optimal sampling of dynamical large deviations in two dimensions via tensor networks},\ }\href {https://doi.org/10.1103/PhysRevLett.130.147401} {\bibfield  {journal} {\bibinfo  {journal} {Physical Review Letters}\ }\textbf {\bibinfo {volume} {130}},\ \bibinfo {pages} {147401} (\bibinfo {year} {2023})}\BibitemShut {NoStop}%
\bibitem [{\citenamefont {Markov}\ and\ \citenamefont {Shi}(2008)}]{markov2008simulating}%
  \BibitemOpen
  \bibfield  {author} {\bibinfo {author} {\bibfnamefont {I.~L.}\ \bibnamefont {Markov}}\ and\ \bibinfo {author} {\bibfnamefont {Y.}~\bibnamefont {Shi}},\ }\bibfield  {title} {\bibinfo {title} {Simulating quantum computation by contracting tensor networks},\ }\href {https://doi.org/10.48550/arXiv.quant-ph/0511069} {\bibfield  {journal} {\bibinfo  {journal} {SIAM Journal on Computing}\ }\textbf {\bibinfo {volume} {38}},\ \bibinfo {pages} {963} (\bibinfo {year} {2008})}\BibitemShut {NoStop}%
\bibitem [{\citenamefont {Singh}\ \emph {et~al.}(2010)\citenamefont {Singh}, \citenamefont {Pfeifer},\ and\ \citenamefont {Vidal}}]{singh2010tensor}%
  \BibitemOpen
  \bibfield  {author} {\bibinfo {author} {\bibfnamefont {S.}~\bibnamefont {Singh}}, \bibinfo {author} {\bibfnamefont {R.~N.~C.}\ \bibnamefont {Pfeifer}},\ and\ \bibinfo {author} {\bibfnamefont {G.}~\bibnamefont {Vidal}},\ }\bibfield  {title} {\bibinfo {title} {Tensor network decompositions in the presence of a global symmetry},\ }\href {https://doi.org/10.1103/PhysRevA.82.050301} {\bibfield  {journal} {\bibinfo  {journal} {Phys. Rev. A}\ }\textbf {\bibinfo {volume} {82}},\ \bibinfo {pages} {050301} (\bibinfo {year} {2010})}\BibitemShut {NoStop}%
\bibitem [{\citenamefont {Patra}\ \emph {et~al.}(2023)\citenamefont {Patra}, \citenamefont {Jahromi}, \citenamefont {Singh},\ and\ \citenamefont {Orus}}]{patra2023efficient}%
  \BibitemOpen
  \bibfield  {author} {\bibinfo {author} {\bibfnamefont {S.}~\bibnamefont {Patra}}, \bibinfo {author} {\bibfnamefont {S.~S.}\ \bibnamefont {Jahromi}}, \bibinfo {author} {\bibfnamefont {S.}~\bibnamefont {Singh}},\ and\ \bibinfo {author} {\bibfnamefont {R.}~\bibnamefont {Orus}},\ }\bibfield  {title} {\bibinfo {title} {Efficient tensor network simulation of {IBM}'s largest quantum processors},\ }\bibfield  {journal} {\bibinfo  {journal} {arXiv preprint arXiv:2309.15642}\ }\href {https://doi.org/10.48550/arXiv.2309.15642} {10.48550/arXiv.2309.15642} (\bibinfo {year} {2023})\BibitemShut {NoStop}%
\bibitem [{\citenamefont {Gottesman}(1998)}]{gottesman1998heisenberg}%
  \BibitemOpen
  \bibfield  {author} {\bibinfo {author} {\bibfnamefont {D.}~\bibnamefont {Gottesman}},\ }\bibfield  {title} {\bibinfo {title} {The heisenberg representation of quantum computers, talk at},\ }in\ \href {http://citeseerx.ist.psu.edu/viewdoc/summary?doi=10.1.1.252.9446} {\emph {\bibinfo {booktitle} {International Conference on Group Theoretic Methods in Physics}}}\ (\bibinfo {organization} {Citeseer},\ \bibinfo {year} {1998})\BibitemShut {NoStop}%
\bibitem [{\citenamefont {Aaronson}\ and\ \citenamefont {Gottesman}(2004)}]{aaronson2004improved}%
  \BibitemOpen
  \bibfield  {author} {\bibinfo {author} {\bibfnamefont {S.}~\bibnamefont {Aaronson}}\ and\ \bibinfo {author} {\bibfnamefont {D.}~\bibnamefont {Gottesman}},\ }\bibfield  {title} {\bibinfo {title} {Improved simulation of stabilizer circuits},\ }\href {https://doi.org/10.1103/PhysRevA.70.052328} {\bibfield  {journal} {\bibinfo  {journal} {Physical Review A}\ }\textbf {\bibinfo {volume} {70}},\ \bibinfo {pages} {052328} (\bibinfo {year} {2004})}\BibitemShut {NoStop}%
\bibitem [{\citenamefont {Nest}(2008)}]{nest2008classical}%
  \BibitemOpen
  \bibfield  {author} {\bibinfo {author} {\bibfnamefont {M.}~\bibnamefont {Nest}},\ }\bibfield  {title} {\bibinfo {title} {Classical simulation of quantum computation, the gottesman-knill theorem, and slightly beyond},\ }\href {https://doi.org/10.48550/arXiv.0811.0898} {\bibfield  {journal} {\bibinfo  {journal} {arXiv preprint arXiv:0811.0898}\ } (\bibinfo {year} {2008})}\BibitemShut {NoStop}%
\bibitem [{\citenamefont {Bravyi}\ and\ \citenamefont {Gosset}(2016)}]{bravyi2016improved}%
  \BibitemOpen
  \bibfield  {author} {\bibinfo {author} {\bibfnamefont {S.}~\bibnamefont {Bravyi}}\ and\ \bibinfo {author} {\bibfnamefont {D.}~\bibnamefont {Gosset}},\ }\bibfield  {title} {\bibinfo {title} {Improved classical simulation of quantum circuits dominated by clifford gates},\ }\href {https://doi.org/10.1103/PhysRevLett.116.250501} {\bibfield  {journal} {\bibinfo  {journal} {Phys. Rev. Lett.}\ }\textbf {\bibinfo {volume} {116}},\ \bibinfo {pages} {250501} (\bibinfo {year} {2016})}\BibitemShut {NoStop}%
\bibitem [{\citenamefont {Begu{\v{s}}i{\'c}}\ \emph {et~al.}(2023)\citenamefont {Begu{\v{s}}i{\'c}}, \citenamefont {Hejazi},\ and\ \citenamefont {Chan}}]{beguvsic2023simulating}%
  \BibitemOpen
  \bibfield  {author} {\bibinfo {author} {\bibfnamefont {T.}~\bibnamefont {Begu{\v{s}}i{\'c}}}, \bibinfo {author} {\bibfnamefont {K.}~\bibnamefont {Hejazi}},\ and\ \bibinfo {author} {\bibfnamefont {G.~K.}\ \bibnamefont {Chan}},\ }\bibfield  {title} {\bibinfo {title} {Simulating quantum circuit expectation values by {C}lifford perturbation theory},\ }\href {https://doi.org/10.48550/arXiv.2306.04797} {\bibfield  {journal} {\bibinfo  {journal} {arXiv preprint arXiv:2306.04797}\ } (\bibinfo {year} {2023})}\BibitemShut {NoStop}%
\bibitem [{\citenamefont {Somma}(2005)}]{somma2005quantum}%
  \BibitemOpen
  \bibfield  {author} {\bibinfo {author} {\bibfnamefont {R.~D.}\ \bibnamefont {Somma}},\ }\bibfield  {title} {\bibinfo {title} {Quantum computation, complexity, and many-body physics},\ }\href {https://arxiv.org/abs/quant-ph/0512209} {\bibfield  {journal} {\bibinfo  {journal} {arXiv preprint quant-ph/0512209}\ } (\bibinfo {year} {2005})}\BibitemShut {NoStop}%
\bibitem [{\citenamefont {Somma}\ \emph {et~al.}(2006)\citenamefont {Somma}, \citenamefont {Barnum}, \citenamefont {Ortiz},\ and\ \citenamefont {Knill}}]{somma2006efficient}%
  \BibitemOpen
  \bibfield  {author} {\bibinfo {author} {\bibfnamefont {R.}~\bibnamefont {Somma}}, \bibinfo {author} {\bibfnamefont {H.}~\bibnamefont {Barnum}}, \bibinfo {author} {\bibfnamefont {G.}~\bibnamefont {Ortiz}},\ and\ \bibinfo {author} {\bibfnamefont {E.}~\bibnamefont {Knill}},\ }\bibfield  {title} {\bibinfo {title} {Efficient solvability of {H}amiltonians and limits on the power of some quantum computational models},\ }\href {https://doi.org/https://doi.org/10.1103/PhysRevLett.97.190501} {\bibfield  {journal} {\bibinfo  {journal} {Physical Review Letters}\ }\textbf {\bibinfo {volume} {97}},\ \bibinfo {pages} {190501} (\bibinfo {year} {2006})}\BibitemShut {NoStop}%
\bibitem [{\citenamefont {Galitski}(2011)}]{Galitski2011Quantum}%
  \BibitemOpen
  \bibfield  {author} {\bibinfo {author} {\bibfnamefont {V.}~\bibnamefont {Galitski}},\ }\bibfield  {title} {\bibinfo {title} {Quantum-to-classical correspondence and hubbard-stratonovich dynamical systems: A lie-algebraic approach},\ }\href {https://doi.org/10.1103/PhysRevA.84.012118} {\bibfield  {journal} {\bibinfo  {journal} {Phys. Rev. A}\ }\textbf {\bibinfo {volume} {84}},\ \bibinfo {pages} {012118} (\bibinfo {year} {2011})}\BibitemShut {NoStop}%
\bibitem [{\citenamefont {Goh}\ \emph {et~al.}(2023)\citenamefont {Goh}, \citenamefont {Larocca}, \citenamefont {Cincio}, \citenamefont {Cerezo},\ and\ \citenamefont {Sauvage}}]{goh2023lie}%
  \BibitemOpen
  \bibfield  {author} {\bibinfo {author} {\bibfnamefont {M.~L.}\ \bibnamefont {Goh}}, \bibinfo {author} {\bibfnamefont {M.}~\bibnamefont {Larocca}}, \bibinfo {author} {\bibfnamefont {L.}~\bibnamefont {Cincio}}, \bibinfo {author} {\bibfnamefont {M.}~\bibnamefont {Cerezo}},\ and\ \bibinfo {author} {\bibfnamefont {F.}~\bibnamefont {Sauvage}},\ }\bibfield  {title} {\bibinfo {title} {Lie-algebraic classical simulations for quantum computing},\ }\href {https://arxiv.org/abs/2308.01432} {\bibfield  {journal} {\bibinfo  {journal} {arXiv preprint arXiv:2308.01432}\ } (\bibinfo {year} {2023})}\BibitemShut {NoStop}%
\bibitem [{\citenamefont {Anschuetz}\ \emph {et~al.}(2023)\citenamefont {Anschuetz}, \citenamefont {Bauer}, \citenamefont {Kiani},\ and\ \citenamefont {Lloyd}}]{anschuetz2022efficient}%
  \BibitemOpen
  \bibfield  {author} {\bibinfo {author} {\bibfnamefont {E.~R.}\ \bibnamefont {Anschuetz}}, \bibinfo {author} {\bibfnamefont {A.}~\bibnamefont {Bauer}}, \bibinfo {author} {\bibfnamefont {B.~T.}\ \bibnamefont {Kiani}},\ and\ \bibinfo {author} {\bibfnamefont {S.}~\bibnamefont {Lloyd}},\ }\bibfield  {title} {\bibinfo {title} {Efficient classical algorithms for simulating symmetric quantum systems},\ }\href {https://doi.org/10.22331/q-2023-11-28-1189} {\bibfield  {journal} {\bibinfo  {journal} {Quantum}\ }\textbf {\bibinfo {volume} {7}},\ \bibinfo {pages} {1189} (\bibinfo {year} {2023})}\BibitemShut {NoStop}%
\bibitem [{\citenamefont {Napp}\ \emph {et~al.}(2022)\citenamefont {Napp}, \citenamefont {La~Placa}, \citenamefont {Dalzell}, \citenamefont {Brandao},\ and\ \citenamefont {Harrow}}]{napp2022efficient}%
  \BibitemOpen
  \bibfield  {author} {\bibinfo {author} {\bibfnamefont {J.~C.}\ \bibnamefont {Napp}}, \bibinfo {author} {\bibfnamefont {R.~L.}\ \bibnamefont {La~Placa}}, \bibinfo {author} {\bibfnamefont {A.~M.}\ \bibnamefont {Dalzell}}, \bibinfo {author} {\bibfnamefont {F.~G.}\ \bibnamefont {Brandao}},\ and\ \bibinfo {author} {\bibfnamefont {A.~W.}\ \bibnamefont {Harrow}},\ }\bibfield  {title} {\bibinfo {title} {Efficient classical simulation of random shallow 2d quantum circuits},\ }\href@noop {} {\bibfield  {journal} {\bibinfo  {journal} {Physical Review X}\ }\textbf {\bibinfo {volume} {12}},\ \bibinfo {pages} {021021} (\bibinfo {year} {2022})}\BibitemShut {NoStop}%
\bibitem [{\citenamefont {Bravyi}\ \emph {et~al.}(2021)\citenamefont {Bravyi}, \citenamefont {Gosset},\ and\ \citenamefont {Movassagh}}]{bravyi2021classical}%
  \BibitemOpen
  \bibfield  {author} {\bibinfo {author} {\bibfnamefont {S.}~\bibnamefont {Bravyi}}, \bibinfo {author} {\bibfnamefont {D.}~\bibnamefont {Gosset}},\ and\ \bibinfo {author} {\bibfnamefont {R.}~\bibnamefont {Movassagh}},\ }\bibfield  {title} {\bibinfo {title} {Classical algorithms for quantum mean values},\ }\href {https://doi.org/10.1038/s41567-020-01109-8} {\bibfield  {journal} {\bibinfo  {journal} {Nature Physics}\ }\textbf {\bibinfo {volume} {17}},\ \bibinfo {pages} {337} (\bibinfo {year} {2021})}\BibitemShut {NoStop}%
\bibitem [{\citenamefont {Bravyi}\ \emph {et~al.}(2024)\citenamefont {Bravyi}, \citenamefont {Gosset},\ and\ \citenamefont {Liu}}]{bravyi2024classical}%
  \BibitemOpen
  \bibfield  {author} {\bibinfo {author} {\bibfnamefont {S.}~\bibnamefont {Bravyi}}, \bibinfo {author} {\bibfnamefont {D.}~\bibnamefont {Gosset}},\ and\ \bibinfo {author} {\bibfnamefont {Y.}~\bibnamefont {Liu}},\ }\bibfield  {title} {\bibinfo {title} {Classical simulation of peaked shallow quantum circuits},\ }in\ \href@noop {} {\emph {\bibinfo {booktitle} {Proceedings of the 56th Annual ACM Symposium on Theory of Computing}}}\ (\bibinfo {year} {2024})\ pp.\ \bibinfo {pages} {561--572}\BibitemShut {NoStop}%
\bibitem [{\citenamefont {Huang}\ \emph {et~al.}(2021)\citenamefont {Huang}, \citenamefont {Broughton}, \citenamefont {Mohseni}, \citenamefont {Babbush}, \citenamefont {Boixo}, \citenamefont {Neven},\ and\ \citenamefont {McClean}}]{huang2021power}%
  \BibitemOpen
  \bibfield  {author} {\bibinfo {author} {\bibfnamefont {H.-Y.}\ \bibnamefont {Huang}}, \bibinfo {author} {\bibfnamefont {M.}~\bibnamefont {Broughton}}, \bibinfo {author} {\bibfnamefont {M.}~\bibnamefont {Mohseni}}, \bibinfo {author} {\bibfnamefont {R.}~\bibnamefont {Babbush}}, \bibinfo {author} {\bibfnamefont {S.}~\bibnamefont {Boixo}}, \bibinfo {author} {\bibfnamefont {H.}~\bibnamefont {Neven}},\ and\ \bibinfo {author} {\bibfnamefont {J.~R.}\ \bibnamefont {McClean}},\ }\bibfield  {title} {\bibinfo {title} {Power of data in quantum machine learning},\ }\href {https://doi.org/10.1038/s41467-021-22539-9} {\bibfield  {journal} {\bibinfo  {journal} {Nature {C}ommunications}\ }\textbf {\bibinfo {volume} {12}},\ \bibinfo {pages} {1} (\bibinfo {year} {2021})}\BibitemShut {NoStop}%
\bibitem [{\citenamefont {Huang}\ \emph {et~al.}(2022)\citenamefont {Huang}, \citenamefont {Kueng}, \citenamefont {Torlai}, \citenamefont {Albert},\ and\ \citenamefont {Preskill}}]{huang2021provably}%
  \BibitemOpen
  \bibfield  {author} {\bibinfo {author} {\bibfnamefont {H.-Y.}\ \bibnamefont {Huang}}, \bibinfo {author} {\bibfnamefont {R.}~\bibnamefont {Kueng}}, \bibinfo {author} {\bibfnamefont {G.}~\bibnamefont {Torlai}}, \bibinfo {author} {\bibfnamefont {V.~V.}\ \bibnamefont {Albert}},\ and\ \bibinfo {author} {\bibfnamefont {J.}~\bibnamefont {Preskill}},\ }\bibfield  {title} {\bibinfo {title} {Provably efficient machine learning for quantum many-body problems},\ }\href {https://doi.org/10.1126/science.abk3333} {\bibfield  {journal} {\bibinfo  {journal} {Science}\ }\textbf {\bibinfo {volume} {377}},\ \bibinfo {pages} {eabk3333} (\bibinfo {year} {2022})}\BibitemShut {NoStop}%
\bibitem [{\citenamefont {Huang}\ \emph {et~al.}(2023{\natexlab{a}})\citenamefont {Huang}, \citenamefont {Chen},\ and\ \citenamefont {Preskill}}]{huang2022learning}%
  \BibitemOpen
  \bibfield  {author} {\bibinfo {author} {\bibfnamefont {H.-Y.}\ \bibnamefont {Huang}}, \bibinfo {author} {\bibfnamefont {S.}~\bibnamefont {Chen}},\ and\ \bibinfo {author} {\bibfnamefont {J.}~\bibnamefont {Preskill}},\ }\bibfield  {title} {\bibinfo {title} {Learning to predict arbitrary quantum processes},\ }\href {https://doi.org/10.1103/PRXQuantum.4.040337} {\bibfield  {journal} {\bibinfo  {journal} {PRX Quantum}\ }\textbf {\bibinfo {volume} {4}},\ \bibinfo {pages} {040337} (\bibinfo {year} {2023}{\natexlab{a}})}\BibitemShut {NoStop}%
\bibitem [{\citenamefont {Du}\ \emph {et~al.}(2024)\citenamefont {Du}, \citenamefont {Hsieh},\ and\ \citenamefont {Tao}}]{du2024efficient}%
  \BibitemOpen
  \bibfield  {author} {\bibinfo {author} {\bibfnamefont {Y.}~\bibnamefont {Du}}, \bibinfo {author} {\bibfnamefont {M.-H.}\ \bibnamefont {Hsieh}},\ and\ \bibinfo {author} {\bibfnamefont {D.}~\bibnamefont {Tao}},\ }\bibfield  {title} {\bibinfo {title} {Efficient learning for linear properties of bounded-gate quantum circuits},\ }\bibfield  {journal} {\bibinfo  {journal} {arXiv preprint arXiv:2408.12199}\ }\href {https://doi.org/https://doi.org/10.48550/arXiv.2408.12199} {https://doi.org/10.48550/arXiv.2408.12199} (\bibinfo {year} {2024})\BibitemShut {NoStop}%
\bibitem [{\citenamefont {Carleo}\ and\ \citenamefont {Troyer}(2017)}]{carleo2017solving}%
  \BibitemOpen
  \bibfield  {author} {\bibinfo {author} {\bibfnamefont {G.}~\bibnamefont {Carleo}}\ and\ \bibinfo {author} {\bibfnamefont {M.}~\bibnamefont {Troyer}},\ }\bibfield  {title} {\bibinfo {title} {Solving the quantum many-body problem with artificial neural networks},\ }\href {https://www.science.org/doi/abs/10.1126/science.aag2302} {\bibfield  {journal} {\bibinfo  {journal} {Science}\ }\textbf {\bibinfo {volume} {355}},\ \bibinfo {pages} {602} (\bibinfo {year} {2017})}\BibitemShut {NoStop}%
\bibitem [{\citenamefont {Yang}\ \emph {et~al.}(2024)\citenamefont {Yang}, \citenamefont {Soleimanifar}, \citenamefont {Bergamaschi},\ and\ \citenamefont {Preskill}}]{yang2024can}%
  \BibitemOpen
  \bibfield  {author} {\bibinfo {author} {\bibfnamefont {T.-H.}\ \bibnamefont {Yang}}, \bibinfo {author} {\bibfnamefont {M.}~\bibnamefont {Soleimanifar}}, \bibinfo {author} {\bibfnamefont {T.}~\bibnamefont {Bergamaschi}},\ and\ \bibinfo {author} {\bibfnamefont {J.}~\bibnamefont {Preskill}},\ }\bibfield  {title} {\bibinfo {title} {When can classical neural networks represent quantum states?},\ }\href@noop {} {\bibfield  {journal} {\bibinfo  {journal} {arXiv preprint arXiv:2410.23152}\ } (\bibinfo {year} {2024})}\BibitemShut {NoStop}%
\bibitem [{\citenamefont {Cerezo}\ \emph {et~al.}(2021)\citenamefont {Cerezo}, \citenamefont {Arrasmith}, \citenamefont {Babbush}, \citenamefont {Benjamin}, \citenamefont {Endo}, \citenamefont {Fujii}, \citenamefont {McClean}, \citenamefont {Mitarai}, \citenamefont {Yuan}, \citenamefont {Cincio},\ and\ \citenamefont {Coles}}]{cerezo2020variationalreview}%
  \BibitemOpen
  \bibfield  {author} {\bibinfo {author} {\bibfnamefont {M.}~\bibnamefont {Cerezo}}, \bibinfo {author} {\bibfnamefont {A.}~\bibnamefont {Arrasmith}}, \bibinfo {author} {\bibfnamefont {R.}~\bibnamefont {Babbush}}, \bibinfo {author} {\bibfnamefont {S.~C.}\ \bibnamefont {Benjamin}}, \bibinfo {author} {\bibfnamefont {S.}~\bibnamefont {Endo}}, \bibinfo {author} {\bibfnamefont {K.}~\bibnamefont {Fujii}}, \bibinfo {author} {\bibfnamefont {J.~R.}\ \bibnamefont {McClean}}, \bibinfo {author} {\bibfnamefont {K.}~\bibnamefont {Mitarai}}, \bibinfo {author} {\bibfnamefont {X.}~\bibnamefont {Yuan}}, \bibinfo {author} {\bibfnamefont {L.}~\bibnamefont {Cincio}},\ and\ \bibinfo {author} {\bibfnamefont {P.~J.}\ \bibnamefont {Coles}},\ }\bibfield  {title} {\bibinfo {title} {Variational quantum algorithms},\ }\href {https://doi.org/10.1038/s42254-021-00348-9} {\bibfield  {journal} {\bibinfo  {journal} {Nature Reviews Physics}\ }\textbf {\bibinfo {volume} {3}},\ \bibinfo {pages} {625–644} (\bibinfo {year} {2021})}\BibitemShut
  {NoStop}%
\bibitem [{\citenamefont {Bharti}\ \emph {et~al.}(2022)\citenamefont {Bharti}, \citenamefont {Haug}, \citenamefont {Vedral},\ and\ \citenamefont {Kwek}}]{bharti2022noisy}%
  \BibitemOpen
  \bibfield  {author} {\bibinfo {author} {\bibfnamefont {K.}~\bibnamefont {Bharti}}, \bibinfo {author} {\bibfnamefont {T.}~\bibnamefont {Haug}}, \bibinfo {author} {\bibfnamefont {V.}~\bibnamefont {Vedral}},\ and\ \bibinfo {author} {\bibfnamefont {L.-C.}\ \bibnamefont {Kwek}},\ }\bibfield  {title} {\bibinfo {title} {Noisy intermediate-scale quantum algorithm for semidefinite programming},\ }\href {https://doi.org/10.1103/PhysRevA.105.052445} {\bibfield  {journal} {\bibinfo  {journal} {Physical Review A}\ }\textbf {\bibinfo {volume} {105}},\ \bibinfo {pages} {052445} (\bibinfo {year} {2022})}\BibitemShut {NoStop}%
\bibitem [{\citenamefont {Roberts}\ and\ \citenamefont {Yoshida}(2017)}]{roberts2017chaos}%
  \BibitemOpen
  \bibfield  {author} {\bibinfo {author} {\bibfnamefont {D.~A.}\ \bibnamefont {Roberts}}\ and\ \bibinfo {author} {\bibfnamefont {B.}~\bibnamefont {Yoshida}},\ }\bibfield  {title} {\bibinfo {title} {Chaos and complexity by design},\ }\href {https://doi.org/10.1007/JHEP04(2017)121} {\bibfield  {journal} {\bibinfo  {journal} {Journal of High Energy Physics}\ }\textbf {\bibinfo {volume} {2017}},\ \bibinfo {pages} {121} (\bibinfo {year} {2017})}\BibitemShut {NoStop}%
\bibitem [{\citenamefont {Belyansky}\ \emph {et~al.}(2020)\citenamefont {Belyansky}, \citenamefont {Bienias}, \citenamefont {Kharkov}, \citenamefont {Gorshkov},\ and\ \citenamefont {Swingle}}]{belyansky2020minimal}%
  \BibitemOpen
  \bibfield  {author} {\bibinfo {author} {\bibfnamefont {R.}~\bibnamefont {Belyansky}}, \bibinfo {author} {\bibfnamefont {P.}~\bibnamefont {Bienias}}, \bibinfo {author} {\bibfnamefont {Y.~A.}\ \bibnamefont {Kharkov}}, \bibinfo {author} {\bibfnamefont {A.~V.}\ \bibnamefont {Gorshkov}},\ and\ \bibinfo {author} {\bibfnamefont {B.}~\bibnamefont {Swingle}},\ }\bibfield  {title} {\bibinfo {title} {Minimal model for fast scrambling},\ }\href {https://doi.org/10.1103/PhysRevLett.125.130601} {\bibfield  {journal} {\bibinfo  {journal} {Physical Review Letters}\ }\textbf {\bibinfo {volume} {125}},\ \bibinfo {pages} {130601} (\bibinfo {year} {2020})}\BibitemShut {NoStop}%
\bibitem [{\citenamefont {Geller}\ \emph {et~al.}(2022)\citenamefont {Geller}, \citenamefont {Arrasmith}, \citenamefont {Holmes}, \citenamefont {Yan}, \citenamefont {Coles},\ and\ \citenamefont {Sornborger}}]{geller2022quantum}%
  \BibitemOpen
  \bibfield  {author} {\bibinfo {author} {\bibfnamefont {M.~R.}\ \bibnamefont {Geller}}, \bibinfo {author} {\bibfnamefont {A.}~\bibnamefont {Arrasmith}}, \bibinfo {author} {\bibfnamefont {Z.}~\bibnamefont {Holmes}}, \bibinfo {author} {\bibfnamefont {B.}~\bibnamefont {Yan}}, \bibinfo {author} {\bibfnamefont {P.~J.}\ \bibnamefont {Coles}},\ and\ \bibinfo {author} {\bibfnamefont {A.}~\bibnamefont {Sornborger}},\ }\bibfield  {title} {\bibinfo {title} {Quantum simulation of operator spreading in the chaotic ising model},\ }\href {https://journals.aps.org/pre/abstract/10.1103/PhysRevE.105.035302} {\bibfield  {journal} {\bibinfo  {journal} {Physical Review E}\ }\textbf {\bibinfo {volume} {105}},\ \bibinfo {pages} {035302} (\bibinfo {year} {2022})}\BibitemShut {NoStop}%
\bibitem [{\citenamefont {Zhang}\ \emph {et~al.}(2024{\natexlab{a}})\citenamefont {Zhang}, \citenamefont {Nie},\ and\ \citenamefont {von Keyserlingk}}]{zhang2024thermalization}%
  \BibitemOpen
  \bibfield  {author} {\bibinfo {author} {\bibfnamefont {C.}~\bibnamefont {Zhang}}, \bibinfo {author} {\bibfnamefont {L.}~\bibnamefont {Nie}},\ and\ \bibinfo {author} {\bibfnamefont {C.}~\bibnamefont {von Keyserlingk}},\ }\bibfield  {title} {\bibinfo {title} {Thermalization rates and quantum ruelle-pollicott resonances: insights from operator hydrodynamics},\ }\bibfield  {journal} {\bibinfo  {journal} {arXiv preprint arXiv:2409.17251}\ }\href {https://doi.org/10.48550/arXiv.2409.17251} {10.48550/arXiv.2409.17251} (\bibinfo {year} {2024}{\natexlab{a}})\BibitemShut {NoStop}%
\bibitem [{\citenamefont {Aharonov}\ \emph {et~al.}(2023{\natexlab{a}})\citenamefont {Aharonov}, \citenamefont {Gao}, \citenamefont {Landau}, \citenamefont {Liu},\ and\ \citenamefont {Vazirani}}]{aharonov2022polynomial}%
  \BibitemOpen
  \bibfield  {author} {\bibinfo {author} {\bibfnamefont {D.}~\bibnamefont {Aharonov}}, \bibinfo {author} {\bibfnamefont {X.}~\bibnamefont {Gao}}, \bibinfo {author} {\bibfnamefont {Z.}~\bibnamefont {Landau}}, \bibinfo {author} {\bibfnamefont {Y.}~\bibnamefont {Liu}},\ and\ \bibinfo {author} {\bibfnamefont {U.}~\bibnamefont {Vazirani}},\ }\bibfield  {title} {\bibinfo {title} {A polynomial-time classical algorithm for noisy random circuit sampling},\ }\href {https://doi.org/10.1145/3564246.3585234} {\bibfield  {journal} {\bibinfo  {journal} {Proceedings of the 55th Annual ACM Symposium on Theory of Computing}\ ,\ \bibinfo {pages} {945}} (\bibinfo {year} {2023}{\natexlab{a}})}\BibitemShut {NoStop}%
\bibitem [{\citenamefont {Rudolph}\ \emph {et~al.}(2023)\citenamefont {Rudolph}, \citenamefont {Fontana}, \citenamefont {Holmes},\ and\ \citenamefont {Cincio}}]{rudolph2023classical}%
  \BibitemOpen
  \bibfield  {author} {\bibinfo {author} {\bibfnamefont {M.~S.}\ \bibnamefont {Rudolph}}, \bibinfo {author} {\bibfnamefont {E.}~\bibnamefont {Fontana}}, \bibinfo {author} {\bibfnamefont {Z.}~\bibnamefont {Holmes}},\ and\ \bibinfo {author} {\bibfnamefont {L.}~\bibnamefont {Cincio}},\ }\bibfield  {title} {\bibinfo {title} {Classical surrogate simulation of quantum systems with {LOWESA}},\ }\href {https://arxiv.org/abs/2308.09109} {\bibfield  {journal} {\bibinfo  {journal} {arXiv preprint arXiv:2308.09109}\ } (\bibinfo {year} {2023})}\BibitemShut {NoStop}%
\bibitem [{\citenamefont {Schuster}\ \emph {et~al.}(2024{\natexlab{a}})\citenamefont {Schuster}, \citenamefont {Yin}, \citenamefont {Gao},\ and\ \citenamefont {Yao}}]{schuster2024polynomial}%
  \BibitemOpen
  \bibfield  {author} {\bibinfo {author} {\bibfnamefont {T.}~\bibnamefont {Schuster}}, \bibinfo {author} {\bibfnamefont {C.}~\bibnamefont {Yin}}, \bibinfo {author} {\bibfnamefont {X.}~\bibnamefont {Gao}},\ and\ \bibinfo {author} {\bibfnamefont {N.~Y.}\ \bibnamefont {Yao}},\ }\bibfield  {title} {\bibinfo {title} {A polynomial-time classical algorithm for noisy quantum circuits},\ }\bibfield  {journal} {\bibinfo  {journal} {arXiv preprint arXiv:2407.12768}\ }\href {https://doi.org/https://doi.org/10.48550/arXiv.2407.12768} {https://doi.org/10.48550/arXiv.2407.12768} (\bibinfo {year} {2024}{\natexlab{a}})\BibitemShut {NoStop}%
\bibitem [{\citenamefont {Gonz{\'a}lez-Garc{\'\i}a}\ \emph {et~al.}(2024)\citenamefont {Gonz{\'a}lez-Garc{\'\i}a}, \citenamefont {Cirac},\ and\ \citenamefont {Trivedi}}]{gonzalez2024pauli}%
  \BibitemOpen
  \bibfield  {author} {\bibinfo {author} {\bibfnamefont {G.}~\bibnamefont {Gonz{\'a}lez-Garc{\'\i}a}}, \bibinfo {author} {\bibfnamefont {J.~I.}\ \bibnamefont {Cirac}},\ and\ \bibinfo {author} {\bibfnamefont {R.}~\bibnamefont {Trivedi}},\ }\bibfield  {title} {\bibinfo {title} {Pauli path simulations of noisy quantum circuits beyond average case},\ }\href {https://doi.org/10.48550/arXiv.2407.16068} {\bibfield  {journal} {\bibinfo  {journal} {arXiv preprint arXiv:2407.16068}\ } (\bibinfo {year} {2024})}\BibitemShut {NoStop}%
\bibitem [{\citenamefont {Rakovszky}\ \emph {et~al.}(2022)\citenamefont {Rakovszky}, \citenamefont {Von~Keyserlingk},\ and\ \citenamefont {Pollmann}}]{rakovszky2022dissipation}%
  \BibitemOpen
  \bibfield  {author} {\bibinfo {author} {\bibfnamefont {T.}~\bibnamefont {Rakovszky}}, \bibinfo {author} {\bibfnamefont {C.}~\bibnamefont {Von~Keyserlingk}},\ and\ \bibinfo {author} {\bibfnamefont {F.}~\bibnamefont {Pollmann}},\ }\bibfield  {title} {\bibinfo {title} {Dissipation-assisted operator evolution method for capturing hydrodynamic transport},\ }\href@noop {} {\bibfield  {journal} {\bibinfo  {journal} {Physical Review B}\ }\textbf {\bibinfo {volume} {105}},\ \bibinfo {pages} {075131} (\bibinfo {year} {2022})}\BibitemShut {NoStop}%
\bibitem [{\citenamefont {Von~Keyserlingk}\ \emph {et~al.}(2022)\citenamefont {Von~Keyserlingk}, \citenamefont {Pollmann},\ and\ \citenamefont {Rakovszky}}]{von2022operator}%
  \BibitemOpen
  \bibfield  {author} {\bibinfo {author} {\bibfnamefont {C.}~\bibnamefont {Von~Keyserlingk}}, \bibinfo {author} {\bibfnamefont {F.}~\bibnamefont {Pollmann}},\ and\ \bibinfo {author} {\bibfnamefont {T.}~\bibnamefont {Rakovszky}},\ }\bibfield  {title} {\bibinfo {title} {Operator backflow and the classical simulation of quantum transport},\ }\href {https://doi.org/10.1103/PhysRevB.105.245101} {\bibfield  {journal} {\bibinfo  {journal} {Physical Review B}\ }\textbf {\bibinfo {volume} {105}},\ \bibinfo {pages} {245101} (\bibinfo {year} {2022})}\BibitemShut {NoStop}%
\bibitem [{\citenamefont {Aharonov}\ \emph {et~al.}(2023{\natexlab{b}})\citenamefont {Aharonov}, \citenamefont {Gao}, \citenamefont {Landau}, \citenamefont {Liu},\ and\ \citenamefont {Vazirani}}]{aharonov2023polynomial}%
  \BibitemOpen
  \bibfield  {author} {\bibinfo {author} {\bibfnamefont {D.}~\bibnamefont {Aharonov}}, \bibinfo {author} {\bibfnamefont {X.}~\bibnamefont {Gao}}, \bibinfo {author} {\bibfnamefont {Z.}~\bibnamefont {Landau}}, \bibinfo {author} {\bibfnamefont {Y.}~\bibnamefont {Liu}},\ and\ \bibinfo {author} {\bibfnamefont {U.}~\bibnamefont {Vazirani}},\ }\bibfield  {title} {\bibinfo {title} {A polynomial-time classical algorithm for noisy random circuit sampling},\ }in\ \href {https://doi.org/10.1145/3564246.3585234} {\emph {\bibinfo {booktitle} {Proceedings of the 55th Annual ACM Symposium on Theory of Computing}}}\ (\bibinfo {year} {2023})\ pp.\ \bibinfo {pages} {945--957}\BibitemShut {NoStop}%
\bibitem [{\citenamefont {Fontana}\ \emph {et~al.}(2025)\citenamefont {Fontana}, \citenamefont {Rudolph}, \citenamefont {Duncan}, \citenamefont {Rungger},\ and\ \citenamefont {C{\^\i}rstoiu}}]{fontana2023classical}%
  \BibitemOpen
  \bibfield  {author} {\bibinfo {author} {\bibfnamefont {E.}~\bibnamefont {Fontana}}, \bibinfo {author} {\bibfnamefont {M.~S.}\ \bibnamefont {Rudolph}}, \bibinfo {author} {\bibfnamefont {R.}~\bibnamefont {Duncan}}, \bibinfo {author} {\bibfnamefont {I.}~\bibnamefont {Rungger}},\ and\ \bibinfo {author} {\bibfnamefont {C.}~\bibnamefont {C{\^\i}rstoiu}},\ }\bibfield  {title} {\bibinfo {title} {Classical simulations of noisy variational quantum circuits},\ }\href {https://doi.org/https://doi.org/10.1038/s41534-024-00955-1} {\bibfield  {journal} {\bibinfo  {journal} {npj Quantum Information}\ }\textbf {\bibinfo {volume} {11}},\ \bibinfo {pages} {1} (\bibinfo {year} {2025})}\BibitemShut {NoStop}%
\bibitem [{\citenamefont {Shao}\ \emph {et~al.}(2024)\citenamefont {Shao}, \citenamefont {Wei}, \citenamefont {Cheng},\ and\ \citenamefont {Liu}}]{shao2023simulating}%
  \BibitemOpen
  \bibfield  {author} {\bibinfo {author} {\bibfnamefont {Y.}~\bibnamefont {Shao}}, \bibinfo {author} {\bibfnamefont {F.}~\bibnamefont {Wei}}, \bibinfo {author} {\bibfnamefont {S.}~\bibnamefont {Cheng}},\ and\ \bibinfo {author} {\bibfnamefont {Z.}~\bibnamefont {Liu}},\ }\bibfield  {title} {\bibinfo {title} {Simulating noisy variational quantum algorithms: A polynomial approach},\ }\href {https://doi.org/10.1103/PhysRevLett.133.120603} {\bibfield  {journal} {\bibinfo  {journal} {Physical Review Letters}\ }\textbf {\bibinfo {volume} {133}},\ \bibinfo {pages} {120603} (\bibinfo {year} {2024})}\BibitemShut {NoStop}%
\bibitem [{\citenamefont {Schuster}\ \emph {et~al.}(2024{\natexlab{b}})\citenamefont {Schuster}, \citenamefont {Yin}, \citenamefont {Gao},\ and\ \citenamefont {Yao}}]{schuster2024polynomialtimeclassicalalgorithmnoisy}%
  \BibitemOpen
  \bibfield  {author} {\bibinfo {author} {\bibfnamefont {T.}~\bibnamefont {Schuster}}, \bibinfo {author} {\bibfnamefont {C.}~\bibnamefont {Yin}}, \bibinfo {author} {\bibfnamefont {X.}~\bibnamefont {Gao}},\ and\ \bibinfo {author} {\bibfnamefont {N.~Y.}\ \bibnamefont {Yao}},\ }\href {https://arxiv.org/abs/2407.12768} {\bibinfo {title} {A polynomial-time classical algorithm for noisy quantum circuits}} (\bibinfo {year} {2024}{\natexlab{b}}),\ \Eprint {https://arxiv.org/abs/2407.12768} {arXiv:2407.12768 [quant-ph]} \BibitemShut {NoStop}%
\bibitem [{\citenamefont {Kuo}\ \emph {et~al.}(2020)\citenamefont {Kuo}, \citenamefont {Akhtar}, \citenamefont {Arovas},\ and\ \citenamefont {You}}]{kuo2020markovian}%
  \BibitemOpen
  \bibfield  {author} {\bibinfo {author} {\bibfnamefont {W.-T.}\ \bibnamefont {Kuo}}, \bibinfo {author} {\bibfnamefont {A.}~\bibnamefont {Akhtar}}, \bibinfo {author} {\bibfnamefont {D.~P.}\ \bibnamefont {Arovas}},\ and\ \bibinfo {author} {\bibfnamefont {Y.-Z.}\ \bibnamefont {You}},\ }\bibfield  {title} {\bibinfo {title} {Markovian entanglement dynamics under locally scrambled quantum evolution},\ }\href {https://doi.org/10.1103/PhysRevB.101.224202} {\bibfield  {journal} {\bibinfo  {journal} {Physical Review B}\ }\textbf {\bibinfo {volume} {101}},\ \bibinfo {pages} {224202} (\bibinfo {year} {2020})}\BibitemShut {NoStop}%
\bibitem [{\citenamefont {Hu}\ \emph {et~al.}(2023)\citenamefont {Hu}, \citenamefont {Choi},\ and\ \citenamefont {You}}]{hu2021classical}%
  \BibitemOpen
  \bibfield  {author} {\bibinfo {author} {\bibfnamefont {H.-Y.}\ \bibnamefont {Hu}}, \bibinfo {author} {\bibfnamefont {S.}~\bibnamefont {Choi}},\ and\ \bibinfo {author} {\bibfnamefont {Y.-Z.}\ \bibnamefont {You}},\ }\bibfield  {title} {\bibinfo {title} {Classical shadow tomography with locally scrambled quantum dynamics},\ }\href {https://doi.org/10.1103/PhysRevResearch.5.023027} {\bibfield  {journal} {\bibinfo  {journal} {Physical Review Research}\ }\textbf {\bibinfo {volume} {5}},\ \bibinfo {pages} {023027} (\bibinfo {year} {2023})}\BibitemShut {NoStop}%
\bibitem [{\citenamefont {Caro}\ \emph {et~al.}(2023)\citenamefont {Caro}, \citenamefont {Huang}, \citenamefont {Ezzell}, \citenamefont {Gibbs}, \citenamefont {Sornborger}, \citenamefont {Cincio}, \citenamefont {Coles},\ and\ \citenamefont {Holmes}}]{caro2022outofdistribution}%
  \BibitemOpen
  \bibfield  {author} {\bibinfo {author} {\bibfnamefont {M.~C.}\ \bibnamefont {Caro}}, \bibinfo {author} {\bibfnamefont {H.-Y.}\ \bibnamefont {Huang}}, \bibinfo {author} {\bibfnamefont {N.}~\bibnamefont {Ezzell}}, \bibinfo {author} {\bibfnamefont {J.}~\bibnamefont {Gibbs}}, \bibinfo {author} {\bibfnamefont {A.~T.}\ \bibnamefont {Sornborger}}, \bibinfo {author} {\bibfnamefont {L.}~\bibnamefont {Cincio}}, \bibinfo {author} {\bibfnamefont {P.~J.}\ \bibnamefont {Coles}},\ and\ \bibinfo {author} {\bibfnamefont {Z.}~\bibnamefont {Holmes}},\ }\bibfield  {title} {\bibinfo {title} {Out-of-distribution generalization for learning quantum dynamics},\ }\href {https://doi.org/10.1038/s41467-023-39381-w} {\bibfield  {journal} {\bibinfo  {journal} {Nature Communications}\ }\textbf {\bibinfo {volume} {14}},\ \bibinfo {pages} {3751} (\bibinfo {year} {2023})}\BibitemShut {NoStop}%
\bibitem [{\citenamefont {Gibbs}\ \emph {et~al.}(2024)\citenamefont {Gibbs}, \citenamefont {Holmes}, \citenamefont {Caro}, \citenamefont {Ezzell}, \citenamefont {Huang}, \citenamefont {Cincio}, \citenamefont {Sornborger},\ and\ \citenamefont {Coles}}]{gibbs2024dynamical}%
  \BibitemOpen
  \bibfield  {author} {\bibinfo {author} {\bibfnamefont {J.}~\bibnamefont {Gibbs}}, \bibinfo {author} {\bibfnamefont {Z.}~\bibnamefont {Holmes}}, \bibinfo {author} {\bibfnamefont {M.~C.}\ \bibnamefont {Caro}}, \bibinfo {author} {\bibfnamefont {N.}~\bibnamefont {Ezzell}}, \bibinfo {author} {\bibfnamefont {H.-Y.}\ \bibnamefont {Huang}}, \bibinfo {author} {\bibfnamefont {L.}~\bibnamefont {Cincio}}, \bibinfo {author} {\bibfnamefont {A.~T.}\ \bibnamefont {Sornborger}},\ and\ \bibinfo {author} {\bibfnamefont {P.~J.}\ \bibnamefont {Coles}},\ }\bibfield  {title} {\bibinfo {title} {Dynamical simulation via quantum machine learning with provable generalization},\ }\href {https://doi.org/10.1103/PhysRevResearch.6.013241} {\bibfield  {journal} {\bibinfo  {journal} {Physical Review Research}\ }\textbf {\bibinfo {volume} {6}},\ \bibinfo {pages} {013241} (\bibinfo {year} {2024})}\BibitemShut {NoStop}%
\bibitem [{\citenamefont {Brown}\ and\ \citenamefont {Fawzi}(2012)}]{brown2012scrambling}%
  \BibitemOpen
  \bibfield  {author} {\bibinfo {author} {\bibfnamefont {W.}~\bibnamefont {Brown}}\ and\ \bibinfo {author} {\bibfnamefont {O.}~\bibnamefont {Fawzi}},\ }\bibfield  {title} {\bibinfo {title} {Scrambling speed of random quantum circuits},\ }\bibfield  {journal} {\bibinfo  {journal} {arXiv preprint arXiv:1210.6644}\ }\href {https://doi.org/https://doi.org/10.48550/arXiv.1210.6644} {https://doi.org/10.48550/arXiv.1210.6644} (\bibinfo {year} {2012})\BibitemShut {NoStop}%
\bibitem [{\citenamefont {Brown}\ and\ \citenamefont {Fawzi}(2015)}]{brown2015decoupling}%
  \BibitemOpen
  \bibfield  {author} {\bibinfo {author} {\bibfnamefont {W.}~\bibnamefont {Brown}}\ and\ \bibinfo {author} {\bibfnamefont {O.}~\bibnamefont {Fawzi}},\ }\bibfield  {title} {\bibinfo {title} {Decoupling with random quantum circuits},\ }\href {https://doi.org/10.1007/s00220-015-2470-1} {\bibfield  {journal} {\bibinfo  {journal} {Communications in mathematical physics}\ }\textbf {\bibinfo {volume} {340}},\ \bibinfo {pages} {867} (\bibinfo {year} {2015})}\BibitemShut {NoStop}%
\bibitem [{\citenamefont {Dalzell}\ \emph {et~al.}(2021)\citenamefont {Dalzell}, \citenamefont {Hunter-Jones},\ and\ \citenamefont {Brandão}}]{dalzell2021random}%
  \BibitemOpen
  \bibfield  {author} {\bibinfo {author} {\bibfnamefont {A.~M.}\ \bibnamefont {Dalzell}}, \bibinfo {author} {\bibfnamefont {N.}~\bibnamefont {Hunter-Jones}},\ and\ \bibinfo {author} {\bibfnamefont {F.~G. S.~L.}\ \bibnamefont {Brandão}},\ }\bibfield  {title} {\bibinfo {title} {Random quantum circuits transform local noise into global white noise},\ }\href {https://arxiv.org/abs/2111.14907} {\bibfield  {journal} {\bibinfo  {journal} {arXiv preprint arXiv:2111.14907}\ } (\bibinfo {year} {2021})}\BibitemShut {NoStop}%
\bibitem [{\citenamefont {Harrow}\ and\ \citenamefont {Mehraban}(2023)}]{harrow2023approximate}%
  \BibitemOpen
  \bibfield  {author} {\bibinfo {author} {\bibfnamefont {A.~W.}\ \bibnamefont {Harrow}}\ and\ \bibinfo {author} {\bibfnamefont {S.}~\bibnamefont {Mehraban}},\ }\bibfield  {title} {\bibinfo {title} {Approximate unitary t-designs by short random quantum circuits using nearest-neighbor and long-range gates},\ }\href@noop {} {\bibfield  {journal} {\bibinfo  {journal} {Communications in Mathematical Physics}\ }\textbf {\bibinfo {volume} {401}},\ \bibinfo {pages} {1531} (\bibinfo {year} {2023})}\BibitemShut {NoStop}%
\bibitem [{\citenamefont {Zhang}\ \emph {et~al.}(2024{\natexlab{b}})\citenamefont {Zhang}, \citenamefont {Liu},\ and\ \citenamefont {Zhang}}]{zhang2023absence}%
  \BibitemOpen
  \bibfield  {author} {\bibinfo {author} {\bibfnamefont {H.-K.}\ \bibnamefont {Zhang}}, \bibinfo {author} {\bibfnamefont {S.}~\bibnamefont {Liu}},\ and\ \bibinfo {author} {\bibfnamefont {S.-X.}\ \bibnamefont {Zhang}},\ }\bibfield  {title} {\bibinfo {title} {Absence of barren plateaus in finite local-depth circuits with long-range entanglement},\ }\href {https://doi.org/10.1103/PhysRevLett.132.150603} {\bibfield  {journal} {\bibinfo  {journal} {Physical Review Letters}\ }\textbf {\bibinfo {volume} {132}},\ \bibinfo {pages} {150603} (\bibinfo {year} {2024}{\natexlab{b}})}\BibitemShut {NoStop}%
\bibitem [{\citenamefont {Napp}(2022)}]{napp2022quantifying}%
  \BibitemOpen
  \bibfield  {author} {\bibinfo {author} {\bibfnamefont {J.}~\bibnamefont {Napp}},\ }\bibfield  {title} {\bibinfo {title} {Quantifying the barren plateau phenomenon for a model of unstructured variational ans\"{a}tze},\ }\href {https://arxiv.org/abs/2203.06174} {\bibfield  {journal} {\bibinfo  {journal} {arXiv preprint arXiv:2203.06174}\ } (\bibinfo {year} {2022})}\BibitemShut {NoStop}%
\bibitem [{\citenamefont {Braccia}\ \emph {et~al.}(2024)\citenamefont {Braccia}, \citenamefont {Bermejo}, \citenamefont {Cincio},\ and\ \citenamefont {Cerezo}}]{braccia2024computing}%
  \BibitemOpen
  \bibfield  {author} {\bibinfo {author} {\bibfnamefont {P.}~\bibnamefont {Braccia}}, \bibinfo {author} {\bibfnamefont {P.}~\bibnamefont {Bermejo}}, \bibinfo {author} {\bibfnamefont {L.}~\bibnamefont {Cincio}},\ and\ \bibinfo {author} {\bibfnamefont {M.}~\bibnamefont {Cerezo}},\ }\bibfield  {title} {\bibinfo {title} {Computing exact moments of local random quantum circuits via tensor networks},\ }\href {https://doi.org/10.1007/s42484-024-00187-8} {\bibfield  {journal} {\bibinfo  {journal} {Quantum Machine Intelligence}\ }\textbf {\bibinfo {volume} {6}},\ \bibinfo {pages} {54} (\bibinfo {year} {2024})}\BibitemShut {NoStop}%
\bibitem [{\citenamefont {Letcher}\ \emph {et~al.}(2024)\citenamefont {Letcher}, \citenamefont {Woerner},\ and\ \citenamefont {Zoufal}}]{letcher2023tight}%
  \BibitemOpen
  \bibfield  {author} {\bibinfo {author} {\bibfnamefont {A.}~\bibnamefont {Letcher}}, \bibinfo {author} {\bibfnamefont {S.}~\bibnamefont {Woerner}},\ and\ \bibinfo {author} {\bibfnamefont {C.}~\bibnamefont {Zoufal}},\ }\bibfield  {title} {\bibinfo {title} {Tight and efficient gradient bounds for parameterized quantum circuits},\ }\href {https://quantum-journal.org/papers/q-2024-09-25-1484/} {\bibfield  {journal} {\bibinfo  {journal} {Quantum}\ }\textbf {\bibinfo {volume} {8}},\ \bibinfo {pages} {1484} (\bibinfo {year} {2024})}\BibitemShut {NoStop}%
\bibitem [{\citenamefont {Pesah}\ \emph {et~al.}(2021)\citenamefont {Pesah}, \citenamefont {Cerezo}, \citenamefont {Wang}, \citenamefont {Volkoff}, \citenamefont {Sornborger},\ and\ \citenamefont {Coles}}]{pesah2020absence}%
  \BibitemOpen
  \bibfield  {author} {\bibinfo {author} {\bibfnamefont {A.}~\bibnamefont {Pesah}}, \bibinfo {author} {\bibfnamefont {M.}~\bibnamefont {Cerezo}}, \bibinfo {author} {\bibfnamefont {S.}~\bibnamefont {Wang}}, \bibinfo {author} {\bibfnamefont {T.}~\bibnamefont {Volkoff}}, \bibinfo {author} {\bibfnamefont {A.~T.}\ \bibnamefont {Sornborger}},\ and\ \bibinfo {author} {\bibfnamefont {P.~J.}\ \bibnamefont {Coles}},\ }\bibfield  {title} {\bibinfo {title} {Absence of barren plateaus in quantum convolutional neural networks},\ }\href {https://doi.org/10.1103/PhysRevX.11.041011} {\bibfield  {journal} {\bibinfo  {journal} {Physical Review X}\ }\textbf {\bibinfo {volume} {11}},\ \bibinfo {pages} {041011} (\bibinfo {year} {2021})}\BibitemShut {NoStop}%
\bibitem [{\citenamefont {Nemkov}\ \emph {et~al.}(2023)\citenamefont {Nemkov}, \citenamefont {Kiktenko},\ and\ \citenamefont {Fedorov}}]{nemkov2023fourier}%
  \BibitemOpen
  \bibfield  {author} {\bibinfo {author} {\bibfnamefont {N.~A.}\ \bibnamefont {Nemkov}}, \bibinfo {author} {\bibfnamefont {E.~O.}\ \bibnamefont {Kiktenko}},\ and\ \bibinfo {author} {\bibfnamefont {A.~K.}\ \bibnamefont {Fedorov}},\ }\bibfield  {title} {\bibinfo {title} {Fourier expansion in variational quantum algorithms},\ }\href {https://doi.org/10.1103/PhysRevA.108.032406} {\bibfield  {journal} {\bibinfo  {journal} {Phys. Rev. A}\ }\textbf {\bibinfo {volume} {108}},\ \bibinfo {pages} {032406} (\bibinfo {year} {2023})}\BibitemShut {NoStop}%
\bibitem [{\citenamefont {Begu{\v{s}}i{\'c}}\ \emph {et~al.}(2024)\citenamefont {Begu{\v{s}}i{\'c}}, \citenamefont {Gray},\ and\ \citenamefont {Chan}}]{beguvsic2023fast}%
  \BibitemOpen
  \bibfield  {author} {\bibinfo {author} {\bibfnamefont {T.}~\bibnamefont {Begu{\v{s}}i{\'c}}}, \bibinfo {author} {\bibfnamefont {J.}~\bibnamefont {Gray}},\ and\ \bibinfo {author} {\bibfnamefont {G.~K.-L.}\ \bibnamefont {Chan}},\ }\bibfield  {title} {\bibinfo {title} {Fast and converged classical simulations of evidence for the utility of quantum computing before fault tolerance},\ }\bibfield  {journal} {\bibinfo  {journal} {Science Advances}\ }\textbf {\bibinfo {volume} {10}},\ \href {https://doi.org/10.1126/sciadv.adk4321} {10.1126/sciadv.adk4321} (\bibinfo {year} {2024})\BibitemShut {NoStop}%
\bibitem [{\citenamefont {Bermejo}\ \emph {et~al.}(2024{\natexlab{a}})\citenamefont {Bermejo}, \citenamefont {Braccia}, \citenamefont {Rudolph}, \citenamefont {Holmes}, \citenamefont {Cincio},\ and\ \citenamefont {Cerezo}}]{bermejo2024quantumconvolutionalneuralnetworks}%
  \BibitemOpen
  \bibfield  {author} {\bibinfo {author} {\bibfnamefont {P.}~\bibnamefont {Bermejo}}, \bibinfo {author} {\bibfnamefont {P.}~\bibnamefont {Braccia}}, \bibinfo {author} {\bibfnamefont {M.~S.}\ \bibnamefont {Rudolph}}, \bibinfo {author} {\bibfnamefont {Z.}~\bibnamefont {Holmes}}, \bibinfo {author} {\bibfnamefont {L.}~\bibnamefont {Cincio}},\ and\ \bibinfo {author} {\bibfnamefont {M.}~\bibnamefont {Cerezo}},\ }\href {https://arxiv.org/abs/2408.12739} {\bibinfo {title} {Quantum convolutional neural networks are (effectively) classically simulable}} (\bibinfo {year} {2024}{\natexlab{a}}),\ \Eprint {https://arxiv.org/abs/2408.12739} {arXiv:2408.12739 [quant-ph]} \BibitemShut {NoStop}%
\bibitem [{\citenamefont {Cerezo}\ \emph {et~al.}(2023)\citenamefont {Cerezo}, \citenamefont {Larocca}, \citenamefont {Garc{\'\i}a-Mart{\'\i}n}, \citenamefont {Diaz}, \citenamefont {Braccia}, \citenamefont {Fontana}, \citenamefont {Rudolph}, \citenamefont {Bermejo}, \citenamefont {Ijaz}, \citenamefont {Thanasilp} \emph {et~al.}}]{cerezo2023does}%
  \BibitemOpen
  \bibfield  {author} {\bibinfo {author} {\bibfnamefont {M.}~\bibnamefont {Cerezo}}, \bibinfo {author} {\bibfnamefont {M.}~\bibnamefont {Larocca}}, \bibinfo {author} {\bibfnamefont {D.}~\bibnamefont {Garc{\'\i}a-Mart{\'\i}n}}, \bibinfo {author} {\bibfnamefont {N.~L.}\ \bibnamefont {Diaz}}, \bibinfo {author} {\bibfnamefont {P.}~\bibnamefont {Braccia}}, \bibinfo {author} {\bibfnamefont {E.}~\bibnamefont {Fontana}}, \bibinfo {author} {\bibfnamefont {M.~S.}\ \bibnamefont {Rudolph}}, \bibinfo {author} {\bibfnamefont {P.}~\bibnamefont {Bermejo}}, \bibinfo {author} {\bibfnamefont {A.}~\bibnamefont {Ijaz}}, \bibinfo {author} {\bibfnamefont {S.}~\bibnamefont {Thanasilp}}, \emph {et~al.},\ }\bibfield  {title} {\bibinfo {title} {Does provable absence of barren plateaus imply classical simulability? {O}r, why we need to rethink variational quantum computing},\ }\href {https://arxiv.org/abs/2312.09121} {\bibfield  {journal} {\bibinfo  {journal} {arXiv preprint arXiv:2312.09121}\ } (\bibinfo {year} {2023})}\BibitemShut
  {NoStop}%
\bibitem [{\citenamefont {Huang}\ \emph {et~al.}(2023{\natexlab{b}})\citenamefont {Huang}, \citenamefont {Chen},\ and\ \citenamefont {Preskill}}]{huang2023learning}%
  \BibitemOpen
  \bibfield  {author} {\bibinfo {author} {\bibfnamefont {H.-Y.}\ \bibnamefont {Huang}}, \bibinfo {author} {\bibfnamefont {S.}~\bibnamefont {Chen}},\ and\ \bibinfo {author} {\bibfnamefont {J.}~\bibnamefont {Preskill}},\ }\bibfield  {title} {\bibinfo {title} {Learning to predict arbitrary quantum processes},\ }\href {https://doi.org/10.1103/PRXQuantum.4.040337} {\bibfield  {journal} {\bibinfo  {journal} {PRX Quantum}\ }\textbf {\bibinfo {volume} {4}},\ \bibinfo {pages} {040337} (\bibinfo {year} {2023}{\natexlab{b}})}\BibitemShut {NoStop}%
\bibitem [{\citenamefont {Gil-Fuster}\ \emph {et~al.}(2024)\citenamefont {Gil-Fuster}, \citenamefont {Eisert},\ and\ \citenamefont {Bravo-Prieto}}]{gil-fuster2024understanding}%
  \BibitemOpen
  \bibfield  {author} {\bibinfo {author} {\bibfnamefont {E.}~\bibnamefont {Gil-Fuster}}, \bibinfo {author} {\bibfnamefont {J.}~\bibnamefont {Eisert}},\ and\ \bibinfo {author} {\bibfnamefont {C.}~\bibnamefont {Bravo-Prieto}},\ }\bibfield  {title} {\bibinfo {title} {Understanding quantum machine learning also requires rethinking generalization},\ }\href {https://doi.org/https://doi.org/10.1038/s41467-024-45882-z} {\bibfield  {journal} {\bibinfo  {journal} {Nature Communications}\ }\textbf {\bibinfo {volume} {15}},\ \bibinfo {pages} {2277} (\bibinfo {year} {2024})}\BibitemShut {NoStop}%
\bibitem [{\citenamefont {Aaronson}\ and\ \citenamefont {Gunn}(2019)}]{aaronson2019classical}%
  \BibitemOpen
  \bibfield  {author} {\bibinfo {author} {\bibfnamefont {S.}~\bibnamefont {Aaronson}}\ and\ \bibinfo {author} {\bibfnamefont {S.}~\bibnamefont {Gunn}},\ }\bibfield  {title} {\bibinfo {title} {On the classical hardness of spoofing linear cross-entropy benchmarking},\ }\bibfield  {journal} {\bibinfo  {journal} {arXiv preprint arXiv:1910.12085}\ }\href {https://doi.org/https://doi.org/10.48550/arXiv.1910.12085} {https://doi.org/10.48550/arXiv.1910.12085} (\bibinfo {year} {2019})\BibitemShut {NoStop}%
\bibitem [{\citenamefont {Gao}\ \emph {et~al.}(2024)\citenamefont {Gao}, \citenamefont {Kalinowski}, \citenamefont {Chou}, \citenamefont {Lukin}, \citenamefont {Barak},\ and\ \citenamefont {Choi}}]{gao2024limitations}%
  \BibitemOpen
  \bibfield  {author} {\bibinfo {author} {\bibfnamefont {X.}~\bibnamefont {Gao}}, \bibinfo {author} {\bibfnamefont {M.}~\bibnamefont {Kalinowski}}, \bibinfo {author} {\bibfnamefont {C.-N.}\ \bibnamefont {Chou}}, \bibinfo {author} {\bibfnamefont {M.~D.}\ \bibnamefont {Lukin}}, \bibinfo {author} {\bibfnamefont {B.}~\bibnamefont {Barak}},\ and\ \bibinfo {author} {\bibfnamefont {S.}~\bibnamefont {Choi}},\ }\bibfield  {title} {\bibinfo {title} {Limitations of linear cross-entropy as a measure for quantum advantage},\ }\href {https://doi.org/10.1103/PRXQuantum.5.010334} {\bibfield  {journal} {\bibinfo  {journal} {PRX Quantum}\ }\textbf {\bibinfo {volume} {5}},\ \bibinfo {pages} {010334} (\bibinfo {year} {2024})}\BibitemShut {NoStop}%
\bibitem [{\citenamefont {Tanggara}\ \emph {et~al.}(2024)\citenamefont {Tanggara}, \citenamefont {Gu},\ and\ \citenamefont {Bharti}}]{tanggara2024classically}%
  \BibitemOpen
  \bibfield  {author} {\bibinfo {author} {\bibfnamefont {A.}~\bibnamefont {Tanggara}}, \bibinfo {author} {\bibfnamefont {M.}~\bibnamefont {Gu}},\ and\ \bibinfo {author} {\bibfnamefont {K.}~\bibnamefont {Bharti}},\ }\bibfield  {title} {\bibinfo {title} {Classically spoofing system linear cross entropy score benchmarking},\ }\href {https://arxiv.org/abs/2405.00789} {\bibfield  {journal} {\bibinfo  {journal} {arXiv preprint arXiv:2405.00789}\ } (\bibinfo {year} {2024})}\BibitemShut {NoStop}%
\bibitem [{\citenamefont {Dowling}\ \emph {et~al.}(2024)\citenamefont {Dowling}, \citenamefont {Kos},\ and\ \citenamefont {Turkeshi}}]{dowling2024magic}%
  \BibitemOpen
  \bibfield  {author} {\bibinfo {author} {\bibfnamefont {N.}~\bibnamefont {Dowling}}, \bibinfo {author} {\bibfnamefont {P.}~\bibnamefont {Kos}},\ and\ \bibinfo {author} {\bibfnamefont {X.}~\bibnamefont {Turkeshi}},\ }\bibfield  {title} {\bibinfo {title} {Magic of the heisenberg picture},\ }\bibfield  {journal} {\bibinfo  {journal} {arXiv preprint arXiv:2408.16047}\ }\href {https://doi.org/https://doi.org/10.48550/arXiv.2408.16047} {https://doi.org/10.48550/arXiv.2408.16047} (\bibinfo {year} {2024})\BibitemShut {NoStop}%
\bibitem [{\citenamefont {Dalzell}\ \emph {et~al.}(2022)\citenamefont {Dalzell}, \citenamefont {Hunter-Jones},\ and\ \citenamefont {Brand\~ao}}]{dalzell2022randomquantum}%
  \BibitemOpen
  \bibfield  {author} {\bibinfo {author} {\bibfnamefont {A.~M.}\ \bibnamefont {Dalzell}}, \bibinfo {author} {\bibfnamefont {N.}~\bibnamefont {Hunter-Jones}},\ and\ \bibinfo {author} {\bibfnamefont {F.~G. S.~L.}\ \bibnamefont {Brand\~ao}},\ }\bibfield  {title} {\bibinfo {title} {Random quantum circuits anticoncentrate in log depth},\ }\href {https://doi.org/10.1103/PRXQuantum.3.010333} {\bibfield  {journal} {\bibinfo  {journal} {PRX Quantum}\ }\textbf {\bibinfo {volume} {3}},\ \bibinfo {pages} {010333} (\bibinfo {year} {2022})}\BibitemShut {NoStop}%
\bibitem [{\citenamefont {Bermejo}\ \emph {et~al.}(2024{\natexlab{b}})\citenamefont {Bermejo}, \citenamefont {Braccia}, \citenamefont {Rudolph}, \citenamefont {Holmes}, \citenamefont {Cincio},\ and\ \citenamefont {Cerezo}}]{bermejo2024quantum}%
  \BibitemOpen
  \bibfield  {author} {\bibinfo {author} {\bibfnamefont {P.}~\bibnamefont {Bermejo}}, \bibinfo {author} {\bibfnamefont {P.}~\bibnamefont {Braccia}}, \bibinfo {author} {\bibfnamefont {M.~S.}\ \bibnamefont {Rudolph}}, \bibinfo {author} {\bibfnamefont {Z.}~\bibnamefont {Holmes}}, \bibinfo {author} {\bibfnamefont {L.}~\bibnamefont {Cincio}},\ and\ \bibinfo {author} {\bibfnamefont {M.}~\bibnamefont {Cerezo}},\ }\bibfield  {title} {\bibinfo {title} {Quantum convolutional neural networks are (effectively) classically simulable},\ }\href {https://arxiv.org/abs/2408.12739} {\bibfield  {journal} {\bibinfo  {journal} {arXiv preprint arXiv:2408.12739}\ } (\bibinfo {year} {2024}{\natexlab{b}})}\BibitemShut {NoStop}%
\bibitem [{\citenamefont {Deshpande}\ \emph {et~al.}(2024)\citenamefont {Deshpande}, \citenamefont {Hinsche}, \citenamefont {Najafi}, \citenamefont {Sharma}, \citenamefont {Sweke},\ and\ \citenamefont {Zoufal}}]{deshpande2024dynamic}%
  \BibitemOpen
  \bibfield  {author} {\bibinfo {author} {\bibfnamefont {A.}~\bibnamefont {Deshpande}}, \bibinfo {author} {\bibfnamefont {M.}~\bibnamefont {Hinsche}}, \bibinfo {author} {\bibfnamefont {S.}~\bibnamefont {Najafi}}, \bibinfo {author} {\bibfnamefont {K.}~\bibnamefont {Sharma}}, \bibinfo {author} {\bibfnamefont {R.}~\bibnamefont {Sweke}},\ and\ \bibinfo {author} {\bibfnamefont {C.}~\bibnamefont {Zoufal}},\ }\bibfield  {title} {\bibinfo {title} {Dynamic parameterized quantum circuits: expressive and barren-plateau free},\ }\bibfield  {journal} {\bibinfo  {journal} {arXiv preprint arXiv:2411.05760}\ }\href {https://doi.org/10.48550/arXiv.2411.05760} {10.48550/arXiv.2411.05760} (\bibinfo {year} {2024})\BibitemShut {NoStop}%
\bibitem [{\citenamefont {Yu}\ and\ \citenamefont {Wei}(2023)}]{yu2023learning}%
  \BibitemOpen
  \bibfield  {author} {\bibinfo {author} {\bibfnamefont {N.}~\bibnamefont {Yu}}\ and\ \bibinfo {author} {\bibfnamefont {T.-C.}\ \bibnamefont {Wei}},\ }\bibfield  {title} {\bibinfo {title} {Learning marginals suffices!},\ }\bibfield  {journal} {\bibinfo  {journal} {arXiv preprint arXiv:2303.08938}\ }\href {https://doi.org/10.48550/arXiv.2303.08938} {10.48550/arXiv.2303.08938} (\bibinfo {year} {2023})\BibitemShut {NoStop}%
\bibitem [{\citenamefont {Huang}\ \emph {et~al.}(2020)\citenamefont {Huang}, \citenamefont {Kueng},\ and\ \citenamefont {Preskill}}]{huang2020predicting}%
  \BibitemOpen
  \bibfield  {author} {\bibinfo {author} {\bibfnamefont {H.-Y.}\ \bibnamefont {Huang}}, \bibinfo {author} {\bibfnamefont {R.}~\bibnamefont {Kueng}},\ and\ \bibinfo {author} {\bibfnamefont {J.}~\bibnamefont {Preskill}},\ }\bibfield  {title} {\bibinfo {title} {Predicting many properties of a quantum system from very few measurements},\ }\href {https://doi.org/10.1038/s41567-020-0932-7} {\bibfield  {journal} {\bibinfo  {journal} {Nature Physics}\ }\textbf {\bibinfo {volume} {16}},\ \bibinfo {pages} {1050} (\bibinfo {year} {2020})}\BibitemShut {NoStop}%
\bibitem [{\citenamefont {Elben}\ \emph {et~al.}(2022)\citenamefont {Elben}, \citenamefont {Flammia}, \citenamefont {Huang}, \citenamefont {Kueng}, \citenamefont {Preskill}, \citenamefont {Vermersch},\ and\ \citenamefont {Zoller}}]{elben2022randomized}%
  \BibitemOpen
  \bibfield  {author} {\bibinfo {author} {\bibfnamefont {A.}~\bibnamefont {Elben}}, \bibinfo {author} {\bibfnamefont {S.~T.}\ \bibnamefont {Flammia}}, \bibinfo {author} {\bibfnamefont {H.-Y.}\ \bibnamefont {Huang}}, \bibinfo {author} {\bibfnamefont {R.}~\bibnamefont {Kueng}}, \bibinfo {author} {\bibfnamefont {J.}~\bibnamefont {Preskill}}, \bibinfo {author} {\bibfnamefont {B.}~\bibnamefont {Vermersch}},\ and\ \bibinfo {author} {\bibfnamefont {P.}~\bibnamefont {Zoller}},\ }\bibfield  {title} {\bibinfo {title} {The randomized measurement toolbox},\ }\bibfield  {journal} {\bibinfo  {journal} {Nature Review Physics}\ }\href {https://doi.org/10.1038/s42254-022-00535-2} {10.1038/s42254-022-00535-2} (\bibinfo {year} {2022})\BibitemShut {NoStop}%
\bibitem [{\citenamefont {Nemirovskij}\ and\ \citenamefont {Yudin}(1983)}]{nemirovskij1983problem}%
  \BibitemOpen
  \bibfield  {author} {\bibinfo {author} {\bibfnamefont {A.~S.}\ \bibnamefont {Nemirovskij}}\ and\ \bibinfo {author} {\bibfnamefont {D.~B.}\ \bibnamefont {Yudin}},\ }\href@noop {} {\bibinfo {title} {Problem complexity and method efficiency in optimization}} (\bibinfo {year} {1983})\BibitemShut {NoStop}%
\bibitem [{\citenamefont {Jerrum}\ \emph {et~al.}(1986)\citenamefont {Jerrum}, \citenamefont {Valiant},\ and\ \citenamefont {Vazirani}}]{jerrum1986random}%
  \BibitemOpen
  \bibfield  {author} {\bibinfo {author} {\bibfnamefont {M.~R.}\ \bibnamefont {Jerrum}}, \bibinfo {author} {\bibfnamefont {L.~G.}\ \bibnamefont {Valiant}},\ and\ \bibinfo {author} {\bibfnamefont {V.~V.}\ \bibnamefont {Vazirani}},\ }\bibfield  {title} {\bibinfo {title} {Random generation of combinatorial structures from a uniform distribution},\ }\href {https://doi.org/10.1016/0304-3975(86)90174-X} {\bibfield  {journal} {\bibinfo  {journal} {Theoretical computer science}\ }\textbf {\bibinfo {volume} {43}},\ \bibinfo {pages} {169} (\bibinfo {year} {1986})}\BibitemShut {NoStop}%
\bibitem [{\citenamefont {McClean}\ \emph {et~al.}(2018)\citenamefont {McClean}, \citenamefont {Boixo}, \citenamefont {Smelyanskiy}, \citenamefont {Babbush},\ and\ \citenamefont {Neven}}]{mcclean2018barren}%
  \BibitemOpen
  \bibfield  {author} {\bibinfo {author} {\bibfnamefont {J.~R.}\ \bibnamefont {McClean}}, \bibinfo {author} {\bibfnamefont {S.}~\bibnamefont {Boixo}}, \bibinfo {author} {\bibfnamefont {V.~N.}\ \bibnamefont {Smelyanskiy}}, \bibinfo {author} {\bibfnamefont {R.}~\bibnamefont {Babbush}},\ and\ \bibinfo {author} {\bibfnamefont {H.}~\bibnamefont {Neven}},\ }\bibfield  {title} {\bibinfo {title} {Barren plateaus in quantum neural network training landscapes},\ }\href {https://doi.org/10.1038/s41467-018-07090-4} {\bibfield  {journal} {\bibinfo  {journal} {Nature {C}ommunications}\ }\textbf {\bibinfo {volume} {9}},\ \bibinfo {pages} {1} (\bibinfo {year} {2018})}\BibitemShut {NoStop}%
\bibitem [{\citenamefont {Larocca}\ \emph {et~al.}(2025)\citenamefont {Larocca}, \citenamefont {Thanasilp}, \citenamefont {Wang}, \citenamefont {Sharma}, \citenamefont {Biamonte}, \citenamefont {Coles}, \citenamefont {Cincio}, \citenamefont {McClean}, \citenamefont {Holmes},\ and\ \citenamefont {Cerezo}}]{larocca2024review}%
  \BibitemOpen
  \bibfield  {author} {\bibinfo {author} {\bibfnamefont {M.}~\bibnamefont {Larocca}}, \bibinfo {author} {\bibfnamefont {S.}~\bibnamefont {Thanasilp}}, \bibinfo {author} {\bibfnamefont {S.}~\bibnamefont {Wang}}, \bibinfo {author} {\bibfnamefont {K.}~\bibnamefont {Sharma}}, \bibinfo {author} {\bibfnamefont {J.}~\bibnamefont {Biamonte}}, \bibinfo {author} {\bibfnamefont {P.~J.}\ \bibnamefont {Coles}}, \bibinfo {author} {\bibfnamefont {L.}~\bibnamefont {Cincio}}, \bibinfo {author} {\bibfnamefont {J.~R.}\ \bibnamefont {McClean}}, \bibinfo {author} {\bibfnamefont {Z.}~\bibnamefont {Holmes}},\ and\ \bibinfo {author} {\bibfnamefont {M.}~\bibnamefont {Cerezo}},\ }\bibfield  {title} {\bibinfo {title} {A review of barren plateaus in variational quantum computing},\ }\href {https://doi.org/10.1038/s42254-025-00813-9} {\bibfield  {journal} {\bibinfo  {journal} {Nature Reviews Physics}\ }\textbf {\bibinfo {volume} {3}},\ \bibinfo {pages} {625–644} (\bibinfo {year} {2025})}\BibitemShut {NoStop}%
\bibitem [{\citenamefont {Schuster}\ \emph {et~al.}(2024{\natexlab{c}})\citenamefont {Schuster}, \citenamefont {Haferkamp},\ and\ \citenamefont {Huang}}]{schuster2024random}%
  \BibitemOpen
  \bibfield  {author} {\bibinfo {author} {\bibfnamefont {T.}~\bibnamefont {Schuster}}, \bibinfo {author} {\bibfnamefont {J.}~\bibnamefont {Haferkamp}},\ and\ \bibinfo {author} {\bibfnamefont {H.-Y.}\ \bibnamefont {Huang}},\ }\bibfield  {title} {\bibinfo {title} {Random unitaries in extremely low depth},\ }\href {https://arxiv.org/abs/2407.07754} {\bibfield  {journal} {\bibinfo  {journal} {arXiv preprint arXiv:2407.07754}\ } (\bibinfo {year} {2024}{\natexlab{c}})}\BibitemShut {NoStop}%
\bibitem [{\citenamefont {Chen}\ \emph {et~al.}(2024)\citenamefont {Chen}, \citenamefont {Haah}, \citenamefont {Haferkamp}, \citenamefont {Liu}, \citenamefont {Metger},\ and\ \citenamefont {Tan}}]{chen2024incompressibility}%
  \BibitemOpen
  \bibfield  {author} {\bibinfo {author} {\bibfnamefont {C.-F.}\ \bibnamefont {Chen}}, \bibinfo {author} {\bibfnamefont {J.}~\bibnamefont {Haah}}, \bibinfo {author} {\bibfnamefont {J.}~\bibnamefont {Haferkamp}}, \bibinfo {author} {\bibfnamefont {Y.}~\bibnamefont {Liu}}, \bibinfo {author} {\bibfnamefont {T.}~\bibnamefont {Metger}},\ and\ \bibinfo {author} {\bibfnamefont {X.}~\bibnamefont {Tan}},\ }\bibfield  {title} {\bibinfo {title} {Incompressibility and spectral gaps of random circuits},\ }\href {https://arxiv.org/abs/2406.07478} {\bibfield  {journal} {\bibinfo  {journal} {arXiv preprint arXiv:2406.07478}\ } (\bibinfo {year} {2024})}\BibitemShut {NoStop}%
\bibitem [{\citenamefont {Mele}(2024)}]{mele2023introduction}%
  \BibitemOpen
  \bibfield  {author} {\bibinfo {author} {\bibfnamefont {A.~A.}\ \bibnamefont {Mele}},\ }\bibfield  {title} {\bibinfo {title} {Introduction to haar measure tools in quantum information: A beginner's tutorial},\ }\href {https://doi.org/10.22331/q-2024-05-08-1340} {\bibfield  {journal} {\bibinfo  {journal} {Quantum}\ }\textbf {\bibinfo {volume} {8}},\ \bibinfo {pages} {1340} (\bibinfo {year} {2024})}\BibitemShut {NoStop}%
\bibitem [{\citenamefont {Kim}\ \emph {et~al.}(2023)\citenamefont {Kim}, \citenamefont {Eddins}, \citenamefont {Anand}, \citenamefont {Wei}, \citenamefont {Van Den~Berg}, \citenamefont {Rosenblatt}, \citenamefont {Nayfeh}, \citenamefont {Wu}, \citenamefont {Zaletel}, \citenamefont {Temme} \emph {et~al.}}]{kim2023evidence}%
  \BibitemOpen
  \bibfield  {author} {\bibinfo {author} {\bibfnamefont {Y.}~\bibnamefont {Kim}}, \bibinfo {author} {\bibfnamefont {A.}~\bibnamefont {Eddins}}, \bibinfo {author} {\bibfnamefont {S.}~\bibnamefont {Anand}}, \bibinfo {author} {\bibfnamefont {K.~X.}\ \bibnamefont {Wei}}, \bibinfo {author} {\bibfnamefont {E.}~\bibnamefont {Van Den~Berg}}, \bibinfo {author} {\bibfnamefont {S.}~\bibnamefont {Rosenblatt}}, \bibinfo {author} {\bibfnamefont {H.}~\bibnamefont {Nayfeh}}, \bibinfo {author} {\bibfnamefont {Y.}~\bibnamefont {Wu}}, \bibinfo {author} {\bibfnamefont {M.}~\bibnamefont {Zaletel}}, \bibinfo {author} {\bibfnamefont {K.}~\bibnamefont {Temme}}, \emph {et~al.},\ }\bibfield  {title} {\bibinfo {title} {Evidence for the utility of quantum computing before fault tolerance},\ }\href {https://doi.org/10.1038/s41586-023-06096-3} {\bibfield  {journal} {\bibinfo  {journal} {Nature}\ }\textbf {\bibinfo {volume} {618}},\ \bibinfo {pages} {500} (\bibinfo {year} {2023})}\BibitemShut {NoStop}%
\end{thebibliography}%

\clearpage 

\appendix
\onecolumngrid

\section{Preliminaries}
\subsection{Notation and basic definitions}
We briefly introduce some notations and conventions used in the present work.
\begin{itemize}
    \item \textbf{Linear operators and associated norms.} Let $\calL(\bbC^{d})$ be the set of linear operators that act on the $d$-dimensional complex vector space $\mathbb{C}^d$.
For two matrices $A,B\in \calL(\bbC^{d})$ we denote their Hilbert-Schmidt inner product as $\mathrm{Tr}[A^\dag B]$.
Furthermore, for a matrix $A\in \calL(\bbC^{d})$, the induced Hilbert-Schmidt norm is denoted by $\norm{A}_2\coloneqq \mathrm{Tr}[A^\dagger A]^{1/2}$.
We define the operator norm of $A$ as $\norm{A}:= \sup_{\norm{\ket{\psi}}_2=1}\norm{A\ket{\psi}}_2$. We recall that the normalized Hilbert-Schmidt norm is always upper bounded by the operator norm, that is 
$\norm{A}_{\mathrm{Pauli},2} \leq \norm{A}$.
Given $p>0$ and an Hermitian operator $A$, we define the associated Pauli-$p$ norm as
\begin{align}
    \norm{A}_{\mathrm{Pauli},p} := \left(\sum_{P\in\{I,X,Y,Z\}^{\otimes n}}\abs{a_P}^p \right)^{1/p}. 
\end{align}
Note that the Pauli-$2$ norm equals the normalized Hilbert-Schmidt norm: $\norm{A}_{\mathrm{Pauli},2} = 2^{-n/2} \norm{A}_{2} $.

\item \textbf{Ensemble of unitaries.} We denote by $\bbU(d)$ the $d$-dimensional unitary group. 
Given a function \( F : \mathbb{U}(d) \rightarrow \mathbb{R} \) and a distribution $\calD$ over $\bbU(d)$, we denote the expected value of \( F(U) \) with respect to \(\mathcal{D}\) as \(\mathbb{E}_{U \sim \mathcal{D}} F(U)\). For simplicity, we will write \(\mathbb{E}_{U} F(U)\) when the distribution \(\mathcal{D}\) is clear from the context.
We denote by \( U_1, U_2,\dots,U_k \sim \mathcal{D} \) that $U_1, U_2,\dots,U_k$ are independently sampled from the distribution \(\mathcal{D}\).
Furthermore, we denote by $\mathcal{U}(d)$ the Haar measure over $\bbU(d)$ and by $\mathrm{Cl}(d)$ the uniform distribution over the $d$-dimensional Clifford group. 

\item \textbf{Pauli basis.}
For a Pauli operator $P \in \{I,X,Y,Z\}^{\otimes n}$, we denote by $\mathrm{supp}(P)$ the set of qubits on which the string $P$ is non-identity and by $\abs{P} := \abs{\mathrm{supp}(P)}$ the corresponding weight.
For an operator $O=\sum_P a_P P$ we define its support as
\begin{align}
    &\mathrm{supp}(O) := \bigcup_{\substack{P:\\ a_P\neq 0}}  \mathrm{supp}(P),
    \\&\abs{O}:= \abs{\mathrm{supp}(O)}.
\end{align}
Therefore, we say that $O$ is supported on $k$ qubits if $\abs{O}=k$.
An Hermitian operator $O$ is $k$-local if it contains only Pauli operators with weight at most $k$: $O = \sum_{P:\abs{P}\leq k} a_P P$. We stress that this notion of locality does not mean that the operator is geometrically local but rather that it is low weight.

\medskip

We further make use of the \textbf{normalized} Pauli basis:
\begin{align}
    \calP_n = \left\{\frac{I}{\sqrt{2}},\frac{X}{\sqrt{2}},\frac{Y}{\sqrt{2}},\frac{Z}{\sqrt{2}}\right\}^{\otimes n} \subseteq \calL(\bbC^{2^n}),
\end{align}
whose elements are orthonormal with respect to the Hilbert-Schmidt inner product, i.e. $\forall s,t\in\calP_n:\Tr[st] = \delta_{st}$.
As a consequence, for any Hermitian operators $H,H'$ we have
\begin{align}
    &H = \sum_{s\in\calP_n} \Tr[Hs]s, \label{eq:pauliexp}
    \\&\Tr[HH'] = \sum_{s\in\calP_n} \Tr[Hs]\Tr[sH'],  \label{eq:innerproduct_app}
    \\&\norm{H}_2^2 = \Tr[H^2]= \sum_{s\in\mathcal{P}_n}\Tr[Hs]^2 \label{eq:2norm}.
\end{align}
Throughout this work, we will use both the Pauli basis $\{I,X,Y,Z\}^{\otimes n}$ and the normalized Pauli basis $\calP_n$, alternating between them to avoid introducing unnecessary renormalization factors. To ensure clarity, we consistently denote non-normalized Pauli operators with uppercase letters and normalized Pauli operators with lowercase letters.

\item \textbf{Useful relations.}
We will further make regular use of the identity
\begin{equation}
    \text{Tr}[A \otimes B ] = \text{Tr}[A]\text{Tr}[B]
\end{equation}
and hence $\text{Tr}[A^{\otimes 2}] = \text{Tr}[A]^2$. 
\end{itemize}

\subsection{Properties of locally scrambling unitaries}

We start by introducing the definition of locally scrambling unitaries.

\begin{definition}[Locally scrambling distribution]
    A distribution $\calD$ over $\bbU(2^n)$ is locally scrambling if we have %for all Hermitian operators $H,H'$ we have
    %\begin{align}
    %    \bbE_{U\sim\calD} \left[U^{\otimes 2}(H\otimes H')U^{\dag\otimes 2}\right]=\bbE_{U\sim\calD}\;\bbE_{V_1,V_2,\dots, V_n \sim \mathrm{Cl}(2) } \left[\bigotimes_{i=1}^n V_i^{\otimes 2}\right] U^{\otimes 2}(H\otimes H')  U^{\dag\otimes 2} \left[\bigotimes_{i=1}^n V_i^{\dag\otimes 2}\right],
    %\end{align}
    \begin{align}
       \bbE_{U\sim\calD} \left[U^{\otimes 2} \otimes U^{*\otimes 2}\right]=\bbE_{U\sim\calD}\;\bbE_{V_1,V_2,\dots, V_n \sim \mathrm{Cl}(2) } \left[\bigotimes_{i=1}^n V_i^{\otimes 2}\right] U^{\otimes 2} \otimes  U^{*\otimes 2} \left[\bigotimes_{i=1}^n V_i^{*\otimes 2}\right],
    \end{align}
    that is, $\calD$ is invariant under post-processing by independently sampled random single-qubit Cliffords up to the second moment.
We say that a random unitary $U$ is locally scrambling when it is sampled from a locally scrambling distribution.    
\end{definition}
 We note that the above definition of locally scrambling is slightly different to the definitions used in Refs.~\cite{kuo2020markovian, hu2021classical, caro2022outofdistribution, huang2022learning} and is close to the notion of `locally scrambled up to the second moment' used in Ref.~\cite{caro2022outofdistribution}.

We further define the family of circuits obtained by combining multiple locally scrambling unitaries.
 \begin{definition}[Locally scrambling circuit]
 \label{def:ls-circuit}
A random circuit $U=U_L U_{L-1}\dots U_1$ is sampled from an $L$-layered locally scrambling circuit distribution if the circuit layers $U_j$ are independently sampled and they satisfy the following properties:
\begin{enumerate}
    \item $U_1$ is sampled from a locally scrambling distribution over $\bbU(2^n)$;
    \item for each $j=2,\dots, L$,  let $U_j$ be expressed as $V_{A_j}\otimes I_{[n]\setminus A_j}$, where $A_j \subseteq [n]$ is the subset of qubits upon which $U_j$ acts non trivially. We assume that $V_{A_j}$ is sampled from a locally scrambling distribution over $\bbU\left(2^{\abs{A_j}}\right)$.
    \item For any Pauli operator $P$ and $j=1,2,\dots, L$, the Heisenberg evolved observable $U_j^\dag P U_j$ contains at most $n^{\calO(\abs{P})}$ many distinct Pauli terms and it is classically computable in time $n^{\calO(\abs{P})}$.
\end{enumerate}
We say that a random circuit is locally scrambling when it is sampled from a locally scrambling circuit distribution.
 \end{definition}

We emphasize that, while the first condition allows us to express our technical arguments more concisely, it is inessential for our results to hold. All our proofs can be adapted to circuits where the initial layer $U_1$ acts only on a certain subset of qubits. We remark that similar conventions are also widely employed in the literature about random quantum circuits, where the first circuit layer is typically the tensor product of random single-qubit gates\ \cite{dalzell2021random, dalzell2022randomquantum, aharonov2022polynomial}.
Furthermore, the first condition is always satisfied if $U$ is a Quantum Convolutional Neural Networks without feed-forwards with gates sampled from local 2-designs, such as those considered in our companion paper Ref.\ \cite{bermejo2024quantum}.
Moreover, the last condition is only used for our computational complexity analysis, and plays no role in the error analysis for a fixed truncation order $k$. Intuitively, this condition corresponds to each layer $U_j$ being sufficiently shallow.

We also prove the following technical lemma, which we will employ in the proof of our main result.
\begin{lemma}
\label{lem:ls-subcircuit}
    Let $U=U_L U_{L-1}\dots U_1$ be an $L$-layered locally scrambling circuit. Then for all $j=1,2,\dots, L$, the unitary $U_j U_{j-1}\dots U_1$ is sampled from a locally scrambling distribution over $\bbU(2^n)$.
\end{lemma}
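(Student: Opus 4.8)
The plan is to proceed by induction on $j$, working at the level of the second-moment operator $M_2(U):=\bbE[U^{\otimes 2}\otimes U^{*\otimes 2}]$ acting on the four-fold tensor space. First I would restate Definition~\ref{def:ls-circuit} in operator form. Let $\Theta := \bbE_{V_1,\dots,V_n\sim\mathrm{Cl}(2)}[(\bigotimes_{i=1}^n V_i)^{\otimes 2}\otimes (\bigotimes_{i=1}^n V_i)^{*\otimes 2}]$ be the second-moment operator of a single layer of independent single-qubit Cliffords. Since post-processing a unitary by $\bigotimes_i V_i$ corresponds, at the level of second moments, to left-multiplication by $\Theta$, a distribution is locally scrambling exactly when $\Theta\, M_2(U) = M_2(U)$. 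Two elementary facts then drive the argument: (i) the map $U\mapsto U^{\otimes2}\otimes U^{*\otimes2}$ is multiplicative, so for independent $X,Y$ we have $M_2(XY)=M_2(X)\,M_2(Y)$; and (ii) regrouping the four copies qubit by qubit gives $\Theta=\bigotimes_{i=1}^n\theta_i$, where each single-qubit factor $\theta_i$ is idempotent (it averages over a group) and the $\theta_i$ commute because they act on disjoint tensor factors. Writing $\Theta_S:=\bigotimes_{i\in S}\theta_i$ for the twirl over a subset $S\subseteq[n]$, we have $\Theta=\Theta_S\,\Theta_{S^c}$, and from $\Theta M_2(U)=M_2(U)$ one deduces the descent property $\Theta_S\,M_2(U)=\Theta_S\Theta\,M_2(U)=\Theta\,M_2(U)=M_2(U)$ for every $S$, using $\Theta_S\Theta=\Theta$.

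With this in hand the induction is short. The base case $W_1=U_1$ is locally scrambling by Definition~\ref{def:ls-circuit}(1). For the inductive step, assume $\Theta M_2(W_{j-1})=M_2(W_{j-1})$. Since $U_j$ is independent of $W_{j-1}$, fact~(i) gives $M_2(W_j)=M_2(U_j)\,M_2(W_{j-1})$. Writing $U_j=G_j\otimes I_{[n]\setminus A_j}$ with $G_j$ locally scrambling on $\bbU(2^{\abs{A_j}})$ (Definition~\ref{def:ls-circuit}(2)), we have $M_2(U_j)=M_2(G_j)\otimes\mathbb{I}_{\overline{A_j}}$, and local scrambling of $G_j$ reads $\Theta_{A_j}M_2(U_j)=M_2(U_j)$. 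Then, using $\Theta=\Theta_{A_j}\Theta_{\overline{A_j}}$ and the fact that $\Theta_{\overline{A_j}}$ commutes with $M_2(U_j)$ (disjoint supports),
\begin{align}
\Theta\,M_2(W_j)
= \Theta_{A_j}M_2(U_j)\,\Theta_{\overline{A_j}}M_2(W_{j-1})
= M_2(U_j)\,M_2(W_{j-1})
= M_2(W_j),
\end{align}
where the last two equalities use $\Theta_{A_j}M_2(U_j)=M_2(U_j)$ and the descent property applied to the inductive hypothesis. Hence $W_j$ is locally scrambling, completing the induction.

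The conceptual crux, and the step I expect to require the most care, is that the single-qubit Cliffords acting on $\overline{A_j}$, the qubits untouched by the shallow layer $U_j$, cannot be absorbed by $U_j$ itself and must instead be swallowed by the scrambling of the earlier sub-circuit $W_{j-1}$; this is exactly what the descent property (full-twirl invariance implies sub-twirl invariance) enables. A secondary bookkeeping point is verifying that the operator identity $\Theta M_2(U)=M_2(U)$ faithfully encodes the conjugation-style condition in Definition~\ref{def:ls-circuit}, i.e.\ that post-processing by single-qubit gates is left-multiplication by $\Theta$ at the second-moment level; once this and the factorization $\Theta=\bigotimes_i\theta_i$ are established, the remaining manipulations are routine.
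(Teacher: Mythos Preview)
Your proposal is correct and follows the same inductive strategy as the paper. The paper's proof is considerably terser: it simply records that $M_2(U_j)$ is invariant under the partial Clifford twirl on $A_j$ and that $M_2(W_{j-1})$ is invariant under the full twirl, and then asserts that ``combining'' these gives the result. Your operator formulation, the factorization $\Theta=\Theta_{A_j}\Theta_{\overline{A_j}}$, the commutation of $\Theta_{\overline{A_j}}$ through $M_2(U_j)$, and the descent property $\Theta_S\,M_2(W_{j-1})=M_2(W_{j-1})$ are precisely the details that fill in that ``combining'' step, so your write-up is in fact more complete than the paper's own argument.
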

\begin{proof}
    The Lemma can be easily proven by induction over $j=1,2,\dots, L$. The base step, corresponding to the case  $j=1$, follows directly by Definition\ \ref{def:ls-circuit}.
    Moreover, for all $j=1,2, \dots, L$, Definition\ \ref{def:ls-circuit} implies that
    \begin{align}
    \label{eq:ls1}
        \bbE_{U_j}\left[U_j^{\otimes 2}\otimes U_j^{*\otimes 2}\right]
        = \bbE_{U_j} \bbE_{V_{i_1},V_{i_2},\dots, V_{i_{\abs{A_j}}} \sim \mathrm{Cl}(2) } \left[I_{[n]\setminus A_j}\otimes \bigotimes_{i\in A_j} V_i^{\otimes 2}\right] U_j^{\otimes 2} \otimes  U_j^{*\otimes 2} \left[I_{[n]\setminus A_j}\otimes \bigotimes_{i \in A_j} V_i^{*\otimes 2}\right],
    \end{align}
    
    It remains to prove the inductive step. Assume that $U_{j-1}U_{j-2}\dots U_1$ is locally scrambling over $\bbU(2^n)$:
    \begin{align}
    \label{eq:ls2}
       \bbE_{U_{j-1},\dots, U_1}\left[(U_{j-1}\dots U_1)^{\otimes 2} \otimes (U_{j-1}\dots U_1)^{*\otimes 2}\right] 
       \\= \bbE_{U_{j-1},\dots, U_1}\;\bbE_{V_1,V_2,\dots, V_n \sim \mathrm{Cl}(2) } \left[\bigotimes_{i=1}^n V_i^{\otimes 2}\right] (U_{j-1}\dots U_1)^{\otimes 2} \otimes  (U_{j-1}\dots U_1)^{*\otimes 2} \left[\bigotimes_{i=1}^n V_i^{*\otimes 2}\right],
    \end{align}
    Combining Eq.\ \ref{eq:ls1} and Eq.\ \ref{eq:ls2}, we obtain that $U_j U_{j-1}\dots U_1$ is invariant (up to second moment) under post-processing by random single-qubit Cliffords, and thus it is also locally scrambling over $\bbU(2^n)$.
\end{proof}

Our technical results leverage extensively two key properties of locally scrambling unitaries, which we prove in the following lemma.
\begin{lemma}[Orthogonality and Pauli-mixing]
\label{lemma:ls}
Let $U$ be a locally scrambling  unitary. Then for all $P,Q\in\{I,X,Y,Z \}^{\otimes n}$, we have
\begin{align}
    \\&\bbE_{U} [U^{\dag\otimes 2} (P \otimes Q) U^{\otimes 2}] = \begin{cases}
        0 & \text{if $P\neq Q$ \;\;\;\;(orthogonality)} 
        \\ \frac{1}{3^{\abs{P}}}\sum_{\substack{P\in\{I,X,Y,Z \}^{\otimes n}:\\\mathrm{supp}(P)=\mathrm{supp}(Q)}} \bbE_U\left[U^{\dag\otimes 2}P^{\otimes 2} U^{\otimes 2}\right] 
        &\text{if $P = Q$ \;\;\;\;(Pauli-mixing)} 
    \end{cases}
\end{align}
\end{lemma}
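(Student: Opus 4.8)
The plan is to reduce the $n$-qubit identity to a single-qubit Clifford average by exploiting the defining invariance of a locally scrambling distribution. The key observation is that the object of interest,
\[
\bbE_U\!\left[U^{\dagger\otimes 2}(P\otimes Q)U^{\otimes 2}\right] = \bbE_U\!\left[(U^\dagger P U)\otimes(U^\dagger Q U)\right],
\]
is a \emph{second-moment} functional of $U$ (degree two in $U$ and degree two in $U^\dagger$), so it is completely determined by $\bbE_U[U^{\otimes 2}\otimes U^{*\otimes 2}]$. By the definition of a locally scrambling distribution, this second moment is unchanged when $U$ is replaced by $WU$, where $W=\bigotimes_{i=1}^n V_i$ is a layer of independent uniformly random single-qubit Cliffords $V_i\sim\mathrm{Cl}(2)$. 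Writing $(WU)^\dagger P(WU)=U^\dagger(W^\dagger P W)U$, I would therefore obtain
\[
\bbE_U\!\left[(U^\dagger P U)\otimes(U^\dagger Q U)\right] = \bbE_U\!\left[U^{\dagger\otimes 2}\,\Big(\bbE_{V_1,\dots,V_n}\big[(W^\dagger P W)\otimes(W^\dagger Q W)\big]\Big)\,U^{\otimes 2}\right].
\]
Since $W=\bigotimes_i V_i$ and $P=\bigotimes_i P_i$, $Q=\bigotimes_i Q_i$ all factorize qubit-wise, the inner average factorizes (after the standard reordering of the two tensor copies) as $\bigotimes_{i=1}^n \bbE_{V_i}[(V_i^\dagger P_i V_i)\otimes(V_i^\dagger Q_i V_i)]$.

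It then suffices to evaluate the single-qubit average $T(P_i,Q_i):=\bbE_{V\sim\mathrm{Cl}(2)}[(V^\dagger P_i V)\otimes(V^\dagger Q_i V)]$ for $P_i,Q_i\in\{I,X,Y,Z\}$, and I would record the elementary cases. (i) If exactly one of $P_i,Q_i$ is the identity, then $T=I\otimes\bbE_V[V^\dagger Q_i V]$ (or its mirror), and $\bbE_V[V^\dagger R V]=0$ for $R\in\{X,Y,Z\}$ because this average is traceless and invariant under conjugation by every single-qubit Clifford, hence a traceless multiple of the identity, i.e. zero. (ii) If $P_i=Q_i\in\{X,Y,Z\}$, then $V^\dagger P_i V$ is uniform over the six signed Paulis $\{\pm X,\pm Y,\pm Z\}$, and since the sign cancels in $(\pm W)^{\otimes 2}=W^{\otimes 2}$ one gets $T=\tfrac13(X\otimes X+Y\otimes Y+Z\otimes Z)$. (iii) If $P_i\neq Q_i$ are both in $\{X,Y,Z\}$, then taking $C:=P_i$ (itself a single-qubit Clifford) and using invariance of the Clifford measure under $V\mapsto CV$, together with $C^\dagger P_i C=P_i$ and $C^\dagger Q_i C=-Q_i$ (distinct single-qubit Paulis anticommute), shows $T=-T$, hence $T=0$. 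Finally $T(I,I)=I\otimes I$.

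With these single-qubit values both claims follow by inspecting $\bigotimes_i T(P_i,Q_i)$. For orthogonality, if $P\neq Q$ there is some qubit $i$ with $P_i\neq Q_i$; by case (i) or (iii) the factor $T(P_i,Q_i)$ vanishes, so the whole tensor product, and hence the expectation, is zero. For Pauli-mixing, when $P=Q$ each factor equals $I\otimes I$ off $\supp(P)$ and $\tfrac13\sum_{W\in\{X,Y,Z\}}W\otimes W$ on $\supp(P)$; expanding the product over the support turns it into $3^{-\abs{P}}\sum_{P'}P'^{\otimes 2}$, where $P'$ ranges over the $3^{\abs{P}}$ Pauli operators with $\supp(P')=\supp(P)$. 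Re-inserting the $U^{\dagger\otimes 2}(\cdot)U^{\otimes 2}$ sandwich and applying $\bbE_U$ then reproduces exactly the stated identity.

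The main thing to get right is not any hard estimate but the bookkeeping in the first step: verifying that $(U^\dagger P U)\otimes(U^\dagger Q U)$ genuinely lies in the span on which the local-scrambling invariance acts (i.e. that it is recovered from $U^{\otimes 2}\otimes U^{*\otimes 2}$), and carefully tracking the permutation identifying $\big((\bbC^2)^{\otimes n}\big)^{\otimes 2}$ with $\big((\bbC^2)^{\otimes 2}\big)^{\otimes n}$ so that the qubit-wise factorization of the twirl is legitimate. The concrete Clifford computation enters only in cases (ii) and (iii), and these are finite and routine.
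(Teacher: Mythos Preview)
Your proof is correct and follows essentially the same route as the paper: both insert a layer of independent single-qubit Cliffords via the locally scrambling invariance, factorize the twirl qubit-wise, and then evaluate the single-qubit average $\bbE_{V\sim\mathrm{Cl}(2)}[V^{\dagger\otimes 2}(P_i\otimes Q_i)V^{\otimes 2}]$ to obtain the three cases. The only cosmetic difference is that the paper computes this single-qubit average by invoking the 2-design property and the explicit Haar second-moment formula in terms of $\Tr[P_i\otimes Q_i]$ and $\Tr[P_iQ_i]$, whereas you obtain the same result via direct symmetry arguments (transitivity of the Clifford action on signed Paulis for case~(ii), and the anticommutation trick $V\mapsto P_iV$ for case~(iii)); both are equally valid and yield identical expressions.
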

\begin{proof}
Let $P = \bigotimes_{i=1}^n P_i$ and $Q = \bigotimes_{i=1}^n Q_i$. 
We have for all $i\in[n]$
\begin{align}
  &\bbE_{V_i\sim \mathrm{Cl}(2) } V_i^{\dag \otimes 2} (P_i\otimes Q_i)  V_i^{\otimes 2}
  = \bbE_{V_i\sim \mathcal{U}(2) } V_i^{\dag \otimes 2} (P_i\otimes Q_i)  V_i^{\otimes 2} 
  \\= &I^{\otimes 2} \left(\frac{\Tr[P_i\otimes Q_i] - \frac{1}{2} \Tr[P_iQ_i]}{3}\right) + \mathrm{SWAP} \left(\frac{\Tr[P_i Q_i] - \frac{1}{2}  \Tr[P_i\otimes Q_i]}{3}\right),
  \label{eq:2-design}
\end{align}
where we employed the fact that the Clifford group forms an exact 2-design and we denoted the `swap' operator by $\mathrm{SWAP}:= \frac{1}{2}(I^{\otimes 2}+X^{\otimes 2}+Y^{\otimes 2}+Z^{\otimes 2})$.
We further notice that 
\begin{align}
    &\Tr[P_i \otimes Q_i] = \begin{cases}
        4 & \text{ if $P_i=Q_i = I$ },\\
        0 & \text{ otherwise},
    \end{cases}
    \\
    &\Tr[P_i Q_i] = \begin{cases}
        2 & \text{ if $P_i=Q_i$ },\\
        0 & \text{ otherwise}.
    \end{cases}
\end{align}
Hence Eq.\ \eqref{eq:2-design} can be simplified as follows
\begin{align}
    \bbE_{V_i\sim \mathrm{Cl}(2) } V_i^{\dag \otimes 2} (P_i\otimes Q_i)  V_i^{\otimes 2}=
    \begin{cases}
        I^{\otimes 2} & \text{if $P_i=Q_i=I$},
        \\\frac{1}{3} (X^{\otimes 2} + Y^{\otimes 2} + Z^{\otimes 2})  & \text{if $P_i=Q_i\neq I$},
        \\0  & \text{if $P_i\neq Q_i$}.
    \end{cases}
    \label{eq:cliffs}
\end{align}
Moreover, the definition of locally scrambling distribution yields
\begin{align}
    \bbE_{U} [U^{\dag\otimes 2} (P \otimes Q) U^{\otimes 2}] = \bbE_{U} U^{\dag\otimes 2} \left[  \bigotimes_{i=1}^n \bbE_{V_i\sim \mathrm{Cl}(2) }V_i^{\dag\otimes 2} (P_i \otimes Q_i) V_i^{\otimes 2}\right]U^{\otimes 2} \label{eq:ls}.
\end{align}
And thus plugging Eq.\ \eqref{eq:cliffs} in Eq.\ \eqref{eq:ls} gives the desired result.
\end{proof}
The orthogonality property comes in useful when evaluating second moments of the inner product expanded in the Pauli basis, as the cross terms vanish. In particular, we prove the following lemma.

\begin{lemma}[Vanishing cross-terms]
\label{lemma:ortho}
    Let $H, H'$ be a Hermitian operators and $U$ a locally scrambling unitary. We have
    \begin{align}
        \bbE_{U} \Tr[ H U H'U^\dag]^2 
        = \bbE_{U} \sum_{s\in\calP_n} \Tr[Hs]^2\Tr[sU H'U^\dag]^2.
    \end{align}
\end{lemma}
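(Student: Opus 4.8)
The plan is to expand $H$ in the normalized Pauli basis and reduce the squared trace to a double sum over pairs of Pauli operators, whose off-diagonal entries are then annihilated by the orthogonality property of Lemma~\ref{lemma:ls}. First I would write $H = \sum_{s\in\calP_n}\Tr[Hs]\,s$, so that by linearity of the trace $\Tr[HUH'U^\dag] = \sum_{s\in\calP_n}\Tr[Hs]\,\Tr[sUH'U^\dag]$. Squaring and taking the expectation produces
\begin{align}
\bbE_{U}\Tr[HUH'U^\dag]^2 = \sum_{s,t\in\calP_n}\Tr[Hs]\Tr[Ht]\;\bbE_{U}\left[\Tr[sUH'U^\dag]\Tr[tUH'U^\dag]\right].
\end{align}

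The key step is to show that every cross term with $s\neq t$ vanishes. To this end I would rewrite the product of traces as a single trace on the doubled Hilbert space via $\Tr[A]\Tr[B]=\Tr[A\otimes B]$ and the mixed-product rule, giving
\begin{align}
\Tr[sUH'U^\dag]\Tr[tUH'U^\dag] = \Tr\left[(s\otimes t)\,U^{\otimes 2}(H'\otimes H')U^{\dag\otimes 2}\right].
\end{align}
Cyclicity of the trace lets me regroup the unitaries around the Pauli pair, so that after taking the expectation
\begin{align}
\bbE_{U}\left[\Tr[sUH'U^\dag]\Tr[tUH'U^\dag]\right] = \Tr\left[\bbE_{U}\left[U^{\dag\otimes 2}(s\otimes t)U^{\otimes 2}\right](H'\otimes H')\right].
\end{align}
Now the orthogonality statement of Lemma~\ref{lemma:ls} applies directly: since $s$ and $t$ differ only by an overall scalar from their unnormalized counterparts in $\{I,X,Y,Z\}^{\otimes n}$, the relation $s\neq t$ implies $\bbE_{U}[U^{\dag\otimes 2}(s\otimes t)U^{\otimes 2}]=0$, and hence the corresponding cross term is zero.

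With the off-diagonal terms eliminated, only the diagonal $s=t$ contributions survive, yielding
\begin{align}
\bbE_{U}\Tr[HUH'U^\dag]^2 = \sum_{s\in\calP_n}\Tr[Hs]^2\,\bbE_{U}\Tr[sUH'U^\dag]^2 = \bbE_{U}\sum_{s\in\calP_n}\Tr[Hs]^2\Tr[sUH'U^\dag]^2,
\end{align}
which is exactly the claim. I do not anticipate a genuine obstacle here, as the argument is essentially bookkeeping built on the already-established orthogonality identity. The only points requiring care are the passage between the unnormalized orthogonality identity as stated in Lemma~\ref{lemma:ls} and its use on the normalized basis $\calP_n$, and keeping the tensor-factor ordering of $U^{\otimes 2}(H'\otimes H')U^{\dag\otimes 2}$ consistent when invoking cyclicity of the trace.
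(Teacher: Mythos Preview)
Your proof is correct and follows essentially the same route as the paper: expand $H$ in the normalized Pauli basis, write the squared trace as a double sum over $s,t\in\calP_n$, rewrite the product of traces as a single trace on the doubled space, and invoke the orthogonality clause of Lemma~\ref{lemma:ls} to kill the $s\neq t$ terms. The only difference is cosmetic ordering of the tensor factors, and your explicit remark about passing between the unnormalized Paulis of Lemma~\ref{lemma:ls} and the normalized basis $\calP_n$ is a nice point of care that the paper leaves implicit.
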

\begin{proof}
We start by expanding the inner product in the normalized Pauli basis using Eq.~\eqref{eq:pauliexp}:
\begin{align}
    \bbE_{U} \Tr[ H U H'U^\dag]^2 =& \bbE_{U} \left( \sum_{s\in\calP_n} \Tr[H s]\Tr[sU H'U^\dag]\right)^2
    \\= &\bbE_{U} \sum_{s,t\in\calP_n} \Tr[H s]\Tr[sU H'U^\dag]\Tr[H t]\Tr[tU H'U^\dag]
    \\ = &\bbE_{U} \sum_{s,t\in\calP_n} \Tr[ H'^{\otimes 2} U^{\dag\otimes 2} (s \otimes   t)  U^{\otimes 2}]\Tr[Hs]\Tr[Ht]
    \\ = &\bbE_{U} \sum_{s\in\calP_n} \Tr[Hs]^2\Tr[s U H U^\dag]^2,
\end{align}
where in the final line we used that $\bbE_U U^{\dag\otimes 2} (s \otimes   t)  U^{\otimes 2} = 0$ for $s\neq t$.
\end{proof}
As an application of Lemma\ \ref{lemma:ortho}, we upper bound the mean squared error that arises when approximating a target observable \( O \) with another observable \( \Tilde{O} \).
\begin{lemma}[Mean-squared-error for observables]
\label{lemma:odg}
Let $H$ be an Hermitian operator and let $U$ a locally scrambling unitary.  
For all input state $\rho$ we have:
\begin{align}
   \bbE_U \left( \Tr[H U\rho U^\dag]^2\right)\leq \norm{H}^2_{\mathrm{Pauli},2}.
\end{align}
In particular, setting $H= O-\Tilde{O}$ we have
\begin{align}
   \bbE_U \left( \Tr[ (O-\Tilde{O}) U\rho U^\dag]^2\right)\leq \norm{ O-\Tilde{O}}^2_{\mathrm{Pauli},2}.
\end{align}
\end{lemma}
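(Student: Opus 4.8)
The plan is to reduce the statement to the vanishing cross-terms identity of Lemma~\ref{lemma:ortho}, specialized to $H'=\rho$, and then to control the resulting single-Pauli second moments by a trivial norm bound. The only genuine content has already been isolated in the orthogonality property of locally scrambling unitaries (Eq.~\eqref{eq:end-ortho}); what remains is bookkeeping between the normalized basis $\calP_n$ and the unnormalized basis $\{I,X,Y,Z\}^{\otimes n}$.

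First I would apply Lemma~\ref{lemma:ortho} with $H'=\rho$, which gives
\[
\bbE_U \Tr[H U\rho U^\dag]^2 = \sum_{s\in\calP_n} \Tr[Hs]^2\,\bbE_U \Tr[s U\rho U^\dag]^2,
\]
where the $U$-independent factors $\Tr[Hs]^2$ have been pulled outside the expectation. Writing $H=\sum_P a_P P$ in the unnormalized Pauli basis and $s = Q/2^{n/2}$ for the corresponding normalized element, orthonormality gives $\Tr[Hs]^2 = 2^n a_Q^2$, while $\Tr[sU\rho U^\dag]^2 = 2^{-n}\Tr[QU\rho U^\dag]^2$; the two normalization factors cancel exactly, leaving
\[
\bbE_U \Tr[H U\rho U^\dag]^2 = \sum_{Q\in\{I,X,Y,Z\}^{\otimes n}} a_Q^2\,\bbE_U \Tr[Q U\rho U^\dag]^2.
\]

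Next I would bound each summand. Since $\norm{Q}=1$ for any Pauli $Q$ and $U\rho U^\dag$ is a state, $\abs{\Tr[QU\rho U^\dag]}\leq 1$, hence $\bbE_U\Tr[QU\rho U^\dag]^2\leq 1$ (the damping bound of Eq.~\eqref{eq:damping} would give the sharper $(2/3)^{\abs{Q}}$, but this crude bound already suffices). Summing yields $\bbE_U \Tr[H U\rho U^\dag]^2 \leq \sum_Q a_Q^2 = \norm{H}^2_{\mathrm{Pauli},2}$, which is the claim. The specialization $H=O-\Tilde{O}$ then follows by direct substitution.

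The main, rather modest, obstacle is purely notational: one must track the $2^{\pm n/2}$ factors carefully when passing between $\calP_n$ and $\{I,X,Y,Z\}^{\otimes n}$, since they cancel precisely and a slip there would spuriously introduce an exponential factor. Conceptually the lemma is immediate once Lemma~\ref{lemma:ortho} is in hand, because orthogonality of locally scrambling unitaries has already eliminated all cross terms between distinct Pauli components of $H$, reducing the second moment to a diagonal sum of manifestly bounded contributions.
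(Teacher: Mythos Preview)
Your proof is correct and follows essentially the same route as the paper: apply Lemma~\ref{lemma:ortho} to obtain the diagonal sum, then bound each single-Pauli second moment by the trivial inequality $\abs{\Tr[Q U\rho U^\dag]}\leq 1$ (equivalently $\Tr[s U\rho U^\dag]^2\leq 2^{-n}$ for normalized $s$), and sum. The only cosmetic difference is that the paper stays in the normalized basis $\calP_n$ throughout, whereas you convert back to the unnormalized basis before bounding; the normalization factors cancel either way.
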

\begin{proof}
The lemma follows by applying Lemma\ \ref{lemma:ortho} and using the fact that $\Tr[s\rho]^2\leq 2^{-n}$ for all normalized Pauli $s\in\calP_n$.
    \begin{align}
        \bbE_U \left( \Tr[H U\rho U^\dag]^2\right])
        = &\sum_{s\in\calP_n}  \Tr[H s]^2 \bbE_U\Tr[sU\rho U^\dag]^2
        \\ \leq &\max_{\hat s \in \calP_n} \bbE_U\Tr[\hat sU\rho U^\dag]^2 \sum_{s\in\calP_n}  \Tr[Hs]^2 
       \\ \leq &\norm{H}^2_{\mathrm{Pauli},2}. 
    \end{align}
This completes the proof.
\end{proof}

\begin{comment}
We also assume that the distribution $\calD$ satisfies the Pauli-mixing property:
\begin{align}
    &&&\forall s\in\calP_n:\bbE_{U\sim\mathcal{D}}\left[U^{\dag\otimes 2}s^{\otimes 2} U^{\otimes 2}\right] = \frac{1}{3^{\abs{s}}}\sum_{\substack{t\in\calP_n:\\\mathrm{supp}(t)=\mathrm{supp}(s)}} \left[U^{\dag\otimes 2}t^{\otimes 2} U^{\otimes 2}\right]  &\text{(Pauli-mixing)} \label{eq:mixing}. 
\end{align}

\begin{fact}
Let $U \sim \calD$, where  $U = \left(\bigotimes_{i=1}^n U_i \right) V$, and where each $U_i$ is sampled independently from a single-qubit 2-design. Then $\calD$ satisfies both Pauli-invariance and Pauli-mixing
\end{fact}

\begin{fact}
Let $\calD$ a distribution over $\bbU(2)$ that satisfies both Pauli-mixing and orthogonality.% Moreover, assume that $\bbE_{U\sim\calD} U X U^\dag = \Tr[X] \frac{I}{2}$. 
Then $\calD$ is a 2-design.
\end{fact}
\begin{proof}
Let $P,Q \in \{I,X,Y,Z\}$. Orthogonality and Pauli-mixing imply:
\begin{align}
    \bbE_{U\sim\calD}U^{\otimes 2} (P\otimes Q) U^{\dag \otimes 2} = \begin{cases}
    0 & \text{if $P\neq Q$},\\
    I^{\otimes 2} & \text{if $P= Q = I$},\\
    \frac{1}{3}\left( X^{\otimes 2}+Y^{\otimes 2}+Z^{\otimes 2} \right) & \text{if $P= Q \neq I$}.
    \end{cases}
\end{align}
Tracing out the second system we obtain:
\begin{align}
    \bbE_{U\sim\calD}U P U^\dag = \bbE_{U\sim\calD} \Tr_2 [U^{\otimes 2} (P\otimes I) U^{\dag \otimes 2}] = \begin{cases}
    0 & \text{if $P\neq I$},\\
    I & \text{if $P = I$}.
    \end{cases}
\end{align}
\end{proof}
\end{comment}

We now discuss an important consequence of the Pauli-mixing property.
Informally, the Pauli-mixing property says that the second moment of a Pauli operator evolved under a locally scrambling unitary $U$ depends only on the support of the Pauli. As previously observed in Ref.\ \cite{huang2023learning}, this implies that the second moment of the inner product $\Tr[U^\dag P U \rho]$ decreases exponentially in the Pauli weight $\abs{P}$.
\begin{lemma}
\label{lemma:2/3}
Let $U$ a locally scrambling unitary and let $P\in\{I,X,Y,Z\}^{\otimes n}$.
We have
\begin{align}
    \bbE_{U} \Tr[U^\dag P U \rho]^2 
    \leq  \left(\frac{2}{3}\right)^{\abs{P}}.
\end{align}
\end{lemma}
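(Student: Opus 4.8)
The plan is to combine the Pauli-mixing identity from Lemma~\ref{lemma:ls} with a purity bound on a reduced density matrix, extracting the desired decay through nonnegativity of individual terms. Write $g(P) := \bbE_U\Tr[U^\dag P U\rho]^2$ and observe that $\Tr[U^\dag P U\rho]^2 = \Tr[(U^\dag P U\rho)^{\otimes 2}] = \Tr[U^{\dag\otimes 2}P^{\otimes 2}U^{\otimes 2}\rho^{\otimes 2}]$. First I would multiply the Pauli-mixing identity of Lemma~\ref{lemma:ls} by $\rho^{\otimes 2}$ and take the trace, which gives
\begin{align}
g(P) = \frac{1}{3^{\abs{P}}}\sum_{Q:\,\mathrm{supp}(Q)=\mathrm{supp}(P)} g(Q).
\end{align}
Since the right-hand side depends on $P$ only through $\mathrm{supp}(P)=:S$, this shows that $g$ is constant on each support class; I denote the common value by $\bar g(S)$.

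Next I would prove a completeness bound over all Paulis whose support lies in $S$. Such Paulis are exactly the operators $Q_S\otimes I_{[n]\setminus S}$ with $Q_S \in \{I,X,Y,Z\}^{\otimes S}$, and for them $\Tr[QU\rho U^\dag] = \mathrm{Tr}_{S}[Q_S\,\tau_S]$, where $\tau_S := \mathrm{Tr}_{[n]\setminus S}[U\rho U^\dag]$ is the reduced state on $S$. Applying the normalized-basis identity~\eqref{eq:2norm} on the subsystem $S$ (and tracking the factor $2^{\abs{S}}$ that converts the non-normalized Paulis $\{I,X,Y,Z\}^{\otimes S}$ to the orthonormal basis $\calP_S$) yields
\begin{align}
\sum_{Q:\,\mathrm{supp}(Q)\subseteq S}\Tr[QU\rho U^\dag]^2 = 2^{\abs{S}}\,\mathrm{Tr}_S[\tau_S^2] \leq 2^{\abs{S}},
\end{align}
where the inequality is the purity bound for the density matrix $\tau_S$. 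Taking $\bbE_U$ preserves this estimate.

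Finally I would group the expected left-hand sum by support. Because there are exactly $3^{\abs{T}}$ Paulis with support $T$ and $g$ depends only on support,
\begin{align}
\sum_{Q:\,\mathrm{supp}(Q)\subseteq S} g(Q) = \sum_{T\subseteq S} 3^{\abs{T}}\,\bar g(T) \leq 2^{\abs{S}}.
\end{align}
All summands are nonnegative, so the single $T=S$ term already satisfies $3^{\abs{S}}\bar g(S)\leq 2^{\abs{S}}$, i.e. $g(P)=\bar g(S)\leq (2/3)^{\abs{S}} = (2/3)^{\abs{P}}$, which is the claim.

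The proof is short, so the only real obstacle is conceptual: recognizing that the correct object to bound is the completeness sum over all $Q$ supported inside $S$, whose value $2^{\abs{S}}$ happens to equal $3^{\abs{S}}(2/3)^{\abs{S}}$. It is precisely the interplay between this purity bound and the support-only dependence forced by Pauli-mixing---combined with discarding the nonnegative $T\subsetneq S$ contributions---that isolates the exponential factor. The one point requiring care is the bookkeeping of the $2^{\abs{S}}$ normalization when passing between the non-normalized and normalized Pauli bases on the subsystem $S$.
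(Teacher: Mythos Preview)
Your proof is correct and follows essentially the same route as the paper: apply Pauli-mixing to reduce to a sum over Paulis with support equal to $S=\mathrm{supp}(P)$, then bound that sum by the purity of the reduced state $\tau_S=\Tr_{[n]\setminus S}[U\rho U^\dag]$. The only cosmetic difference is that the paper drops straight from $\sum_{\mathrm{supp}(Q)=S}\Tr[QU\rho U^\dag]^2\le\sum_{\mathrm{supp}(Q)\subseteq S}\Tr[QU\rho U^\dag]^2=2^{|S|}\Tr[\tau_S^2]$ before averaging, whereas you first take the expectation, record that $g$ depends only on the support, and then discard the $T\subsetneq S$ terms; both are the same nonnegativity step.
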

\begin{proof}
By Pauli-mixing, we have
\begin{align}
    \bbE_{U} \Tr[U^\dag P U \rho]^2 = \bbE_{U} \frac{1}{3^{\abs{P}}}\sum_{\substack{Q:\\ \mathrm{supp}(Q)= \mathrm{supp}(P)}} \Tr[U^\dag QU\rho]^2 \label{eq:pml}%\leq \frac{1}{2^n}\left(\frac{2}{3}\right)^{\abs{s}} ,
\end{align}
We define the reduced state:
\begin{align}
    \tilde{\rho}_{{\mathrm{supp}(s)}}:= \Tr_{[n]\setminus {\mathrm{supp}(P)}}[U\rho U^\dag].
\end{align}
We can now relate the sum in Eq.\ \eqref{eq:pml} to the purity of $\tilde{\rho}_{{\mathrm{supp}(P)}}$.
\begin{align}
    \Tr[(\tilde{\rho}_{{\mathrm{supp}(P)}})^2]  
    =& 
    \frac{1}{2^{\abs{P}}} \sum_{\substack{Q \in \{I,X,Y,Z\}^{\otimes k}}} \Tr[Q \tilde{\rho}_{{\mathrm{supp}(P)}}]^2
    \\=&  \frac{1}{2^{\abs{P}}} \sum_{\substack{Q \{I,X,Y,Z\}^{\otimes n} :\\ \mathrm{supp}(Q) \subseteq \mathrm{supp}(P)}} \Tr[QU\rho U^\dag]^2
    \geq \frac{1}{2^{\abs{P}}} \sum_{\substack{Q \{I,X,Y,Z\}^{\otimes n} :\\ \mathrm{supp}(Q) =\mathrm{supp}(P)}} \Tr[QU\rho U^\dag]^2.
     %\geq \frac{2^n}{2^{\abs{s}}} \sum_{\substack{t \in \calP_n\\\mathrm{supp}(t) = \mathrm{supp}(s)}} \Tr[tU\rho U^\dag]^2.
\end{align}
Putting all together, we obtain
\begin{align}
    \bbE_{U} \frac{1}{3^{\abs{P}}}\sum_{\substack{Q:\\ \mathrm{supp}(Q)= \mathrm{supp}(P)}} \Tr[U^\dag QU\rho]^2 \leq \left( \frac{2}{3}\right)^{\abs{P}} \bbE_{U}  \Tr[(\tilde{\rho}_{{\mathrm{supp}(P)}})^2] \leq \left( \frac{2}{3}\right)^{\abs{P}}.
\end{align}
\end{proof}
The properties described above have been exploited by\ \citet{huang2023learning} for producing a low-weight approximation for an arbitrary observable, as we also show in the following lemma.

\begin{lemma}[Approximate inner product] 
\label{lem:apx-inner-product}
Let $U$ be a locally scrambling unitary, $O$ an observable and $\rho$ a quantum state. For all $k\geq 0$, denote the high-weight and low-weight components of $O$ as follows
\begin{align}
    O^{(\mathrm{low})}:= \sum_{\substack{s\in \calP_n\\\abs{s}\leq k}} \Tr[Os] s,\;\;\;\;
    O^{(\mathrm{high})}:= \sum_{\substack{s\in \calP_n\\\abs{s}> k}} \Tr[Os] s.
\end{align}
We have
\begin{align}
        \bbE_{U}   \Tr[U^\dag O^{(\mathrm{high})}U\rho]^2
      =&\bbE_{U}    \Tr[OU\rho U^\dag]^2 -  \bbE_{U} \Tr[O^{(\mathrm{low})}U\rho U^\dag]^2 \\ \leq &\left(\frac{2}{3}\right)^{k+1}  \left(\norm{O}_{\mathrm{Pauli},2}^2 - \norm{O^{(\mathrm{low})}}_{\mathrm{Pauli},2}^2 \right).
    \end{align}
\end{lemma}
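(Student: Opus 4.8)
The plan is to expand every second moment in the normalized Pauli basis $\calP_n$ and exploit the orthogonality property of locally scrambling unitaries to kill all cross terms, reducing each expectation to a single diagonal sum. Concretely, I would apply Lemma~\ref{lemma:ortho} with $H'=\rho$ to both $H=O$ and $H=O^{(\mathrm{low})}$, obtaining
\begin{align}
\bbE_U\Tr[O U\rho U^\dag]^2 &= \sum_{s\in\calP_n}\Tr[Os]^2\,\bbE_U\Tr[sU\rho U^\dag]^2,\\
\bbE_U\Tr[O^{(\mathrm{low})} U\rho U^\dag]^2 &= \sum_{s:\abs{s}\leq k}\Tr[Os]^2\,\bbE_U\Tr[sU\rho U^\dag]^2,
\end{align}
where I used $\Tr[O^{(\mathrm{low})}s]=\Tr[Os]$ for $\abs{s}\leq k$ and $\Tr[O^{(\mathrm{low})}s]=0$ otherwise. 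Subtracting the two lines, the terms with $\abs{s}\leq k$ cancel, leaving exactly the sum over $\abs{s}>k$, which by the same application of Lemma~\ref{lemma:ortho} to $H=O^{(\mathrm{high})}$ equals $\bbE_U\Tr[O^{(\mathrm{high})}U\rho U^\dag]^2=\bbE_U\Tr[U^\dag O^{(\mathrm{high})}U\rho]^2$. This establishes the claimed equality; it is the Pythagorean identity induced by orthogonality of Pauli paths.

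For the inequality I would bound each surviving diagonal term by the damping estimate of Lemma~\ref{lemma:2/3}. Writing a normalized Pauli as $s=2^{-n/2}P$ with $P\in\{I,X,Y,Z\}^{\otimes n}$ and $\abs{P}=\abs{s}$, cyclicity gives $\bbE_U\Tr[sU\rho U^\dag]^2=2^{-n}\bbE_U\Tr[U^\dag P U\rho]^2\leq 2^{-n}(2/3)^{\abs{s}}$. Since every retained term has $\abs{s}\geq k+1$, I can factor out the uniform bound $(2/3)^{k+1}$, yielding
\begin{align}
\bbE_U\Tr[O^{(\mathrm{high})}U\rho U^\dag]^2\leq\left(\frac{2}{3}\right)^{k+1}2^{-n}\sum_{s:\abs{s}>k}\Tr[Os]^2.
\end{align}
It then remains to recognize the residual sum as a difference of squared Pauli-$2$ norms: using $\norm{O}_{\mathrm{Pauli},2}^2=2^{-n}\sum_{s\in\calP_n}\Tr[Os]^2$ and the analogous identity for $O^{(\mathrm{low})}$ (whose Pauli support is exactly $\abs{s}\leq k$), one has $2^{-n}\sum_{s:\abs{s}>k}\Tr[Os]^2=\norm{O}_{\mathrm{Pauli},2}^2-\norm{O^{(\mathrm{low})}}_{\mathrm{Pauli},2}^2$, which closes the bound.

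I expect no genuine conceptual obstacle here, since orthogonality (Lemma~\ref{lemma:ortho}) and damping (Lemma~\ref{lemma:2/3}) do all the heavy lifting. The only thing requiring care is the normalization bookkeeping: tracking the factors of $2^{-n}$ that relate the normalized Pauli coefficients $\Tr[Os]$, the Hilbert-Schmidt norm, and the Pauli-$2$ norm, and ensuring the weight-dependent factor $(2/3)^{\abs{s}}$ is applied to the unnormalized $P$ rather than to $s$. Getting these constants straight is precisely what makes the two norms appear in the final bound without spurious dimensional factors.
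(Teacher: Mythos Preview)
Your proposal is correct and follows essentially the same route as the paper: apply Lemma~\ref{lemma:ortho} to reduce each second moment to a diagonal Pauli sum, identify the difference as the $\abs{s}>k$ part, and then bound each surviving term by $2^{-n}(2/3)^{\abs{s}}$ via Lemma~\ref{lemma:2/3} before recollecting the sum into $\norm{O}_{\mathrm{Pauli},2}^2-\norm{O^{(\mathrm{low})}}_{\mathrm{Pauli},2}^2$. Your normalization bookkeeping is handled correctly.
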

\begin{proof}
The proof closely follows that of Corollary 13 in Ref.\ \cite{huang2023learning}.
  The first identity can be proven employing the orthogonality property. By means of Lemma\ \ref{lemma:ortho}, we obtain
  \begin{align}
      \bbE_{U}   \Tr[U^\dag O^{(\mathrm{high})}U\rho]^2
      = \bbE_{U} \sum_{\abs{s}>k} \Tr[ Os]^2\Tr[U^\dag sU\rho]^2 = \bbE_{U}    \Tr[OU\rho U^\dag]^2 -  \bbE_{U} \Tr[O^{(\mathrm{low})}U\rho U^\dag]^2  .
  \end{align}
As for the inequality, we employ again Lemma\ \ref{lemma:ortho} and  Lemma\ \ref{lemma:2/3},
\begin{align}
    &\bbE_{U}   \Tr[U^\dag O^{(\mathrm{high})}U\rho]^2 = \bbE_U \sum_{s\in\calP_n}\Tr[O^{(\mathrm{high})}s]^2\Tr[U^\dag sU\rho]^2 \\\leq &\left(\frac{2}{3}\right)^{k+1} \frac{1}{2^n}
    \sum_{s\in\calP_n}\Tr[O^{(\mathrm{high})}s]^2 = \left(\frac{2}{3}\right)^{k+1} \frac{1}{2^n}\left(\sum_{s\in\calP_n}\Tr[O s]^2 -\sum_{s\in\calP_n}\Tr[O^{(\mathrm{low})}s]^2  \right)
    \\ = & \left(\frac{2}{3}\right)^{k+1} \left(\norm{O}_{\mathrm{Pauli},2}^2 - \norm{O^{(\mathrm{low})}}_{\mathrm{Pauli},2}^2 \right).
\end{align}

\end{proof}
%We observe that $\norm{O^{(\mathrm{high})}}^2_{\mathrm{Pauli, 2}} \leq \norm{O}^2_{\mathrm{Pauli, 2}} \leq \norm{O}_\infty$.
An immediate consequence of this result is that, on average, we can approximate any observable $O$ with its low-weight component $O^{(\mathrm{low})}$ and incur in a small additive error on locally scrambled inputs.
Combining the above result with Jensen's inequality we obtain,
\begin{align}
    \bbE_U \left|\Tr[U^\dag \left(O - O^{(\mathrm{low})}\right)U\rho]\right| = \bbE_U \left|\Tr[U^\dag O^{(\mathrm{high})}U\rho]\right| \leq \left(\frac{2}{3}\right)^{(k+1)/2}\norm{O}_{\mathrm{Pauli, 2}}.
\end{align}
%where we also used the fact that $\norm{O}_{\mathrm{Pauli},2}^2 - \norm{O^{(\mathrm{low})}}_{\mathrm{Pauli},2}^2 \leq \norm{O}_{\mathrm{Pauli, 2}}^2 $.

In the following sections, we will demonstrate that this low-weight approximation can be generalized to the Pauli path integral.

\section{The Pauli path integral}
In this section, we introduce the Pauli path integral, which is the Feynman path integral written in the Pauli basis, and we discuss some of its properties.

Given an observable $O$, an initial quantum state $\rho$,  and  an $L$-layered quantum circuit $U = U_L U_{L-1} \dots U_1$, %of depth $L$.%, such that each layer is sampled independently from a distribution satisfying both Pauli-invariance and Pauli-mixing.
we want to compute the following inner product:
\begin{align}
    f_U(O):=\Tr[O U \rho U^\dag]. %= \Tr[ U_1^\dag \dots U_{L-1}^\dag U_L^\dag O U_L U_{L-1} \dots U_1\rho ]
\end{align}
To this end, we consider the Heisenberg evolution of the observable $O$, i.e. the evolution under the adjoint unitary channel
\begin{align}
     U^\dag(\cdot) U = U_1^\dag U_2^\dag\dots U_L^\dag(\cdot)U_L\dots U_2U_1.
\end{align}
Applying iteratively Eq.~\eqref{eq:innerproduct_app}, we obtain
\begin{align}
    U^\dag O U = &\sum_{s_0,\dots,s_L\in \calP_n} \Tr[Os_L]\Tr[U_{L}^\dag s_L U_L s_{L-1}]\Tr[U_{L-1}^\dag s_{L-1} U_{L-1} s_{L-2}]\dots\Tr[U_{1}^\dag s_{1} U_{1} s_{0}]s_0
      \nonumber \\  = &\sum_{\gamma \in \mathcal{P}_n^{L+1}} \Phi_\gamma(U) s_\gamma. 
\end{align}
Here we labeled each Pauli path by a string $\gamma = (s_0,\dots,s_L)$; we denoted the associated Fourier coefficient by $\Phi_\gamma(U)  := \mathrm{Tr}[Os_L]\prod_{j=1}^L \mathrm{Tr}[U_{j}^\dag s_j U_j s_{j-1}]$ and, in a slight abuse of notation, we defined $s_\gamma:=s_0$.
We remark that the definition of Fourier coefficient used in\ Ref.\ \cite{aharonov2023polynomial} slightly differs from ours, since they incorporate the product $\Tr[s_\gamma\rho]$ inside the coefficient. %However, defining the final product $d_\gamma$ separately has a central role in our analysis.
%Moreover, the expectation value ${f}_U(\rho, O)$ can be viewed as an inner product and hence 
% \begin{align}
%     \Tr[O U \rho_0 U^\dag] = \Tr[O U_L U_{L-1} \dots U_1\rho_0 U_1^\dag \dots U_{L-1}^\dag U_L^\dag]
%\end{align}
\begin{comment}
The Pauli path integral formalism is based around the observation that any inner product between two Hermitian operators $H$ and $H'$ can be expanded in the (normalized) Pauli basis $\calP_n = \left\{\frac{I}{\sqrt{2}},\frac{X}{\sqrt{2}},\frac{Y}{\sqrt{2}},\frac{Z}{\sqrt{2}}\right\}^{\otimes n}$ as
\begin{align}\label{eq:innerprod}
    \Tr[HH'] = \sum_{s\in\calP_n} \Tr[Hs]\Tr[sH'] \, .
\end{align}
The expectation value ${f}_U(\rho, O)$ can be viewed as an inner product and hence 
% \begin{align}
%     \Tr[O U \rho_0 U^\dag] = \Tr[O U_L U_{L-1} \dots U_1\rho_0 U_1^\dag \dots U_{L-1}^\dag U_L^\dag]
% \end{align}
% Using Eq.~\eqref{eq:innerprod}, 
we see that for each $j \in \{0,1,\dots,L\}$ we can write
\begin{align}
    \Tr[U^\dag O U\rho] = \sum_{s_{j}\in \calP_n} &\Tr[(U^\dag_{j+1} \dots U^\dag_{L} O U_{L}\dots U_{j+1})s_{j}] \nonumber \\ & \times \Tr[s_{j} U_{j} \dots U_1 \rho U_{1}^\dag \dots U_{j}^\dag].
    \label{eq:int-j}
\end{align}
\end{comment}
The Pauli path integral is the summation obtained by projecting the Heisenberg evolved observable onto the initial state:
\begin{align} % \label{eq:PathIntExp}
    &f_U(O)=\Tr[U^\dag O U\rho] = %\sum_{s_0,\dots,s_L\in \calP_n} \Tr[Os_L]\Tr[U_{L}^\dag s_L U_L s_{L-1}] \nonumber \\ &\times \Tr[U_{L-1}^\dag s_{L-1} U_{L-1} s_{L-2}]\dots \Tr[U_{1}^\dag s_{1} U_{1} s_{0}]\Tr[s_0\rho]\nonumber \\ & = 
    \sum_{\gamma \in \mathcal{P}_n^{L+1}} \Phi_\gamma(U) d_\gamma \, ,
\end{align}
where we defined
\begin{equation}\label{eq:dgamma}
    d_\gamma:=\Tr[s_\gamma \rho] \, .
\end{equation}
We now define a key property satisfied by a wide class of distribution over quantum circuits.
\begin{definition}[Orthogonality of Pauli paths]
Let $U$ be a random quantum circuit. We say that $U$ has orthogonal Pauli paths if for any $\gamma\neq\gamma'$ we have
\begin{align}
    \bbE_{U} \Phi_\gamma(U)  \Phi_{\gamma'}(U) = 0,
\end{align}
that is, the Fourier coefficients associated to two distinct Pauli paths are uncorrelated.
\end{definition}
Crucially, this property is satisfied by locally scrambling circuits, as we prove in the following lemma.
\begin{lemma}
Let $U$ be a random circuit sampled from a locally scrambling circuit distribution. Then $U$ has orthogonal Pauli paths, i.e. for any $\gamma\neq\gamma'$ we have
\begin{align}
    \bbE_{U} \Phi_\gamma(U)  \Phi_{\gamma'}(U) = 0.
\end{align}
\label{lem:ortho2}
\end{lemma}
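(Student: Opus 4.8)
The plan is to use the mutual independence of the layers $U_1,\dots,U_L$ to reduce the claim to a single-layer statement, and then invoke the orthogonality property of Lemma~\ref{lemma:ls}. Writing $\gamma=(s_0,\dots,s_L)$ and $\gamma'=(t_0,\dots,t_L)$, independence lets me factorize
\begin{align}
\bbE_{U}\Phi_\gamma(U)\Phi_{\gamma'}(U)=\Tr[Os_L]\Tr[Ot_L]\prod_{j=1}^{L}G_j,\qquad G_j:=\bbE_{U_j}\!\left[\Tr[U_j^\dag s_j U_j s_{j-1}]\,\Tr[U_j^\dag t_j U_j t_{j-1}]\right],
\end{align}
so that it suffices to exhibit, for every $\gamma\neq\gamma'$, at least one index $j$ with $G_j=0$.

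I would analyze each factor by rewriting it through $\Tr[A]\Tr[B]=\Tr[A\otimes B]$ as
\begin{align}
G_j=\Tr\!\left[\bbE_{U_j}\!\left[U_j^{\dag\otimes 2}(s_j\otimes t_j)U_j^{\otimes 2}\right](s_{j-1}\otimes t_{j-1})\right].
\end{align}
For a subset layer $U_j=V_{A_j}\otimes I_{[n]\setminus A_j}$ I would split every Pauli into its restriction to $A_j$ and to the complement. On $[n]\setminus A_j$ the layer acts as the identity, so those tensor factors pass through untouched and contribute $\delta_{s_j,s_{j-1}}\delta_{t_j,t_{j-1}}$ restricted to the complement; on $A_j$ the orthogonality half of Lemma~\ref{lemma:ls} shows that $\bbE_{V_{A_j}}[V_{A_j}^{\dag\otimes2}(\,\cdot\,)V_{A_j}^{\otimes2}]$ annihilates $s_j\otimes t_j$ unless $s_j$ and $t_j$ agree on $A_j$. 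Hence $G_j\neq 0$ forces the outgoing Paulis to coincide on $\mathrm{supp}(U_j)$ and every Pauli to be transported unchanged across the qubits the layer leaves idle.

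To finish I would argue by contradiction: assuming $G_j\neq 0$ for all $j$, I would track a single qubit through the layers and combine the two constraints above, using that the first layer $U_1$ is locally scrambling on all $n$ qubits so that every qubit is acted on somewhere. The main obstacle is that orthogonality directly controls only the \emph{outgoing} leg $s_j$ of each layer, so the delicate case is when $\gamma$ and $\gamma'$ differ only in the innermost index $s_0$; the naive per-layer argument does not by itself kill the corresponding factor, and in fact a one-sided scrambling assumption is not enough to make each $G_j$ vanish on the incoming leg. This is exactly where I expect to need the prefix-scrambling Lemma~\ref{lem:ls-subcircuit}: by resolving the expectation layer by layer and treating the prefix $U_j\cdots U_1$ as a single locally scrambling unitary at the appropriate step, I would aim to recover orthogonality on the incoming leg as well and thereby close the argument.
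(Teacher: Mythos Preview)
Your overall strategy coincides with the paper's: factorize $\bbE_{U}[\Phi_\gamma\Phi_{\gamma'}]$ over independent layers and exhibit a vanishing factor $G_j$ via Lemma~\ref{lemma:ls}. The paper shortcuts your contradiction/qubit-tracking argument by simply taking $j$ to be the \emph{smallest} index with $s_j\neq t_j$: this forces $s_{j-1}=t_{j-1}$, after which either a qubit $i$ where $s_j^{(i)}\neq t_j^{(i)}$ lies in $A_j$ (and Lemma~\ref{lemma:ls} applied to $V_{A_j}$ kills $G_j$), or $i\notin A_j$ (and the trivial action on qubit $i$ yields the deterministic factor $\delta_{s_j^{(i)},\,s_{j-1}^{(i)}}\delta_{t_j^{(i)},\,t_{j-1}^{(i)}}=0$, since $s_{j-1}^{(i)}=t_{j-1}^{(i)}$ but $s_j^{(i)}\neq t_j^{(i)}$). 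Your induction reaches the same conclusion for all $j\geq 1$, just less directly.

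Your identification of the $s_0$-only case as the obstacle is correct, but the proposed rescue via Lemma~\ref{lem:ls-subcircuit} cannot work. That lemma only asserts that the prefix $U_j\cdots U_1$ is locally scrambling, and ``locally scrambling'' as defined in the paper is a one-sided (post-processing) invariance $U\mapsto VU$; in $U^{\dag}(\cdot)U$ this places the single-qubit Clifford twirl on the \emph{outgoing} operator and never constrains the incoming leg $s_0$. Concretely, take $L=1$, $n=2$, and $U_1=(V_1\otimes V_2)\,e^{-i\theta X\otimes X}$ with $V_1,V_2$ uniform single-qubit Cliffords (a bona fide locally scrambling layer); for $s_1=Z\otimes I$, $s_0=Y\otimes I$, $s_0'=Z\otimes X$ a direct calculation gives
\[
\bbE_{U_1}\bigl[\Tr[U_1^\dag s_1U_1\, s_0]\,\Tr[U_1^\dag s_1U_1\, s_0']\bigr]\ \propto\ \sin(4\theta)\ \neq\ 0,
\]
so no prefix argument can manufacture incoming orthogonality from a one-sided definition. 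The paper's own proof also treats only $j\geq 1$ and is silent on $j=0$, so this is a shared gap in the statement rather than a defect specific to your route; note, however, that the downstream applications (e.g.\ the invocation of Lemma~\ref{lemma:mse} in Theorem~\ref{thm:main}) only compare paths in $\calS_k$ against paths in its complement, which necessarily differ at some index $j\geq 1$.
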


\begin{proof}
We can express the product of the two coefficients $\Phi_\gamma(U)  \Phi_{\gamma'}(U)$ as follows:
\begin{align}\label{eq:orthogderiv}
    &\Phi_\gamma(U)  \Phi_{\gamma'}(U) \\=  &\Tr[Os_L]\Tr[Os_L']\Tr[U_{L}^\dag s_L U_L s_{L-1}]\Tr[U_{L}^\dag s_L' U_L s_{L-1}']\dots\\\dots&\Tr[U_{L-1}^\dag s_{L-1} U_{L-1} s_{L-2}]\Tr[U_{L-1}^\dag s_{L-1}' U_{L-1} s_{L-2}']\Tr[U_{1}^\dag s_{1} U_{1} s_{0}]\Tr[U_{1}^\dag s_{1}' U_{1} s_{0}'].
\end{align}
Given $\gamma \neq \gamma'$, let $j$ be lowest index such that $s_j\neq s_j'$.
%If $\gamma \neq \gamma'$, then there exists $j$ such that  
In order to prove the lemma, it suffices to show that 
\begin{align}
\bbE_{U} \Tr[U_{j}^\dag s_{j} U_{j} s_{j-1}]\Tr[U_{j}^\dag s_{j}' U_{j} s_{j-1}'] = 0
\end{align}
If $j=1$, then $U_j$ is a locally scrambling unitary over $\bbU(2^n)$. We have
\begin{align}
    &\bbE_{U} \Tr[U_{j}^\dag s_{j} U_{j} s_{j-1}]\Tr[U_{j}^\dag s_{j}' U_{j} s_{j-1}'] 
\\= &\bbE_{U} \Tr[U_{j}^{\dag\otimes 2} (s_{j} \otimes s_{j}')U_{j}^{\otimes 2} (s_{j-1}\otimes s_{j-1}')]
    = 0\label{eq:singleorthogterm},
\end{align}
where the final equality follows from $\bbE_{U_j}U_{j}^{\dag\otimes 2} (s_{j} \otimes s_{j}')U_{j}^{\otimes 2} =0$, as we showed in Lemma\ \ref{lemma:ls}. The desired result then follows from substituting Eq.~\eqref{eq:singleorthogterm} back into Eq.~\eqref{eq:orthogderiv}.

We now consider the case where $j>1$.
Let $s_j = \bigotimes_{i=1}^n s_j^{(i)}$ and $s_j' = \bigotimes_{i=1}^n s_j'^{(i)}$, where $s_j^{(i)}$ and $s_j'^{(i)}$ are single-qubit normalized Pauli acting on the $i$-th qubit.
There exists an $i\in[n]$ such that 
\begin{align}
    s_j^{(i)} \neq s_j'^{(i)}
\end{align}
Since we have that $s_{j-1}= s_{j-1}'$, then the unitary $U_j$ acts non-trivially on the $i$-th qubit.
Thus, we can write $U_j = I_{[n]\setminus A_j} \otimes V_{A_j}$ where $V_{A_j}$ is a locally scrambling unitary over $\bbU\left(2^{\abs{A_j}}\right)$ and $i\in A_j \subseteq [n]$. We obtain that

%In particular, this implies that $V_{A_j}$ is invariant under post-processing by a random single-qubit Clifford acting on the $i$-th qubit. Leveraging this property, we obtain that
\begin{align}
    &\bbE_{U} \Tr[U_{j}^\dag s_{j} U_{j} s_{j-1}]\Tr[U_{j}^\dag s_{j}' U_{j} s_{j-1}'] 
\\= &\bbE_{U} \Tr[V_{A_j}^{\dag\otimes 2} \left(\bigotimes_{i\in A_j}s_{j}^{(i)} \otimes \bigotimes_{i\in A_j}s_{j}'^{(i)}\right)V_{A_j}^{\otimes 2} \left(\bigotimes_{i\in A_j}s_{j-1}^{(i)} \otimes \bigotimes_{i\in A_j}s_{j-1}'^{(i)}\right)]
    = 0\label{eq:singleorthogterm2},
\end{align}
where the final equality follows again from Lemma\ \ref{lemma:ls}. The desired result then follows from substituting Eq.~\eqref{eq:singleorthogterm2} back into Eq.~\eqref{eq:orthogderiv}. 
This concludes the proof of the Lemma.
\end{proof}
We emphasize that there exist families of random circuits that exhibit orthogonal Pauli paths despite not having locally scrambling layers. The above proof leverages only the orthogonality property and the independence of the layers. Therefore, circuits with independent Pauli-invariant layers have orthogonal Pauli paths, as also observed in Ref.~\cite{aharonov2023polynomial}. Moreover, parameterised quantum circuits composed of Clifford gates and single-qubit rotations with uncorrelated angles also result in orthogonal Pauli paths, as noted in~\cite{shao2023simulating, fontana2023classical}.

\bigskip

Throughout this work, we aim at approximating the exact path integral $\sum_{\gamma \in \mathcal{P}_n^{L+1}} \Phi_\gamma(U) d_\gamma$ with an efficiently computable estimator that produces a small additive error on locally scrambling circuits. A natural strategy, also employed in previous works, consists in restricting the integral to a carefully chosen subset of paths, as we formalize in the following definition.

\begin{definition}[Truncated path integral]
For a subset of paths  $\calS\subseteq \calP_n^{L+1}$, we define the associated truncated observable $O_U^{(\calS)}$ and the truncated path integral $\tilde{f}^{(\calS)}_U(O)$ as
\begin{align}
    &O_U^{(\calS)}:=\sum_{\gamma \in \calS} \Phi_\gamma(U) s_\gamma,
    \\&\tilde{f}^{(\calS)}_U(O):=\sum_{\gamma \in \calS} \Phi_\gamma(U) d_\gamma.
\end{align}
We evaluate the performance of the estimator $\tilde{f}^{(\calS)}_U(O)$ in mean squared error, which is defined as follows:
\begin{align}
    \bbE_U \Delta_U^{(\calS)} := \bbE_U \left[\left({f}_U(O) - \tilde{f}^{(\calS)}_U(O)  \right)^2 \right].
\end{align}
\end{definition}

The mean squared error arises as a natural metric to leverage the properties of circuits with uncorrelated Fourier coefficients. In particular, we will make extensive of the following lemma.
\begin{lemma}[Mean squared error]
\label{lemma:mse}
Let $U$ be a random circuit, $O$ be an observable and $\rho$ be quantum state. Assume that the Fourier coefficients associated to different paths are uncorrelated, i.e. 
\begin{align}
    \gamma\neq\gamma'\implies\bbE_U \;\Phi_\gamma(U) \Phi_{\gamma'}(U)=0.
\end{align}
Then we have
\begin{align}
    \bbE_U \Delta f_U^{(\calS)} = \bbE_U\left[\tilde{f}^{(\overline\calS)}_U(O) ^2\right]  = \bbE_U\left[{f}_U(O)^2\right] - \bbE_U\left[\tilde{f}^{(\calS)}_U(O) ^2\right] ,
\end{align}
where $\overline{\calS} := \calP_n^{L+1} \setminus \calS$ be the complement of the set $\calS$.
\end{lemma}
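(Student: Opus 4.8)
The plan is to exploit the linearity of the Pauli path integral in the paths together with the assumed pairwise orthogonality of the Fourier coefficients; no further structural property of $U$ is required.

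First I would note that, since $\calS$ and $\overline{\calS}$ partition $\calP_n^{L+1}$, the two truncated integrals add up to the full one,
\begin{align}
    f_U(O) = \sum_{\gamma \in \calP_n^{L+1}} \Phi_\gamma(U) d_\gamma = \tilde{f}^{(\calS)}_U(O) + \tilde{f}^{(\overline{\calS})}_U(O).
\end{align}
Consequently the residual is exactly the complementary integral, $f_U(O) - \tilde{f}^{(\calS)}_U(O) = \tilde{f}^{(\overline{\calS})}_U(O)$, and squaring and taking the expectation over $U$ yields the first equality $\bbE_U \Delta f_U^{(\calS)} = \bbE_U[\tilde{f}^{(\overline{\calS})}_U(O)^2]$ directly from the definition of $\Delta f_U^{(\calS)}$. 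This step is purely algebraic and does not invoke orthogonality.

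For the second equality I would expand the square of the decomposition above,
\begin{align}
    f_U(O)^2 = \tilde{f}^{(\calS)}_U(O)^2 + \tilde{f}^{(\overline{\calS})}_U(O)^2 + 2\,\tilde{f}^{(\calS)}_U(O)\,\tilde{f}^{(\overline{\calS})}_U(O),
\end{align}
and then take the expectation. The cross term expands as $\sum_{\gamma \in \calS}\sum_{\gamma' \in \overline{\calS}} d_\gamma d_{\gamma'}\, \bbE_U[\Phi_\gamma(U)\Phi_{\gamma'}(U)]$, and the crucial observation is that every pair appearing here has $\gamma \in \calS$ and $\gamma' \in \overline{\calS}$, hence $\gamma \neq \gamma'$ because the two index sets are disjoint. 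The orthogonality hypothesis then annihilates each summand, so the whole cross term vanishes in expectation. Rearranging $\bbE_U[f_U(O)^2] = \bbE_U[\tilde{f}^{(\calS)}_U(O)^2] + \bbE_U[\tilde{f}^{(\overline{\calS})}_U(O)^2]$ gives the claimed identity.

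There is essentially no hard step here: the only points requiring care are the interchange of expectation with the (finite, albeit exponentially large) double sum, which is harmless since there are finitely many paths, and the verification that the disjointness of $\calS$ and $\overline{\calS}$ forces $\gamma \neq \gamma'$ throughout the cross term so that orthogonality applies uniformly. I expect this bookkeeping to be the entirety of the obstacle, with the substantive content having already been front-loaded into the orthogonality property established for locally scrambling circuits.
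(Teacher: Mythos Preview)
Your proof is correct and follows essentially the same approach as the paper: both use the decomposition $f_U(O)=\tilde{f}^{(\calS)}_U(O)+\tilde{f}^{(\overline{\calS})}_U(O)$ and the orthogonality hypothesis to eliminate cross terms. The only cosmetic difference is that the paper first expands $\bbE_U[\tilde{f}^{(\overline{\calS})}_U(O)^2]$ into the diagonal sum $\sum_{\gamma\in\overline{\calS}}\bbE_U[\Phi_\gamma(U)^2]d_\gamma^2$ and then splits it as the full sum minus the $\calS$-sum, whereas you apply orthogonality directly to the single cross term in the Pythagorean expansion; the content is identical.
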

\begin{proof}
We can rewrite the mean squared error as follows:
\begin{align}
    \bbE_U \left[\left({f}_U(O) - \tilde{f}^{(\calS)}_U(O)  \right)^2 \right] = &\bbE_U \left[\left(\sum_{\gamma\in\overline{\calS}} \Phi_\gamma(U) d_\gamma\right)^2 \right]
    \\= &\bbE_U \sum_{\gamma\in\overline{\calS}} \Phi_\gamma(U) ^2d_\gamma^2 + \bbE_U\sum_{\substack{\gamma,\gamma'\in\overline{\calS}\\\gamma'\neq\gamma}} \Phi_\gamma(U) \Phi_{\gamma'}(U)d_\gamma d_{\gamma'}
    \\ = &\bbE_U \sum_{\gamma\in\overline{\calS}} \Phi_\gamma(U) ^2d_\gamma^2 = \bbE_U\left[\tilde{f}^{(\overline\calS)}_U(O) ^2\right],
\end{align}
where in the third step we used the fact that Fourier coefficients are uncorrelated.
We also have
\begin{align}
    \bbE_U \sum_{\gamma\in\overline{\calS}} \Phi_\gamma(U) ^2d_\gamma^2 = &\bbE_U \sum_{\gamma\in\calP_n^{L+1}} \Phi_\gamma(U) ^2d_\gamma^2 - \bbE_U \sum_{\gamma\in{\calS}} \Phi_\gamma(U) ^2d_\gamma^2
    \\=&\bbE_U\left[{f}_U(O)^2\right] - \bbE_U\left[\tilde{f}^{(\calS)}_U(O) ^2\right], 
\end{align}
which proves the lemma.
\end{proof}
\begin{comment}
\begin{corollary}[Monotonicity of MSE]
Let $U$ be a random circuit, $O$ be an observable and $\rho$ be quantum state. Assume that the Fourier coefficients associated to different paths are uncorrelated.
Moreover, let $\calS \subseteq \calT \subseteq \calP_n^{L+1} $. We have
\begin{align}
    \bbE_U \Delta f_U^{(\calS)} = \bbE_U \Delta f_U^{(\calT)} + \sum_{\gamma \in \calT \setminus \calS} \Phi_\gamma(U) ^2 d_\gamma^2 ,
\end{align}
and therefore
\begin{align}
  \bbE_U \Delta f_U^{(\calT)} \leq   \bbE_U \Delta f_U^{(\calS)} .
\end{align}
\end{corollary}
\end{comment}

\section{Low-weight Pauli propagation}\label{app:WeightTrunc}
Given an integer $k\geq0$, the corresponding low-weight Pauli propagation estimator is identified by the following subset of paths:
    \begin{align}
        \calS_k := \{\gamma =(s_0,s_1,\dots,s_L) \;|\; \forall i\neq0 : \;\abs{s_i}\leq k  \}  \subseteq \calP_{n}^{L+1}.
    \end{align}
For simplicity, we replace $\calS_k$ with $k$ in the superscripts of the associated quantities, i.e. 
%we let $O^{(k)}_U := O^{(\calS_k)}_U$, $\tilde f^{(k)}_U(O) := \tilde f^{(\calS_k)}_U(O)$
%and $\bbE_U\Delta_U^{(k)} := \bbE_U\Delta_U^{(\calS_k)} $.
\begin{align}
    &O^{(k)}_U := O^{(\calS_k)}_U, %= \sum_{\substack{s_0\in\calP_n,\\\abs{s_1},\abs{s_2},\dots,\abs{s_L}\leq k}} f(O,U,\gamma) s_\gamma,
    \\&\tilde f^{(k)}_U(O) := \tilde f^{(\calS_k)}_U(O), %= \sum_{\substack{s_0\in\calP_n,\\\abs{s_1},\abs{s_2},\dots,\abs{s_L}\leq k}} f(O,U,\gamma) d_\gamma,
    \\&\bbE_U\Delta_U^{(k)} := \bbE_U\Delta_U^{(\calS_k)}. %= \bbE_U \left[\left(f_U(O) - \tilde f^{(k)}_U(O) \right)^2\right].
\end{align}
We also introduce some additional notation to state our technical proofs in a more compact way.
Let $U_{j+1} = I^{\otimes n}$ be the $n$-qubit identity matrix.
\begin{align}
    O_{j} &:= \begin{cases}
    O &\text{ for $j=L+1$}\\
        \sum_{\abs{s}\leq k} \Tr[U^\dag_{j+1} O_{j+1} U_{j+1} s]s \equiv  (U_{j+1}^{\dag}O_{j+1}U_{j+1})^{(\mathrm{low})} &\text{ for $1\leq j \leq L$}       
    \end{cases} \label{eq:Oj-defin} \\
    \rho_{j} &:=
         U_j U_{j-1}\ldots U_1 \rho U_1^{\dag} \ldots U_{j-1}^{\dag} U_j^{\dag} \quad\quad\quad\quad\quad\quad\quad\quad\quad\quad\quad\quad\,\,\,\, \text{for $1\leq  j \leq L+1$}
\end{align}
A simple application of Lemma\ \ref{lem:apx-inner-product} yields the following corollary.
\begin{corollary}[Error for a single iteration]
For all $j = 1,\dots, L$, we have
\label{cor:iter}
    \begin{align}
        \bbE_{U} \Tr[U_{j+1}^{\dag}O_{j+1}U_{j+1}\rho_{j}]^2  - \Tr[O_{j}\rho_{j}]^2 \leq \left(\frac{2}{3}\right)^{k+1} \bbE_{U} \left(\norm{O_{j+1}}_{\mathrm{Pauli},2}^2 -  \norm{O_{j}}_{\mathrm{Pauli},2}^2 \right)
    \end{align}
\end{corollary}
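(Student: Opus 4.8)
The plan is to reduce the statement to a single application of the approximate–inner–product bound, Lemma~\ref{lem:apx-inner-product}, for the sub-circuit $W := U_j U_{j-1}\dots U_1$, whose crucial feature is that it is itself locally scrambling. To set up, I abbreviate $\tilde{O} := U_{j+1}^\dag O_{j+1} U_{j+1}$, so that the defining Eq.~\eqref{eq:Oj-defin} reads precisely $O_j = \tilde{O}^{(\mathrm{low})}$, the weight-$\leq k$ part of $\tilde{O}$, while $\tilde{O}-O_j = \tilde{O}^{(\mathrm{high})}$ is its weight-$>k$ part. Writing $\rho_j = W\rho W^\dag$, the two trace terms on the left become $\Tr[\tilde{O}\rho_j] = \Tr[W^\dag \tilde{O} W\rho]$ and $\Tr[O_j\rho_j] = \Tr[W^\dag O_j W\rho]$, which is exactly the form handled by the lemma.

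The next step is an independence/conditioning argument. The operators $\tilde{O}$ and $O_j$ are functions of the layers $U_{j+1},\dots,U_L$ only, whereas $\rho_j$ is a function of $U_1,\dots,U_j$ only; since all layers are mutually independent, I would condition on $U_{j+1},\dots,U_L$ — freezing $\tilde{O}$ and $O_j$ as fixed Hermitian operators — and evaluate the inner expectation over $W$. By Lemma~\ref{lem:ls-subcircuit} the sub-circuit $W$ is locally scrambling over $\mathbb{U}(2^n)$, and conditioning on the later, independent layers leaves its distribution unchanged. Applying Lemma~\ref{lem:apx-inner-product} with unitary $W$, observable $\tilde{O}$, and state $\rho$ then yields
\begin{align}
    \bbE_W\!\left[\Tr[\tilde{O}\rho_j]^2 - \Tr[O_j\rho_j]^2\right] = \bbE_W \Tr[W^\dag \tilde{O}^{(\mathrm{high})} W\rho]^2 \leq \left(\tfrac{2}{3}\right)^{k+1}\!\left(\norm{\tilde{O}}_{\mathrm{Pauli},2}^2 - \norm{O_j}_{\mathrm{Pauli},2}^2\right).
\end{align}

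The last ingredient is to turn $\norm{\tilde{O}}_{\mathrm{Pauli},2}$ into $\norm{O_{j+1}}_{\mathrm{Pauli},2}$. Since the Pauli-$2$ norm equals $2^{-n/2}\norm{\cdot}_2$ and the Hilbert–Schmidt norm is invariant under conjugation by the unitary $U_{j+1}$ — explicitly $\Tr[(U_{j+1}^\dag O_{j+1} U_{j+1})^2] = \Tr[O_{j+1}^2]$ — we have $\norm{\tilde{O}}_{\mathrm{Pauli},2} = \norm{O_{j+1}}_{\mathrm{Pauli},2}$. Taking the outer expectation over $U_{j+1},\dots,U_L$ of the conditional bound then gives exactly the claimed inequality.

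I expect the main obstacle to be conceptual rather than computational: one must recognize that the relevant low/high weight split is of the \emph{evolved} operator $\tilde{O}=U_{j+1}^\dag O_{j+1}U_{j+1}$ (matching the definition of $O_j$), not of $O_{j+1}$ itself, and then exploit the independence of the layers so that $\tilde{O}$ may be treated as a fixed observable while the preceding sub-circuit $W$ supplies the local scrambling needed to invoke Lemma~\ref{lem:apx-inner-product}. The unitary invariance of the normalized Hilbert–Schmidt norm is the small but essential fact that reconciles the norm appearing on the right-hand side with $O_{j+1}$.
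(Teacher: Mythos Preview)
Your proposal is correct and follows essentially the same approach as the paper's own proof: define the sub-circuit $W=U_jU_{j-1}\cdots U_1$ (the paper calls it $V$), invoke Lemma~\ref{lem:ls-subcircuit} to see it is locally scrambling, condition on the independent later layers so that $\tilde O=U_{j+1}^\dag O_{j+1}U_{j+1}$ is fixed, apply Lemma~\ref{lem:apx-inner-product}, and finish with the unitary invariance of the Pauli-$2$ norm. Your write-up is in fact slightly more explicit about the conditioning step than the paper's, but the argument is the same.
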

\begin{proof}
As shown in Lemma\ \ref{lem:ls-subcircuit}, $V\coloneqq U_jU_{j-1}\dots U_1$ is a locally scrambling unitary over $\bbU(2^n)$. Moreover, it is independent of the observables $O_j$ and $U_{j+1}^{\dag}O_{j+1}U_{j+1}$. Hence Lemma\ \ref{lem:apx-inner-product} yields
\begin{align}
    &\bbE_{U_1, U_2,\dots, U_j}    \Tr[U_{j+1}^{\dag}O_{j+1}U_{j+1}\rho_j]^2 -  \bbE_{U_1, U_2,\dots, U_j}  \Tr[O_j \rho_j]^2
    \\\coloneqq &\bbE_{V}     \Tr[(U_{j+1}^{\dag}O_{j+1}U_{j+1})(V\rho V^\dag)]^2 -  
    \bbE_{V}  \Tr[O_j ( V\rho V^\dag )]^2
    \\ \leq &\left(\frac{2}{3}\right)^{k+1}  \left(\norm{U_{j+1}^{\dag}O_{j+1}U_{j+1}}_{\mathrm{Pauli},2}^2 - \norm{O_j}_{\mathrm{Pauli},2}^2 \right)
    \\= &\left(\frac{2}{3}\right)^{k+1}  \left(\norm{O_{j+1}}_{\mathrm{Pauli},2}^2 - \norm{O_j}_{\mathrm{Pauli},2}^2 \right),
\end{align}
where in the final step we used the unitarily invariance of the Pauli 2-norm.
\end{proof}

We upper bound the mean squared error of our estimator (thus proving Theorem\ \ref{thm:errorbound} in the main text) by applying iteratively Corollary\ \ref{cor:iter}.

%Our main result follows by iterating this approximation over all the layers.
\begin{theorem}[Approximate path integral]
\label{thm:main}
For $k\geq 0$, we have
    \begin{align}
        \bbE_U  \Delta f^{(k)}_U
        \leq \left(\frac{2}{3}\right)^{k+1} \left(\norm{O}_{\mathrm{Pauli},2}^2 - \bbE_U  \norm{O_U^{(k)}}_{\mathrm{Pauli},2}^2\right)
        \leq \left(\frac{2}{3}\right)^{k+1} \norm{O}^2.
    \end{align}
\end{theorem}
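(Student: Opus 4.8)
The plan is to reduce the mean-squared error to a telescoping sum of single-layer errors already controlled by Corollary~\ref{cor:iter}. First I would invoke Lemma~\ref{lem:ortho2} to confirm that a locally scrambling circuit has orthogonal Pauli paths, so the hypothesis of Lemma~\ref{lemma:mse} is met. That lemma rewrites the target as a difference of two second moments,
\[
\bbE_U \Delta f_U^{(k)} = \bbE_U\!\left[f_U(O)^2\right] - \bbE_U\!\left[\tilde{f}_U^{(k)}(O)^2\right],
\]
so the whole analysis becomes a comparison between $\bbE_U \Tr[O\,\rho_{L+1}]^2$ and $\bbE_U \Tr[O_1\rho_1]^2$, where $O_j,\rho_j$ are the propagated observables and states from Eq.~\eqref{eq:Oj-defin}. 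The reason the truncated estimator equals $\Tr[O_1\rho_1]$ is that the iterated low-weight truncations reproduce the path-sum restriction to $\calS_k$: the set $\calS_k$ constrains only $s_1,\dots,s_L$ and leaves $s_0$ free, matching the fact that the final step $O_U^{(k)} = U_1^\dagger O_1 U_1$ performs no truncation.

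The heart of the argument is the observation that Corollary~\ref{cor:iter} already has the shape of a telescoping increment. Writing $h_j := \bbE_U \Tr[O_j\rho_j]^2$, cyclicity of the trace together with $\rho_{j+1} = U_{j+1}\rho_j U_{j+1}^\dagger$ gives
\[
\bbE_U \Tr[U_{j+1}^\dagger O_{j+1} U_{j+1}\rho_j]^2 = \bbE_U \Tr[O_{j+1}\rho_{j+1}]^2 = h_{j+1},
\]
so Corollary~\ref{cor:iter} reads $h_{j+1} - h_j \leq \left(\tfrac{2}{3}\right)^{k+1}\bbE_U\!\left(\norm{O_{j+1}}_{\mathrm{Pauli},2}^2 - \norm{O_j}_{\mathrm{Pauli},2}^2\right)$. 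Summing over $j=1,\dots,L$ collapses both sides into telescoping differences, yielding
\[
h_{L+1} - h_1 \leq \left(\tfrac{2}{3}\right)^{k+1}\bbE_U\!\left(\norm{O_{L+1}}_{\mathrm{Pauli},2}^2 - \norm{O_1}_{\mathrm{Pauli},2}^2\right).
\]

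To finish I would identify the endpoints. Since $U_{L+1}=I$ and $\rho_{L+1}=U\rho U^\dagger$, we have $h_{L+1} = \bbE_U f_U(O)^2$ and $O_{L+1}=O$, while $h_1 = \bbE_U \tilde{f}_U^{(k)}(O)^2$ and $O_1 = U_1 O_U^{(k)} U_1^\dagger$. The one genuine ingredient beyond bookkeeping is unitary invariance of the normalized Hilbert--Schmidt norm, which gives $\norm{O_1}_{\mathrm{Pauli},2} = \norm{O_U^{(k)}}_{\mathrm{Pauli},2}$ and $\norm{O_{L+1}}_{\mathrm{Pauli},2} = \norm{O}_{\mathrm{Pauli},2}$. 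Substituting the endpoints into the telescoped bound and using Lemma~\ref{lemma:mse} yields the first claimed inequality; the second follows by dropping the nonnegative term $\bbE_U\norm{O_U^{(k)}}_{\mathrm{Pauli},2}^2$ and applying $\norm{O}_{\mathrm{Pauli},2} \leq \norm{O}$.

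The step I expect to require the most care is not any single calculation but the correct identification $\bbE_U \Tr[U_{j+1}^\dagger O_{j+1} U_{j+1}\rho_j]^2 = h_{j+1}$ that aligns Corollary~\ref{cor:iter} into a telescope; once this is seen the remainder is routine. A secondary subtlety is verifying that the iterated truncations $O_1$ genuinely reproduce the $\calS_k$-restricted path sum rather than truncating $s_0$, which is precisely what makes $h_1 = \bbE_U\tilde{f}_U^{(k)}(O)^2$ and the boundary norm identity valid.
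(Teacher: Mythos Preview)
Your proposal is correct and follows essentially the same approach as the paper: apply Lemma~\ref{lemma:mse} (via the orthogonality of Pauli paths) to express the MSE as a difference of second moments, telescope that difference using Corollary~\ref{cor:iter}, and identify the endpoints via unitary invariance of the Pauli-$2$ norm. Your explicit highlighting of the cyclicity step that aligns Corollary~\ref{cor:iter} into a telescope and of the fact that $\calS_k$ leaves $s_0$ unconstrained matches exactly the paper's reasoning.
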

\begin{proof}
Since different Fourier coefficients are uncorrelated, by Lemma\ \ref{lemma:mse} we can express the mean squared error as
\begin{align}
    \bbE_U  \Delta f^{(k)}_U = \bbE_U \left[f_U(O)^2\right] - \bbE_U \left[\tilde f_U^{(k)}(O)^2\right] = \bbE_U \Tr[O U\rho U^\dag]^2 - \bbE_U \Tr[O_U^{(k)} \rho]^2.
\end{align}
We observe that
\begin{align}
    O_U^{(k)} =  \sum_{\substack{s_0\in\calP_n,\\\abs{s_1},\abs{s_2},\dots,\abs{s_L}\leq k}} f(O,U,\gamma) s_\gamma = 
     U^\dag_1 O_1 U_1.
\end{align}
Moreover, we have $O = O_{L+1}$, $\rho_{L+1} = U \rho U^\dag$ and $\rho_1= U_1\rho U_1^\dag$.
Then we obtain
\begin{align}
    \bbE_U  \Delta f^{(k)}_U=&\bbE_U \Tr[O_{L+1} \rho_{L+1} ]^2 - \bbE_U \Tr[O_1 \rho_1]^2  
    \\= &\bbE_U \sum_{j=1}^{L} \left(\Tr[O_{j+1}\rho_{j+1}]^2 -  \Tr[O_{j}\rho_{j}]^2\right)
    \\ = &\bbE_U \sum_{j=1}^{L} \left(\Tr[U_{j+1}^\dag O_{j+1} U_{j+1}\rho_{j}]^2 -  \Tr[O_{j}\rho_{j}]^2\right)
    \\ \leq &\left(\frac{2}{3}\right)^{k+1}\bbE_U \sum_{j=1}^{L} \left(\norm{O_{j+1}}_{\mathrm{Pauli},2}^2 -  \norm{O_{j}}_{\mathrm{Pauli},2}^2 \right)
    \\=  &\left(\frac{2}{3}\right)^{k+1} \bbE_U \left(\norm{O_{L+1}}_{\mathrm{Pauli},2}^2 -  \norm{O_1}_{\mathrm{Pauli},2}^2\right)
     \\=  &\left(\frac{2}{3}\right)^{k+1} \left(\norm{O}_{\mathrm{Pauli},2}^2 - \bbE_U  \norm{O_U^{(k)}}_{\mathrm{Pauli},2}^2\right),
\end{align}
where we wrote the mean squared error as a telescoping sum and we upper bounded each term via Corollary\ \ref{cor:iter}.
\end{proof}
We emphasize that the unitarity of the layers is inessential to the proof of Theorem\ \ref{thm:main}. Specifically, the layers $U_j(\cdot)U_j^\dag$ could be replaced by channels $\calC_j$ satisfying an analogous locally scrambling property, provided that those channels do not increase the 2-norm of operators (on average) under Heisenberg evolution. 
In particular, our simulability argument could be easily extended to circuits containing \emph{dynamic operations}, e.g. feedforwards or probabilistic resets, such as those considered in Ref.\ \cite{deshpande2024dynamic}. %probabilistic resets, that is single-qubit channels corresponding to a reset to the $\ketbra{0}{0}$ state with probability $p$ and applying the identity operation with probability $1-p$.

%Similarly, our simulation argument could also be extended to encompass randomly initialized unitary Quantum Convolutional Neural Networks (QCNNs), such as those considered in our companion paper Ref.\ \cite{bermejo2024quantum}.  

\medskip

We can further ensure that the low-weight Pauli propagation algorithm produces a low additive error with high probability. 
Specifically, combining Theorem\ \ref{thm:main} with Markov's inequality, we obtain the following Corollary.
\begin{corollary}
\label{cor:markov}
Let $U$ be a locally scrambling circuit and let 
\begin{align}
    k = \bigg\lceil\frac{ \log(2/(3\epsilon^2 \delta)) }{\log(3/2)} \bigg\rceil \in \calO\left(\log\left(\frac{1}{\epsilon \delta}\right)\right).
\end{align}
Then for all observable $O$ and state $\rho$, we have
\begin{align}
    \abs{f_U(O) - \tilde f_U^{(k)}(O)} \leq \epsilon \norm{O}_{\mathrm{Pauli},2} \leq \epsilon \norm{O},
\end{align}
with probability at least $1-\delta$ (over the randomness of $U$).
\end{corollary}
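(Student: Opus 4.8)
The plan is to obtain this high-probability guarantee directly from the mean-squared-error bound of Theorem~\ref{thm:main} by a single application of Markov's inequality. The random variable $\Delta f_U^{(k)} = \big(f_U(O) - \tilde f_U^{(k)}(O)\big)^2$ is nonnegative over the randomness of $U$, so for any threshold $t>0$ Markov's inequality gives
\begin{align}
    \Pr_U\!\left[ \abs{f_U(O) - \tilde f_U^{(k)}(O)} \geq t \right]
    = \Pr_U\!\left[ \Delta f_U^{(k)} \geq t^2\right]
    \leq \frac{\bbE_U \Delta f_U^{(k)}}{t^2}.
\end{align}
First I would set $t = \epsilon \norm{O}_{\mathrm{Pauli},2}$ and substitute the bound $\bbE_U \Delta f_U^{(k)} \leq (2/3)^{k+1}\norm{O}_{\mathrm{Pauli},2}^2$ from Theorem~\ref{thm:main}. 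The factor $\norm{O}_{\mathrm{Pauli},2}^2$ then cancels, leaving a failure probability of at most $(2/3)^{k+1}/\epsilon^2$, and it remains only to check that the stated choice of $k$ makes this quantity at most $\delta$.

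Next I would solve $(2/3)^{k+1} \leq \epsilon^2\delta$ for $k$. Taking logarithms and using $\log(2/3) = -\log(3/2)$, this is equivalent to $k \geq \log\!\big(2/(3\epsilon^2\delta)\big)/\log(3/2)$, so the ceiling in the definition of $k$ guarantees the inequality holds; this also immediately gives the asymptotic estimate $k \in \calO\big(\log(1/(\epsilon\delta))\big)$. Finally, the last inequality $\epsilon\norm{O}_{\mathrm{Pauli},2} \leq \epsilon \norm{O}$ follows from the elementary relation $\norm{O}_{\mathrm{Pauli},2}\leq\norm{O}$ recalled in the preliminaries, which upgrades the bound from the Pauli-$2$ norm to the operator norm.

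There is no genuine obstacle here: essentially all the content is carried by Theorem~\ref{thm:main}, and the argument is a routine concentration step. The only point requiring care is the arithmetic of the ceiling, and in particular the exponent $k+1$ (rather than $k$) in the MSE bound: it is precisely this extra power of $2/3$ that turns the naive threshold $1/(\epsilon^2\delta)$ into the $2/(3\epsilon^2\delta)$ appearing inside the logarithm. I would therefore double-check that the rearrangement produces exactly $(2/3)^{k+1}/\epsilon^2 \leq \delta$, so that the success probability is at least $1-\delta$ rather than merely $1 - \calO(\delta)$.
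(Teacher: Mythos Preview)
Your proposal is correct and follows essentially the same approach as the paper: apply Markov's inequality to the nonnegative random variable $\Delta f_U^{(k)}$, substitute the MSE bound $(2/3)^{k+1}\norm{O}_{\mathrm{Pauli},2}^2$ from Theorem~\ref{thm:main}, cancel the norm factor, and solve $(2/3)^{k+1}\leq \epsilon^2\delta$ for $k$. Your observation about the exponent $k{+}1$ producing the constant $2/3$ inside the logarithm is exactly what the paper does as well.
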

\begin{proof}
For a real random variable $X$ and $a>0$, Markov's inequality implies that
\begin{align}
    \Pr[\abs{X} \geq a] = \Pr[X^2 \geq a^2] \leq \frac{\bbE [X^2]}{a^2},
\end{align}
We replace $X$ with $f_U(O) - \tilde f_U^{(k)}(O)$ and $a$ with $\epsilon \norm{O}_{\mathrm{Pauli},2}$:
\begin{align}
    \Pr_U \left[\abs{f_U(O) - \tilde f_U^{(k)}(O)} \geq \epsilon \norm{O}_{\mathrm{Pauli},2}\right] \leq \frac{1}{\epsilon^2} \Delta f^{(k)}_U \leq  \frac{1}{\epsilon^2}\left(\frac{2}{3}\right)^{k+1}. \label{eq:markov}
\end{align}
Finally, we note that the RHS of Eq.\ \eqref{eq:markov} is upper bounded by $\delta$ if $k\geq { \log(2/(3\epsilon^2 \delta)) }/{\log(3/2)}$, which concludes our proof. 
\end{proof}

\section{Time complexity}\label{app:alg-time-complexity}
In this section, we  upper bound the time complexity of the low-weight Pauli propagation algorithm.

\begin{lemma}
Assume that for all $j \in [L]$, the observable $O_j$ is supported on $M\leq n$ qubits. Then the low-weight Pauli propagation algorithm runs in time
\begin{align}
  \calO(L) \cdot \min \{M^{2k}, M^{k}\cdot \mathrm{poly}(n) \} \label{eq:general-transitions}  
\end{align}
\end{lemma}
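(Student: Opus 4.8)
The plan is to bound the per-layer cost of the recursion $O_j \mapsto O_{j-1} = (U_j^\dagger O_j U_j)^{(\mathrm{low})}$ from \eqref{eq:Oj-defin} and then multiply by the $L$ layers, since the only remaining work---forming the final inner product $\Tr[O_U^{(k)}\rho] = \sum_\gamma \Phi_\gamma(U) d_\gamma$ from the surviving low-weight terms---costs at most $\calO(M^k)$ evaluations of the (assumed known) amplitudes $\Tr[s_\gamma \rho]$ and is thus dominated by the propagation itself. First I would record that, by the counting bound \eqref{eq:countingbound} applied with $M$ in place of $n$, any operator supported on $M$ qubits with weight at most $k$ expands into at most $N := \sum_{\ell=0}^k 3^\ell \binom{M}{\ell} = \calO(M^k)$ Pauli terms; by hypothesis this applies to every $O_j$ as well as to the low-weight part of $U_j^\dagger O_j U_j$.

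For the first term in the minimum, I would enumerate transition amplitudes pairwise: each $O_{j-1}$ is determined by the coefficients $\Tr[U_j^\dagger s_j U_j\, s_{j-1}]$ ranging over the $\calO(M^k)$ terms $s_j$ of $O_j$ and the $\calO(M^k)$ candidate low-weight terms $s_{j-1}$ supported on the same $M$ qubits. This gives $\calO(M^{2k})$ amplitudes per layer; since such an amplitude vanishes unless $s_j$ and $s_{j-1}$ agree outside the active support $A_j$ of $U_j$, each one reduces to a fixed-cost computation on $A_j$, so the per-layer cost is $\calO(M^{2k})$.

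For the second term, rather than iterating over pairs I would propagate each term directly: for each of the $\calO(M^k)$ Pauli terms $s_j$ of $O_j$, invoke condition (3) of Definition~\ref{def:ls-circuit} to compute the full Heisenberg-evolved operator $U_j^\dagger s_j U_j$ in time $n^{\calO(k)} = \mathrm{poly}(n)$, discard the terms of weight exceeding $k$, and accumulate the remaining coefficients---using a hash table keyed by Pauli label to merge duplicates---into $O_{j-1}$. This costs $\calO(M^k)\cdot\mathrm{poly}(n)$ per layer. Taking the cheaper of the two strategies at each layer and summing over the $L$ layers yields the advertised bound $\calO(L)\cdot\min\{M^{2k},\,M^k\,\mathrm{poly}(n)\}$.

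The main obstacle is really the bookkeeping rather than any inequality: I must argue that each single transition amplitude (first strategy) or single-Pauli evolution (second strategy) is computable within the claimed budget, which is exactly what the shallow-layer condition (3) of Definition~\ref{def:ls-circuit} and the locality $U_j = V_{A_j}\otimes I_{[n]\setminus A_j}$ provide, and that merging repeated Pauli labels keeps the representation of $O_{j-1}$ at size $\calO(M^k)$ so that the cost does not compound across layers. The support hypothesis $\abs{O_j}\le M$ is assumed here, so no light-cone argument is needed inside this lemma; it is precisely this hypothesis that will later be discharged separately (trivially by $M=n$, or via a light-cone bound $M = \calO(\min(n,L^D))$ for geometrically local circuits).
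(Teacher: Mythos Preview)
Your proposal is correct and follows essentially the same approach as the paper: bound the number of low-weight Paulis supported on $M$ qubits by $\calO(M^k)$ via the counting bound, then cost each layer either by enumerating all $\calO(M^{2k})$ pairs of transition amplitudes or by directly evolving each of the $\calO(M^k)$ terms using the shallow-layer assumption (condition~(3) of Definition~\ref{def:ls-circuit}) at cost $n^{\calO(k)}=\mathrm{poly}(n)$ per term, and finally multiply by $L$. Your additional bookkeeping remarks (hash-table merging, reduction to $A_j$, the final inner-product step) are correct refinements that the paper leaves implicit.
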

\begin{proof}
We start by upper bounding the number of Pauli operators supported on a subset of qubits of size $M$ and weight at most $k$, which we denote by $N_{M,k}$:
\begin{align}
    N_{M,k} =&\sum_{\ell=0}^k 3^\ell \binom{M}{\ell}
    \leq \sum_{\ell=0}^k 3^\ell \frac{M^\ell}{\ell!}
    \\\leq  &\sum_{\ell=0}^k \left(\frac{3M}{k}\right)^\ell \frac{k^\ell}{\ell!} 
    \leq  \left(\frac{3M}{k}\right)^k  \sum_{\ell=0}^\infty \frac{k^\ell}{\ell!} 
    \\= &\left(\frac{3eM}{k}\right)^k,
    \label{eq:counting}
\end{align}
where in the last step we used the fact that $\sum_{\ell=0}^\infty {k^\ell}/{\ell!} = e^k$.
% see this proof: https://math.stackexchange.com/questions/3044044/how-to-prove-upper-bound-for-partial-sum-of-binomial-coefficients    

For each layer $j\in[L]$, we need to compute the observable 
\begin{align}
    O_{j} = \sum_{\abs{s_j}\leq k}\Tr[s_j U^\dag_{j+1} O_{j+1} U_{j+1}]s_j
\end{align}
To this end, for all $s_{j+1}$ in the Pauli expansion of $O_{j+1}$ and for all $s_{j}$  such that $\abs{s_{j}},\abs{s_{j+1}}\leq k$, we need to compute the associated transition amplitude, i.e. 
\begin{align}
    \Tr[U^\dag_{j+1} s_{j+1} U_{j+1} s_j].
\end{align}
By equation\ \eqref{eq:counting}, we know the the Pauli expansions of $O_j$ and $O_{j+1}$ contains at most $\left({(3eM)}/{k}\right)^k \in \calO(M^k)$ Pauli operators. 
This already implies that the number of transition amplitudes to be computed for a single iteration is at most $\calO(M^{2k})$. %and therefore the number of transition amplitude to be computed for $L$ layers is upper bounded by $\calO(M^{2k}L)$.
Moreover, we can quadratically tighten the dependence on $M^k$ by recalling that we assume %(\cref{sec:framework}) 
that each Heisenberg evolved Pauli operator $U^\dag_{j+1} s_{j+1} U_{j+1}$ contains at most $\mathrm{poly}(n)$ Pauli terms.
Thus, the number of transition amplitudes to be computed for a single iteration is at most $\calO(M^{k}\cdot \mathrm{poly}(n))$. Therefore the number of transition amplitudes to be computed for $L$ layers is upper bounded by
\begin{align}
    \calO(L) \cdot \min \{M^{2k}, M^{k}\cdot \mathrm{poly}(n) \} .
\end{align}
This completes the proof.
\end{proof}
 We will subsequently replace $M$ with an appropriate value depending on the circuit architecture.
In the following, we upper bound the total number of transition amplitudes to be computed. We provide four distinct bounds, covering the general case and relevant classes of more structured circuits.

\bigskip
\noindent\textbf{General case.}  
In the most general case, we can upper bound $M$ by $n$, obtaining a runtime of $\calO(n^{k +\calO(1)}L)$.
Invoking Corollary\ \ref{cor:markov}, we obtain that a runtime of $Ln^{\calO\left(\log(\epsilon^{-1}\delta^{-1})\right)}$ suffices to estimate the expectation value $\Tr[OU\rho U^\dag]$ with precision $\epsilon\norm{O}$ and success probability $1-\delta$. This proves the first part of Theorem\ \ref{thm:resources}.

\bigskip
\noindent\textbf{Circuits with $\calO(1)$-qubit gates.} 
Let us now assume that each $U_{j+1}$ consists of non-overlapping gates that act on at most  $\calO(1)$ qubits. 
In this case, we can quadratically tighten the previous upper bound.

We observe that each $\calO(1)$-qubit gate can map a (non-identity) Pauli operator to a constant number of Pauli operators. Moreover, assuming that $\abs{s_j}\leq k$, then at most $k$ non overlapping gates act non-trivially on $s_j$. Thus, the Heisenberg evolved Pauli $U^\dag_{j+1} s_{j+1} U_{j+1}$ contains at most $2^{\calO(k)}$ different Pauli operators.
Therefore, the total number of transition amplitudes to be computed during a single iteration of the algorithm scales as
\begin{align}
    2^{\calO(k)}  \left(\frac{3en}{k}\right)^k = \calO(n^k),
\end{align}
provided that $k$ is a sufficiently large constant.
Hence, for a circuit of depth $L$, we need to compute at most
$\calO(n^{k} L)$
transition amplitudes. 

\bigskip
\noindent\textbf{Circuits with constant geometric locality.}
The above bounds can be considerably tightened whenever the Heisenberg evolved observable is not supported on the entire set of qubits, but rather on a small subset that we can upper bound by light-cone argument.
To this end, we provide the following notion of geometric locality, which is analogous to the definitions given in in Refs.\ \cite{yu2023learning, huang2023learning}, and it is implies by more rigorous definitions such as that in Ref.\ \cite{harrow2023approximate}.
\begin{definition}[Geometric dimension of a graph]
Given a  graph $G = (V,E)$, we denote by $\gamma_G(L)$ the largest cardinality of a set of vertices obtained from a single vertex set $S_0=\{x_0\}$ of $G$ in $L$ steps, where at each step $j \leq L$, we could get an $S_i$  by adding at most one neighbor vertex, if it is not in $S_{i-1}$, for each $p \in S_{i-1}$.
We say that a graph $G$ has geometric dimension $D$ if $\gamma_G(L) = \calO(L^D)$.
\end{definition}

\begin{definition}[Circuit geometry]
 A geometry over $n$ qubits is defined by a graph $G = (V,E)$ with $n=\abs{V}$ vertices. A geometrically-local
two-qubit gate can only act on an edge of $G$. A depth-$L$ quantum circuit embedded in $G$ has
$L$ layers, where each layer consists of non-overlapping geometrically-local two-qubit gates.
Moreover, the geometric locality of a circuit embedded in $G$ is the geometric dimension of $G$.

\end{definition}
As a consequence, if a circuit $U = U_L U_{L-1}\dots U_1$ has geometric locality $D>0$ then, for all observable $O$ supported on $k$ qubits and for all $j=0,1,\dots,L-1$, the Heisenberg evolved observable $U_{j+1}^\dag U_{j+2}^\dag\dots U^\dag_L O U_L \dots U_{j+2}U_{j+1}$ is supported on at most $\calO(k L^D)$ qubits.

Proceeding as in the general case, we can show that the total number of transition amplitudes to be computed during a single iteration of the algorithm scales as
\begin{align}
    \left(\frac{3eM}{k}\right)^{2k} = 2^{\calO(k)} L^{2Dk}.
\end{align}
If we further assume that each layer consists in non-overlapping $\calO(1)$-qubit gates, the upper bound can be tightened to
\begin{align}
    \left(\frac{3eM}{k}\right)^{2k} = 2^{\calO(k)} L^{Dk}.
\end{align}
Invoking Corollary\ \ref{cor:markov}, we obtain that a runtime of $L^{\calO\left(D\log(\epsilon^{-1}\delta^{-1})\right)}$ suffices to estimate the expectation value $\Tr[PU\rho U^\dag]$ with precision $\epsilon$ and success probability $1-\delta$, for any Pauli operator $P \in \{I,X,Y,Z\}^{\otimes n}$. This proves the second part of Theorem\ \ref{thm:resources}.

\bigskip
\noindent\textbf{Circuits with all-to-all connectivity.}
We also derive a bound for shallow circuits with long-range interactions. Crucially, we also assume that each layer $U_j$ consists in non-overlapping $\calO(1)$-qubit gates.
Then we have
\begin{align}
    \abs{U^\dag_j O U_j} \leq \calO(1) \abs{O}.
\end{align}
Then, assuming again that $O$ is a Pauli operator of weight at most $k$, we have that, for all $j=0,1,\dots,L-1$, the Heisenberg evolved observable $U_{j+1}^\dag U_{j+2}^\dag\dots U^\dag_L O U_L \dots U_{j+2}U_{j+1}$ is supported on $2^{\calO(L)} k$ qubits.
Therefore, the total number of transition amplitudes to be computed is at most
\begin{align}
    L\cdot (2^{\calO(L)})^k = 2^{\calO(kL)}
\end{align}
In particular, for $k \in \calO(1)$,  the total number of transition amplitudes to be computed is at most $2^{\calO(L)}$.

\section{Sample complexity of quantum-enhanced classical simulation }
\label{app:alg-sample-complexity}
In this section, we demonstrate that the low-weight Pauli propagation algorithm can be used for the task of $\mathsf{CSIM_{QE}}$ (Classical Simulation enhanced with Quantum Experiments)\ \cite{cerezo2023does}.
In this setting, one is allowed to use a quantum computer for an initial data acquisition phase.
In particular, when the observable $O$ or the state $\rho$ are unknown (or not classically simulable), we will demonstrate how to compute $\tilde{f}_U(O)$ combining our simulation algorithm with randomized measurements, whose number scales logarithmically with the system size.
To this end, we exploit the randomized measurement toolbox developed in previous works\ \cite{huang2020predicting, elben2022randomized, huang2022learning}.

\subsection{Unknown input state}
When the initial state is non-classically simulable, we can estimate an approximate state $\Tilde\rho$ by means of the ``classical shadows'' protocol\ \cite{huang2020predicting}. As we prove in the following lemma, we can obtain a small mean squared error with randomized Pauli measurements on copies of $\rho$.
\begin{lemma}[Shadow state]
\label{lem:shadow-state}
Let $\rho$ be an unknown $n$-qubit state, let $O = \sum_P a_P P$ be an observable and let $U$ be a locally scrambling unitary consisting in non-overlapping $\calO(1)$-qubit gates.
Given $k>0$, denote by $O^{(\mathrm{low})} = \sum_{P:\abs{P}\leq k} a_P P$ the low-degree approximation of $O$.
Using  $N =  \exp\left( \calO(k)\right) { \epsilon^{-1} \log(n/\delta)}$ random Pauli measurements on copies of $\rho$, we can output an operator $\tilde\rho$ such that,
\begin{align}
    \bbE_{U} \left|\Tr[O^{(\mathrm{low})} U(\rho - \Tilde{\rho})U^\dag]\right|^2 \leq  \epsilon \;\norm{ O^{(\mathrm{low})}}^2_{\mathrm{Pauli},2},
\end{align}
with probability at least $1-{\delta}$.
\end{lemma}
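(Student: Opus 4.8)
The plan is to build $\tilde\rho$ from classical shadows with randomized single-qubit Pauli measurements and to control the error by averaging over the scrambling unitary $U$. Since $U$ is a single layer of non-overlapping $\calO(1)$-qubit gates, conjugation expands weight only by a constant factor: for any normalized Pauli $s$ with $\abs{s}\le k$, the operator $U^\dag s U$ is supported on at most $\calO(k)$ qubits and hence is a linear combination of at most $\exp(\calO(k))$ normalized Paulis $t$, with coefficients $c_t(U) := \Tr[U^\dag s U\, t]$ satisfying $\sum_t c_t(U)^2 = \Tr[(U^\dag s U)^2] = 1$. This shallowness is exactly what lets me reduce everything to low-weight Pauli expectations of $\rho$, which classical shadows estimate efficiently.

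Concretely, I would first invoke the classical-shadows guarantee of Ref.~\cite{huang2020predicting}: the shadow norm of a weight-$w$ Pauli under random Pauli measurements is $3^w$, so a median-of-means estimator together with a union bound over the at most $(3n)^{\calO(k)}$ Pauli operators of weight at most $\calO(k)$ yields, with $N = \exp(\calO(k))\,\eta^{-2}\log(n/\delta)$ samples, an operator $\tilde\rho$ obeying $\abs{\Tr[P(\rho-\tilde\rho)]}\le\eta$ simultaneously for all Paulis $P$ with $\abs{P}\le\calO(k)$, with probability at least $1-\delta$. I then condition on this good event and treat $\Delta := \rho-\tilde\rho$ as a fixed Hermitian operator independent of $U$.

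The heart of the argument is the bound on $\bbE_U\Tr[O^{(\mathrm{low})}U\Delta U^\dag]^2$. Because $U$ is locally scrambling, Lemma~\ref{lemma:ortho} diagonalizes this second moment in the Pauli basis,
\begin{align}
    \bbE_U\Tr[O^{(\mathrm{low})}U\Delta U^\dag]^2 = \sum_{s\in\calP_n:\,\abs{s}\le k}\Tr[O^{(\mathrm{low})}s]^2\,\bbE_U\Tr[s\,U\Delta U^\dag]^2,
\end{align}
where only the terms with $\abs{s}\le k$ survive since $O^{(\mathrm{low})}$ is $k$-local. For each such $s$ I expand $\Tr[sU\Delta U^\dag]=\Tr[(U^\dag s U)\Delta]=\sum_t c_t(U)\Tr[t\Delta]$ and apply Cauchy--Schwarz: using $\sum_t c_t(U)^2=1$, the $\exp(\calO(k))$ count of surviving $t$, and $\abs{\Tr[t\Delta]}\le 2^{-n/2}\eta$, I obtain $\Tr[sU\Delta U^\dag]^2\le \exp(\calO(k))\,2^{-n}\eta^2$ uniformly in $U$, so the $U$-average is trivial. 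Summing over $s$ via $\sum_{s\in\calP_n}\Tr[O^{(\mathrm{low})}s]^2 = 2^n\norm{O^{(\mathrm{low})}}_{\mathrm{Pauli},2}^2$ gives
\begin{align}
    \bbE_U\Tr[O^{(\mathrm{low})}U\Delta U^\dag]^2 \le \exp(\calO(k))\,\eta^2\,\norm{O^{(\mathrm{low})}}_{\mathrm{Pauli},2}^2.
\end{align}
Choosing $\eta = \exp(-\calO(k))\sqrt{\epsilon}$ makes the right-hand side at most $\epsilon\,\norm{O^{(\mathrm{low})}}_{\mathrm{Pauli},2}^2$, and back-substituting into $N=\exp(\calO(k))\eta^{-2}\log(n/\delta)$ recovers the claimed $N=\exp(\calO(k))\epsilon^{-1}\log(n/\delta)$.

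The main obstacle I anticipate is the careful bookkeeping at the point where the two independent randomness sources (the shadow measurements producing $\tilde\rho$ and the scrambling unitary $U$) meet: one must ensure the weight expansion under $U$ stays within the range $\calO(k)$ covered by the union bound, track the $2^{-n/2}$ normalization factors so that the $2^n$ arising from $\sum_s\Tr[O^{(\mathrm{low})}s]^2$ cancels cleanly, and confirm that the median-of-means shadow guarantee applies simultaneously to all weight-$\calO(k)$ Paulis at only logarithmic cost in the union bound. The decoupling via Lemma~\ref{lemma:ortho} is what renders the $U$-average harmless, leaving the quantitative classical-shadows estimate as the only external ingredient.
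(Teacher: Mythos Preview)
Your proposal is correct and follows essentially the same approach as the paper: both use classical shadows to control $\Tr[P(\rho-\tilde\rho)]$ for all Paulis of weight $\le ck$, then exploit the locally scrambling orthogonality together with the bounded weight expansion of a single $\calO(1)$-local layer. The only tactical difference is that you invoke Lemma~\ref{lemma:ortho} on the $O^{(\mathrm{low})}$ side and then bound each $\Tr[sU\Delta U^\dag]^2$ via Cauchy--Schwarz (incurring a harmless extra $\exp(\calO(k))$ absorbed into $\eta$), whereas the paper expands $U^\dag O^{(\mathrm{low})}U$ directly and pulls out the maximum over $P$.
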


\begin{proof}
Let $c\in\calO(1)$ be a positive integer, and say that $U$ consists in non overlapping $c$-qubit gates. 
We apply the ``classical shadows'' protocol with randomized Pauli measurements on the state $\rho$ (\cite{huang2020predicting}, Theorem 1). 
Since the shadow norm of a Pauli $Q$ equals $3^{\abs{Q}/2}$ (\cite{huang2020predicting}, Lemma 3),
we can estimate some values $\hat{o}_P$ satisfying $\abs{\hat{o}_P -\Tr[P\rho]} \leq \sqrt{\epsilon}$ for all Pauli operators $P: \abs{P}\leq c\cdot k$, with probability at least $1-\delta$,  using  $N = \exp\left( \calO(k)\right) { \epsilon^{-1} \log(N_{n,ck}/\delta)}$ random Pauli measurements on copies of $\rho$ where we recall that $N_{n,ck} \in \mathcal{O}(n^k)$ from Eq.~\eqref{eq:counting}.% for the full light-cone of width $M = n$.
Then, we define the operator:
\begin{align}
    \Tilde\rho := \frac{1}{2^n}\left(I + \sum_{1\leq \abs{P}\leq c\cdot k} \hat{o}_P P \right),
\end{align}
satisfying $\Tr[P\Tilde\rho] = \hat{o}_P$ for all $P: \abs{P}\leq c\cdot k$.

Since the unitary $U$ consists in non-overlapping $c$-qubit gates and $O^{(\mathrm{low})} $ contains only Pauli operators with weight at most $k$, then $U^\dag O^{(\mathrm{low})} U$ contains only 
 Pauli operators with weight at most $c\cdot k$.

With probability at least $1-\delta$, we have
\begin{align}
    &\bbE_{U} \left(\Tr[ O^{(\mathrm{low})}U(\rho - \Tilde{\rho})U^\dag]^2 \right) 
     \\=&\bbE_U \left(\sum_{s\in\calP_n^{L+1}}\Tr[U^\dag  O^{(\mathrm{low})} U s]\Tr[s(\rho - \Tilde{\rho})]\right)^2
     \\= &\bbE_U \sum_{s: \abs{s}\leq c\cdot k}\Tr[U^\dag  O^{(\mathrm{low})} U s]^2\Tr[s(\rho - \Tilde{\rho})]^2
      \\=&\max_{P: \abs{P}\leq c \cdot k} \Tr[P(\rho - \Tilde{\rho})]^2 \bbE_U \norm{U^\dag  O^{(\mathrm{low})} U}^2_{\mathrm{Pauli},2}
     \\ =& \max_{P: 1\leq \abs{P}\leq c \cdot k} \left(\Tr[P\rho] - \hat{o}_P\right)^2 \norm{ O^{(\mathrm{low})}}^2_{\mathrm{Pauli},2} \leq \epsilon \norm{ O^{(\mathrm{low})}}^2_{\mathrm{Pauli},2},
\end{align}
where the second identity holds because $U$ is locally scrambling (Lemma\ \ref{lemma:ortho}).
\end{proof}

\subsection{Unknown observable}  
When the observable \( O \) being measured is not classically simulable, we estimate an approximate observable \( \tilde{O} \) by measuring \( O \) on tensor products of randomly chosen single-qubit stabilizer states. 

Our approach leverages a modified version of the algorithm originally proposed in Ref.\ \cite{huang2022learning}.

To establish the efficiency of our algorithm, we first introduce some preliminary lemmas. Our analysis leverages the well-known Medians-of-Means estimator.
\begin{lemma}[Median-of-Means, \cite{nemirovskij1983problem, jerrum1986random}]
\label{lemma:mom}
Let $X$ be a random variable with variance $\sigma^2$. Then, $K = 2 \log(2/\delta)$
independent sample means of size $N = 34 \sigma^2/\epsilon^2 $ suffice to construct a median of means estimator $\hat{\mu}(N,K)$ that
obeys 
\begin{align}
    \Pr[\abs{\hat{\mu}(N,K) - \bbE[X]}\geq \epsilon ] \leq \delta,
\end{align}
for all $\epsilon, \delta > 0$.
\end{lemma}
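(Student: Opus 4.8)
The plan is to analyze the standard \emph{median-of-means} estimator: draw $NK$ independent copies of $X$, partition them into $K$ disjoint blocks of $N$ samples each, form the block sample means $\hat\mu_1,\dots,\hat\mu_K$, and set $\hat\mu(N,K)$ to be their median. Writing $\mu:=\bbE[X]$, I would proceed in two stages. First I would show, via Chebyshev, that any single block mean lands within $\epsilon$ of $\mu$ with constant probability bounded away from $1/2$. Then I would boost this to the desired $1-\delta$ guarantee using the elementary but crucial fact that the median can deviate from $\mu$ only when a \emph{majority} of the blocks deviate, an event whose probability decays exponentially in $K$.

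For the first stage, each $\hat\mu_i$ is the average of $N$ i.i.d.\ copies of $X$, so $\bbE[\hat\mu_i]=\mu$ and $\mathrm{Var}[\hat\mu_i]=\sigma^2/N$. Chebyshev's inequality together with the choice $N=34\sigma^2/\epsilon^2$ then gives
\begin{align}
    \Pr\left[\abs{\hat\mu_i-\mu}\geq\epsilon\right]\leq\frac{\sigma^2}{N\epsilon^2}=\frac{1}{34}=:p.
\end{align}
The only feature of $p$ that matters downstream is that $p<1/2$ with a comfortable margin.

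For the boosting stage, let $Z_i:=\mathbf{1}[\abs{\hat\mu_i-\mu}\geq\epsilon]$, which are independent Bernoulli variables with success probability at most $p$. The key deterministic observation is that if strictly fewer than half the blocks fail, i.e.\ $\sum_i Z_i<K/2$, then strictly more than half of the $\hat\mu_i$ lie in $[\mu-\epsilon,\mu+\epsilon]$, and this forces the median into that interval as well; hence $\{\abs{\hat\mu(N,K)-\mu}\geq\epsilon\}\subseteq\{\sum_i Z_i\geq K/2\}$. Bounding the binomial upper tail using $p^{j}(1-p)^{K-j}\leq(p(1-p))^{K/2}$ for $j\geq K/2$ (valid since $p<1/2$) together with $\sum_{j}\binom{K}{j}\leq 2^{K}$, I would obtain
\begin{align}
    \Pr\left[\abs{\hat\mu(N,K)-\mu}\geq\epsilon\right]\leq\Pr\left[\sum_{i=1}^{K}Z_i\geq\frac{K}{2}\right]\leq(4p)^{K/2}=\left(\frac{2}{17}\right)^{K/2}\leq e^{-K}\leq\delta,
\end{align}
where the penultimate inequality uses $2/17<e^{-2}$ and the last uses $K=2\log(2/\delta)\geq\log(1/\delta)$. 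Combining the two displays proves the claim.

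The main, and essentially only, nontrivial point is this boosting step: the deterministic median/majority inclusion combined with the exponential binomial tail bound is what converts a constant per-block failure probability into a $\delta$-level guarantee. The Chebyshev step is routine; the remaining work is only to verify that the stated constants $N=34\sigma^2/\epsilon^2$ and $K=2\log(2/\delta)$ leave enough slack in the final chain of inequalities, which the computation above confirms they do comfortably.
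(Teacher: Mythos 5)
Your proof is correct: the paper states this lemma as a known result cited from the literature and gives no proof of its own, and your argument is precisely the standard median-of-means derivation (Chebyshev on each block giving per-block failure probability $p=1/34<1/2$, the deterministic median/majority inclusion, and the binomial tail bound $2^K(p(1-p))^{K/2}\leq(4p)^{K/2}$). The constants check out as you claim, since $2/17<e^{-2}$ and $e^{-K}=(\delta/2)^2\leq\delta$ for $K=2\log(2/\delta)$, so nothing is missing.
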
 
We will further make use of some technical tools developed in\ Ref.\ \cite{huang2022learning}.

\begin{lemma}[Adapted from Lemma\ 16 in Ref.\ \cite{huang2022learning}]\label{lem:extract-Pauli}
    Let $O = \sum_{P \in \{I, X, Y, Z\}^{\otimes n}} a_P P$ be an observable and $\calD$ be the uniform distribution over tensor products of single-qubit stabilizer states.
    For any Pauli observable $P \in \{I, X, Y, Z\}^{\otimes n}$,
    we have
    \begin{equation}
    \bbE_{\rho \sim \mathcal{D}} \Tr[O \rho] \Tr[P \rho] = \left(\frac{1}{3}\right)^{|P|} a_P.
    \end{equation}
\end{lemma}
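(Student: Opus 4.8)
The plan is to exploit the tensor-product structure of $\calD$ and reduce the whole computation to a single qubit. First I would expand $O = \sum_{Q \in \{I,X,Y,Z\}^{\otimes n}} a_Q Q$ and, for a product state $\rho = \bigotimes_{i=1}^n \rho_i$, factorize $\Tr[Q\rho] = \prod_{i=1}^n \Tr[Q_i \rho_i]$ and $\Tr[P\rho] = \prod_{i=1}^n \Tr[P_i\rho_i]$. Since under $\calD$ the single-qubit factors $\rho_1,\dots,\rho_n$ are independent and identically distributed, linearity of expectation yields
\[
\bbE_{\rho\sim\calD}\Tr[O\rho]\Tr[P\rho] = \sum_{Q} a_Q \prod_{i=1}^n \bbE_{\rho_i}\!\left[\Tr[Q_i\rho_i]\Tr[P_i\rho_i]\right].
\]

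The crux is the single-qubit second moment $\bbE_{\rho_i}\!\left[\Tr[Q_i\rho_i]\Tr[P_i\rho_i]\right]$, where $\rho_i$ is uniform over the six single-qubit stabilizer states (the $\pm 1$ eigenstates of $X$, $Y$, and $Z$). I would establish by direct enumeration that this equals $(1/3)^{\abs{P_i}}\,\delta_{Q_i,P_i}$. The key structural fact is that each stabilizer state is an eigenstate of exactly one of $X,Y,Z$ and has vanishing expectation for the other two; consequently $\Tr[Q_i\rho_i]\Tr[P_i\rho_i]$ vanishes for every state whenever $Q_i \neq P_i$ with at least one of them nonidentity, while for a fixed nonidentity $Q_i = P_i$ the average of $\Tr[Q_i\rho_i]^2$ over the six states is $2/6 = 1/3$. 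The identity cases are immediate from $\Tr[I\rho_i] = 1$ and $\bbE_{\rho_i}\Tr[Q_i\rho_i] = 0$ for nonidentity $Q_i$.

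Substituting this identity back, the product over $i$ survives only when $Q_i = P_i$ for all $i$, i.e.\ only the term $Q = P$ contributes, and it equals $\prod_{i=1}^n (1/3)^{\abs{P_i}} = (1/3)^{\abs{P}}$. Hence $\bbE_{\rho\sim\calD}\Tr[O\rho]\Tr[P\rho] = a_P\,(1/3)^{\abs{P}}$, as claimed. No step is a genuine obstacle; the only point requiring care is the single-qubit enumeration, which amounts to the statement that the uniform single-qubit stabilizer ensemble is a state $2$-design, so that $\bbE_{\rho_i}[\rho_i^{\otimes 2}] = \tfrac{1}{6}(I + \SWAP)$ and the second moments of the traceless Paulis are correctly normalized to $1/3$.
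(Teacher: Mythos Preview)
Your proof is correct. The paper does not actually supply its own proof of this lemma; it is simply quoted as ``Adapted from Lemma~16 in Ref.~\cite{huang2022learning}'' and used as a black box. Your argument---reducing to the single-qubit second moment via the tensor-product/i.i.d.\ structure and then computing $\bbE_{\rho_i}[\Tr[Q_i\rho_i]\Tr[P_i\rho_i]] = (1/3)^{\abs{P_i}}\delta_{Q_i,P_i}$ either by direct enumeration over the six stabilizer states or equivalently via the state $2$-design identity $\bbE_{\rho_i}[\rho_i^{\otimes 2}] = \tfrac{1}{6}(I+\SWAP)$---is exactly the standard derivation and matches what one finds in the cited reference.
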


Combining Lemma\ \ref{lem:extract-Pauli} with the Median-of-Means estimator, we obtain an algorithm for estimating a $k$-local  Pauli component of an observable. 
\begin{lemma}[Learning Pauli coefficients]
\label{lemma:coeffs}  
Let $O =\sum_{P\in\{I,X,Y,Z\}^{\otimes n}} a_P P$ be an observable and let $Q$ be a Pauli operator with weight $\abs{Q}=k$. Then, using $68 \cdot 9^k\epsilon^{-2} \log(2/\delta)$ measurements of $O$ on tensor products of random single-qubit stabilizer states, it is possible to estimate a value $\alpha$ satisfying
\begin{align}
    \abs{a_Q - \alpha} \leq \epsilon \, \norm{O}_{\mathrm{Pauli},2} ,
\end{align}
with probability at least $1-\delta$.
\end{lemma}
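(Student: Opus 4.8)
The plan is to express $a_Q$ as the expectation of a single-shot, efficiently measurable random variable and then control the sample size with the Median-of-Means estimator of Lemma~\ref{lemma:mom}. For each shot I would draw $\rho \sim \calD$, the uniform distribution over tensor products of single-qubit stabilizer states, measure the observable $O$ on $\rho$ to obtain an outcome $o$ with $\bbE[o \mid \rho] = \Tr[O\rho]$, and set
\begin{align}
    X := 3^{\abs{Q}}\, o\, \Tr[Q\rho].
\end{align}
Because $\rho$ is a known stabilizer product state, the factor $\Tr[Q\rho]$ is classically computable from the chosen basis. Taking the expectation over both the measurement outcome and the random state and applying Lemma~\ref{lem:extract-Pauli} shows that $X$ is unbiased for the target coefficient:
\begin{align}
    \bbE[X] = 3^{\abs{Q}}\,\bbE_{\rho\sim\calD}\!\left[\Tr[O\rho]\Tr[Q\rho]\right] = 3^{\abs{Q}} \left(\tfrac{1}{3}\right)^{\abs{Q}} a_Q = a_Q.
\end{align}

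The quantitative heart of the argument is the variance bound $\sigma^2 = \Var[X] \leq \bbE[X^2]$. Using $0 \leq \Tr[Q\rho]^2 \leq 1$ together with $\bbE[o^2\mid\rho] = \Tr[O^2\rho] \geq 0$, I would bound
\begin{align}
    \bbE[X^2] = 9^{\abs{Q}}\,\bbE_\rho\!\left[\Tr[O^2\rho]\,\Tr[Q\rho]^2\right] \leq 9^{\abs{Q}}\,\bbE_\rho\!\left[\Tr[O^2\rho]\right].
\end{align}
Since the single-qubit stabilizer states average to the maximally mixed state, $\bbE_{\rho\sim\calD}[\rho] = I/2^n$, the remaining moment collapses to
\begin{align}
    \bbE_\rho\!\left[\Tr[O^2\rho]\right] = \Tr\!\left[O^2\,\bbE_\rho[\rho]\right] = 2^{-n}\Tr[O^2] = \norm{O}_{\mathrm{Pauli},2}^2.
\end{align}
With $\abs{Q}=k$ this yields the clean, state-independent estimate $\sigma^2 \leq 9^{k}\,\norm{O}_{\mathrm{Pauli},2}^2$. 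Alternatively, one may apply Lemma~\ref{lem:extract-Pauli} a second time with $O$ itself in place of the Pauli probe, giving $\bbE_\rho[\Tr[O\rho]^2] = \sum_P (1/3)^{\abs{P}}a_P^2 \leq \norm{O}_{\mathrm{Pauli},2}^2$; the two routes differ only by the shot-noise contribution and both respect the same upper bound.

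Finally I would feed this bound into Lemma~\ref{lemma:mom} with target accuracy $\epsilon\,\norm{O}_{\mathrm{Pauli},2}$ and failure probability $\delta$. Choosing $N = 34\,\sigma^2 / (\epsilon\,\norm{O}_{\mathrm{Pauli},2})^2 \leq 34\cdot 9^{k}/\epsilon^2$ samples per block and $K = 2\log(2/\delta)$ blocks produces a median-of-means estimate $\alpha$ with $\abs{\alpha - a_Q} \leq \epsilon\,\norm{O}_{\mathrm{Pauli},2}$ with probability at least $1-\delta$, for a total of $NK = 68\cdot 9^{k}\,\epsilon^{-2}\log(2/\delta)$ measurements, exactly as claimed. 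I do not anticipate a genuine obstacle here: the only real subtleties are the bookkeeping that keeps the final count independent of $O$, namely expressing both the variance and the target accuracy in units of $\norm{O}_{\mathrm{Pauli},2}$ so that these factors cancel, and correctly accounting for measurement shot noise by bounding $\bbE[o^2\mid\rho]$ through $\Tr[O^2\rho]$ rather than $\Tr[O\rho]^2$.
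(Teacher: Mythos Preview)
Your proof is correct and follows essentially the same approach as the paper: define the single-shot estimator $X = 3^{k}\, o\, \Tr[Q\rho]$, verify unbiasedness via Lemma~\ref{lem:extract-Pauli}, bound the second moment by $9^{k}\norm{O}_{\mathrm{Pauli},2}^2$ using $\abs{\Tr[Q\rho]}\leq 1$ together with $\bbE_{\rho}[\rho]=I/2^n$, and conclude with the Median-of-Means Lemma~\ref{lemma:mom}. The only cosmetic difference is that the paper makes the spectral decomposition $O=\sum_v \lambda_v \ketbra{v}$ explicit when computing $\bbE[o^2]$, whereas you phrase the same computation as $\bbE[o^2\mid\rho]=\Tr[O^2\rho]$; the two are identical.
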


\begin{proof}
Let $K = 2\log(2/\delta)$ and $N = 34 \cdot 9^k\epsilon^{-2}$.
Sample $N\cdot K$ i.i.d.  tensor products of random single-qubit stabilizer states $\rho_{1},\rho_{2},\dots, \rho_{NK}$. Let $x_{i}$ be the outcome obtained by measuring $O$ on $\rho_{i}$, and consider the rescaled random variable $X_{i}$ defined as
\begin{align}
    X_{i} = x_{i} \Tr[Q \rho_{i}] 3^k
\end{align}
We now consider the first and second moments of $X_{i}$, with respect both the randomness of the measurement and that of the initial state.
The first moment of $X_{i}$ is
\begin{align}
    \bbE X_{i} = \bbE_{\rho\sim\calD} {\Tr[O \rho] \Tr[Q \rho]} 3^k = a_Q,
\end{align}
via Lemma~\ref{lem:extract-Pauli}.
In order to compute the second moment of $x_{i}$, we write the observable as a weighted sum of projectors $O = \sum_v \lambda_v \ketbra{v}$, such that
\begin{align}
    \Pr[\text{$\lambda_v$ is measured on input $\rho$}] = \Tr[\ketbra{v}\rho].
\end{align}
Then we have
\begin{align}
    \bbE (x_{i}^2) = &\bbE_{\rho \sim \calD} \sum_{v}  \Tr[\ketbra{v}\rho] \lambda_v^2 
    \\=  &\sum_v  \lambda_v^2   \Tr[\ketbra{v}\bbE_{\rho \sim \calD}[\rho]] \\= 
    & \sum_v  \lambda_v^2   \Tr[\ketbra{v}\frac{I}{2^n}]
    = \norm{O}^2_{\mathrm{Pauli},2},
\end{align}
where we used the fact that random stabilizer states form a $1$-design, and that the squared Hilbert-Schmidt norm of an Hermitian operator is the sum of its squared eigenvalues.

As for the rescaled variable $X_{i}$, we have
\begin{align}
    \bbE \left[X_{i}^2\right] = &9^k\bbE \left[ \left(x_{i,j} \Tr[Q \rho]\right)^2 \right]
    \\\leq  &9^k\bbE\left[x_{i}^2\right] 
     \leq  9^k\norm{O}^2_{\mathrm{Pauli},2} ,
\end{align}
where we used the fact that $\abs{\Tr[Q \rho]} \leq 1$ in the second step.
By Lemma\ \ref{lemma:mom}, the Median-of-Means estimator $\hat{\mu}(N,K)$ satisfies
\begin{align}
    \abs{\hat{\mu}(N,K) - a_Q}\leq \epsilon \,  \norm{O}_{\mathrm{Pauli},2},
\end{align}
with probability at least $1-\delta$.
\end{proof}
We now provide two distinct algorithms for estimating the $k$-local components of an observable $O$. We start with an algorithm that achieves arbitrarily small constant error with a polynomial number of randomized  measurements.

We also give a refined algorithm specialized on $\calO(1)$-local observables, which achieve arbitrarily small constant error with logarithmically many measurements.

\begin{lemma}[Shadow observable]
\label{lem:shadow-obs-2}
Let $O= \sum_P a_P P$ be an unknown observable. % and let $U$ be a locally scrambling unitary. 
Given $k>0$, denote by $O^{(\mathrm{low})} = \sum_{\abs{P}\leq k} a_P P$  the low-weight component of $O$.
Then using $N$ measurements of $O$  on tensor products of random single-qubit stabilizer states, where
    \begin{equation}
        N \in \calO(n^k \epsilon^{-1} \log(n/\delta)) ,
    \end{equation}
it is possible to learn 
an observable $\Tilde{O}$ that satisfies
\begin{align}
    \norm{\Tilde{O} - O^{(\mathrm{low})}}^2_{\mathrm{Pauli},2} \leq \epsilon \,\norm{O}_{\mathrm{Pauli},2}^{2},
\end{align}
with probability $1-\delta$.
\end{lemma}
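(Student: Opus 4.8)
The plan is to reduce the problem to estimating each low-weight Pauli coefficient separately and then to reassemble them, exploiting the fact that a single batch of randomized measurements can be reused for \emph{all} coefficients. Concretely, I would first draw one collection of $N$ tensor products of random single-qubit stabilizer states and measure $O$ on each, recording the outcomes $(\rho_i, x_i)$. Since this raw data is Pauli-agnostic, for each target Pauli $P$ with $\abs{P}\leq k$ I can post-process it differently --- forming the rescaled variables $x_i \Tr[P\rho_i]\,3^{\abs{P}}$ --- and apply the Median-of-Means estimator of Lemma~\ref{lemma:coeffs} to obtain an estimate $\alpha_P$ of $a_P$. The truncated observable is then simply $\tilde{O} := \sum_{P:\abs{P}\leq k}\alpha_P P$.

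The key design choice is how to allocate error budget. Because the target metric is the Pauli-$2$ norm, the total squared error decomposes as a sum over the $N_{n,k}=\sum_{\ell=0}^k 3^\ell\binom{n}{\ell}\in\calO(n^k)$ low-weight Paulis (Eq.~\eqref{eq:counting}),
\begin{align}
    \norm{\tilde{O}-O^{(\mathrm{low})}}_{\mathrm{Pauli},2}^2 = \sum_{P:\abs{P}\leq k}(\alpha_P - a_P)^2 .
\end{align}
So I would demand per-coefficient precision $\epsilon' := \sqrt{\epsilon/N_{n,k}}$ and per-coefficient failure probability $\delta' := \delta/N_{n,k}$. Lemma~\ref{lemma:coeffs} then guarantees $\abs{\alpha_P-a_P}\leq \epsilon'\norm{O}_{\mathrm{Pauli},2}$ for each fixed $P$ from a shared batch of size $N = 68\cdot 9^{k}(\epsilon')^{-2}\log(2/\delta')$, and a union bound over the $N_{n,k}$ Paulis makes all estimates simultaneously accurate with probability at least $1-\delta$. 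On that event the displayed sum is bounded by $N_{n,k}(\epsilon')^2\norm{O}_{\mathrm{Pauli},2}^2 = \epsilon\,\norm{O}_{\mathrm{Pauli},2}^2$, as required. Substituting $(\epsilon')^{-2}=N_{n,k}/\epsilon$ and $\log(2/\delta')=\calO(k\log n + \log(1/\delta))$ gives the sample count $\calO(n^{k}\epsilon^{-1}\log(n/\delta))$, with the factor $9^k$ and the $k\log n$ term absorbed into the $\calO$-notation.

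The step I expect to be the real content, rather than the routine per-coefficient estimation, is the reuse of one measurement batch across all $\calO(n^k)$ coefficients. If one estimated each coefficient from fresh samples, the union bound would multiply $\calO(n^k)$ independent experiments each of size $\calO(n^k/\epsilon)$, yielding a wasteful $\calO(n^{2k})$ scaling; the whole point is that the randomized-measurement data is independent of which Pauli we extract, so a single batch suffices and the cost stays at $\calO(n^k)$. The only subtlety to verify carefully is that the variance bound driving the Median-of-Means sample size, namely $\Ex[(x_i\Tr[P\rho_i]3^{\abs{P}})^2]\leq 9^{\abs{P}}\norm{O}_{\mathrm{Pauli},2}^2 \leq 9^k\norm{O}_{\mathrm{Pauli},2}^2$, holds uniformly over all weight-$\leq k$ Paulis so that one common batch size serves every coefficient; this follows from the $1$-design property of random stabilizer states exactly as in the proof of Lemma~\ref{lemma:coeffs}.
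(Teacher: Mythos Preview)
Your proposal is correct and follows essentially the same approach as the paper: estimate each low-weight Pauli coefficient via Lemma~\ref{lemma:coeffs} with per-coefficient precision $\sqrt{\epsilon/N_{n,k}}$ and failure probability $\delta/N_{n,k}$, take a union bound over the $N_{n,k}\in\calO(n^k)$ Paulis, and then sum the squared errors. The paper's proof leaves the reuse of a single measurement batch implicit (it simply asserts that ``$N$ measurements suffice to output some coefficients $x_P$'s for all Paulis''), whereas you make this point explicit; otherwise the arguments are identical.
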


\begin{proof}
Let $N_{n,k} = \sum_{j=0}^k\binom{n}{j}3^j \leq \left(3en/k \right)^k$ be the number of Paulis with weight at most $k$.
By means of Lemma\ \ref{lemma:coeffs}, it is possible to estimate $a_P$ up to additive error $\sqrt{\epsilon/N_{n,k}}\norm{O}_{\mathrm{Pauli},2} $ with probability $1 -\delta/N_{n,k}$ using $N$ randomized measurements, where
\begin{align}
    N= 2^{\calO(k)} \, \epsilon^{-1} {N_{n,k}} \log(M^k/\delta) \in  \calO(n^k \epsilon^{-1} \log(n/\delta)).
\end{align}
By union bound, $N$ measurements suffice to output some coefficients $x_P$'s for all Paulis of weight at most $k$ satisfying
\begin{align}
   \max_{P:\abs{P}\leq k} \abs{x_P - a_P} \leq \sqrt{\epsilon/N_{n,k}}\, \norm{O}_{\mathrm{Pauli},2}
\end{align} 
with probability $1 - \delta$.
We condition on this event happening.
Summing over all the low-weight Pauli operators yields the desired result:
\begin{align}
    \norm{\Tilde{O} - O^{(\mathrm{low})}}^2_{\mathrm{Pauli},2} =  &\sum_{P:\abs{P}\leq k} \abs{x_P - a_P}^2 \\\leq &N_{n,k} \cdot \frac{\epsilon \norm{O}_{\mathrm{Pauli},2}^2}{N_{n,k}} \leq \epsilon \norm{O}_{\mathrm{Pauli},2}^2.
\end{align}
This concludes the proof of the lemma.
\end{proof}

\subsubsection{\texorpdfstring{Improved algorithm for $\calO(1)$-local observables}{Improved algorithm for O(1)-local observables}}

Here, we describe an alternative algorithm tailored on $\calO(1)$-local observables, that is observables that contain only Pauli terms of weight at most $\calO(1)$. This specialized algorithm allows to achieve arbitrarily small constant error with logarithmic (in system size) sample complexity.
Local observables play a prominent role in many-body physics, where they are used to represent physical quantities such as the average magnetization.

We start by restating the definition of Pauli-$p$ norm of an Hermitian operator $A = \sum_{P\in\{I,X,Y,Z\}^{\otimes n}} a_P P$
\begin{align}
    \norm{A}_{\mathrm{Pauli},p} := \left(\sum_{P\in\{I,X,Y,Z\}^{\otimes n}}\abs{a_P}^p \right)^{1/p}. 
\end{align}

We also recall the following norm inequality.

\begin{lemma}[Corollary 3 in Ref.\ \cite{huang2022learning}]
\label{lem:norm}
Given an $n$-qubit $k$-local Hermitian operator $H = \sum_{P:\abs{P}\leq k} a_P P$, we have 
\begin{align}
    \norm{H}_{\frac{2k}{k+1}} \leq B(k) \norm{H},
\end{align}
where $B(k) \in \exp\left(\calO(k \log k) \right)$.
\end{lemma}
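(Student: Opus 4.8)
The quantity $\norm{H}_{\frac{2k}{k+1}}$ denotes the Pauli-$p$ norm of the coefficient vector of $H$ at the exponent $p=\frac{2k}{k+1}$, so the claim is a comparison of the form (Pauli-$\frac{2k}{k+1}$ norm of coefficients) $\leq B(k)\cdot\norm{H}$ for every $k$-local Hermitian operator, with $B(k)$ \emph{independent of the qubit number} $n$. I recognize this as the noncommutative (Pauli) Bohnenblust--Hille inequality: the exponent $\frac{2k}{k+1}$ is exactly the Bohnenblust--Hille exponent for degree-$k$ polynomials, and this is precisely the regime where the naive H\"older bound $\norm{H}_{\mathrm{Pauli},p}\leq(\#\text{terms})^{1/p-1/2}\norm{H}_{\mathrm{Pauli},2}$ fails to be $n$-independent (since $p<2$ and the number of weight-$\leq k$ Paulis is $\sim n^{k}$). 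The plan is therefore to prove the inequality by induction on the locality $k$, extracting the $n$-independence from the operator norm rather than from counting.

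\emph{Base case} ($k=1$, so $p=1$). Writing a $1$-local Hermitian operator as $H = a_0 I + \sum_{i=1}^n A_i$ with traceless single-qubit blocks $A_i=\sum_{\alpha\in\{1,2,3\}} a_{i,\alpha}\,\sigma_\alpha^{(i)}$, the single-qubit Cauchy--Schwarz bound gives $\sum_\alpha|a_{i,\alpha}|\leq\sqrt3\,(\sum_\alpha a_{i,\alpha}^2)^{1/2}=\sqrt3\,\norm{A_i}$. Because the blocks act on disjoint qubits and hence commute, the operator norm is additive, $\norm{\sum_i A_i}=\sum_i\norm{A_i}$, and with $|a_0|=|\Tr H|/2^n\leq\norm{H}$ one obtains $\norm{H}_{\mathrm{Pauli},1}\leq B(1)\norm{H}$ for an absolute constant $B(1)=\calO(1)$.

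\emph{Inductive step} ($k-1\to k$). Assuming the bound with constant $B(k-1)$ for all $(k-1)$-local operators, I would condition on one qubit at a time. Fixing qubit $i$, write $H=\sum_{\beta\in\{0,1,2,3\}}\sigma_\beta^{(i)}\otimes H_{i,\beta}$, where each $H_{i,\beta}$ acts on the remaining qubits and is $(k-1)$-local for $\beta\neq0$. The step then combines three ingredients: (i) a Littlewood/Blei-type mixed-norm inequality interpolating between the $\ell_2$ norm in the ``qubit-$i$'' direction and the $\ell_{\frac{2(k-1)}{k}}$ norm in the remaining directions; (ii) Minkowski's integral inequality to exchange the order of these norms so that the inductive hypothesis applies to each $H_{i,\beta}$; and (iii) a matrix-valued hypercontractive (Khintchine) estimate bounding the $\ell_2$ aggregation over $\beta$ and over the distinguished qubit by $\norm{H}$, again with an $n$-independent constant. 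Symmetrizing over which of the at most $k$ active qubits is distinguished and taking the geometric mean of the resulting $k$ mixed-norm bounds is what balances the exponents into $\frac{2k}{k+1}$.

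\emph{Constant tracking and main obstacle.} Each level multiplies the constant by a fixed single-qubit hypercontractivity factor and a combinatorial factor at most polynomial in $k$ (from the symmetrization and the four values of $\beta$), so composing $k$ levels yields $B(k)=\prod_{j\leq k}\exp(\calO(\log j))=\exp(\calO(k\log k))$, as claimed; crucially no factor of $n$ ever enters because counting is never used and the operator norm absorbs each aggregation. The main obstacle is step (iii): proving the noncommutative Khintchine/hypercontractive estimate with genuine operator-norm control, since the factors $\sigma_\beta^{(i)}\otimes H_{i,\beta}$ do not commute and the scalar Khintchine inequality must be upgraded to its Schatten-valued analogue. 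This is where the noncommutativity genuinely enters and where the bulk of the technical work lies.
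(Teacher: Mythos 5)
There is an important mismatch of expectations here: the paper contains no proof of this lemma at all. It is imported verbatim as Corollary~3 of Ref.~\cite{huang2022learning} --- this is the qubit (noncommutative) Bohnenblust--Hille inequality --- and is used as a black box in the proof of the filtered-shadow-observable lemma. So there is no in-paper argument to compare against, and your identification of the statement is exactly right: the exponent $\frac{2k}{k+1}$ is the BH exponent, the content is the $n$-independence of $B(k)$, and naive H\"older-plus-counting fails since there are $\Theta(n^k)$ admissible Paulis. However, what you have written is a proof \emph{plan}, not a proof, and the gap is concentrated precisely where you place it: step (iii), the dimension-free matrix-valued Khintchine/hypercontractive estimate for the slice operators $H_{i,\beta}$ in the expansion $H=\sum_{\beta}\sigma_\beta^{(i)}\otimes H_{i,\beta}$. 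In the commutative template (Littlewood/Blei mixed norms plus Minkowski plus Bonami hypercontractivity) this step is routine; in the Pauli setting it is exactly the point where the template breaks, and the known proofs of the qubit BH inequality (Volberg--Zhang and the version with $B(k)\in\exp(\calO(k\log k))$ that Ref.~\cite{huang2022learning} states) required genuinely new ideas rather than a straightforward Schatten-valued upgrade of scalar Khintchine. Since you explicitly defer this step, the induction never closes, and your constant accounting ($B(k)=\prod_{j\le k}\exp(\calO(\log j))$) is contingent on constants in an estimate you have not established. As submitted, the argument establishes only the $k=1$ case.

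One local error in the base case: ``the blocks act on disjoint qubits and hence commute, [so] the operator norm is additive'' is a false general principle --- $A$ and $-A$ commute, yet $\norm{A+(-A)}=0$. The additivity $\norm{\sum_i A_i}=\sum_i\norm{A_i}$ is nevertheless true in your configuration, but for a different reason: each traceless single-qubit Hermitian block $A_i$ has symmetric spectrum $\pm\norm{A_i}$, so tensoring the top eigenvectors yields a product eigenvector of $\sum_i A_i$ with eigenvalue $\sum_i \norm{A_i}$; this is what a careful write-up must say. With that repair, and $\abs{a_0}=\abs{\Tr H}/2^n\le\norm{H}$, your bound $\norm{H}_{\mathrm{Pauli},1}\le(1+2\sqrt{3})\,\norm{H}$ is correct. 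If your goal is a self-contained proof of the lemma, the honest route is either to cite the noncommutative BH literature for step (iii), or to follow the reduction-based proofs in that literature; the inductive scaffolding you describe cannot be completed as stated without that external input.
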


The proposed algorithm implements a variant of the ``filtering lemma'' adopted in Ref.\ \cite{huang2022learning}.
\begin{lemma}[Filtered shadow observable]
\label{lem:shadow-obs}
Let $O= \sum_P a_P P$ be an unknown observable. % and let $U$ be a locally scrambling unitary. 
Given $k>0$, denote by $O^{(\mathrm{low})} = \sum_{\abs{P}\leq k} a_P P$  the low-weight component of $O$.
%Moreover, let $r = \frac{2k}{k+1} \in [1,2)$.
Then using $N$ measurements of $O$ on tensor products of random single-qubit stabilizer states, where
    \begin{equation}
        N \in \exp\left(\calO(k^2 \log k) \right) \epsilon^{-(k+1)} \log(n/\delta).
    \end{equation}
it is possible to learn an observable $\Tilde{O}$ that satisfies
\begin{align}
    \norm{\Tilde{O} - O^{(\mathrm{low})}}^2_{\mathrm{Pauli},2} \leq \epsilon \,\norm{O}^{\frac{2}{k+1}}_{\mathrm{Pauli},2} \,  \norm{O^{(\mathrm{low})}}^{\frac{2k}{k+1}},
\end{align}
with probability $1-\delta$. 

\smallskip
\noindent In particular, if $O$ is a $k$-local observable satisfying $\norm{O}\leq 1$, then the observable $\Tilde{O}$ satisfies
\begin{align}
    \norm{\Tilde{O} - O^{(\mathrm{low})}}^2_{\mathrm{Pauli},2} \leq \epsilon
\end{align}
with probability $1-\delta$. 
\end{lemma}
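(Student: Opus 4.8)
The plan is to run the filtering strategy underlying Ref.~\cite{huang2022learning}: estimate every low-weight Pauli coefficient to a modest precision, discard those falling below a threshold, and control the resulting $\ell_2$ error by exploiting that a $k$-local operator cannot have too many large coefficients. Concretely, set $p := \frac{2k}{k+1}$, so that $2-p = \frac{2}{k+1}$, and fix a target precision $\eta := \epsilon'\norm{O}_{\mathrm{Pauli},2}$ with $\epsilon'$ chosen at the end. First I would invoke the Learning-Pauli-coefficients estimator (Lemma~\ref{lemma:coeffs}), itself a median-of-means estimator (Lemma~\ref{lemma:mom}), simultaneously for every Pauli $P$ with $\abs{P}\le k$. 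Taking a union bound over the $N_{n,k}\le (3en/k)^k$ such Paulis with per-coefficient failure probability $\delta/N_{n,k}$, a single batch of $N = 68\cdot 9^k (\epsilon')^{-2}\log(2N_{n,k}/\delta)$ measurements yields estimates $\hat a_P$ with $\abs{\hat a_P - a_P}\le\eta$ for all $\abs{P}\le k$ at once, with probability $\ge 1-\delta$. Since $\log N_{n,k} = \calO(k\log(n/k))$, this is already only logarithmic in $n$, which is the whole point of the refinement over Lemma~\ref{lem:shadow-obs-2}.

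Next I would threshold: define $\tilde a_P := \hat a_P$ when $\abs{\hat a_P}\ge 2\eta$ and $\tilde a_P := 0$ otherwise, and set $\tilde O := \sum_{\abs{P}\le k}\tilde a_P P$. Conditioning on the good event, a short case check shows that every kept Pauli incurs per-coefficient error $\le\eta$, while every discarded Pauli satisfies $\abs{a_P}<3\eta$ (because $\abs{\hat a_P}<2\eta$). The crucial point, where the $\ell_p$ bound enters, is twofold. Any kept Pauli has $\abs{a_P}\ge\eta$, and the number of such Paulis is at most $\norm{O^{(\mathrm{low})}}_{\mathrm{Pauli},p}^p/\eta^p$ by Markov's inequality on the $\ell_p$ mass; hence the kept contribution to $\norm{\tilde O - O^{(\mathrm{low})}}_{\mathrm{Pauli},2}^2$ is at most $\eta^{2-p}\norm{O^{(\mathrm{low})}}_{\mathrm{Pauli},p}^p$. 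For the discarded tail, writing $\abs{a_P}^2 = \abs{a_P}^p\abs{a_P}^{2-p}\le(3\eta)^{2-p}\abs{a_P}^p$ bounds their contribution by $3^{2-p}\eta^{2-p}\norm{O^{(\mathrm{low})}}_{\mathrm{Pauli},p}^p$. Summing gives $\norm{\tilde O - O^{(\mathrm{low})}}_{\mathrm{Pauli},2}^2 \le (1+3^{2-p})\eta^{2-p}\norm{O^{(\mathrm{low})}}_{\mathrm{Pauli},p}^p$.

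Finally I would substitute $\norm{O^{(\mathrm{low})}}_{\mathrm{Pauli},p} \le B(k)\norm{O^{(\mathrm{low})}}$ from Lemma~\ref{lem:norm} (valid since $O^{(\mathrm{low})}$ is $k$-local) together with $\eta = \epsilon'\norm{O}_{\mathrm{Pauli},2}$, and use $2-p=\frac{2}{k+1}$, $p=\frac{2k}{k+1}$ to reach
\begin{align}
    \norm{\tilde O - O^{(\mathrm{low})}}_{\mathrm{Pauli},2}^2 \le (1+3^{2-p})\,B(k)^p\,(\epsilon')^{\frac{2}{k+1}}\,\norm{O}_{\mathrm{Pauli},2}^{\frac{2}{k+1}}\,\norm{O^{(\mathrm{low})}}^{\frac{2k}{k+1}}.
\end{align}
Choosing $\epsilon' = \big(\epsilon/((1+3^{2-p})B(k)^p)\big)^{(k+1)/2}$ turns the prefactor into $\epsilon$, giving the claimed bound; the ``in particular'' case then follows since $O=O^{(\mathrm{low})}$ and $\norm{O}_{\mathrm{Pauli},2}\le\norm{O}\le 1$. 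Tracing this choice through the sample count gives $(\epsilon')^{-2} = ((1+3^{2-p})B(k)^p)^{k+1}\epsilon^{-(k+1)}$, and with $B(k)=\exp(\calO(k\log k))$ and the bounded factor $1+3^{2-p}$ this is $\exp(\calO(k^2\log k))\epsilon^{-(k+1)}$; folding the $9^k$ and $\calO(k)$ terms into $\exp(\calO(k^2\log k))$ yields the stated $N\in\exp(\calO(k^2\log k))\epsilon^{-(k+1)}\log(n/\delta)$. The main obstacle is the balancing act in the error split: the threshold $2\eta$, the $\ell_p$-counting of large coefficients, and the $\abs{a_P}^{2-p}$ interpolation for the discarded tail must jointly produce exactly the exponents $\frac{2}{k+1}$ and $\frac{2k}{k+1}$. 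This is the heart of the filtering argument, and it is precisely why the operator-norm factor $\norm{O^{(\mathrm{low})}}^{2k/(k+1)}$ (rather than a Pauli-$2$-norm factor) appears in the bound.
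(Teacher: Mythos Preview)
Your proof is correct and follows essentially the same filtering argument as the paper: estimate all low-weight coefficients to precision $\eta=\epsilon'\norm{O}_{\mathrm{Pauli},2}$ via Lemma~\ref{lemma:coeffs} with a union bound, threshold at $2\eta$, split the Pauli-$2$ error into kept and discarded parts, interpolate each with the exponent $r=p=\tfrac{2k}{k+1}$, and close with the norm inequality of Lemma~\ref{lem:norm}. The only cosmetic difference is that you control the kept contribution by a Markov count on the $\ell_p$ mass, whereas the paper bounds it directly via $|x_P-a_P|^2\le\hat\epsilon^{2-r}\hat\epsilon^{r}\le\hat\epsilon^{2-r}|a_P|^{r}$; both routes yield the same $\eta^{2-p}\norm{O^{(\mathrm{low})}}_{\mathrm{Pauli},p}^{p}$ and hence the same final sample complexity.
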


\begin{proof}
Let $N_{n,k} = \sum_{j=0}^k\binom{n}{j}3^j \leq \left({3en}/{k}\right)^k$ be the number of Pauli operators with weight at most $k$.
Given $P:\abs{P}\leq k$, by means of Lemma\ \ref{lemma:coeffs}, it is possible to estimate $a_P$ up to additive error ${\epsilon'}\norm{O}_{\mathrm{Pauli},2} $ with probability $1 -\delta/N_{n,k}$ using $N$ randomized measurements, where
\begin{align}
    N\in 2^{\calO(k)} \, \left(\epsilon'\right)^{-2} \log(N_{n,k}/\delta) \in \exp\left(\calO(k)\right) \left(\epsilon'\right)^{-2} \log(n/\delta) \label{eq:obs-sample}.
\end{align}
By union bound, $N$ measurements suffice to output some coefficients $x_P$'s for all Pauli operators of weight at most $k$ satisfying
\begin{align}
    \abs{x_P - a_P} \leq \epsilon' \, \norm{O}_{\mathrm{Pauli},2} := \hat{\epsilon} \label{eq:delta-bound},
\end{align}
 for all $P$ such that ${P:\abs{P}\leq k}$ with probability at least $1-\delta$, where we set $\hat{\epsilon}:=\epsilon'\norm{O}_{\mathrm{Pauli},2}$ to ease the notation.
In the remaining part of this proof, we condition on this event happening.

We define the observable $\Tilde{O}$ as follows:
\begin{align}
    \Tilde{O} =  {\sum_{ \substack{P:\abs{P}\leq k, \\ \abs{x_P} \geq 2 \hat{\epsilon}}}x_P P}.
\end{align}
The observable $ \Tilde{O}$ is obtained by filtering out the Pauli operators if their associated coefficient  $x_P$ is below an appropriate threshold.
When upper bounding the Pauli-2 distance, we will deal separately with the contributions of the filtered and unfiltered coefficients.
\begin{align}
     \norm{\Tilde{O} - O^{(\mathrm{low})}}^2_{\mathrm{Pauli},2} = \underbrace{\sum_{\substack{P:\abs{P}\leq k,\\\abs{x_P} \geq 2 \hat{\epsilon} }} \abs{x_P-a_P}^2}_{\text{unfiltered}} +
    \underbrace{\sum_{\substack{P:\abs{P}\leq k,\\\abs{x_P} < 2 \hat{\epsilon} }} \abs{a_P}^2}_{\text{filtered}}      
\end{align}
We make some preliminary observations.
When $\abs{x_P} < 2 \hat{\epsilon}$, the triangle inequality yields
\begin{align}
    \abs{a_P} = \abs{a_P - x_P + x_P} \leq \abs{a_P - x_P} + \abs{x_P} \leq \hat\epsilon + 2 \hat\epsilon = 3 \hat\epsilon.
\end{align}
This allows us to upper bound the contribution of the coefficients below the threshold:
\begin{align}
    \sum_{\substack{P:\abs{P}\leq k,\\\abs{x_P} < 2 \hat{\epsilon} }} \abs{a_P}^2   
    = \sum_{\substack{P:\abs{P}\leq k,\\\abs{x_P} < 2 \hat{\epsilon} }} \abs{a_P}^{2-r}  \abs{a_P}^{r}  \leq (3\hat\epsilon)^{2-r} 
    \sum_{\substack{P:\abs{P}\leq k,\\\abs{x_P} < 2 \hat{\epsilon} }} \abs{a_P}^{r} ,
\end{align}
where, so we can later apply Lemma~\ref{lem:norm}, we defined $r = \frac{2k}{k+1} \in [1,2)$.
On the other hand, when $\abs{x_P} \geq 2 \hat{\epsilon}$, the triangle inequality yields
\begin{align}
    2\hat{\epsilon}\leq \abs{x_P} = \abs{x_P - a_P + a_P} \leq \abs{x_P - a_P} + \abs{a_P} \leq &\hat{\epsilon} + \abs{a_P}
    \\ \implies  &\hat{\epsilon} \leq  \abs{a_P}.
\end{align}
Combining it with Eq.\ \eqref{eq:delta-bound}, we obtain
\begin{align}
    \abs{x_P - a_P} \leq\hat\epsilon\leq \abs{a_P},
\end{align}
for all $P$ satisfying $\abs{P}\leq k $ and $ \abs{x_P} \geq 2 \hat{\epsilon} $.
This allows us to upper bound the contribution of the coefficients above the threshold:
\begin{align}
    \sum_{\substack{P:\abs{P}\leq k,\\\abs{x_P} \geq 2 \hat{\epsilon} }} \abs{x_P-a_P}^2
    \leq  \sum_{\substack{P:\abs{P}\leq k,\\\abs{x_P} \geq 2 \hat{\epsilon} }} \hat{\epsilon}^2 
    =  \sum_{\substack{P:\abs{P}\leq k,\\\abs{x_P} \geq 2 \hat{\epsilon} }} \hat{\epsilon}^{2-r} \cdot \hat{\epsilon}^{r}
    \leq \hat\epsilon^{2-r}  \sum_{\substack{P:\abs{P}\leq k,\\\abs{x_P} \geq 2 \hat{\epsilon} }}\abs{a_P}^{r}.
\end{align}
Putting all together and applying Lemma\ \ref{lem:norm}, we obtain
\begin{align}
    \norm{\Tilde{O} - O^{(\mathrm{low})}}^2_{\mathrm{Pauli},2} &\leq (3\hat\epsilon)^{2-r} \sum_{\substack{P:\abs{P}\leq k}}\abs{a_P}^{r} \\&= (3\hat\epsilon)^{2-r}  \norm{O^{(\mathrm{low})}}_{\mathrm{Pauli},r}^r\\&\leq \hat\epsilon^{2-r} B(k)^r \norm{O^{(\mathrm{low})}}^r
    \\ &:= (\epsilon')^{2-r} \norm{O}^{2-r}_{\mathrm{Pauli},2} B(k)^r  \norm{O^{(\mathrm{low})}}^r.
\end{align}
Setting $\epsilon':= \epsilon^{{1}/({2-r})} B(k)^{-{r}/({2-r})} = \epsilon^{(k+1)/2} B(k)^{-1/k}$, we obtain the desired error:
\begin{align}
     \norm{\Tilde{O} - O^{(\mathrm{low})}}^2_{\mathrm{Pauli},2} \leq
      \epsilon \, \norm{O}^{\frac{2}{k+1}}_{\mathrm{Pauli},2} \,  \norm{O^{(\mathrm{low})}}^{\frac{2k}{k+1}} \label{obs-filt-prec}
\end{align}
Plugging the value of $\epsilon'$ inside Eq.~\eqref{eq:obs-sample}, we find that the required number of measurements is upper bounded as follows
\begin{align}
    N \in \exp\left(\calO(k^2 \log k) \right) \epsilon^{-(k+1)} \log(n/\delta).
\end{align}
It remains to prove the last part of the lemma. Assuming $O$ is a $k$-local observable satisfying $\norm{O}\leq 1$, we have
\begin{align}
    &\norm{O^{(\mathrm{low})}} = \norm{O} \leq 1,
     \\&\norm{O}_{\mathrm{Pauli},2} \leq \norm{O} \leq 1,
\end{align}
which together with Eq.\ \eqref{obs-filt-prec} imply that $ \norm{\Tilde{O} - O^{(\mathrm{low})}}^2_{\mathrm{Pauli},2} \leq \epsilon$.
\end{proof}

We emphasize that the first part of Lemma\ \ref{lem:shadow-obs} hold for generic -- possibly non-local -- observables, and thus Lemma\ \ref{lem:shadow-obs} might outperform 
Lemma\ \ref{lem:shadow-obs-2} whenever there exists a non-trivial bound for $\norm{O^{(\mathrm{low})}}$.
In the most general case, one can always upper bound $\norm{O^{(\mathrm{low})}}$ by Minkowski's inequality:
\begin{align}
    \norm{O^{(\mathrm{low})}} = &\left\|\sum_{P:\abs{P}\leq k}a_P P \right\| \leq \sum_{P:\abs{P}\leq k} \abs{a_P} \norm{P}
    \\=  &\norm{O^{(\mathrm{low})}}_{\mathrm{Pauli},1} \in \calO(n^k).
\end{align}
However, using this simple upper bound would result in a  sample complexity larger than that of Lemma\ \ref{lem:shadow-obs-2}.
\subsection{General case}
Combining the tomographic tools presented in this section with the accuracy guarantees of the low-weight Pauli propagation algorithm, we prove the following theorem.
\begin{theorem}[Quantum-enhanced classical simulation]
Let $U = U_L U_{L-1}\dots U_1$ be a circuit sampled from an $L$-layered locally scrambling circuit ensemble,  let $O$ be an unknown observable and $\rho$ be an unknown quantum state.
Moreover, assume that $U_1$ consists in non-overlapping $\calO(1)$-qubit gates.
%Given $\epsilon, \delta \in (0,1)$, set
%\begin{align}
%    \ell =  \lfloor \log(3\sqrt{3}\epsilon^{-2}\delta^{-1})/(\log(3/2)) \rfloor \in \calO\left(\log\left(\frac{1}{\epsilon \delta}\right)\right),
%\end{align}
%and let $O^{\mathrm{(low)}} = \sum_{|P| \leq \ell} \alpha_P P$ be the low-degree approximation of $O$, and $r = \tfrac{2\ell}{\ell + 1} \in [1, 2)$.

Given $\epsilon, \delta \in (0,1)$, after an initial data-collection phase consisting in $n^{\calO\left(\log({\epsilon^{-1}\delta^{-1}})\right)}$ randomized measurements,
%\begin{align}
    %&N_1 =  \log\left({n}\right) \,  2^{\mathcal{O}\left(\log^2({\epsilon^{-1}\delta^{-1}})\right)},
%    N = n^{\calO\left(\log({\epsilon^{-1}\delta^{-1}})\right)}
%\end{align}  
there is classical algorithm that runs in time $n^{\calO\left(\log({\epsilon^{-1}\delta^{-1}})\right)}$ and outputs a value $\alpha$ such that
\begin{align}
    \abs{\alpha - f_U(O)} \leq  \epsilon \norm{O}_{\mathrm{Pauli},2} \leq \epsilon \norm{O}
\end{align}
with probability at least $1-\delta$. 
The probability is both over the randomness of the circuit $U$ and that of the initial measurements.
\end{theorem}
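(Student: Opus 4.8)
The plan is to run low-weight Pauli propagation on surrogates for the unknown data and control the total error by a three-way triangle inequality. Set the truncation order $k=\calO(\log(\epsilon^{-1}\delta^{-1}))$ dictated by Corollary~\ref{cor:markov}, let $\tilde{O}$ be the shadow observable approximating the weight-$\le k$ part $O^{(\mathrm{low})}$ of $O$ (Lemma~\ref{lem:shadow-obs-2}), and let $\tilde{\rho}$ be the classical-shadow state whose Pauli expectations $\Tr[P\tilde{\rho}]$ approximate $\Tr[P\rho]$ for all $|P|\le ck$, where $c=\calO(1)$ is the gate size of the first layer $U_1$ (Lemma~\ref{lem:shadow-state}). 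The algorithm outputs $\alpha:=\Tr[\tilde{O}_U^{(k)}\tilde{\rho}]$, i.e.\ it propagates $\tilde{O}$ through low-weight Pauli propagation and contracts against the shadow state. I would then decompose, writing $\tilde{f}_U^{(k)}(\cdot)=\Tr[(\cdot)_U^{(k)}\rho]$ for the truncated estimator evaluated on the \emph{exact} state,
\begin{align}
    |\alpha - f_U(O)| \leq \underbrace{|f_U(O) - \tilde{f}_U^{(k)}(O)|}_{(\mathrm{I})} + \underbrace{|\tilde{f}_U^{(k)}(O) - \tilde{f}_U^{(k)}(\tilde{O})|}_{(\mathrm{II})} + \underbrace{|\tilde{f}_U^{(k)}(\tilde{O}) - \alpha|}_{(\mathrm{III})}.
\end{align}

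Term $(\mathrm{I})$ is exactly the truncation error, so Corollary~\ref{cor:markov} (with $k$ inflated by a constant factor) gives $(\mathrm{I})\le\tfrac{\epsilon}{3}\norm{O}_{\mathrm{Pauli},2}$ with probability $1-\tfrac{\delta}{3}$ over $U$. For term $(\mathrm{II})$, linearity and the fact that the estimator discards the weight-$>k$ part of the observable give $\tilde{f}_U^{(k)}(O)-\tilde{f}_U^{(k)}(\tilde{O})=\tilde{f}_U^{(k)}(O^{(\mathrm{low})}-\tilde{O})$. Combining Lemma~\ref{lemma:mse} (which yields $\bbE_U[\tilde{f}_U^{(k)}(H)^2]\le\bbE_U[f_U(H)^2]$) with Lemma~\ref{lemma:odg} applied to the fully locally scrambling circuit $U$ (Lemma~\ref{lem:ls-subcircuit}) bounds $\bbE_U[(\mathrm{II})^2]\le\norm{O^{(\mathrm{low})}-\tilde{O}}^2_{\mathrm{Pauli},2}$, which is at most $\epsilon'\norm{O}^2_{\mathrm{Pauli},2}$ by Lemma~\ref{lem:shadow-obs-2}; Markov then gives $(\mathrm{II})\le\tfrac{\epsilon}{3}\norm{O}_{\mathrm{Pauli},2}$ with high probability.

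The delicate term is $(\mathrm{III})$. Here I would use the identity $\tilde{O}_U^{(k)}=U_1^\dagger\tilde{O}_1 U_1$ (cf.\ Eq.~\eqref{eq:Oj-defin} and the proof of Theorem~\ref{thm:main}), where $\tilde{O}_1$ is $k$-local and independent of $U_1$. Because $U_1$ consists of non-overlapping $\calO(1)$-qubit gates, $U_1^\dagger\tilde{O}_1 U_1$ has weight at most $ck$, so $(\mathrm{III})=|\Tr[U_1^\dagger\tilde{O}_1 U_1(\rho-\tilde{\rho})]|$ depends on $\rho-\tilde{\rho}$ only through Pauli coefficients of weight $\le ck$ — precisely those the shadow state estimates. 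Conditioning on $U_2,\dots,U_L$ and applying Lemma~\ref{lem:shadow-state} with the single locally scrambling layer $U_1$ and observable $\tilde{O}_1$ gives $\bbE_{U_1}[(\mathrm{III})^2]\le\epsilon''\norm{\tilde{O}_1}^2_{\mathrm{Pauli},2}$. Taking the outer expectation over the remaining layers and using that truncation and unitary conjugation do not increase the Pauli-$2$ norm (so $\norm{\tilde{O}_U^{(k)}}^2_{\mathrm{Pauli},2}\le\norm{\tilde{O}}^2_{\mathrm{Pauli},2}\le 2\norm{O}^2_{\mathrm{Pauli},2}$ by Lemma~\ref{lem:shadow-obs-2} and the triangle inequality) bounds $\bbE_U[(\mathrm{III})^2]\le\calO(\epsilon'')\norm{O}^2_{\mathrm{Pauli},2}$, whence $(\mathrm{III})\le\tfrac{\epsilon}{3}\norm{O}_{\mathrm{Pauli},2}$ by Markov.

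Finally, a union bound over the three failure events (circuit randomness for $(\mathrm{I})$, and circuit plus measurement randomness for $(\mathrm{II})$, $(\mathrm{III})$), each with budget $\calO(\delta)$, yields $|\alpha-f_U(O)|\le\epsilon\norm{O}_{\mathrm{Pauli},2}\le\epsilon\norm{O}$ with probability $1-\delta$. Choosing $\epsilon',\epsilon''=\Theta(\epsilon^2)$ and $k=\calO(\log(\epsilon^{-1}\delta^{-1}))$, the shadow-observable step dominates the data-collection cost at $n^{\calO(\log(\epsilon^{-1}\delta^{-1}))}$ measurements (the shadow-state step needs only $\exp(\calO(k))\,\mathrm{polylog}(n)$), and the classical post-processing is the Pauli-propagation runtime of Theorem~\ref{thm:resources}, i.e.\ $n^{\calO(\log(\epsilon^{-1}\delta^{-1}))}$. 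I expect the main obstacle to be term $(\mathrm{III})$: one must show that the truncated Heisenberg-evolved observable stays low-weight so that only low-weight shadow estimates of $\rho$ are needed — which is exactly where the $\calO(1)$-locality of $U_1$ is indispensable — while carefully decoupling the randomness of $U_1$ (used for local scrambling in Lemma~\ref{lem:shadow-state}) from that of the later layers on which $\tilde{O}_1$ depends.
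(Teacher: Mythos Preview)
Your proposal is correct and follows essentially the same approach as the paper: the same three-term decomposition (truncation error via Theorem~\ref{thm:errorbound}/Corollary~\ref{cor:markov}, observable error via Lemma~\ref{lem:shadow-obs-2} combined with Lemma~\ref{lemma:odg}, state error via Lemma~\ref{lem:shadow-state} applied to the final $k$-local observable $\tilde O_1$ and the single locally scrambling layer $U_1$). The only cosmetic difference is that the paper first bounds the total mean-squared error via $(\sum_{i=1}^3 a_i)^2\le 3\sum_i a_i^2$ and applies Markov once, whereas you apply Markov (or Corollary~\ref{cor:markov}) to each term separately and then union bound; both give the same asymptotic guarantees. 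Your treatment of term~$(\mathrm{III})$---conditioning on $U_2,\dots,U_L$ to fix $\tilde O_1$, then invoking Lemma~\ref{lem:shadow-state} over $U_1$, and finally bounding $\|\tilde O_1\|_{\mathrm{Pauli},2}\le\|\tilde O\|_{\mathrm{Pauli},2}\le(1+\epsilon')\|O\|_{\mathrm{Pauli},2}$---is in fact slightly more explicit than the paper's presentation about decoupling the layer randomness.
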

\begin{proof}

 We first make some preliminary observations.
 Given some $k\geq 0$, recall that $O^{(k)}_U$ is the truncated observable obtained with the low-weight Pauli propagation algorithm. Moreover, $O^{(k)}_U$ can be expressed as $U_1^\dag O_1 U_1$, where $O_1$ contains only Pauli terms of weight at most $k$ and $U_1$ consists
    in non-overlapping $\calO(1)$-qubit gates. 
    
In order to approximate $f_U(O)$, we perform the following steps:
\begin{enumerate}
    \item    By means of Lemma\ \ref{lem:shadow-state}, we learn a ``shadow state'' $\tilde{\rho}$ satisfying 
    \begin{align}
       \bbE_{U} \left|\Tr[O^{(k)}_U (\rho - \Tilde{\rho})]\right|^2 
       \leq  (\epsilon')^2 \;\norm{ O^{(k)}_U}^2_{\mathrm{Pauli},2}
       \leq  (\epsilon')^2 \;\norm{ O }^2_{\mathrm{Pauli},2}
       \label{eq:acc1}
    \end{align}
    with probability at least $1-\delta'$.
    The number of required randomized measurements is upper bounded by
    \begin{align}
       \exp\left(\calO(k)\right) \log\left(\frac{n}{\delta'}\right)  (\epsilon')^{-2}  .
    \end{align}
    \item We learn a ``shadow observable'' $\Tilde{O}$ satisfying 
    \begin{align}
       \norm{\Tilde{O} - O^{(\mathrm{low})}}^2_{\mathrm{Pauli},2} \leq (\epsilon')^2 \,\norm{O}_{\mathrm{Pauli},2}^{2}. \label{eq:acc-shadow-obs}
    \end{align}
    with probability at least $1-\delta'$. Here, we denoted by $O^{(\mathrm{low})} = \sum_{s:\abs{s}\leq k} \Tr[Os]s$ the low-weight approximation of $O$.
    This can be done using either the procedure given in Lemma~\ref{lem:shadow-obs-2}. We select the procedure that achieve the lowest sample complexity for the given values of $\epsilon', \delta'$. Then the required number of measurements is upper bounded by
    \begin{align}
       \calO(n^k) \log\left(\frac{n}{\delta'}\right) (\epsilon')^{-2} 
    \end{align}
    Combining Lemma~\ref{lemma:odg} with Eq.~\ref{eq:acc-shadow-obs}, we have that 
    \begin{align}
        \bbE_{U} \left|\Tr[(O^{(\mathrm{low})} -\Tilde{O}) U\rho U^\dag]\right|^2 \leq 
         \norm{\Tilde{O} - O^{(\mathrm{low})}}^2_{\mathrm{Pauli},2} \leq (\epsilon')^2 \,\norm{O}_{\mathrm{Pauli},2}^{2} \label{eq:acc2}
    \end{align}
    with probability at least $1-\delta'$.
    \item Finally, we run the $k$-weight Pauli propagation algorithm on inputs $U, \Tilde{O}, \Tilde{\rho}$. Specifically, we compute the approximate Heisenberg-evolved observable $\Tilde{O}^{(k)}_U$ and project it onto the state $\Tilde{\rho}$, obtaining $\Tr[\Tilde{O}^{(k)}_U \Tilde{\rho}]$.

\end{enumerate}
In the following of this proof, we condition on Eqs.~\eqref{eq:acc1} and~\eqref{eq:acc2} being satisfied simultaneously. By union bound, this happens with probability at least $1-2\delta'$.
We can now upper bound the mean squared difference between $\Tr[\tilde{O}_U^{(k)} \tilde\rho]$ and $f_U(O) := \Tr[O U \rho U^\dag]$.

\begin{align}
    &\bbE_U \left(f_U(O) -  \Tr[\tilde{O}_U^{(k)} \tilde\rho]\right)^2
    \\= &\bbE_U \left(f_U(O) - \tilde{f}_U^{(k)}(O) + \tilde{f}_U^{(k)}(O) - \Tr[\tilde{O}_U^{(k)} \rho] + \Tr[\tilde{O}_U^{(k)} \rho]-  \Tr[\tilde{O}_U^{(k)} \tilde\rho]\right)^2
    \\ \leq &3\bbE_U\left\{\left(f_U(O) - \tilde{f}_U^{(k)}(O)\right)^2
    + \left(\tilde{f}_U^{(k)}(O) - \Tr[\tilde{O}_U^{(k)} \rho]\right)^2
    + \left( \Tr[\tilde{O}_U^{(k)} \rho]-  \Tr[\tilde{O}_U^{(k)} \tilde\rho]\right)^2\right\},
\end{align}
where in the last step we applied the inequality $(\sum_{i=1}^m a_i)^2 \leq m \sum_{i=1}^m a_i^2$, which is a special case of the Cauchy-Schwarz inequality.
By Theorem~\ref{thm:errorbound}, we have 
\begin{align}
    \bbE_U\left(f_U(O) - \tilde{f}_U^{(k)}(O)\right)^2 \leq \left(\frac{2}{3}\right)^{k+1} \norm{O}^2_{\mathrm{Pauli},2}.
\end{align}
Therefore, putting all together we obtain
\begin{align}
    &\bbE_U \left(f_U(O) -  \Tr[\tilde{O}_U^{(k)} \tilde\rho]\right)^2
    \leq 3\left(\left(\frac{2}{3}\right)^{k+1} + 2(\epsilon')^2\right) \norm{O}_{\mathrm{Pauli},2}^{2}
\end{align}
Markov's inequality yields
\begin{align}
    \Pr_U \left[ \left|f_U(O) -  \Tr[\tilde{O}_U^{(k)} \tilde\rho]\right| \geq \epsilon \norm{O}_{\mathrm{Pauli},2}^{2} \right] &\leq \frac{3}{\epsilon^2} \left(\left(\frac{2}{3}\right)^{k+1} + 2(\epsilon')^2\right)\\&\leq \frac{ (3\epsilon')^2}{\epsilon^2} = \frac{\delta}{3},
\end{align}
where we set $k = \lfloor 2\log(\epsilon'^{-1})/(\log(3/2)) \rfloor$ and $\epsilon'= \epsilon \sqrt{\delta}/(3\sqrt{3})$.
Moreover, we also choose $\delta' = \delta/3$.

Therefore, by union bound, the estimated expectation value $\Tr[\tilde{O}_U^{(k)} \tilde\rho]$ satisfies
\begin{align}
    \left|f_U(O) -  \Tr[\tilde{O}_U^{(k)} \tilde\rho]\right| \leq \epsilon \norm{O}_{\mathrm{Pauli},2}^{2},
\end{align}
with probability at least $1-\delta$.
\end{proof}

While the above theorem does not require any structural assumptions on the unknown observable $O$, we remark that the sample complexity can be considerably tightened if $O$ is either known or it is $\calO(1)$-local.

In order to achieve this exponential improvement, it is sufficient observe that (i) the sample complexity for learning the ``shadow state'' is logarithmic in $n$ for constant error (Lemma\ \ref{lem:shadow-state}), and (ii) the sample complexity for learning the  ``shadow observable'' with constant error is also logarithmic in $n$ if $O$ is $\calO(1)$-local (Lemma\ \ref{lem:shadow-obs}).

\section{Comparison with light-cone simulation}\label{app:lightcone}
In this section, we compare low-weight Pauli propagation to more conventional classical light cone simulation, which is based on the observation that the expectation value of a local observable only depends on the gates and qubits within its backward light cone. 
Assume for the sake of simplicity that $O$ is a local Pauli observable, i.e. a Pauli observable with weight 1.
For a depth-$L$ geometrically local circuit in $D$ dimension, the backward light cone of$O$ contains order $\calO(L^D)$ qubits. A brute-force statevector simulation can thus exactly compute the expectation value of $O$ with $2^{\calO(L^D)}$ memory and time. 
On the other hand, Pauli backpropagation of the observable equipped with weight truncation approximates the expectation value of $O$ up to an additive error $\epsilon$ with a success probability at least $1-\delta$ by only keeping track of Pauli operators within the light-cone with weight at most $k = \calO\left(\log(\epsilon^{-1}\delta^{-1})\right)$. 
The time complexity of weight truncation in this setting is upper bounded by $L^{\calO\left(D\log(\epsilon^{-1} \delta^{-1})\right)}$, cf. \Cref{thm:resources}. 

\medskip
\noindent The following list compares statevector simulation to Pauli propagation with weight truncation in various regimes. 

\begin{itemize}
    \item \underline{Constant (non-zero) error and constant failure probability}: For arbitrary depth $L$, the cost of light cone simulation scales as $2^{\calO(L^D)}$,  whereas the cost of weight truncation scales as $\mathrm{poly}(L^D)$. Thus if a constant $\epsilon > 0$ error and a constant $\delta > 0$ failure probability suffices, low-weight Pauli propagation asymptotically requires exponential in $L$ less computation time.
    \item \underline{Inverse-polynomially small error and failure probability}: Let $c\geq 0$ and $L =\log^c(n) = 2^{c \log(\log(n))}$. The cost of state vector light cone simulation scales as $2^{O(\log^{cD}(n))} = n^{\calO(\log^{cD-1}(n))}$, whereas the cost of weight truncation scales as ${2}^{c D \log(\log(n)) \log(\text{poly}(n))} = n^{\calO(cD\log\log(n))}$.
    In particular, for $cD \geq 2$, the scaling is $n^{\mathrm{poly} \log(n)}$ versus $n^{\calO( \log \log(n))}$.
Since $\log\log(n)< 4$ for $n < 5\times 10^{23}$, the above scaling could be significantly more feasible for practical purposes.

    \item \underline{For near-exact simulation}: Both statevector and Pauli propagation simulation have cost $2^{\calO(L^D)}$, however the statevector simulation caps out at $2^{L^D}$ memory and Pauli propagation at $4^{L^D}$ with the cross-over happening at $k=\frac{n}{2}$. Additionally, statevector simulation is remarkably fast on modern computers with specialized hardware components such as GPUs and TPUs. Thus statevector light cone simulation may be faster for near-exact simulation of general circuits. If the circuit contains some structure, however, it is conceivable that there are additional truncations that path-based simulation methods can leverage and statevector methods cannot -- in turn making the comparison more subtle.
\end{itemize}

\section{Comparison with the trivial estimator}\label{app:XQUATH}

In this section, we consider an extremely simple estimator for the expectation value $f_U(O) = \Tr[O U\rho U^\dag]$, which we refer as the \emph{trivial estimator}. 
\begin{definition}[Trivial estimator]
Let $U$ be a random circuit and $O$ be an observable. Then the corresponding \emph{trivial estimator} is an algorithm that always outputs the first moment of the expectation value $\mu:=\bbE_{U} [f_U(O)]$.
\end{definition}
Therefore the corresponding mean squared error is given by the variance of the expectation value $f_U(O)$:
\begin{align}
    &\bbE_{U} \left(f_U(O) - \mu\right)^2 = \bbE_{U} [f_U(O)^2] +\mu^2 - 2\mu \bbE_{U} [f_U(O)]
    \\= &\bbE_{U} [f_U(O)^2] - \bbE_{U} [f_U(O)]^2 = \mathrm{Var}_{U} f_U(O).
\end{align}
Under many circumstances, random quantum circuits exhibit highly concentrated expectation values\ \cite{mcclean2018barren, larocca2024review}. Consequently, the trivial estimator, despite its simplicity, provides a small mean squared error. In this work, we extensively exploit the properties of random circuits to investigate the performance of low-weight Pauli propagation. This naturally leads to a comparison between low-weight Pauli propagation and the trivial estimator.

In the following, we present two key insights: first, we prove that low-weight Pauli propagation consistently outperforms the trivial estimator in terms of mean squared error. Moreover, we precisely quantify this improvement for \( k = 1 \) in random brickwork circuits. Second, we demonstrate that for typical random circuits at high circuit depths, low-weight Pauli propagation  becomes indistinguishable from the trivial estimator. Based on this, we argue that low-weight Pauli propagation cannot be employed for refuting the XQUATH (Linear Cross-Entropy Quantum Threshold) conjecture on random circuits of depth $c\cdot n$, for a sufficiently large constant $c>0$. This also outlines the limitations of Pauli propagation methods for classically spoofing linear cross-entropy benchmarking (linear XEB) on circuits of linear depth.

\subsection{Improvement over the trivial estimator}
The following Proposition shows that low-weight Pauli propagation always outperforms the trivial estimator.
\begin{proposition}\label{prop:improvement}
Let $U$ be a circuit with independent locally scrambling layers, $O$ an observable and $\rho$ a quantum state.  We have
\begin{align}
   \bbE_U \Delta f_U^{(k)} =  \mathrm{Var}_{U} f_U(O)  - \mathrm{Var}_U\tilde f_U^{(k)}(O) 
\end{align}
In particular, since the $\mathrm{Var}_{U} f_U(O)$ is the mean squared error of the trivial estimator, then $\mathrm{Var}_{U} \Tilde{f}_U^{(k)}(O)$ quantify the improvement over the trivial estimator.
\end{proposition}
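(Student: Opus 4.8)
The plan is to reduce the identity to a statement about \emph{first} moments and then invoke the second-moment identity of Lemma~\ref{lemma:mse}. Writing $\mu_f := \bbE_U f_U(O)$ and $\mu_{\tilde f} := \bbE_U \tilde f_U^{(k)}(O)$, a direct expansion of the two variances gives
\begin{align}
\mathrm{Var}_U f_U(O) - \mathrm{Var}_U \tilde f_U^{(k)}(O)
= \left(\bbE_U f_U(O)^2 - \bbE_U \tilde f_U^{(k)}(O)^2\right) - \left(\mu_f^2 - \mu_{\tilde f}^2\right).
\end{align}
Because a locally scrambling circuit has orthogonal Pauli paths (Lemma~\ref{lem:ortho2}), Lemma~\ref{lemma:mse} identifies the first bracket with exactly $\bbE_U \Delta f_U^{(k)}$. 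Hence the whole proposition follows once I establish $\mu_f = \mu_{\tilde f}$, i.e.\ that the truncation discards only paths carrying zero first moment.

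For this step I would compute $\bbE_U\Phi_\gamma(U)$ for a single path $\gamma=(s_0,\dots,s_L)$. Using independence of the layers, $\bbE_U\Phi_\gamma(U) = \Tr[Os_L]\prod_{j=1}^L \bbE_{U_j}\Tr[U_j^\dag s_j U_j s_{j-1}]$, so it suffices to control each factor through $\bbE_{U_j}[U_j^\dag s_j U_j]$. The needed first-moment twirl is already contained in Lemma~\ref{lemma:ls}: taking $Q=I$ in the orthogonality relation gives $\bbE_{U_j}[(U_j^\dag s_j U_j)\otimes I]=0$, hence $\bbE_{U_j}[U_j^\dag s_j U_j]=0$, whenever $s_j$ acts non-trivially on the support of $U_j$. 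Since $U_1$ is locally scrambling on all $n$ qubits (Definition~\ref{def:ls-circuit}), the $j=1$ factor vanishes unless $s_1$ is the normalized identity, and it then equals $\Tr[s_1 s_0]=\delta_{s_0,s_1}$, forcing $s_0=s_1=I$ as well. Once $s_{j-1}=I$, the next factor reduces for \emph{any} unitary layer to $2^{-n/2}\Tr[s_j]=\delta_{s_j,I}$, so $s_j=I$ follows inductively up the chain. Thus $\bbE_U\Phi_\gamma(U)=0$ for every $\gamma$ except the all-identity path $\gamma_\star=(I,\dots,I)$, whose contribution is $\bbE_U\Phi_{\gamma_\star}(U)\,d_{\gamma_\star}=\Tr[O]/2^n$.

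Since $\gamma_\star\in\calS_k$ (all of its weights are $0\le k$), the unique surviving path is retained by the truncation, so both $\mu_f$ and $\mu_{\tilde f}$ equal the single term $\Tr[O]/2^n=\mu$. This yields $\mu_f=\mu_{\tilde f}$, eliminates the second bracket above, and closes the argument; as a byproduct it pins down the trivial estimator's value as $\Tr[O]/2^n$. I expect the middle step to be the only real obstacle: the tempting move of killing a discarded path by inspecting its single high-weight layer $s_{j}$ fails, because a non-identity $s_{j}$ may act trivially on a later, geometrically restricted layer $U_{j'}$ and so need not make that factor vanish. The clean route is instead the bottom-up chain above, anchored at the fully scrambling first layer $U_1$ and propagated by the elementary identity $\bbE_{U_j}[U_j^\dag s_j U_j]=s_j$ valid when $s_{j-1}=I$.
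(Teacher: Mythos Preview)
Your proof is correct and follows the same route as the paper: invoke Lemma~\ref{lemma:mse} for the second-moment identity and then check that $\bbE_U f_U(O)=\bbE_U\tilde f_U^{(k)}(O)=\Tr[O]/2^n$. The paper simply asserts this first-moment equality as ``easy to see'', whereas you supply the path-by-path argument (anchored at the fully scrambling layer $U_1$ and propagated via $\Tr[U_j^\dag s_j U_j\, s_{j-1}]=\delta_{s_j,I}$ once $s_{j-1}=I$); your closing sentence slightly garbles this last identity, but the argument itself is sound.
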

\begin{proof}
   For all locally scrambled circuit $U$ and observable $O$, the first moment of the expectation value is given by
\begin{align}
    \bbE_U f_U(O) = \frac{\Tr[O]}{2^n}.
\end{align}
Moreover, it is easy to see that, for all $k\geq 0$,
\begin{align}
    \bbE_U \tilde f_U^{(k)}(O) = \frac{\Tr[O]}{2^n}.
\end{align}
This observation allows us to simplify Lemma\ \ref{lemma:mse} as follows
\begin{align}
    \bbE_U \Delta_U^{(k)}(O) = \mathrm{Var}_{U} f_U(O) - \mathrm{Var}_{U} \Tilde{f}_U^{(k)}(O),
\end{align}
that is the mean squared error is given by the difference between the variance of the expectation value and that of the low-weight Pauli propagation estimator.
\end{proof}

\subsection{Improvement of weight-1 Pauli propagation over the trivial estimator}

In the following, we exactly compute $\mathrm{Var}_{U} \Tilde{f}_U^{(1)}(O)$ for random circuits with local 2-designs with arbitrary geometric dimension.
%We target random brickwork circuits for the sake of simplicity, but we anticipate that similar arguments can be applied to any random circuit made of one and two-qubit gates sampled from local 2-designs.

\begin{proposition}\label{prop:impr-k=1}
    Let $O$ be an observable and let $U= U_L U_{L-1}\dots U_1$ be a $L$-layered locally scrambling circuit with input $\ketbra{0^n}$.  Assume that each layer $U_j$ consists in non-overlapping single-qubit and 2-qubit gates sampled from local 2-designs
     and let $O^{(1)}:= \sum_{s\in \calP_n : \abs{s}=1} \Tr[O s]s$ be the weight-1 component of the observable $O$. We have
    \begin{align}
        \mathrm{Var}_U \tilde f_U^{(1)}(O) \geq \frac{1}{5}\left(\frac{2}{5}\right)^L \norm{O^{(1)}}^2_{\mathrm{Pauli},2}.
    \end{align}
\end{proposition}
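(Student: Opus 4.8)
The plan is to expand the variance in the Pauli-path basis, restrict the sum to an explicit sub-family of all-weight-one paths in order to get a lower bound, and then evaluate the resulting single-qubit transition amplitudes \emph{exactly} using the $2$-design property of each gate.

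First I would use the orthogonality of Pauli paths (Lemma~\ref{lem:ortho2}) together with the fact, already recorded in Proposition~\ref{prop:improvement}, that $\bbE_U \tilde f_U^{(1)}(O) = \Tr[O]/2^n$ is precisely the contribution of the all-identity path $\gamma_0$. Since the cross terms vanish, Lemma~\ref{lemma:mse} gives $\mathrm{Var}_U \tilde f_U^{(1)}(O) = \sum_{\gamma \in \calS_1 \setminus \{\gamma_0\}} \bbE_U[\Phi_\gamma(U)^2]\, d_\gamma^2$, and by independence of the layers each weight factorizes as $\bbE_U[\Phi_\gamma^2] = \Tr[Os_L]^2 \prod_{j=1}^L T_j(s_j,s_{j-1})$, where $T_j(s,s') := \bbE_{U_j}\Tr[U_j^\dag s U_j s']^2$. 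Because every summand is nonnegative, I may discard all paths except those in the favorable set $\calT$ consisting of paths with $\abs{s_j}=1$ for every $j\geq 1$; this automatically isolates the weight-one component of $O$ through the prefactor $\Tr[Os_L]^2$.

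Next I would evaluate the single-qubit transition amplitudes using that each gate is an exact $2$-design (as in the computation behind Lemma~\ref{lemma:ls}). For a weight-one $s_j$ whose qubit lies in a two-qubit block, the fifteen non-identity block Paulis each receive expected squared amplitude $1/15$; summing over the six weight-one targets yields the per-layer survival $\sum_{\abs{s_{j-1}}=1} T_j(s_j,s_{j-1}) = 2/5$ (single-qubit gates or idle qubits only raise this to $1$). For the initial layer I fold the projection onto $\rho=\ketbra{0^n}$ into the amplitude: only $\{I,Z\}$-strings $s_0$ have $d_{s_0}\neq 0$, and exactly three of them lie in the block ($Z_q$, $Z_{q'}$, $Z_qZ_{q'}$), each contributing $\tfrac{1}{15}\cdot 2^{-n}$, so that $R(s_1):=\sum_{s_0} T_1(s_1,s_0)\,d_{s_0}^2 = \tfrac15\, 2^{-n}$ for every weight-one $s_1$, provided layer $1$ acts nontrivially on all qubits.

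Finally I would assemble these factors by a telescoping (transfer-matrix) peeling argument: bounding each of the $L-1$ transitions across layers $L,\dots,2$ below by $2/5$ and the boundary factor $R(s_1)$ below by $\tfrac15 2^{-n}$ gives, uniformly in the starting $s_L$, the estimate $\sum_{s_{L-1},\dots,s_1}\big(\prod_{j=2}^L T_j\big)R(s_1) \geq \tfrac15 (2/5)^{L-1} 2^{-n}$. Using $\sum_{\abs{s_L}=1}\Tr[Os_L]^2 = 2^n \norm{O^{(1)}}_{\mathrm{Pauli},2}^2$ to cancel the $2^{-n}$ then yields $\mathrm{Var}_U \tilde f_U^{(1)}(O) \geq \tfrac15 (2/5)^{L-1}\norm{O^{(1)}}^2_{\mathrm{Pauli},2} \geq \tfrac15 (2/5)^{L}\norm{O^{(1)}}^2_{\mathrm{Pauli},2}$, which is the claim (indeed slightly stronger). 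The main obstacle is the careful bookkeeping at the boundary layer---correctly combining the layer-$1$ transition with the $\ketbra{0^n}$ projection to extract the clean $1/5$---together with verifying that the per-layer weight-one survival is bounded below by $2/5$ uniformly; both rely on the exact $2$-design evaluation and on the observation that single-qubit or idle gates can only help.
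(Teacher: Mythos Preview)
Your proposal is correct and follows essentially the same route as the paper's proof: expand the variance over weight-one Pauli paths using orthogonality, evaluate the per-layer transitions via the exact $2$-design formula (yielding the $6/15=2/5$ weight-one survival for a two-qubit block and the $3/15\cdot 2^{-n}=\tfrac{1}{5}2^{-n}$ boundary contribution against $\ketbra{0^n}$), note that single-qubit gates only improve these constants, and multiply. Your observation that the peeling gives $L-1$ factors of $2/5$ (hence $\tfrac{1}{5}(2/5)^{L-1}$) is correct and indeed slightly sharper than the paper's stated $\tfrac{1}{5}(2/5)^{L}$; the paper's count of ``$L$ terms'' is a harmless off-by-one.
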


\begin{proof}
By using the fact that the Fourier coefficients are uncorrelated, we have
\begin{align}
     \mathrm{Var}_U \tilde f_U^{(1)}(O)
     = \bbE_U\sum_{\substack{s_1,s_2,\dots, s_L \in \calP_n: \\ \abs{s_1},\abs{s_2},\dots,\abs{s_L}=1}} \Tr[O s_L]^2 \Tr[U^\dag_L s_L U_L s_{L-1}]^2 \Tr[U^\dag_{L-1} s_{L-1} U_{L-1} s_{L-2}]^2 
     \dots \Tr[U^\dag_{1} s_{1} U_{1}  \rho]^2 %
     \\= \sum_{\substack{s_{L} : \\\abs{s_L} =1 }}\left(\Tr[O s_L]^2 \sum_{\substack{s_{L-1} : \\\abs{s_L-1} =1 }} \left(\bbE_{U_L}\Tr[U^\dag_L s_L U_L s_{L-1}]^2 \dots \sum_{\substack{s_1 :\\ \abs{s_1} =1 }}\left( \bbE_{U_2}\Tr[U^\dag_2 s_2 U_2 s_1]^2\bbE_{U_1}\Tr[U^\dag_1 s_1 U_1 \ketbra{0^n}]^2 \right)  \right) \right)
\end{align}

Exploiting basic properties of local 2-designs, we can lower bound all the terms of the truncated path integral. 
In particular, let $s_j$ be a normalized Pauli string which is non-identity only 
on the $\ell$-th qubit. We distinguish between two cases:

%\begin{itemize}
\begin{enumerate}
    \item The layer $U_j$ contains a 2-qubit gate acting on the $\ell$-th qubit.
    \item The layer $U_j$ does not contain a 2-qubit gate acting on the $\ell$-th qubit.
\end{enumerate}

\bigskip
\noindent\underline{Case 1 :}  Let such 2-qubit gate act on the $\ell$-th and the $\ell'$-th qubits.
%and assume (without loss of generality) that the brickwork layer $U_j$ contains a gate acting on the $\ell$-th and $(\ell+1)$-th qubits.
We have
\begin{align}
   \bbE_{U_j} U^{\otimes 2\dag}_{j} s_{j}^{\otimes 2} U_{j}^{\otimes 2} = \frac{1}{15}\left( \sum_{\substack{s: \\\mathrm{supp}(s) = \{\ell, \ell'\}}} s^{\otimes 2} + 
   \sum_{\substack{s:\\\mathrm{supp}(s) = \{\ell\}}} s^{\otimes 2} + \sum_{\substack{s:\\\mathrm{supp}(s) = \{\ell'\}}} s^{\otimes 2} \right).
\end{align}
Therefore, we have
\begin{align}
    &\bbE_{U_{j}}\Tr[U^\dag_{j} s_{j} U_{j} \ketbra{0^n}]^2
    \\=  &\frac{1}{15}\left( \underbrace{\sum_{\substack{s\in\calP_n: \\\mathrm{supp}(s) = \{\ell, \ell'\}}} \Tr[s_{j-1}\ketbra{0^n}]}_{=1/2^n} + 
  \underbrace{\sum_{\substack{s\in\calP_n:\\\mathrm{supp}(s) = \{\ell\}}}\Tr[s_{j-1}\ketbra{0^n}]}_{=1/2^n} + \underbrace{\sum_{\substack{s\in\calP_n:\\\mathrm{supp}(s) = \{\ell'\}}} \Tr[s_{j-1}\ketbra{0^n}]}_{=1/2^n} \right)
   =  \frac{1}{5} \cdot \frac{1}{2^n},
\end{align}
where in the last step we observed that the Pauli expansion of $\ketbra{0^n}$ contains only three terms which contribute to the sums (i.e. $Z_\ell, Z_{\ell'}$ and $Z_\ell Z_{\ell'})$, and the additional factor $1/2^n$ comes from the normalization of the Pauli operators.
We also have
\begin{align}
    &\bbE_{U_{j}}\sum_{\substack{s_{j-1} \in \calP_n: \\ \abs{s_{j-1}}=1}} \Tr[U^\dag_{j} s_{j} U_{j} s_{j-1}]^2
    \\=  &\frac{1}{15}\left( \underbrace{\sum_{\substack{s_{j-1} \in \calP_n: \\ \abs{s_{j-1}}=1}}\sum_{\substack{s\in\calP_n: \\\mathrm{supp}(s) = \{\ell, \ell'\}}} \Tr[s_{j-1}s]}_{=0} + 
  \underbrace{\sum_{\substack{s_{j-1} \in \calP_n: \\ \abs{s_{j-1}}=1}} \sum_{\substack{s\in\calP_n:\\\mathrm{supp}(s) = \{\ell\}}}\Tr[s_{j-1}s]}_{=3} + \underbrace{\sum_{\substack{s_{j-1} \in \calP_n: \\ \abs{s_{j-1}}=1}}\sum_{\substack{s\in\calP_n:\\\mathrm{supp}(s) = \{\ell'\}}} \Tr[s_{j-1}s]}_{=3} \right)
   =  \frac{2}{5},
\end{align}
where we noticed that there are 3 Pauli operators supported on $\{\ell\}$ and 3 Pauli operators supported on $\{\ell'\}$.
%\end{itemize}

\bigskip
\noindent\underline{Case 2:} By the locally scrambling assumption, there is at least a single-qubit gate acting on $s_j$
We have
\begin{align}
   \bbE_{U_j} U^{\otimes 2\dag}_{j} s_{j}^{\otimes 2} U_{j}^{\otimes 2} = \frac{1}{3}
   \sum_{\substack{s:\\\mathrm{supp}(s) = \{\ell\}}} s^{\otimes 2}.
\end{align}
Therefore, we have
\begin{align}
    &\bbE_{U_{j}}\Tr[U^\dag_{j} s_{j} U_{j} \ketbra{0^n}]^2 = \frac{1}{3}\cdot \frac{1}{2^n}\\
    &\bbE_{U_{j}}\sum_{\substack{s_{j-1} \in \calP_n: \\ \abs{s_{j-1}}=1}} \Tr[U^\dag_{j} s_{j} U_{j} s_{j-1}]^2 = 1%\frac{2}{5}.
\end{align}

\bigskip
\noindent\underline{General case:}
Putting all together, we have demonstrated that
\begin{align}
    &\bbE_{U_{j}}\Tr[U^\dag_{j} s_{j} U_{j} \ketbra{0^n}]^2 \geq \frac{1}{5}\cdot\frac{1}{2^n}   \label{eq:k=1}\\
    &\bbE_{U_{j}}\sum_{\substack{s_{j-1} \in \calP_n: \\ \abs{s_{j-1}}=1}} \Tr[U^\dag_{j} s_{j} U_{j} s_{j-1}]^2 \geq \frac{2}{5}\label{eq:k=1bis}.
\end{align}

We can now lower bound the variance of the truncated path integral. By Eq.\ \eqref{eq:k=1}, the final term $\bbE_{U_1}\Tr[U^\dag_1 s_1 U_1 \ketbra{0^n}]^2 $ bears a factor of at least $1/(5\cdot2^n)$, and there are $L$ terms of the form
$\bbE_{U_{j}}\sum_{\substack{s_{j-1} \in \calP_n: \\ \abs{s_{j-1}}=1}} \Tr[U^\dag_{j} s_{j} U_{j} s_{j-1}]^2$, which bear a factor of at least $2/5$ by Eq.\ \eqref{eq:k=1bis}. Moreover, $\sum_{s_L:\abs{s_L}=1}\Tr[Os_L]^2 = \norm{O^{(1)}}^2_{2}$. Thus we obtain
\begin{align}
    \mathrm{Var}_U \tilde f_U^{(1)}(O) \geq \frac{1}{5}\left(\frac{2}{5}\right)^L \norm{O^{(1)}}^2_{\mathrm{Pauli},2}.
\end{align}
This completes the proof.
\end{proof}
We can further observe that, for Brickwall circuit, the ``case 2'' discussed in the above proof never happens, and thus we can precisely quantify the improvement of weight-1 Pauli propagation over the trivial estimator as $\mathrm{Var}_U \tilde f_U^{(1)}(O) = \frac{1}{5}\left(\frac{2}{5}\right)^L \norm{O^{(1)}}^2_{\mathrm{Pauli},2}$.
We also note that $\mathrm{Var}_U \tilde f_U^{(1)}(O)$ lower bounds the variance of the expectation value $\Tr[OU\rho U^\dag]$, and therefore Proposition\ \ref{prop:impr-k=1} implies that local observables do not suffer from barren plateaus on random circuits of logarithmic depth, independently of the geometric dimension of the circuit.
We formalize this observation in the following Corollary.
\begin{corollary}[Absence of Barren Plateaus for circuits with arbitrary connectivity]
Let $P$ be a Pauli observable with Pauli-weight $\abs{P}=1$ and let $U= U_L U_{L-1}\dots U_1$ be a $L$-layered locally scrambling circuit with input $\ketbra{0^n}$. Assume that each layer $U_j$ consists in non-overlapping single-qubit and 2-qubit gates sampled from local 2-designs. We have
    \begin{align}
        \mathrm{Var}_U \Tr[PU\rho U^\dag] \geq \frac{1}{5}\left(\frac{2}{5}\right)^L.
    \end{align}
\end{corollary}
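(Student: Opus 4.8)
The plan is to derive this corollary as an immediate consequence of Propositions~\ref{prop:improvement} and~\ref{prop:impr-k=1}, specialized to the single Pauli observable $P$. The starting observation is that a weight-one Pauli observable is its own weight-1 component: taking $O = P$ in the notation of Proposition~\ref{prop:impr-k=1} gives $P^{(1)} = P$, and since $P$ consists of a single Pauli string with unit coefficient we have $\norm{P^{(1)}}_{\mathrm{Pauli},2}^2 = \norm{P}_{\mathrm{Pauli},2}^2 = 1$. Applying Proposition~\ref{prop:impr-k=1} therefore yields the lower bound
\begin{align}
    \mathrm{Var}_U \tilde f_U^{(1)}(P) \geq \frac{1}{5}\left(\frac{2}{5}\right)^L
\end{align}
on the variance of the weight-1 truncated estimator.

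The remaining step is to transfer this bound from the truncated estimator $\tilde f_U^{(1)}$ to the true expectation value $f_U(P) = \Tr[P U \ketbra{0^n} U^\dag]$, and this is exactly where Proposition~\ref{prop:improvement} enters. Since the layers are independent and, being sampled from local 2-designs, locally scrambling, the hypotheses of Proposition~\ref{prop:improvement} hold and give the decomposition
\begin{align}
    \bbE_U \Delta f_U^{(1)} = \mathrm{Var}_U f_U(P) - \mathrm{Var}_U \tilde f_U^{(1)}(P).
\end{align}
Because the left-hand side is a mean squared error it is non-negative, so $\mathrm{Var}_U f_U(P) \geq \mathrm{Var}_U \tilde f_U^{(1)}(P)$. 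Chaining this inequality with the bound from the previous paragraph closes the argument.

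Since the deduction is merely a chaining of two already-established propositions, there is no substantial obstacle; the only points requiring care are bookkeeping ones. First, one should read the input state as $\rho = \ketbra{0^n}$, so that the convention matches that of Proposition~\ref{prop:impr-k=1}, whose explicit computation of the transition amplitudes uses $\ketbra{0^n}$. Second, one should note that local 2-design layers indeed satisfy both the local scrambling property and the independence assumption required by Proposition~\ref{prop:improvement}. With these verified, the stated inequality $\mathrm{Var}_U \Tr[P U \ketbra{0^n} U^\dag] \geq \frac{1}{5}\left(\frac{2}{5}\right)^L$ follows directly, and the interpretation is that a weight-one observable retains an (at worst) exponentially-in-$L$ but geometry-independent variance, hence no barren plateau up to logarithmic depth regardless of connectivity.
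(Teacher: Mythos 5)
Your proposal is correct and follows exactly the paper's route: the paper likewise specializes Proposition~\ref{prop:impr-k=1} to a weight-one Pauli (so that $\norm{P^{(1)}}_{\mathrm{Pauli},2}^2 = 1$) and then invokes the non-negativity of the mean squared error via Proposition~\ref{prop:improvement} to conclude that $\mathrm{Var}_U \tilde f_U^{(1)}(P)$ lower bounds $\mathrm{Var}_U \Tr[PU\rho U^\dag]$. The bookkeeping points you flag (input state $\ketbra{0^n}$, local 2-design layers being independent and locally scrambling) are precisely the hypotheses the paper relies on.
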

We emphasize that the above Corollary is not entirely novel, as similar scalings were also obtained in Ref.\ \cite{napp2022quantifying} hinging on the ``statistical mechanical mapping'', a technique that allows to express the second moments of random quantum circuits as the expected values of some suitable Markov chains.

\medskip

Proposition\ \ref{prop:impr-k=1} also implies that there exists circuits where our algorithm is, simultaneously, significantly more accurate than the trivial estimator, and super-polynomially faster than brute-force simulation.
In particular, consider a Pauli observable $P$ with Pauli-weight $\abs{P}=1$ and a logarithmic depth circuit (i.e. with $L = \Theta(\log(n))$) with geometric dimension $D > 1$, satisfying the assumptions of Proposition\ \ref{prop:impr-k=1}.
\begin{enumerate}
    \item Thanks to Proposition\ \ref{prop:impr-k=1}, we know that low-weight Pauli propagation is inversely polynomially more accurate than the trivial estimator:
    \begin{align}
    \mathrm{Var}_U \tilde f_U^{(1)}(O) \geq \frac{1}{5}\left(\frac{2}{5}\right)^{L} \in \frac{1}{n^{\calO(1)}}.
    \end{align}
    \item As discussed in Supplemental Material\ \ref{app:lightcone}, exactly computing the evolution of $P$ via brute-force simulation would take time
    \begin{align}
        &n^{\calO(\log n)} \qquad \text{if $D\in\calO(1)$},
        \\& \exp(\calO(n)) \quad \text{if $D=n$}.
    \end{align}
    In contrast, low-weight Pauli propagation achieves inversely polynomially small error and failure probabilities with a runtime of 
    \begin{align}
        & n^{\calO(\log \log n)} \quad \text{if $D\in\calO(1)$},
        \\ & n^{\calO(\log n)} \,\qquad \text{if $D=n$}.
    \end{align}
\end{enumerate}
We also remark that, since this circuit has logarithmic depth, the results from Refs.\ \cite{napp2022efficient, bravyi2024classical}, which concern constant-depth circuits, are not directly applicable. 

\begin{comment}
{
\subsection{Relative errors}
Consider the relative error:  
\begin{align}
    R_U = \frac{\Delta f_U^{(k)}}{f_U(O)}
\end{align}
\begin{lemma}[MSE-to-Variance Ratio]
  
\end{lemma}
\begin{proof}
For a real random variable $X$ and $a>0$, Markov's inequality implies that
\begin{align}
    \Pr[\abs{X} \geq a] = \Pr[X^2 \geq a^2] \leq \frac{\bbE [X^2]}{a^2},
\end{align}
\begin{align}
    \bbE[Y^2] = \int_{Y^2 \leq T} Y^2 d\mu  + \int_{Y^2 \geq T} Y^2 d\mu \leq T + \Pr[Y^2 \geq T]
    \\ \implies \Pr[Y^2 \geq T] \geq \bbE[Y^2] - T
\end{align}
\begin{align}
    \Pr[Y^2 \geq \frac{\bbE[Y^2]}{2}] \geq \frac{\bbE[Y^2]}{2}.
\end{align}
\end{proof}
}
\end{comment}

\subsection{Implications for the XQUATH conjecture}

The notion of ``improving over the trivial estimator'' is closely related to the XQUATH (Linear Cross-Entropy Quantum Threshold) conjecture proposed by\ \citet{aaronson2019classical}.
Given a distribution over quantum circuits $\calD$, we consider the task of estimating the probability $f_U(\ketbra{0^n}) :=  \abs{\bra{0^n}U\ket{0^n}}^2 $ for a random circuit $U\sim\calD$.
The XQUATH conjecture tells that no efficient classical algorithm can achieve a slightly better variance compared with the trivial estimator, which in this case outputs always $\bbE_U f_U(\ketbra{0^n}) = 1/2^n$. %(i.e. ``guessing zero'' in our terminology).
This conjecture is central to the complexity theoretic foundation of the linear cross-entropy benchmark used in quantum supremacy experiments.

\begin{conjecture}[XQUATH, or Linear Cross-Entropy Quantum Threshold Assumption]
 There is no
polynomial-time classical algorithm that takes as input a quantum circuit $U\sim\nu$ and produces a number $q(C, 0^n)$ such that
\begin{align}
    \mathrm{XQ}:= \bbE_{U\sim\calD} \left[\left[f_U(\ketbra{0^n}) - 2^{-n}\right]^2 \right] - \bbE_{U\sim\calD} \left[\left[f_U(\ketbra{0^n}) - q(U,0^n) \right]^2 \right]  = \Omega\left(2^{-3n}\right).
\end{align}
\end{conjecture}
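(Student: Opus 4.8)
The plan is shaped by the fact that this statement is a \emph{hardness conjecture} quantifying over \emph{all} polynomial-time classical algorithms, so an unconditional proof would rule out every efficient classical estimator that beats the trivial baseline by $\Omega(2^{-3n})$, a separation well beyond current techniques — indeed this is precisely why it is stated as an assumption rather than a theorem. The realistic route is therefore a \emph{conditional} reduction: I would aim to show that any polynomial-time $q$ achieving $\mathrm{XQ} = \Omega(2^{-3n})$ would solve a problem believed classically intractable, reducing XQUATH to an average-case hardness assumption for approximating output probabilities $p_U := f_U(\ketbra{0^n}) = \abs{\bra{0^n} U \ket{0^n}}^2$.

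First I would expand the gap $\mathrm{XQ}$ to expose what an advantage actually buys. Writing $c := 2^{-n}$ and using $\bbE_{U}[p_U] = c$ under $\calD$, completing the square gives
\begin{align}
\mathrm{XQ} = 2\,\bbE_{U}\!\left[(p_U - c)(q - c)\right] - \bbE_{U}\!\left[(q - c)^2\right].
\end{align}
Since $p_U$ is itself a (hard-to-compute) function of $U$, the best response is $q = p_U$, yielding $\mathrm{XQ} = \Var_{U}(p_U)$; hence any nontrivial $q$ must carry genuine correlation with the fluctuation $p_U - c$ at the $2^{-3n}$ scale. The strategy is then to convert such correlation into an estimator of $p_U$ with additive error $o(2^{-3n/2})$ on a noticeable fraction of circuits, and to contradict average-case hardness of that estimation task.

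Second, I would supply the hardness from the standard pipeline. Exact evaluation of $p_U$ is $\#\mathrm{P}$-hard for worst-case circuits (Aaronson--Arkhipov), and worst-case-to-average-case reductions via polynomial interpolation over the circuit parameters (Bouland--Fefferman--Nirkhe--Vazirani, Movassagh) promote this to average-case hardness of \emph{exact} evaluation for $U\sim\calD$. Combined with anti-concentration of $p_U$ (Porter--Thomas / design bounds), which fixes the typical magnitude of $p_U - c$, the aim is to translate a fine-grained correlated estimator back into a forbidden near-exact computation, thereby deriving a contradiction unless $\mathrm{XQ} = o(2^{-3n})$.

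The main obstacle — and the reason the statement remains a conjecture — is robustness of precision. The known worst-to-average-case reductions tolerate only perturbations that are far coarser than the $2^{-3n}$ advantage XQUATH forbids, because the error tolerance of the interpolation argument degrades with the number of queries and the conditioning of the associated Vandermonde system. Bridging from ``no efficient algorithm computes $p_U$ essentially exactly on average'' to ``no efficient algorithm correlates with $p_U - c$ even at order $2^{-3n}$'' is exactly the open gap. Closing it would demand either a reduction whose robustness scales inversely with the target advantage, or a direct average-case lower bound for the quadratic functional $\mathrm{XQ}$ itself; absent such a tool, only restricted-adversary contrapositives are provable — for instance that specific classical families such as low-weight Pauli propagation fail to achieve $\mathrm{XQ} = \Omega(2^{-3n})$ on linear-depth circuits, as established elsewhere in this work — while the full conjecture stays out of reach.
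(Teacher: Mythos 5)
You have correctly recognized that this statement is a hardness \emph{conjecture} (imported from Aaronson and Gunn), which the paper states as an assumption and never attempts to prove; your expansion $\mathrm{XQ} = 2\,\bbE_U[(p_U - 2^{-n})(q - 2^{-n})] - \bbE_U[(q - 2^{-n})^2]$, the observation that the ideal response $q = p_U$ yields $\mathrm{XQ} = \mathrm{Var}_U(p_U)$, and your assessment of why the interpolation-based worst-to-average reductions are too precision-fragile to establish it are all sound and consistent with the paper's treatment. Your closing remark matches the paper's actual contribution precisely: rather than proving XQUATH, the paper proves two restricted statements around it --- that weight-1 Pauli propagation \emph{refutes} it for random brickwork circuits of depth at most $\tfrac{3}{4}n$ (achieving $\mathrm{XQ} = \tfrac{1}{5}(\tfrac{2}{5})^L n\, 2^{-2n}$), and conversely that low-weight Pauli propagation cannot refute it once the circuit depth is linear (via the approximate-2-design norm-contraction argument) --- so the only refinement worth adding to your account is that the conjecture's plausibility is depth-dependent and is in fact false, not merely unproven, in the shallow regime.
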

This conjecture was refuted for random Brickwork circuit of sublinear depth in Ref.\ \cite{aharonov2023polynomial} using a simple Pauli-path based propagation algorithm, which consists in following a single Pauli-path and achieve $XQ = 15^{-L}$, where $L$ is the circuit depth. In particular, this implies that the XQUATH conjecture does not hold on random Brickwork circuit of depth smaller than $\log_2(1/15) n \simeq n/4$.
It is natural to ask whether this result can be further improved by using the low-weight Pauli propagation algorithm.
To this end, we observe that a simple application of Proposition\ \ref{prop:improvement} yields the following corollary.
\begin{corollary}
Let $U$ be a random circuit with orthogonal Pauli paths. We have
     \begin{align}
        \mathrm{XQ} := \mathrm{Var}_U \tilde f_U^{(k)}(\ketbra{0^n}).
    \end{align}   
\end{corollary}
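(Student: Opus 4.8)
The plan is to recognize that, for the specific observable $O = \ketbra{0^n}$, the quantity $\mathrm{XQ}$ is nothing but the gap between the mean squared error of the trivial estimator and that of the low-weight Pauli propagation estimator $\tilde f_U^{(k)}(\ketbra{0^n})$, where the latter now plays the role of the classical guess $q(U,0^n)$. First I would observe that $\Tr[\ketbra{0^n}] = 1$, so the trivial estimator's output $2^{-n}$ coincides with the first moment $\bbE_U f_U(\ketbra{0^n}) = \Tr[\ketbra{0^n}]/2^n = 2^{-n}$. Hence the first term of $\mathrm{XQ}$ is exactly $\bbE_U[(f_U(\ketbra{0^n}) - \bbE_U f_U(\ketbra{0^n}))^2] = \mathrm{Var}_U f_U(\ketbra{0^n})$, which is the MSE of the trivial estimator, while the second term is precisely the MSE $\bbE_U \Delta f_U^{(k)}$ of low-weight Pauli propagation. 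This identifies $\mathrm{XQ} = \mathrm{Var}_U f_U(\ketbra{0^n}) - \bbE_U \Delta f_U^{(k)}$.

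The key step is then to substitute the identity established in Proposition~\ref{prop:improvement}, namely $\bbE_U \Delta f_U^{(k)} = \mathrm{Var}_U f_U(\ketbra{0^n}) - \mathrm{Var}_U \tilde f_U^{(k)}(\ketbra{0^n})$, into the expression above. The two $\mathrm{Var}_U f_U$ contributions cancel, leaving $\mathrm{XQ} = \mathrm{Var}_U \tilde f_U^{(k)}(\ketbra{0^n})$, which is the claimed equality. No magnitude estimates are required and the argument is valid for every $k \geq 0$; the whole content is the algebraic cancellation once the two MSEs have been rewritten through the variance decomposition afforded by orthogonality of Pauli paths (Lemma~\ref{lemma:mse}).

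The one point I expect to require care — because the corollary is stated under the weaker hypothesis of orthogonal Pauli paths rather than full local scrambling — is verifying that both estimators share the same mean, $\bbE_U f_U(\ketbra{0^n}) = \bbE_U \tilde f_U^{(k)}(\ketbra{0^n}) = 2^{-n}$, which is what lets me equate the trivial MSE with a genuine variance and invoke Proposition~\ref{prop:improvement}. I would justify this directly from the Pauli path decomposition: the all-identity path $\gamma_0$ lies in $\calS_k$ and contributes the deterministic, $U$-independent value $2^{-n}$ to both $f_U$ and $\tilde f_U^{(k)}$, so that $f_U(\ketbra{0^n}) - 2^{-n}$ and $\tilde f_U^{(k)}(\ketbra{0^n}) - 2^{-n}$ are supported only on non-identity paths whose first moments vanish in the locally scrambling (Brickwork) setting of interest. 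With the mean-matching secured, orthogonality kills all cross terms between distinct paths, and the cancellation goes through verbatim, completing the proof.
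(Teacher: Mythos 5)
Your proposal is correct and is essentially the paper's own argument: the paper obtains this corollary precisely as ``a simple application of Proposition~\ref{prop:improvement}'', i.e., the identification $\mathrm{XQ} = \mathrm{Var}_U f_U(\ketbra{0^n}) - \bbE_U \Delta f_U^{(k)}$ followed by the cancellation you describe. Concerning the one point you flag, you can in fact close it under the stated orthogonality hypothesis alone, without retreating to the locally scrambling setting: the all-identity path $\gamma_0$ has the deterministic, $U$-independent coefficient $\Phi_{\gamma_0}(U) = \Tr[\ketbra{0^n}]\,2^{-n/2} = 2^{-n/2} \neq 0$, so orthogonality $\bbE_U[\Phi_\gamma(U)\,\Phi_{\gamma_0}(U)] = 0$ forces $\bbE_U[\Phi_\gamma(U)] = 0$ for every $\gamma \neq \gamma_0$, and since $\gamma_0 \in \calS_k$ this gives $\bbE_U f_U(\ketbra{0^n}) = \bbE_U \tilde f_U^{(k)}(\ketbra{0^n}) = 2^{-n}$ directly, which is all the mean-matching your argument needs.
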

This observation allows us to quickly compute the $\mathrm{XQ}$-value for the weight-1 Pauli propagation algorithm.

\begin{corollary}
Let $U$ be a random Brickwork circuit. Then
the weight-1 Pauli propagation algorithm outputs a number $q(U,0^n)$ in time $\calO(n L)$ that achieves
\begin{align}
    \mathrm{XQ} = \frac{1}{5} \left(\frac{2}{5}\right)^L n  2^{-2n} \label{eq:xq-1},
\end{align}
Therefore XQUATH is false for random Brickwork circuits with depth $L \leq \frac{3}{4}n$.
\end{corollary}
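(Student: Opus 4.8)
The plan is to combine the two results established immediately above: the identity $\mathrm{XQ} = \mathrm{Var}_U \tilde{f}_U^{(k)}(\ketbra{0^n})$ valid for any ensemble with orthogonal Pauli paths, and the exact evaluation of $\mathrm{Var}_U \tilde{f}_U^{(1)}$ for Brickwork circuits. First I would observe that a random Brickwork circuit built from local $2$-designs is locally scrambling, hence has orthogonal Pauli paths by Lemma~\ref{lem:ortho2}, so the $\mathrm{XQ}$-identity applies with $k=1$ and reduces the goal to computing $\mathrm{Var}_U \tilde{f}_U^{(1)}(\ketbra{0^n})$. Next I would invoke the remark following Proposition~\ref{prop:impr-k=1}: for the Brickwork geometry ``Case~2'' never occurs, so the two per-layer estimates hold with equality and give $\mathrm{Var}_U \tilde{f}_U^{(1)}(O) = \tfrac{1}{5}\left(\tfrac{2}{5}\right)^{L}\norm{O^{(1)}}_{\mathrm{Pauli},2}^2$. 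The entire statement then collapses to evaluating the weight-$1$ Pauli mass of the projector and checking a depth threshold.

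The key computation is the weight-$1$ norm. Expanding $\ketbra{0^n} = \bigotimes_{i=1}^n \tfrac{I+Z_i}{2} = 2^{-n}\sum_{P\in\{I,Z\}^{\otimes n}} P$, the only weight-$1$ Pauli terms are $Z_1,\dots,Z_n$, each with coefficient $2^{-n}$. Hence $\norm{(\ketbra{0^n})^{(1)}}_{\mathrm{Pauli},2}^2 = \sum_{i=1}^n (2^{-n})^2 = n\,2^{-2n}$, and substituting yields exactly $\mathrm{XQ} = \tfrac{1}{5}\left(\tfrac{2}{5}\right)^{L} n\,2^{-2n}$. For the runtime I would argue that weight-$1$ propagation stores only the coefficients of the $\calO(n)$ single-qubit Paulis; in each layer each such Pauli is pushed through at most one two-qubit gate, branches into $\calO(1)$ Paulis, and is retruncated to weight $1$, giving $\calO(n)$ work per layer and $\calO(nL)$ in total.

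Finally I would verify the depth threshold. To refute XQUATH it suffices that $\mathrm{XQ} = \Omega(2^{-3n})$, and
\begin{align}
\frac{\mathrm{XQ}}{2^{-3n}} = \frac{n}{5}\left(\frac{2}{5}\right)^{L} 2^{n} = \frac{n}{5}\,2^{\,n - L\log_2(5/2)}.
\end{align}
Using the elementary bound $\log_2(5/2) < 4/3$ (since $2^{4/3} = 2\cdot 2^{1/3} > 5/2$), for $L \leq \tfrac{3}{4}n$ we have $L\log_2(5/2) < n$, so the exponent $n - L\log_2(5/2)$ is positive and the ratio diverges. Thus weight-$1$ Pauli propagation, running in polynomial time $\calO(nL)$, achieves $\mathrm{XQ} = \omega(2^{-3n}) = \Omega(2^{-3n})$ for every $L \leq \tfrac{3}{4}n$, refuting XQUATH in this regime.

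The only genuinely delicate point is the justification that the per-layer constants of Proposition~\ref{prop:impr-k=1} are attained with equality on Brickwork circuits — i.e. that ``Case~2'' is vacuous because every single-qubit Pauli is always processed by a bona fide two-qubit $2$-design gate, so the exact factors $\tfrac{1}{5}2^{-n}$ and $\tfrac{2}{5}$ propagate without slack. Everything else is the projector expansion and the threshold arithmetic, which are routine.
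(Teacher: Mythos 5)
Your proposal is correct and follows essentially the same route as the paper: combine the identity $\mathrm{XQ} = \mathrm{Var}_U \tilde f_U^{(1)}(\ketbra{0^n})$ (valid for ensembles with orthogonal Pauli paths) with the exact Brickwork evaluation $\mathrm{Var}_U \tilde f_U^{(1)}(O) = \tfrac{1}{5}(\tfrac{2}{5})^L \norm{O^{(1)}}^2_{\mathrm{Pauli},2}$ from the remark after Proposition~\ref{prop:impr-k=1}, then compute $\norm{(\ketbra{0^n})^{(1)}}^2_{\mathrm{Pauli},2} = n\,2^{-2n}$ and check the depth threshold. If anything, your write-up is more careful than the paper's: you spell out the $\calO(nL)$ runtime and the bound $\log_2(5/2) < 4/3$ underlying the $L \le \tfrac{3}{4}n$ threshold (which the paper merely asserts), and you correctly state the refutation condition as $\mathrm{XQ} = \Omega(2^{-3n})$, where the paper's proof contains a typo ($\calO$ in place of $\Omega$).
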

\begin{proof}
    Let $O = \ketbra{0^n} = 2^{-n}\sum_{P \in \{I,Z\}^{\otimes n}} P$ be the projector on the computational zero-state. We have
    \begin{align}
        \norm{O^{(1)}}^2_{\mathrm{Pauli},2} :=  \sum_{\substack{P \in \{I,Z\}^{\otimes n}:\\ \abs{P}=1}}  (2^{-n})^2 = n  2^{-2n}.
    \end{align}
    By applying Proposition\ \ref{prop:impr-k=1}, we obtain
    \begin{align}
         \mathrm{XQ} := \mathrm{Var}_U \tilde f_U^{(k)}(\ketbra{0^n}) = \frac{1}{5} \left(\frac{2}{5}\right)^L \norm{O^{(1)}}^2_{\mathrm{Pauli},2} =  \frac{1}{5} \left(\frac{2}{5}\right)^L n  2^{-2n}. 
    \end{align}
    As a consequence, for random Brickwork circuits of depth at most $\frac{n + \log(n)}{\log_2(5/2)} \geq \frac{3}{4}n$, we have $\mathrm{XQ} = \calO(2^{-3n})$, and therefore the XQUATH conjecture does not hold.
    
\end{proof}
Since the RHS of Eq.\ \ref{eq:xq-1} decays exponentially in depth, this ``attack'' can be easily countered by increasing the size of the circuits. However, the value of $k$ could be augmented as well to produce a more accurate estimator. Thus, it is natural to ask whether low weight Pauli propagation methods can be used to refute the XQUATH conjecture for circuits of linear depth. Here answer to this question negatively, by leveraging previous bounds on approximate unitary designs.

\medskip

In particular, we consider the following definition of $\epsilon$-approximate unitary 2-design.
\begin{definition}[Approximate design]
A distribution $\calD$ over $\bbU(2^n)$ is an $\epsilon$-approximate unitary 2-design if
\begin{align}
    \norm{ \Psi_{\mathcal{U}} - \Psi_{\mathrm{\calD}} }_{1\rightarrow 1} \leq \epsilon ,
\end{align}
where the quantum channel $\Psi_{\mathrm{\calD}} (\cdot)$ is defined via
\begin{align}
    \Psi_{\mathrm{\calD}}(A) := \bbE_{U\sim\calD} \left[U^{\otimes 2} A U^{\dag \otimes 2} \right], 
\end{align}
and similarly for the Haar measure $\calU$. %Here, $\Psi \preceq \Psi'$ denotes that $\Psi' -\Psi$ is a completely-positive map.
For a superoperators $\Psi$, we denoted its induced 1-norm by $\norm{\Psi}_{1\rightarrow 1} := \max_{\rho} \norm{\Psi(\rho)}_1$, where the maximization is taken over the set of all quantum states.
\end{definition}
We emphasize that this is a relatively weak definition of approximate design, which is implied by stronger definitions, such as the diamond-norm based definition and the multiplicative definition used, for instance, in Refs. \cite{schuster2024random, harrow2023approximate, chen2024incompressibility}.
Given a random circuit $U$ sampled from an approximate 2-design, we upper bound the average (squared) Hilbert Schmidt norm of the truncated observable $O^{(k)}_U$, i. e.  the final observable obtained by running the low-weight Pauli propagation algorithm.

\begin{lemma}[Average norm contraction]
\label{lemma:contraction}
Let $O := \ketbra{0^n}$ be the projector onto the computational 0-state let and let be $\calD$ be an $\epsilon$-approximate 2-design . 
Let $U= U_L U_{L-1}\dots U_1$ be a random circuit sampled from $\calD$ and 
assume %that the Fourier coefficients of different Pauli paths are uncorrelated, i.e. $\bbE_U \Phi_\gamma(U) \Phi_{\gamma'}(U)=0 $ whenever $\gamma \neq \gamma'$. $
that $U$ has orthogonal Pauli paths.
We have
\begin{align}
    \bbE_U\norm{O^{(k)}_U}_2^{2} - 2^{-n} \in \calO(n^k)\left(\frac{\norm{O}_{2}^2 -2^{-n}}{4^n-1} +\frac{\epsilon}{2^n}\right).
\end{align}
\end{lemma}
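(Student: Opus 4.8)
The plan is to reduce $\bbE_U\norm{O_U^{(k)}}_2^2$ to a sum of single-Pauli second moments—one for each Pauli that survives the truncation—and then control each moment through the approximate-design hypothesis. Write $c_s(U):=\Tr[O_U^{(k)}s]$ for the coefficient of the normalized Pauli $s\in\calP_n$ in $O_U^{(k)}$, so that $\norm{O_U^{(k)}}_2^2=\sum_{s\in\calP_n}c_s(U)^2$ by Eq.~\eqref{eq:2norm}. Since $c_s(U)=\sum_{\gamma\in\calS_k:\,s_\gamma=s}\Phi_\gamma(U)$ is a partial sum of Fourier coefficients, the orthogonality of Pauli paths (Lemma~\ref{lem:ortho2}) annihilates every cross term, giving $\bbE_U c_s(U)^2=\sum_{\gamma\in\calS_k:\,s_\gamma=s}\bbE_U\Phi_\gamma(U)^2\le\sum_{\gamma:\,s_\gamma=s}\bbE_U\Phi_\gamma(U)^2=\bbE_U\Tr[U^\dag O U s]^2$, where the final identity is orthogonality applied to the full, untruncated path integral. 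Hence $\bbE_U\norm{O_U^{(k)}}_2^2\le\sum_{s}\bbE_U\Tr[U^\dag O U s]^2$, the sum running only over Paulis $s$ with $c_s\not\equiv0$.

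Next I would isolate the identity contribution and count the surviving Paulis. Only the all-identity path reaches $s=\tfrac{I}{2^{n/2}}$, and its coefficient is the deterministic constant $\Tr[O]/2^{n/2}=2^{-n/2}$; thus the identity term contributes exactly $2^{-n}$, and $\bbE_U\norm{O_U^{(k)}}_2^2-2^{-n}=\sum_{s\neq I}\bbE_U c_s(U)^2$. Because $O_U^{(k)}=U_1^\dag O_1 U_1$ with $O_1$ an operator of weight at most $k$, Assumption~3 of Definition~\ref{def:ls-circuit} guarantees that $U_1^\dag O_1 U_1$ expands into at most $n^{\calO(k)}$ distinct Pauli terms; therefore $c_s\equiv0$ for all but $n^{\calO(k)}=\calO(n^{k})$ Paulis $s$, irrespective of their weight.

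It then remains to bound $\bbE_U\Tr[U^\dag O U s]^2$ for a fixed non-identity $s\in\calP_n$. Writing $\bbE_U\Tr[U^\dag O U s]^2=\Tr[\widehat\Psi_\calD(O^{\otimes2})\,s^{\otimes2}]$ with the adjoint moment channel $\widehat\Psi_\calD(A):=\bbE_{U\sim\calD}[U^{\dag\otimes2}AU^{\otimes2}]$, the exact $2$-design (SWAP/identity) formula yields the Haar value $\Tr[\Psi_\calU(s^{\otimes2})O^{\otimes2}]=\frac{\norm{O}_2^2-2^{-n}}{4^n-1}$, where I use $\Tr[s]=0$, $\Tr[s^2]=1$ and $\Tr[O]=1$ for the projector. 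The correction from $\calD$ being only an $\epsilon$-approximate design I would estimate by pairing the trace-class-small factor against the operator-norm-small factor: $\bigl|\Tr[(\widehat\Psi_\calD-\Psi_\calU)(O^{\otimes2})\,s^{\otimes2}]\bigr|\le\norm{(\widehat\Psi_\calD-\Psi_\calU)(O^{\otimes2})}_1\,\norm{s^{\otimes2}}_\infty\le\epsilon\,\norm{O^{\otimes2}}_1\,2^{-n}=\epsilon\,2^{-n}$, using $\norm{O^{\otimes2}}_1=\norm{O}_1^2=1$ and $\norm{s^{\otimes2}}_\infty=\norm{s}_\infty^2=2^{-n}$. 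Summing the per-Pauli bound $\tfrac{\norm{O}_2^2-2^{-n}}{4^n-1}+\epsilon\,2^{-n}$ over the $\calO(n^{k})$ surviving Paulis gives the claim.

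The main obstacle is precisely this last step: turning the rather weak $1\!\to\!1$-norm design guarantee into a second-moment error that scales as $\epsilon\,2^{-n}$ rather than the naive $\epsilon\,2^{n}$. The decisive point is that the design must be applied to the normalized operator $O^{\otimes2}$ (with $\norm{O}_1=1$) while the fixed Pauli $s^{\otimes2}$ is absorbed through its small operator norm $2^{-n}$; this forces one to pass to the adjoint channel $\widehat\Psi_\calD$, i.e.\ to invoke the $\epsilon$-approximate-design bound for the distribution of $U^\dag$ as well as of $U$. This is harmless for the random-circuit ensembles of interest and follows from the stronger design notions mentioned after the definition, but it is the one place that genuinely requires care. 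A secondary subtlety is that the $1\!\to\!1$ bound applied to the Hermitian input $O^{\otimes2}$ needs a convex-decomposition argument (the paper's $\norm{\cdot}_{1\to1}$ is defined via a maximum over states), and that the $\calO(n^{k})$ term-count from Assumption~3 is uniform in $U$ so it may be pulled outside the expectation.
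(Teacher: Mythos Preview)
Your overall architecture—collapse $\|O_U^{(k)}\|_2^2$ via path orthogonality into a sum of single-Pauli second moments and control each by the approximate-design bound—is exactly right, and your treatment of the design deviation (pairing the trace-class $O^{\otimes2}$ against the small $\|s^{\otimes2}\|_\infty=2^{-n}$ via the adjoint channel) matches the paper. The gap is in the counting step.

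You expand $O_U^{(k)}$ in Paulis at layer $0$ (the state side), writing $c_s(U)=\Tr[O_U^{(k)}s]$ and bounding $\bbE_U c_s(U)^2\le\bbE_U\Tr[U^\dag OUs]^2$ for each fixed $s$. You then want to restrict the sum to the $\calO(n^k)$ Paulis with $c_s\not\equiv0$. But the Pauli support of $O_U^{(k)}=U_1^\dag O_1U_1$ depends on the realization of $U_1$: as $U_1$ varies over its distribution, the \emph{union} of these supports is in general not bounded by $n^{\calO(k)}$ (for instance, if $U_1$ contains Haar-random single-qubit gates, every Pauli appears for some $U_1$). Assumption~3 of Definition~\ref{def:ls-circuit}—which the lemma does not even assume—bounds only the size of the support for each fixed $U_1$, not the union. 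Since your per-term inequality requires averaging over $U_1$, you cannot combine the pointwise term count with the averaged per-term bound; the two operations do not commute.

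The paper avoids this by shifting the expansion one layer up. Because $\|O_U^{(k)}\|_2=\|U_1^\dag O_1U_1\|_2=\|O_1\|_2$ by unitary invariance, and $O_1$ is \emph{by construction} supported only on the fixed set $\{s:|s|\le k\}$ of size $N_{n,k}\in\calO(n^k)$, one obtains
\[
\bbE_U\|O_U^{(k)}\|_2^2=\sum_{|s|\le k}\bbE_U\Tr[O_1s]^2,
\]
with the sum now over a $U$-independent set. The coefficient $\Tr[O_1s]$ is a partial sum of Pauli-path amplitudes for the subcircuit $U_L\cdots U_2$, so the same orthogonality argument bounds each term by $\bbE\Tr[(U_L\cdots U_2)^\dag O(U_L\cdots U_2)\,s]^2$, to which Claim~\ref{claim:convergence} applies. (The paper's displayed equation~\eqref{eq:extra-paths} writes this with $s_L$ and the full $U$; the argument actually uses $s_1$ and the subcircuit, a distinction that is immaterial for the intended applications but worth noting.) In short: count where the truncation itself pins down the Pauli weight, not after one more random layer has scrambled the support.
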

\begin{proof}
As a preliminary step, we prove the following claim.
\begin{claim}
\label{claim:convergence}
For all Pauli operator $P\in \{I,X,Y,Z\}^{\otimes n}\setminus \{I^{\otimes n}\}$,
\begin{align}
        \bbE_{U\sim\calD}\Tr[U P U^{\dag} O]^2 
    \leq  \frac{2^n\norm{O}_2^2 -1}{4^n-1} +\epsilon.
\end{align}  
\end{claim}
\begin{proof}[Proof of \cref{claim:convergence}]
We have
\begin{align}
  &\bbE_{U\sim\calD}\Tr[U P U^{\dag} O]^2 = \bbE_{U\sim\calD}\Tr[P^{\otimes 2}U^{\otimes 2\dag} O^{\otimes 2} U^{\otimes 2}]
  \\= &\bbE_{U\sim\calD, V\sim\mathcal{U}}\Tr[P^{\otimes 2} (U^{\otimes 2\dag} O^{\otimes 2} U^{\otimes 2} - V^{\otimes 2\dag} O^{\otimes 2} V^{\otimes 2})] + \bbE_{V\sim\mathcal{U}}\Tr[P^{\otimes 2}V^{\otimes 2\dag} O^{\otimes 2} V^{\otimes 2}]
 \\ \leq &\bbE_{U\sim\calD, V\sim\mathcal{U}} \norm{P}\norm{U^{\otimes 2\dag} O^{\otimes 2} U^{\otimes 2} - V^{\otimes 2\dag} O^{\otimes 2} V^{\otimes 2}}_1 + \bbE_{V\sim\mathcal{U}}\Tr[P^{\otimes 2}V^{\otimes 2\dag} O^{\otimes 2} V^{\otimes 2}]
  \\\leq &\bbE_{V\sim\mathcal{U}}\Tr[VPV^{\dag} O]^2 + \epsilon.
\end{align}
where we used Hölder's inequality and the definition of $\epsilon$-approximate $2$-design.
By standard Weingarten's calculus we have\ (see, for instance, Corollary 13 in\ \cite{mele2023introduction}):
\begin{align}
    \bbE_{V\sim\mathcal{U}} V^{\dag\otimes 2} O^{\otimes 2}V^{\otimes 2} = \left( \frac{1 - 2^{-n} \norm{O}_2^2}{4^n -1} \right) I^{\otimes 2} + \left( \frac{\norm{O}_2^2 -2^{-n}}{4^n-1}\right)\mathrm{SWAP}, 
\end{align}
and therefore
\begin{align}
    \bbE_{V\sim\mathcal{U}}\Tr[VPV^{\dag} O]^2 = \bbE_{V\sim\mathcal{U}}\Tr[P^{\otimes 2}V^{\otimes 2\dag} O^{\otimes 2} V^{\otimes 2}] = \frac{2^n\norm{O}_2^2 -1}{4^n-1}.
\end{align}
Putting all together, we have
\begin{align}
    &\bbE_{U\sim \calD}
     \Tr[UP U^{\dag} O] 
     \leq \frac{2^n\norm{O}_2^2 -1}{4^n-1}+\epsilon.
\end{align}
\end{proof}
We now upper bound the expected 2-norm of $O^{(k)}_U$.
Using the orthogonality of Pauli paths, we obtain
  \begin{align}
      \bbE_U\norm{O^{(k)}_U}_2^{2} = \bbE_U \Tr[(O^{(k)}_U )^2] =\bbE_U \sum_{\gamma \in \calS_k} \Phi_\gamma(U) ^2 \label{eq:up-paths-norm}
  \end{align} 
In order to upper bound the RHS of Eq.\ \ref{eq:up-paths-norm}, we include some additional paths in the sum:
\begin{align}
           \bbE_U \sum_{\gamma \in \calS_k} \Phi_\gamma(U) ^2 \leq &\bbE_U \sum_{\abs{s_L}\leq k} \Phi_\gamma(U) ^2 = \frac{1}{2^n}+ \bbE_U \sum_{1\leq\abs{s}\leq k} \Tr[s^{\otimes 2}(U^\dag O U)^{\otimes 2}].
           \label{eq:extra-paths}
\end{align}
Applying Claim\ \ref{claim:convergence}, we obtain that
    \begin{align}
    \bbE_U \sum_{1\leq \abs{s}\leq k} \Tr[s^{\otimes 2}(U^\dag O U)^{\otimes 2}] \leq &
    \sum_{1\leq \abs{s}\leq k} 2^{-n}\left(\frac{2^n\norm{O}_2^2 -1}{4^n-1} +\epsilon\right)
    \in \calO(n^k)\left(\frac{\norm{O}_{2}^2 -2^{-n}}{4^n-1} +\frac{\epsilon}{2^n}\right).
  \end{align} 
Putting all together yields the desired upper bound:
\begin{align}
    \bbE_U\norm{O^{(k)}_U}_2^{2} - 2^{-n} \in \calO(n^k)\left(\frac{\norm{O}_{2}^2 -2^{-n}}{4^n-1} +\frac{\epsilon}{2^n}\right).
\end{align}
This completes the proof.
\end{proof}

Intuitively, the above lemma shows that the outputs of low-weight Pauli propagation and the trivial estimator are nearly indistinguishable. However, in order to show that low-weight PP does not violate the XQUATH conjecture, we need a slightly tighter bound, which we achieve by combining 2 approximate designs sequentially.

\begin{theorem}\label{thm:no-go}
Let $U= U^{(A)}U^{(B)}$ be a random circuit with orthogonal Pauli paths, where  $U^{(A)}$ and $U^{(B)}$  are two unitaries sampled independently from an $\epsilon$-approximate unitary 2-design. We have
\begin{align}
    \mathrm{Var}_U \tilde{{f}}_U^{(k)}(\ketbra{0^n}) \in  \calO(n^{4k}2^{-5n}) + \calO\left({\epsilon n^k}2^{-n} \right),
\end{align}
Therefore low-weight Pauli propagation does not violate the XQUATH conjecture on $U$ if $\epsilon \leq 1/(2^n n^k) $.
\end{theorem}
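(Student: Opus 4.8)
The plan is to use the preceding corollary, which gives $\mathrm{XQ} = \mathrm{Var}_U\tilde f_U^{(k)}(\ketbra{0^n})$, and to bound this variance in two stages: a clean orthogonality reduction that rewrites the variance as $2^{-n}$ times an excess Pauli-$2$-norm, followed by a doubled application of the norm-contraction estimate (Lemma~\ref{lemma:contraction} and Claim~\ref{claim:convergence}) that exploits the independence of $U^{(A)}$ and $U^{(B)}$. For the first stage I would isolate the trivial path $\gamma_0 = (I^{\otimes n},\dots,I^{\otimes n})$. Since $\Phi_{\gamma_0}(U) = 2^{-n/2}$ and $d_{\gamma_0} = 2^{-n/2}$ are both independent of $U$, the path $\gamma_0$ contributes the deterministic value $2^{-n}$ to $\tilde f_U^{(k)}$ and nothing to the variance. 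Using the orthogonality of Pauli paths to kill all cross terms, together with $d_\gamma^2 = \Tr[s_\gamma\rho]^2\le 2^{-n}$ and the identity $\sum_{\gamma\in\calS_k}\bbE_U[\Phi_\gamma(U)^2] = \bbE_U\norm{O_U^{(k)}}_2^2$ (with $\bbE_U[\Phi_{\gamma_0}^2]=2^{-n}$), I obtain
\[
\mathrm{Var}_U\tilde f_U^{(k)} \le \sum_{\gamma\in\calS_k\setminus\{\gamma_0\}}\bbE_U[\Phi_\gamma(U)^2]\,d_\gamma^2 \le 2^{-n}\Big(\bbE_U\norm{O_U^{(k)}}_2^2 - 2^{-n}\Big).
\]
This reduces the whole problem to bounding the excess norm $\bbE_U\norm{O_U^{(k)}}_2^2 - 2^{-n}$.

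The second and main step bounds this excess norm by contracting twice across the cut $U = U^{(A)}U^{(B)}$. Let $\tilde O^{(A)} := O_m^{(k)}$ be the intermediate truncated observable obtained by propagating $O=\ketbra{0^n}$ through $U^{(A)}$ only; since $\calS_k$ factorizes at the cut and the two halves are independent, $\bbE_U[\Phi_\gamma^2] = \bbE_{U^{(A)}}[(\Phi^{(A)}_\gamma)^2]\,\bbE_{U^{(B)}}[(\Phi^{(B)}_\gamma)^2]$. First I would apply Lemma~\ref{lemma:contraction} to $U^{(A)}$ with $\norm{O}_2^2 = 1$, noting $\Tr[\tilde O^{(A)}] = \Tr[O] = 1$ because the identity component is never truncated, to get $\bbE_{U^{(A)}}\norm{\tilde O^{(A)}}_2^2 - 2^{-n}\in\calO(n^k)(2^{-2n}+\epsilon 2^{-n})$. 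Then, conditioning on $U^{(A)}$ and applying the contraction to the independent design $U^{(B)}$ acting on the now-fixed observable $\tilde O^{(A)}$, and finally averaging over $U^{(A)}$, the excess norm acquires a second contraction factor of order $2^{-2n}$, giving $\bbE_U\norm{O_U^{(k)}}_2^2 - 2^{-n}\in\calO(n^{4k})\,2^{-4n} + \calO(\epsilon\,n^k)$. Substituting into the first-stage bound yields $\mathrm{Var}_U\tilde f_U^{(k)}\in\calO(n^{4k}2^{-5n}) + \calO(\epsilon n^k 2^{-n})$.

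To conclude, since $\mathrm{XQ} = \mathrm{Var}_U\tilde f_U^{(k)}$, I would substitute $\epsilon\le 1/(2^n n^k)$: the first term is $o(2^{-3n})$ for constant $k$, and the $\epsilon$-dependent term is driven below the $\Omega(2^{-3n})$ XQUATH threshold, so low-weight Pauli propagation cannot refute XQUATH on circuits of this form. The logic mirrors the earlier weight-$1$ attack, but now across two designs the $\mathrm{XQ}$ value decays fast enough in $n$ that no choice of $k$ helps for sufficiently scrambling (near-exact) designs.

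The hardest part will be the second contraction. Lemma~\ref{lemma:contraction} and Claim~\ref{claim:convergence} are stated for the rank-one observable $\ketbra{0^n}$, whereas the second stage must contract the random, generally non-positive intermediate observable $\tilde O^{(A)}$. I would re-derive the twirl identity for a general observable, keeping both the $I^{\otimes 2}$ and $\mathrm{SWAP}$ coefficients $\tfrac{\Tr[\tilde O^{(A)}]^2 - 2^{-n}\norm{\tilde O^{(A)}}_2^2}{4^n-1}$ and $\tfrac{\norm{\tilde O^{(A)}}_2^2 - 2^{-n}\Tr[\tilde O^{(A)}]^2}{4^n-1}$, and—most delicately—control the $\epsilon$-approximate-design error, whose Hölder bound now carries a trace-norm factor $\norm{\tilde O^{(A)}}_1$ instead of the trivial $\norm{\ketbra{0^n}}_1 = 1$. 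Tracking this factor (and verifying that truncated path sums are dominated by the untruncated ones, so that each $A(s_m)$ and $B(s_m)$ may be replaced by the Weingarten estimate) is exactly what pins down the power of $2^{-n}$ multiplying $\epsilon$ and hence the precise $\epsilon$-threshold in the final statement.
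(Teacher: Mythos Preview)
Your proposal is correct and follows essentially the same two-step strategy as the paper: reduce the variance to the ``excess'' Pauli-$2$-norm $\bbE_U\norm{O_U^{(k)}}_2^2 - 2^{-n}$, then apply the contraction of Lemma~\ref{lemma:contraction} once across $U^{(A)}$ and once across $U^{(B)}$.

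Two minor differences are worth noting. For the first reduction, the paper invokes Lemma~\ref{lemma:odg} (which formally needs a locally scrambling first layer) to write $\bbE_U\Tr[O_U^{(k)}\rho]^2\le\bbE_U\norm{O_U^{(k)}}_{\mathrm{Pauli},2}^2$; your direct path-orthogonality argument (isolating $\gamma_0$ and using $d_\gamma^2\le 2^{-n}$) achieves the identical bound while using only the ``orthogonal Pauli paths'' hypothesis actually assumed in the theorem. For the second contraction, you correctly flag that Claim~\ref{claim:convergence} as stated uses $\norm{O^{\otimes 2}}_1=1$, whereas the intermediate observable $\tilde O^{(A)}$ is not a projector; the paper applies Lemma~\ref{lemma:contraction} a second time without commenting on this. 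Your plan to track the H\"older factor $\norm{\tilde O^{(A)}}_1^2$ is the right fix, and the crude bound $\norm{\tilde O^{(A)}}_1\le 2^{n/2}\norm{\tilde O^{(A)}}_2\le 2^{n/2}$ (since truncation never increases the $2$-norm) already suffices to land the $\epsilon$-term at $\calO(\epsilon\,n^k\,2^{-n})$, exactly as stated.
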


\begin{proof}
Let $O = \ketbra{0^n}$.  We have
\begin{align}
    \mathrm{Var}_U \tilde{{f}}_U^{(k)}(O) = \bbE_U \Tr[O_U^{(k)} \rho]^2 - \left\{\bbE_U\Tr[O_U^{(k)} \rho]\right\}^2 = \bbE_U \Tr[O_U^{(k)} \rho]^2  - 4^{-n}.
\end{align}
Moreover, we recall that $O^{(k)}_U$ can be rewritten as $U_1^\dag O_1 U_1$, where $O_1$ is the observable defined in Eq.\ \eqref{eq:Oj-defin}.
Then by Lemma\ \ref{lemma:odg}, we have
\begin{align}
    \bbE_U \Tr[O_U^{(k)} \rho]^2 = \bbE_U \left(\Tr[U_1^\dag O_1 U_1\rho]^2\right) \leq \bbE_U\norm{O_1}^2_{\mathrm{Pauli},2} = \bbE_U \norm{O^{(k)}_U}^2_{\mathrm{Pauli},2} 
\end{align}
The desired result follows by applying Lemma\ \ref{lemma:contraction} on both $U^{(A)}$ and $U^{(B)}$. Denote by $O^{(k)}_{U^{(A)}}$ the intermediate observable obtained after running the low-weight Pauli propagation on $U^{(A)}$.
\begin{align}
   \bbE_U &\norm{O_U^{(k)}}^2_{\mathrm{Pauli},2} -4^{-n} =  \bbE_{U^{(A)}U^{(B)}} \norm{O_{U^{(A)}U^{(B)}}^{(k)}}^2_{\mathrm{Pauli},2} - 4^{-n}
   \\\leq &\left(\bbE_{U^{(A)}} \norm{O_{U^{(A)}}^{(k)}}^2_{\mathrm{Pauli},2} - 4^{-n}\right) + \beta
   \\\leq &\alpha^2 \left(\norm{O}^2_{\mathrm{Pauli},2} - 4^{-n}\right) + \beta + \alpha\beta 
\end{align}
for $\alpha =  \calO\left(\frac{n^k}{4^n} \right)$ and $\beta =  \calO\left(\frac{\epsilon n^k}{2^n} \right)$. Since $\norm{O}^2_{\mathrm{Pauli},2} = 2^{-n}$, we have
\begin{align}
    \bbE_U &\norm{O_U^{(k)}}^2_{\mathrm{Pauli},2} -4^{-n} \in \calO(n^{4k}2^{-5n}) + \calO\left({\epsilon n^k}2^{-n} \right).
\end{align}
\end{proof}

Random quantum circuits on several generic architectures forms approximate 2-designs at linear depth. One of the best current result for Brickwork circuits was recently established in Ref.\ \cite{chen2024incompressibility}.
\begin{lemma}[Adapted from Corollary 1.7 in Ref.\ \cite{chen2024incompressibility}]
\label{lem:chen}
Random Brickwork circuits generate $\epsilon$-approximate
unitary 2-designs in depth $L = \calO(n+\log(1/\epsilon))$.
\end{lemma}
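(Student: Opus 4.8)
The plan is to establish the design property through the standard reduction to the spectral convergence of the second-moment (twirl) operator, and then to isolate the one genuinely hard ingredient: a constant spectral gap for the one-dimensional brickwork architecture. Write $d=2^n$ and let
\begin{align}
    M_{\calD} := \bbE_{U\sim\calD}\left[U^{\otimes 2}\otimes U^{*\otimes 2}\right], \qquad M_{\calU} := \bbE_{U\sim\calU}\left[U^{\otimes 2}\otimes U^{*\otimes 2}\right]
\end{align}
denote the moment operators of the circuit ensemble and of the Haar measure. By Schur--Weyl duality, $M_{\calU}$ is the orthogonal projector onto the two-dimensional commutant $\mathrm{span}\{I^{\otimes 2},\SWAP\}$ (in vectorized form). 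The first step is the standard bound relating the $1\to1$ distance of the channels to the operator-norm distance of the moment operators,
\begin{align}
    \norm{\Psi_{\calD} - \Psi_{\calU}}_{1\to 1} \;\leq\; 2^{\calO(n)}\,\norm{M_{\calD} - M_{\calU}}_{\infty},
\end{align}
which converts the design question into controlling $\norm{M_{\calD} - M_{\calU}}_{\infty}$ while tolerating a prefactor $2^{\calO(n)}$.

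Second, I would exploit the layered structure. A depth-$L$ brickwork circuit factorizes into alternating odd/even layers, each a tensor product of independent Haar two-qubit gates, so that $M_{\calD} = (M_{\mathrm{odd}}M_{\mathrm{even}})^{L/2}$, where each layer operator is a product of commuting local projectors onto their respective local commutants. Since every layer operator fixes $M_{\calU}$ (its range contains the design subspace), the relations $M_{\mathrm{odd}}M_{\calU}=M_{\calU}M_{\mathrm{odd}}=M_{\calU}$ and likewise for even give the exact identity $(M_{\mathrm{odd}}M_{\mathrm{even}}-M_{\calU})^{m}=(M_{\mathrm{odd}}M_{\mathrm{even}})^{m}-M_{\calU}$, whence
\begin{align}
    \norm{M_{\calD} - M_{\calU}}_{\infty} \;\leq\; (1-\Delta)^{L/2}, \qquad \Delta := 1 - \norm{M_{\mathrm{odd}}M_{\mathrm{even}} - M_{\calU}}_{\infty},
\end{align}
with $\Delta$ the spectral gap of the two-layer moment operator on the orthogonal complement of the design subspace. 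Combining the two displays, the condition $\norm{\Psi_{\calD}-\Psi_{\calU}}_{1\to1}\leq\epsilon$ is met once $(1-\Delta)^{L/2}\,2^{\calO(n)}\leq\epsilon$, i.e. once $L = \calO\big((n+\log(1/\epsilon))/\Delta\big)$. Thus the claimed depth $\calO(n+\log(1/\epsilon))$ follows precisely when $\Delta=\Omega(1)$ is a constant independent of $n$.

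The main obstacle is therefore establishing this constant gap. I would recast $\Delta$ as the spectral gap of the frustration-free local Hamiltonian $H = \sum_{i}(I - P_{i,i+1})$, whose terms $P_{i,i+1}$ are the nearest-neighbour Haar-twirl projectors and whose ground space is exactly the design subspace. The naive route, a Nachtergaele/martingale local-to-global argument, only yields $\Delta=\Omega(1/n)$ and hence the suboptimal depth $\calO(n^2)$; closing the gap to a constant is exactly the content of the incompressibility argument of Ref.~\cite{chen2024incompressibility}. The idea I would follow is to show that any second-moment configuration orthogonal to the design subspace cannot be ``compressed'' onto few sites, so that its energy under $H$ is bounded below uniformly in $n$: concretely, one tracks how the permutation/weight structure of an excited state must spread under the local projectors and uses a self-improving recursion (or detectability-lemma amplification) to prove a site-independent lower bound on the energy density. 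Once the constant gap is in hand, the two reduction steps above immediately yield the stated linear depth; the only remaining care is to verify that the adaptation to two-qubit brickwork gates and to the second moment ($k=2$) lands in the regime covered by Corollary~1.7 of Ref.~\cite{chen2024incompressibility}, which it does.
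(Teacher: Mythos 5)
Your proposal is structurally correct, but note what the paper itself does for this statement: nothing. The lemma is imported wholesale from Corollary~1.7 of Ref.~\cite{chen2024incompressibility}; the only ``adaptation'' the paper relies on is the remark, made alongside its definition of approximate design, that the $1\to 1$-norm notion used here is \emph{weaker} than (hence implied by) the diamond-norm/multiplicative-error notions in which the cited result is stated. So the paper's route is a one-line norm-weakening of an external theorem, whereas you reconstruct the standard pipeline by which such corollaries are actually proven: bound $\norm{\Psi_{\calD}-\Psi_{\calU}}_{1\to 1}$ by $2^{\calO(n)}\norm{M_{\calD}-M_{\calU}}_{\infty}$, use independence of layers and the projector structure to get $\norm{M_{\calD}-M_{\calU}}_{\infty}\leq(1-\Delta)^{L/2}$, and invoke a size-independent gap $\Delta=\Omega(1)$. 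Your intermediate steps check out: each layer moment operator is an orthogonal projector whose range contains the range of $M_{\calU}$, so $(M_{\mathrm{odd}}M_{\mathrm{even}}-M_{\calU})^{m}=(M_{\mathrm{odd}}M_{\mathrm{even}})^{m}-M_{\calU}$ holds, and submultiplicativity of the operator norm applies even though $M_{\mathrm{odd}}M_{\mathrm{even}}$ is not Hermitian (consistent with your operator-norm definition of $\Delta$). The crux — the constant gap — you correctly identify as exactly the incompressibility theorem of Ref.~\cite{chen2024incompressibility}, which is legitimate to defer to given that the lemma is advertised as an adaptation of that work; the naive martingale route indeed only yields $\Delta=\Omega(1/n)$ and quadratic depth. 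In short: your version is more self-contained and makes transparent why linear depth is equivalent to an $n$-independent gap, at the cost of redoing machinery the paper sidesteps; the paper's version is maximally economical but leaves the mechanism opaque. Both ultimately rest on the same external theorem, so neither is circular.
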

Therefore, Theorem\ \ref{thm:no-go} and Lemma\ \ref{lem:chen} together imply the following corollary.
\begin{corollary}
There exists a constant $c>0$, such that low-weight Pauli propagation algorithm cannot be used to refute the XQUATH conjecture on random brickwork circuits of depth $L\geq c\cdot n$.
\end{corollary}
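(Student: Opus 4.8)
The plan is to derive the Corollary as an immediate consequence of the no-go bound in \Cref{thm:no-go} combined with the convergence of brickwork circuits to approximate $2$-designs in \Cref{lem:chen}. The only substantive work is to pick the design error $\epsilon$ small enough that the variance bound of \Cref{thm:no-go} drops strictly below the XQUATH threshold $\Omega(2^{-3n})$, and then to convert the required $\epsilon$ into a depth requirement via \Cref{lem:chen}. First I would fix a constant truncation order $k=\calO(1)$, which is the relevant regime since low-weight Pauli propagation on brickwork circuits is a polynomial-time algorithm precisely when $k$ is constant, and XQUATH concerns polynomial-time classical algorithms.

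Next I would split the depth-$L$ brickwork circuit as $U = U^{(A)} U^{(B)}$, with $U^{(A)}$ and $U^{(B)}$ the two halves of depth $L/2$ each. Since the brickwork gates are independent and sampled from local $2$-designs, every layer is locally scrambling, so \Cref{lem:ortho2} (and the remark following it on independent Pauli-invariant layers) guarantees that $U$ has orthogonal Pauli paths and \Cref{thm:no-go} applies. I would then fix the target $\epsilon = 2^{-3n}$. Substituting into \Cref{thm:no-go} yields
\begin{align}
    \mathrm{Var}_U \tilde f_U^{(k)}(\ketbra{0^n}) \in \calO(n^{4k}2^{-5n}) + \calO(n^k 2^{-4n}) = o(2^{-3n}),
\end{align}
for constant $k$, so the improvement over the trivial estimator is not $\Omega(2^{-3n})$ and XQUATH is not refuted.

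It then remains to certify each half as a $2^{-3n}$-approximate $2$-design at linear depth. By \Cref{lem:chen}, reaching $\epsilon = 2^{-3n}$ costs only depth $\calO(n + \log(1/\epsilon)) = \calO(n + 3n) = \calO(n)$ per half, so there is a constant $C_0$ with depth $C_0 n$ per half already producing a $2^{-3n}$-approximate $2$-design; increasing the depth only decreases the design error, and the second term of the no-go bound is monotone in $\epsilon$, so deeper halves are only more favourable. Setting $c = 2C_0$, any brickwork circuit of depth $L \geq c\cdot n$ splits into two halves each of depth $\geq C_0 n$, both satisfying the hypotheses of \Cref{thm:no-go}, which completes the argument.

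The individual steps are routine; the one genuine subtlety I would be careful about is the interplay of the two exponential scales. The XQUATH threshold lives at $2^{-3n}$ whereas the trivial-estimator variance is of order $2^{-2n}$, so the design error must be driven all the way down to $\epsilon = o(2^{-2n}/n^k)$ rather than merely to inverse-exponential. The observation that makes the conclusion hold at \emph{linear} depth is that $\log(1/\epsilon) = \Theta(n)$ for such $\epsilon$, whence \Cref{lem:chen} still demands only $\calO(n)$ depth per half; had the required error been, say, doubly-exponentially small, the depth would no longer be linear. I would therefore state the constant $c$ explicitly in terms of the implicit constant of \Cref{lem:chen} to make the linear-depth bound transparent.
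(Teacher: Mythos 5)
Your proposal is correct and takes essentially the same route as the paper, which states the corollary as an immediate consequence of \Cref{thm:no-go} combined with \Cref{lem:chen} — precisely the combination you make explicit by splitting the brickwork circuit into two independent halves, certifying each as an approximate $2$-design at linear depth, and invoking the variance bound. One remark: your explicit choice $\epsilon = 2^{-3n}$ (and your observation that one needs $\epsilon = o(2^{-2n}/n^k)$) is actually more careful than the sufficient condition $\epsilon \leq 1/(2^n n^k)$ stated in \Cref{thm:no-go}, since plugging that threshold into the bound leaves a second term of order $\calO(2^{-2n})$, which does not fall below the XQUATH scale $2^{-3n}$; your version closes this gap while still requiring only $\log(1/\epsilon) = \Theta(n)$, so the linear-depth conclusion is unaffected.
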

Moreover, we observe that the upper bound in Eq. \eqref{eq:extra-paths} is extremely loose, since it takes into account only the truncation performed in the first iteration othe algorithm. Similarly, the proof of Theorem\ \ref{thm:no-go} only exploits the presence of two truncations, while the actual algorithm performs $L$ truncation rounds on an $L$-layered circuit. Therefore we anticipate that our proof technique could be applied to virtually any efficiently computable Pauli propagation method.

\section{Numerical error certificates}\label{app:certacc}
In Lemma\ \ref{lemma:mse}, we showed that, for all set of Pauli paths $\calS \subseteq \calP_n^{L+1}$, the mean squared error can always be expressed as
\begin{align}
    \bbE_U \Delta f_U^{(\calS)}% = &\bbE_U\left[{f}_U(O)^2\right] - \bbE_U\left[\tilde{f}^{(\calS)}_U(O) ^2\right]
    = \bbE_U\left[\tilde{f}^{(\overline{\calS})}_U(O) ^2\right] = \bbE_U \sum_{\gamma \in \overline{\calS}} \Phi_\gamma(U) ^2 d_\gamma^2 ,
\end{align}
where we denoted the complement set $\overline{\calS} := \calP_n^{L+1} \setminus \calS$. 
By leveraging this identity, we can design a randomized classical algorithm for estimating the mean squared error of any function $\tilde{f}^{(\calS)}_U$, which works by Monte-Carlo sampling the second moment of $\tilde{f}^{(\overline{\calS})}_U(O)$.

This approach hinges on the fact that the (renormalized) squared Fourier coefficients can be interpreted as probabilities, as we show in the lemma below.
\begin{lemma}[Fourier spectrum]
Let $O$ be an observable and $U=U_L U_{L-1}\dots U_1$ be a random $L$-layered circuit with orthogonal Pauli paths. We have
\begin{align}
    \sum_{\gamma \in \calP_n^{L+1}}  \bbE_U \Phi_\gamma(U) ^2 = \norm{O}_2^2,
\end{align}
where $\Phi_\gamma(U)  = \Tr[O s_L] \Tr[U_L^\dag s_L U_L s_{L-1}]\dots \Tr[U_1^\dag s_1 U_1 s_0]$.
Therefore, denoting $p(\gamma) :=  \bbE_U\Phi_\gamma(U) ^2/\norm{O}_2^2$, we have
\begin{align}
    \sum_{\gamma \in \calP_n^{L+1}} p(\gamma) = 1.
\end{align}
\end{lemma}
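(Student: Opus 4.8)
The plan is to derive the normalization from the unitary invariance of the Hilbert--Schmidt norm together with the orthogonality-of-Pauli-paths hypothesis. First I would recall the Pauli path expansion of the Heisenberg-evolved observable established earlier, namely $U^\dag O U = \sum_{\gamma \in \calP_n^{L+1}} \Phi_\gamma(U)\, s_\gamma$, where $s_\gamma = s_0$ denotes the initial Pauli of the path $\gamma=(s_0,\dots,s_L)$. Since conjugation by a unitary preserves the Hilbert--Schmidt norm, we have the deterministic identity $\norm{U^\dag O U}_2^2 = \norm{O}_2^2$ for every realization of $U$, and hence $\bbE_U \norm{U^\dag O U}_2^2 = \norm{O}_2^2$.

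Next I would expand the left-hand side using this decomposition and the orthonormality of the normalized Pauli basis, $\Tr[s\,t]=\delta_{s,t}$:
\begin{equation}
\norm{U^\dag O U}_2^2 = \Tr\big[(U^\dag O U)^2\big] = \sum_{\gamma,\gamma'\in\calP_n^{L+1}} \Phi_\gamma(U)\,\Phi_{\gamma'}(U)\,\Tr[s_\gamma s_{\gamma'}].
\end{equation}
The crucial observation is that $\Tr[s_\gamma s_{\gamma'}] = \Tr[s_0 s_0'] = \delta_{s_0,s_0'}$ depends only on the \emph{endpoints} of the two paths, so the sum already collapses onto those pairs $(\gamma,\gamma')$ that share the same initial Pauli operator.

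Then I would take the expectation over $U$ and invoke orthogonality of Pauli paths, $\bbE_U \Phi_\gamma(U)\Phi_{\gamma'}(U)=0$ whenever $\gamma\neq\gamma'$. This annihilates every off-diagonal contribution, leaving only the diagonal $\gamma=\gamma'$, for which the constraint $s_\gamma=s_{\gamma'}$ is trivially satisfied. Hence
\begin{equation}
\norm{O}_2^2 = \bbE_U \norm{U^\dag O U}_2^2 = \sum_{\gamma\in\calP_n^{L+1}} \bbE_U\,\Phi_\gamma(U)^2,
\end{equation}
which is the first assertion. Dividing through by $\norm{O}_2^2$ then yields $\sum_{\gamma} p(\gamma)=1$ for $p(\gamma):=\bbE_U\Phi_\gamma(U)^2/\norm{O}_2^2$.

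The only point I would treat with care---and the main (if modest) obstacle---is the bookkeeping in the collapse $\Tr[s_\gamma s_{\gamma'}]=\delta_{s_0,s_0'}$: one must resist concluding that coincidence of endpoints forces $\gamma=\gamma'$, since many distinct paths share the same $s_0$. It is precisely the orthogonality hypothesis, and not the Pauli orthonormality, that removes the residual off-diagonal terms between distinct paths with coinciding endpoints. Both ingredients are therefore genuinely needed, and no further analytic difficulty arises.
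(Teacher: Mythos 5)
Your proof is correct and follows essentially the same route as the paper: unitary invariance of the Hilbert--Schmidt norm, expansion of $\norm{U^\dag O U}_2^2$ in the Pauli path decomposition, and the orthogonality-of-Pauli-paths hypothesis to eliminate the cross terms, giving $\sum_{\gamma}\bbE_U \Phi_\gamma(U)^2 = \norm{O}_2^2$. Your explicit observation that $\Tr[s_\gamma s_{\gamma'}]$ constrains only the path endpoints---so it is the path-orthogonality hypothesis, not Pauli orthonormality, that removes cross terms between distinct paths sharing the same $s_0$---is a point the paper's one-line proof leaves implicit, and you handle it correctly.
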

\begin{proof}
Consider the Heisenberg-evolved observable $U^\dag O U =\sum_{\gamma \in \calP_n^{L+1}} \Phi_\gamma(U)  s_\gamma$. We have
\begin{align}
    \norm{O}_2^2 = \bbE_U \norm{U^\dag O U}_2^2 = \bbE_U \Tr[\left(\sum_{\gamma \in \calP_n^{L+1}} \Phi_\gamma(U)  s_\gamma\right)^2] = \sum_{\gamma \in \calP_n^{L+1}} \bbE_U\Phi_\gamma(U) ^2,
\end{align}
where in the first step we used the fact that the Hilbert Schmidt norm is unitarily invariant, and the last step follows from the Fourier coefficients of different Pauli paths being uncorrelated.
%Renormalizing both the LHS and RHS yields the second part of the Lemma.
\end{proof}

This observation allows us to efficiently estimate the mean squared error, as formalized in the following theorem, which is a restatement of Theorem\ \ref{thm:certacc} of the \emph{End Matter} section.

\begin{theorem}[Certified error estimate]
Let $U=U_L U_{L-1}\dots U_1$ be a random circuit with orthogonal Pauli paths and let $O$ be an observable. Assume that we can sample $s\in\calP_n$ with probability ${\Tr[Os]^2}/{\norm{O}^2_{2}}$ in time $\mathrm{poly}(n)$.
Moreover,  assume that for $j = L, L-1, \dots, 1$, and for all $s_j\in \calP_n$, we can sample $s_{j-1}$ with probability
$ \bbE_{U_j}\Tr[U^\dag_j s_jU_j s_{j-1}]^2$ in time $\mathrm{poly}(n)$.
Then, for any $\epsilon,\delta\in(0,1]$, there exists a classical randomized algorithm that runs in time $\mathrm{poly}(n) L\epsilon^{-2}\log(\delta^{-1})$ and outputs a value $\alpha$ such that
\begin{align}
    \left|{\alpha - \bbE_U\Delta f^{(\calS)}_U}\right| \leq \epsilon \norm{O}^2_{\mathrm{Pauli},2}.
\end{align}
with probability at least $1-\delta$.
\end{theorem}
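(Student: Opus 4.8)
The plan is to recast the mean squared error $\bbE_U \Delta f_U^{(\calS)}$ as the expectation of a bounded random variable under an efficiently samplable distribution over Pauli paths, and then estimate it by Monte Carlo. Starting from Lemma~\ref{lemma:mse}, which applies because $U$ has orthogonal Pauli paths, I would write
\begin{align}
    \bbE_U \Delta f_U^{(\calS)} = \sum_{\gamma \in \overline{\calS}} \bbE_U\!\left[\Phi_\gamma(U)^2\right] d_\gamma^2 = \sum_{\gamma \in \calP_n^{L+1}} p(\gamma)\, X_\gamma^{(\calS)},
\end{align}
where $p(\gamma) := \bbE_U[\Phi_\gamma(U)^2]/\norm{O}_2^2$ and $X_\gamma^{(\calS)} := \norm{O}_2^2\, d_\gamma^2\, \mathbb{1}[\gamma \notin \calS]$. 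The Fourier spectrum lemma guarantees $\sum_\gamma p(\gamma) = 1$, so $p$ is a genuine probability distribution, and $X_\gamma^{(\calS)}$ is exactly the quantity whose mean we must estimate. The key boundedness fact is that $d_\gamma^2 = \Tr[s_\gamma\rho]^2 \leq 2^{-n}$ for every normalized Pauli $s_\gamma$, so that $0 \leq X_\gamma^{(\calS)} \leq 2^{-n}\norm{O}_2^2 = \norm{O}^2_{\mathrm{Pauli},2}$.

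Next I would show that $p$ factorizes and can be sampled layer by layer using the assumed oracles. Since the layers $U_j$ are independent,
\begin{align}
    \bbE_U\!\left[\Phi_\gamma(U)^2\right] = \Tr[Os_L]^2 \prod_{j=1}^L \bbE_{U_j}\!\left[\Tr[U_j^\dag s_j U_j s_{j-1}]^2\right],
\end{align}
so that $p(s_0,\dots,s_L) = \big(\Tr[Os_L]^2/\norm{O}_2^2\big)\prod_{j=1}^L \bbE_{U_j}[\Tr[U_j^\dag s_j U_j s_{j-1}]^2]$. I then verify that each factor is a normalized conditional distribution: for fixed $s_j$, unitarity together with Eq.~\eqref{eq:2norm} gives $\sum_{s_{j-1}\in\calP_n}\Tr[U_j^\dag s_j U_j s_{j-1}]^2 = \norm{U_j^\dag s_j U_j}_2^2 = 1$ for every realization of $U_j$, hence $\sum_{s_{j-1}} \bbE_{U_j}[\Tr[U_j^\dag s_j U_j s_{j-1}]^2] = 1$; likewise $\sum_{s_L}\Tr[Os_L]^2 = \norm{O}_2^2$ normalizes the initial factor. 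Therefore the prescription ``sample $s_L$ with probability $\Tr[Os_L]^2/\norm{O}_2^2$, then for $j=L,\dots,1$ sample $s_{j-1}$ with probability $\bbE_{U_j}[\Tr[U_j^\dag s_j U_j s_{j-1}]^2]$'' draws $\gamma\sim p$ exactly, using $L+1$ calls to the assumed $\mathrm{poly}(n)$-time oracles.

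Finally I would run the Monte Carlo estimation. Drawing $N$ independent paths $\gamma_1,\dots,\gamma_N\sim p$, computing $X_{\gamma_i}^{(\calS)}$ for each (which only requires checking membership $\gamma_i\in\calS$ and evaluating $\norm{O}_2^2\,\Tr[s_0\rho]^2$, both $\mathrm{poly}(n)$), and outputting $\alpha := \tfrac{1}{N}\sum_{i=1}^N X_{\gamma_i}^{(\calS)}$, the empirical mean is an unbiased estimator of $\bbE_U\Delta f_U^{(\calS)}$. Since $X_\gamma^{(\calS)} \in [0,\norm{O}^2_{\mathrm{Pauli},2}]$, Hoeffding's inequality (or the median-of-means estimator of Lemma~\ref{lemma:mom}) yields $|\alpha - \bbE_U\Delta f_U^{(\calS)}| \leq \epsilon\norm{O}^2_{\mathrm{Pauli},2}$ with probability at least $1-\delta$ as soon as $N = \calO(\epsilon^{-2}\log(\delta^{-1}))$. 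As each sample costs $\mathrm{poly}(n)\cdot L$ time, the total runtime is $\mathrm{poly}(n)\,L\,\epsilon^{-2}\log(\delta^{-1})$, as claimed.

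The concentration step and the boundedness estimate are routine; the point that requires care is the second paragraph, namely verifying that the sequential sampling procedure reproduces $p(\gamma)$ exactly. This hinges on two structural facts: the factorization of $\bbE_U[\Phi_\gamma(U)^2]$ across layers, which uses independence of the $U_j$, and the per-layer normalization $\sum_{s_{j-1}}\Tr[U_j^\dag s_j U_j s_{j-1}]^2 = 1$, which uses unitarity together with completeness of the normalized Pauli basis. The global normalization $\sum_\gamma p(\gamma) = 1$ additionally relies on orthogonality of Pauli paths through the Fourier spectrum lemma, so the whole argument is ultimately an application of the same orthogonality property that underlies Lemma~\ref{lemma:mse}.
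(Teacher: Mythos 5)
Your proposal is correct and follows essentially the same route as the paper's proof: recast the MSE via Lemma~\ref{lemma:mse} as $\sum_\gamma p(\gamma)X_\gamma^{(\calS)}$ with $p(\gamma)=\bbE_U[\Phi_\gamma(U)^2]/\norm{O}_2^2$, sample paths sequentially layer by layer using the assumed oracles, and apply Hoeffding to $\calO(\epsilon^{-2}\log\delta^{-1})$ samples of the bounded variable $X_\gamma^{(\calS)}\in[0,\norm{O}^2_{\mathrm{Pauli},2}]$. Your second paragraph in fact makes explicit two points the paper leaves implicit — the factorization of $\bbE_U[\Phi_\gamma(U)^2]$ across independent layers and the per-layer normalization $\sum_{s_{j-1}}\Tr[U_j^\dag s_j U_j s_{j-1}]^2=1$, which together certify that the sequential sampler draws exactly from $p$ — so your write-up is, if anything, slightly more complete than the original.
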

\begin{proof}
We show how to efficiently estimate $\bbE_U\Delta f^{(\calS)}_U$ for all $\calS$. 
We have
\begin{align}
    \bbE_U\Delta f^{(\calS)}_U = \bbE_U \sum_{\gamma \in \overline\calS} \Phi_\gamma(U) ^2 d_\gamma^2 := \sum_{\gamma \in \calP_n^{L+1}} p(\gamma) X_\gamma^{(\calS)},
\end{align}
where we introduced the following variable:
\begin{align}
    X_\gamma^{(\calS)} = \begin{cases}
    0 &\text{if $\gamma \in \calS$,}
       \\ \norm{O}_2^2 \cdot d_\gamma^2 &\text{if $\gamma \in \overline\calS$,}
    \end{cases}
\end{align}
Moreover, we observe that $0 \leq X_\gamma^{(\calS)} \leq \norm{O}^2_{\mathrm{Pauli},2}$.

We then implement the following protocol.
\begin{enumerate}
    \item Sample $M$ strings $\gamma_1, \gamma_2, \dots, \gamma_M$ i.i.d. with probability $p(\gamma_i) :=  \bbE_U\Phi_{\gamma_i}^2/\norm{O}_2^2$.
    \item Compute $Y =\frac{1}{M}\sum_{i=1}^M X_{\gamma_i}^{(\calS)}$ .
\end{enumerate}
By Chernoff-Hoeffding bound, we have
\begin{align}
    \Pr[\abs{Y - \bbE_U\Delta f^{(\calS)}_U} \geq \epsilon \norm{O}^2_{\mathrm{Pauli},2}]  \leq \delta,%2 \exp\left(- {2\epsilon^2 M} \right)
\end{align}
provided that $M\geq {\log(2/\delta)}/({2\epsilon^2})$.
It remains to upper bound the time required for sampling each string $\gamma_i$.
We use the following iterative algorithm:
\begin{enumerate}
    \item Sample $s_L$ from the `Pauli spectrum' of $O$ with probability 
        ${\Tr[Os]^2}/{\norm{O}^2_{2}}$,
    \item for $j = L, L-1, \dots, 1$, sample $s_{j-1}$ with probability
    \begin{align}
        \bbE_{U_j}\Tr[U^\dag_j s_jU_j s_{j-1}]^2
    \end{align}
    \item Return $\gamma = (s_1,s_2,\dots, s_L)$.
\end{enumerate}
Thus the sampling algorithm requires time $\mathrm{poly}(n)L$, and the entire procedure runs in time $\mathrm{poly}(n)L\epsilon^{-2}\log(1/\delta)$.
\end{proof}

\section{Further numerical experiments beyond our bounds}\label{app:numerics-IBM}

\begin{figure*}
    \centering
    \includegraphics[width=0.98\linewidth]{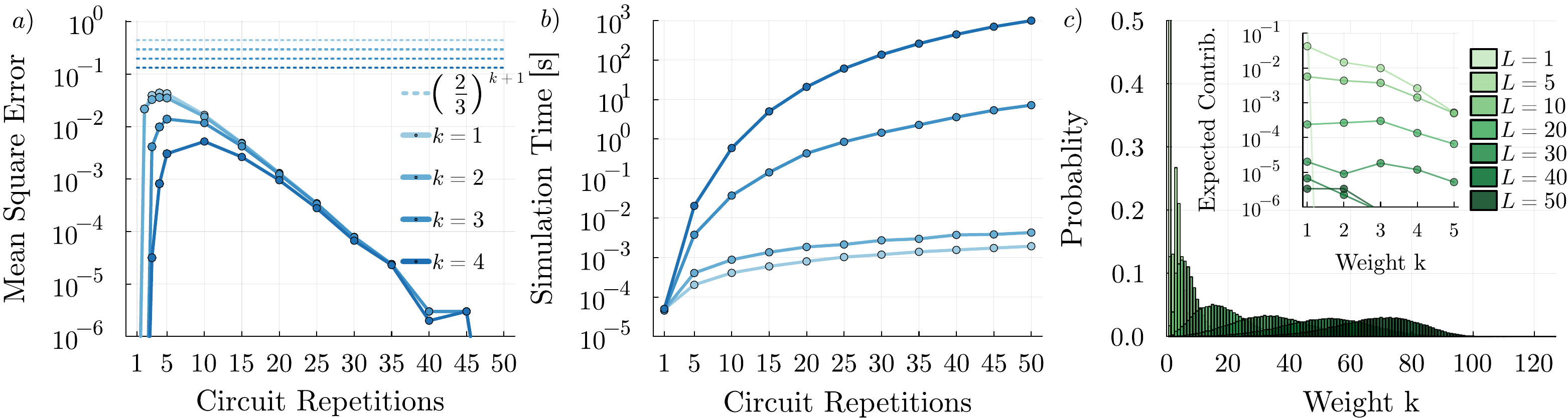}
  \captionsetup{labelformat=empty} % Remove "Figure X" label
  \caption{{Supplemental Figure 3:} \small Numerical verification of the effectiveness of weight truncation for a 127-qubit quantum circuit that does not strictly comply with our assumptions. We employ an ansatz consisting of repeated RX and RZZ rotations equivalent to a Trotter time evolution circuit of the transverse field Ising Hamiltonian. Thus, one layer of this circuit is not locally scrambling. The entangling topology is chosen to be the heavy-hex lattice, and the measurement is $\sigma^z_{63}$ in the middle of the lattice.}
    \label{fig:numerics-IBM}
\end{figure*}

In addition to Fig.~\ref{fig:numerics_2d_staircase} in the main text, we provide further numerical evidence that the assumption of circuit layers randomly drawn from a locally scrambled distribution could be relaxed in practice. 

\medskip

\noindent {\textbf{Circuits with random Pauli rotations.}}
Here we consider the case of a 127-qubit quantum circuit consisting of RX and RZZ gates, i.e., a circuit generated by the operators in a transverse field Ising Hamiltonian. Even a full circuit layer acting on all qubits is strictly speaking not locally scrambling. The entangling topology we employ is that of a \textit{heavy-hex} lattice which was, for example, used in Ref.~\cite{kim2023evidence}. The observable is $\sigma^z_{63}$ in the middle of the lattice. Our results of Pauli propagation with weight truncation are shown in Supplemental Fig.~\ref{fig:numerics-IBM}.

It becomes clear that this example is significantly easier than the pathological example shown in Fig.~\ref{fig:numerics_2d_staircase} in the main text. One qualitative difference is the initial ease of simulation inside the entangling light cone, which allows for fast and close-to-exact simulation for a few layers where the operators are mostly low-weight. At the same time, we also observe that our general error bounds in Theorem~\ref{thm:errorbound} are satisfied despite the layers not being locally scrambling.

We emphasize that these results should not be understood as strongly outperforming the quantum experiment results in Ref.~\cite{kim2023evidence}. While the authors employed a very similar setup up to only 20 layers, the parameters of their quantum circuit aimed to perform a certain quantum simulation task and were strongly correlated. All RZZ angles were set to $\pi/2$ (Clifford but maximally weight increasing) and all RX angles were chosen to be the same tunable value. This choice of parameters is thus similar to the correlated angle example shown in Supplemental Fig.~\ref{fig:numerics-correlated}, but this circuit could not be called locally scrambling even for uncorrelated parameters. Furthermore, it was later shown that only a small region of the correlated parameter space is indeed challenging for classical simulation methods, hinting at the fact that the average-case results shown here may not be representative of the hardness of the experiments in Ref.~\cite{kim2023evidence}.

\begin{figure}
    \centering
    \includegraphics[width=0.50\linewidth]{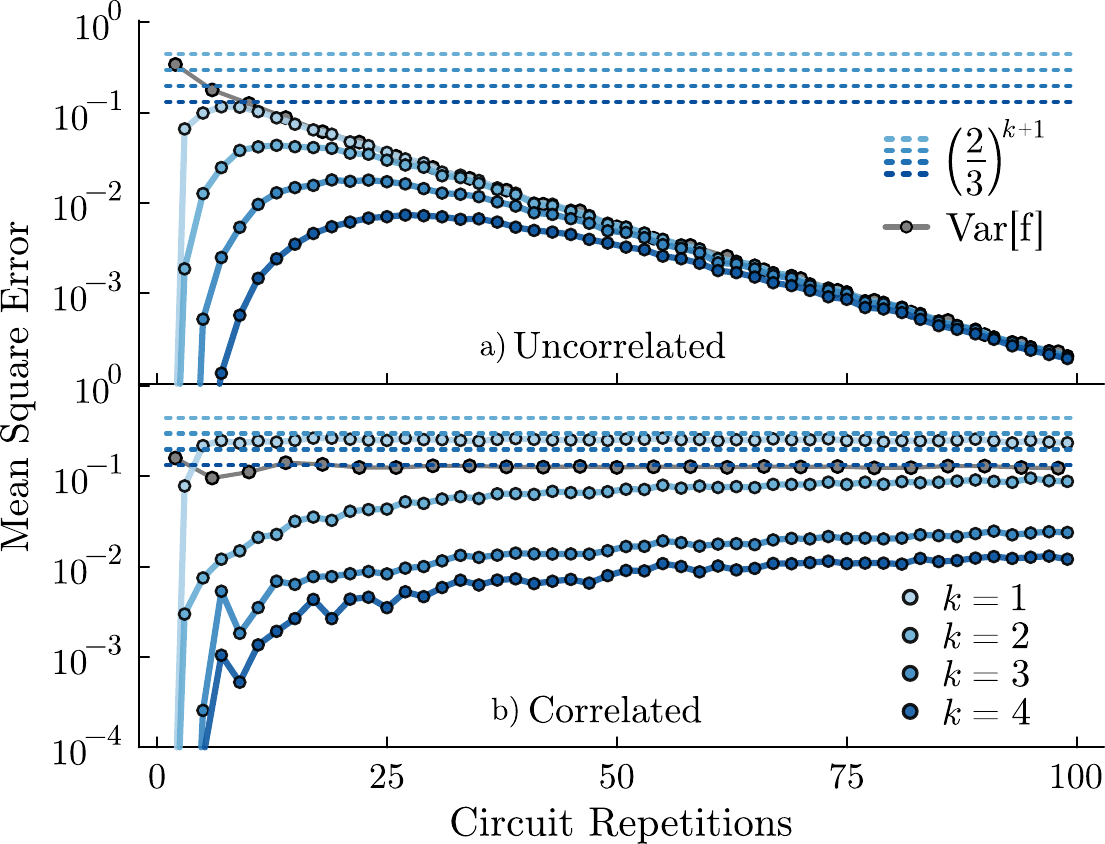}
  \captionsetup{labelformat=empty} % Remove "Figure X" label
  \caption{{Supplemental Figure 4:} \small \emph{Numerical verification of the effectiveness of weight truncation for correlated angles on 16 qubits.} The circuit ansatz consists of repeated RX and RZ rotations on each qubit followed by RZZ gates in a staircase ordering. The observable is a local Pauli Z operator on the first qubit. We either draw a) random parameters for all gates or b) one random parameter for all gates. We also report the variance of the un-truncated loss function $\mathrm{Var}[f]$, which indicates the presence and absence of exponential concentration in the case of uncorrelated and correlated parameters respectively.}
    \label{fig:numerics-correlated}
\end{figure}

\medskip

{
\begin{figure}
    \centering
    \includegraphics[width=0.9\linewidth]{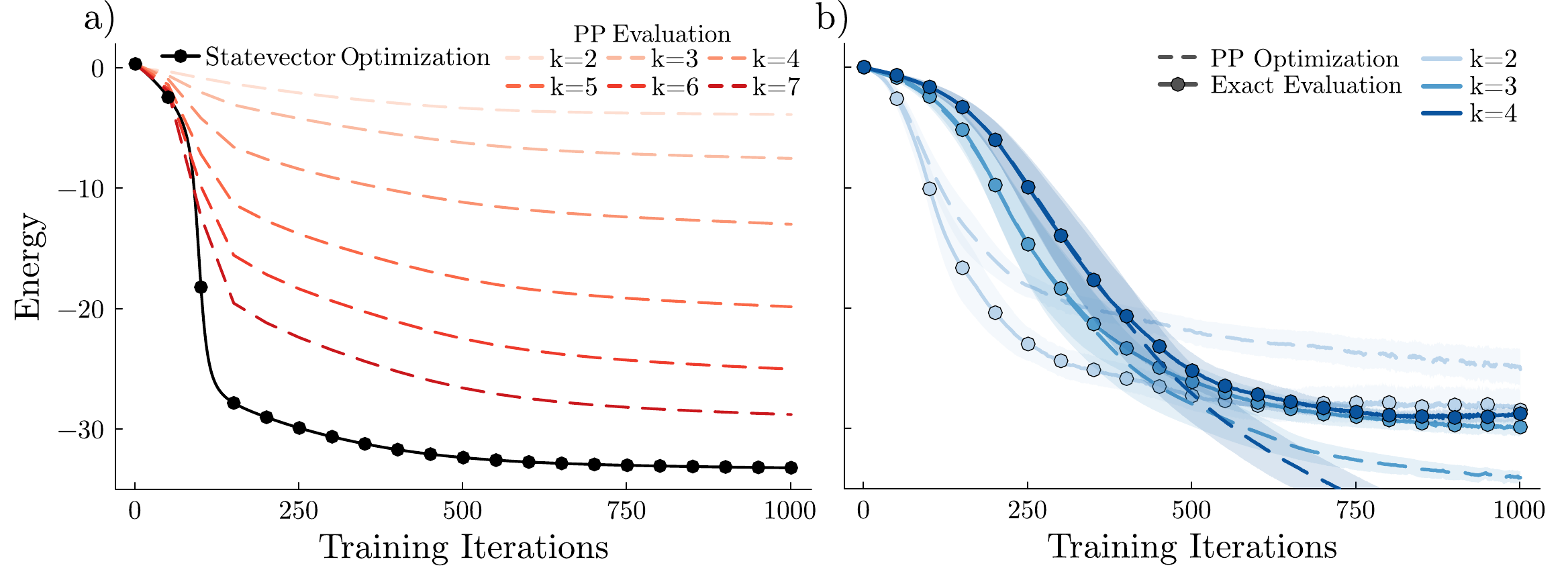}
    \captionsetup{labelformat=empty} % Remove "Figure X" label
    \caption{{Supplemental Figure 5:}
    \small 
    \emph{Variational ground state optimization with low-weight Pauli propagation.} We consider a Heisenberg Hamiltonian on a $4 \times 4$ grid, with all Hamiltonian coefficients equal to 1. As an ansatz, we choose three layers of the 2D staircase SU(4) unitary circuit discussed in the main text. a) Exact statevector optimization with Pauli propagation evaluation of the circuits throughout optimization with varying Pauli weight truncation. b) Low-weight Pauli propagation optimization with exact statevector evaluation of the circuits throughout optimization. }
    \label{fig:optimization}
\end{figure}
\textbf{Training a Variational Quantum Eigensolver.} To explore the broader applicability of low-weight Pauli propagation beyond our average-case guarantees, we investigated its use in training a Variational Quantum Eigensolver (VQE) to find the ground state of a Heisenberg Hamiltonian on a $4 \times 4$ grid, with all Hamiltonian coefficients equal to 1. As an ansatz, we choose three layers of the 2D staircase SU(4) unitary circuit from Ref.~\cite{zhang2023absence} discussed in the main text. When randomly initialized, the model meets our average-case assumptions, making it efficiently simulable with Pauli propagation with high probability. However, as training progresses and parameters are updated, the resulting circuits are expected to deviate from these assumptions.

In Supplemental Fig.~\ref{fig:optimization}a, we indeed observe that, over the course of the optimization using exact statevector simulation, the accuracy of Pauli propagation gradually declines for constant Pauli weight truncation, suggesting that our average case bounds cannot be generalized to more structured circuits such as those generating low-energy states in the Heisenberg model.

Yet, we uncover a surprising phenomenon: when optimizing the circuit parameters via low-weight Pauli propagation (supplemented with a $10^{-3}$ coefficient truncation) we find that the Pauli propagation energy estimates become unphysical, but the found parametrization evaluated via exact statevector simulation results in very low energies,  comparable to those found by statevector optimization. 
In other words, the truncated landscape may still enable efficient discovery of good minima that overlap with those of the exact energy landscape.

This result highlights the potential of Pauli propagation as a tool for variationally learning low energy quantum states. Still, we note that in many cases, even when it identified good parameters, Pauli propagation failed to accurately estimate the corresponding energy with exclusively low Pauli weight. This suggests promising hybrid quantum-classical strategies as illustrated below.
\begin{itemize}
    \item \emph{Train classically, deploy quantumly:} Pauli propagation can be used to train the parameters classically, and subsequently quantum device can be used to to evaluate the final energy, and deploy it for further processing.
    \item \emph{Warm-starts:} Pauli propagation methods can be used for warm-starting quantum devices, by finding nearly optimal values and completing the training using a quantum device.
\end{itemize}  
We also emphasize that the training using the exact simulator may still converge faster to a good solution, leaving the door open for a potential practical quantum advantage for this task. Nevertheless, given that current quantum devices are severely affected by hardware and shot noise, we anticipate that Pauli propagation methods could play a key role in solving physically motivated computation problems in the near-term era.

}

\end{document}